\documentclass{article}
\usepackage[T1]{fontenc}
\usepackage[utf8]{inputenc}
\usepackage{lmodern}
\usepackage[letterpaper,margin=1in]{geometry}
\usepackage{microtype}
\usepackage{amsmath,amsthm,amssymb,mathtools}
\usepackage{braket}
\usepackage{ulem}
\usepackage{enumitem}
\usepackage{tikz}
\usepackage{pgfplots}
\usepgfplotslibrary{groupplots}
\pgfplotsset{compat=1.18}
\usetikzlibrary{calc}
\usepackage{caption}

\pgfplotsset{
  free energy axis/.style={
    width=0.405\textwidth,
    height=0.235\textwidth,
    scale only axis,
    axis lines=left,
    axis line style={
      line width=0.45pt,
      -
    },
    tick style={
      line width=0.45pt
    },
    tick align=outside,
    major tick length=2pt,
    xmin=0.3333333333,
    xmax=1,
    domain=0.3333333334:0.9995,
    samples=240,
    xtick={0.3333333333,1},
    xticklabels={$1/3$,$1$},
    ytick=\empty,
    clip=true,
    title style={
      font=\small,
      align=center,
      yshift=1pt
    },
    every axis plot/.append style={
      line cap=round,
      line join=round
    },
  },
  free energy curve/.style={
    black,
    line width=0.95pt,
    no markers
  },
  uniform level/.style={
    black!45,
    line width=0.65pt,
    densely dashed,
    no markers
  },
  critical point/.style={
    only marks,
    mark=*,
    mark size=1.35pt,
    black
  }
}
\usepackage[numbers,sort&compress]{natbib}
\usepackage[colorlinks=true,citecolor=blue,linkcolor=blue,urlcolor=blue]{hyperref}
\usepackage{aliascnt}
\usepackage[nameinlink,noabbrev,capitalise]{cleveref}
\hypersetup{
  pdftitle={On the Metastability of the Mean-Field Interchange Model for Local Dimension >= 3},
  pdfauthor={Sergio Escobar, Lin Lin, Michael Ragone, Kevin D. Stubbs},
  pdfkeywords={Davies generators, interchange model, Schur--Weyl duality,
    spectral gap, metastability, mixing time}
}

\usepackage{ytableau}
\ytableausetup{boxsize=1.25em}

\definecolor{purp}{HTML}{9933ff}
\definecolor{darkcyan}{HTML}{008b8b}

\DeclareMathOperator{\Tr}{Tr}

\theoremstyle{plain}
\newtheorem{theorem}{Theorem}[section]

\newaliascnt{proposition}{theorem}
\newtheorem{proposition}[proposition]{Proposition}
\aliascntresetthe{proposition}

\crefname{proposition}{Proposition}{Propositions}
\Crefname{proposition}{Proposition}{Propositions}

\newaliascnt{lemma}{theorem}
\newtheorem{lemma}[lemma]{Lemma}
\aliascntresetthe{lemma}

\crefname{lemma}{Lemma}{Lemmas}
\Crefname{lemma}{Lemma}{Lemmas}

\newaliascnt{corollary}{theorem}
\newtheorem{corollary}[corollary]{Corollary}
\aliascntresetthe{corollary}

\crefname{corollary}{Corollary}{Corollaries}
\Crefname{corollary}{Corollary}{Corollaries}

\newaliascnt{fact}{theorem}
\newtheorem{fact}[fact]{Fact}
\aliascntresetthe{fact}

\crefname{fact}{Fact}{Facts}
\Crefname{fact}{Fact}{Facts}

\theoremstyle{definition}
\newaliascnt{definition}{theorem}
\newtheorem{definition}[definition]{Definition}
\aliascntresetthe{definition}

\crefname{definition}{Definition}{Definitions}
\Crefname{definition}{Definition}{Definitions}

\theoremstyle{remark}
\newtheorem{remark}[theorem]{Remark}

\newcommand{\F}{\mathbb{F}}

\newcommand{\R}{\mathbb{R}}
\newcommand{\Z}{\mathbb{Z}}

\newcommand{\C}{\mathbb{C}}

\newcommand{\calA}{\mathcal{A}}
\newcommand{\calB}{\mathcal{B}}

\newcommand{\calE}{\mathcal{E}}

\newcommand{\calH}{\mathcal{H}}

\newcommand{\calJ}{\mathcal{J}}
\newcommand{\calK}{\mathcal{K}}
\newcommand{\calL}{\mathcal{L}}

\newcommand{\calR}{\mathcal{R}}

\newcommand{\calT}{\mathcal{T}}

\newcommand{\calV}{\mathcal{V}}

\newcommand{\eps}{\varepsilon}

\newcommand{\loc}{\mathrm{loc}}
\newcommand{\tot}{\mathrm{tot}}

\newcommand{\down}{\downarrow}

\newcommand{\Lloc}{\calL_\loc}
\newcommand{\Jloc}{\calJ_\loc}
\newcommand{\Lrest}{\calL_\mathrm{rest}}
\newcommand{\gap}{\mathrm{gap}}

\newcommand{\betamin}{\beta_\mathrm{min}}
\newcommand{\Bohr}{\mathrm{Bohr}}
\newcommand{\idty}{\mathbf 1}

\newcommand{\ihat}{\hat{\imath}}

\numberwithin{equation}{section}

\begin{document}

\title{On the Metastability of the Mean-Field Interchange Model \\ for Local Dimension $\geq 3$}

\author{
Sergio Escobar\textsuperscript{1},
Lin Lin\textsuperscript{1,2,3},
Michael Ragone\textsuperscript{1},
and Kevin~D. Stubbs\textsuperscript{1,4}\\[0.6em]
\parbox{0.95\textwidth}{\centering\small
\textsuperscript{1}Department of Mathematics, University of California, Berkeley\\
\textsuperscript{2}Applied Mathematics and Computational Research Division, Lawrence Berkeley National Laboratory\\
\textsuperscript{3}Department of Computing and Mathematical Sciences, California Institute of Technology\\
\textsuperscript{4}School of Mathematics, University of Minnesota}}
\date{}
\maketitle

\begin{abstract}
    Many quantum systems are believed to thermalize slowly at low temperatures due to the existence of metastable phases. However, there are cases where further cooling can restore polynomial-time thermalization: we demonstrate that the Davies dynamics of the mean-field interchange model with local dimension $d\geq 3$ and single-site generalized Pauli couplings exhibits this behavior. The dynamical phase diagram is beholden to two distinguished inverse temperatures, $\beta_{\min}(d)$ and $\beta=d$, with the static first-order phase transition lying between them. At low and high temperatures surrounding the interval $(\betamin(d),d)$, the spectral gap of the Davies generator is bounded below by an inverse polynomial in system size $n$. Meanwhile, in this interval, the gap vanishes exponentially in $n$, indicating metastability. The proof uses Schur-Weyl duality to decompose the Davies generator into a classical Markov generator on Young diagrams, and the restriction to the remaining symmetry sectors whose gap is $\Omega(n^{-1})$. Thus, exponential slowdown arises entirely from the Young diagram chain.
\end{abstract}

\pagebreak

\tableofcontents

\pagebreak

\section{Introduction}

Understanding how quickly an open quantum system approaches thermal equilibrium is a fundamental problem in quantum statistical mechanics. This rich field of inquiry has recently developed a new branch: a host of new quantum algorithms for Gibbs state preparation have been discovered, many of which ultimately amount to simulation of certain dissipative Markovian dynamics~\cite{chen2025efficientexactnoncommutativequantum, Ding_2025}. Assessing their efficiency is then equivalent to determining the rate of convergence to equilibrium for these physical processes. A serious barrier to efficiency comes in the form of \textit{metastability}, wherein states are approximately stationary for long times. Characterizing when one should expect these quantum dynamics to exhibit metastability is thus of central importance to determining the viability of these Gibbs sampling algorithms.

These quantum dynamics have a classical counterpart in the Glauber dynamics, which are Markov chains used to prepare Gibbs states for classical spin systems such as the Ising model. They function by applying local energy-dependent updates to an initial distribution in a manner reminiscent of Markovian system-bath interactions, and indeed, one may dually think of these Markov chains as thermalization processes converging to a stationary Gibbs state. These well-studied dynamics hold valuable insights, including clear examples of and mechanisms for metastability. An emblematic case study appears in the Glauber dynamics of the mean-field Ising (aka Curie-Weiss) model with zero magnetic field~\cite{LevinPeresWilmer2006,DingLubetzkyPeres2009}. The mean-field Ising model exhibits a static phase transition driven by spontaneous $\Z_2$-symmetry breaking: at high temperatures, one sees only a single equilibrium phase, while at low temperatures, this fractures into a pair of equilibrium phases, one corresponding to mostly spin up and another mostly spin down. This static picture has dynamic consequences. At high temperatures, the Glauber dynamics converges fast, as the spectral gap of its Markovian generator scales inverse polynomially in system size. But at low temperatures, the Glauber dynamics converges slowly, exhibiting an exponentially vanishing spectral gap. The culprit is that the finite-volume Gibbs state is an equal-weight mixture of the mostly positive and mostly negative phases. When the system is initialized with support wholly in the positive phase, each Glauber spin update attempting to mix into the negative phase incurs a free energy penalty. These penalties stack up---escape requires the coordinated success of local updates to cross a free energy barrier proportional to the system size. Indeed, the positive phase is metastable, and generically, a system initialized near a metastable phase may take an exponentially long time to escape.

The mean-field Ising model prototypes a first mechanism for metastability. There, the existence of multiple phases with a free energy barrier between them--a static feature of the model--leads to the metastability of these same phases under Glauber dynamics. Indeed, in several physical models the mechanism for metastability is captured by a free energy landscape for a few macroscopic variables, such as magnetization or the proportions of different spin values. Competing metastable phases are separated by free energy barriers, which local updates can cross only through unlikely fluctuations. When this barrier grows proportionally to the system size, escape can take exponentially long within a range of temperatures (see e.g.~\cite{BovierDenHollander2015}).

In contrast to the rather well-developed landscape for metastability in classical systems, the theory is poorly understood for quantum systems, including the dissipative Markovian dynamics powering the aforementioned Gibbs sampling algorithms. Existing examples of this phenomenon are primarily obtained through perturbations of classical Hamiltonians~\cite{GamarnikKianiZlokapa2024,rakovszky2024bottlenecksquantumchannelsfinite}, with some notable exceptions including~\cite{alicki2008thermalstabilitytopologicalqubit,bergamaschi2026} We turn to a guiding example of quantum magnetism, the mean-field interchange model: its Hamiltonian is given by 
\begin{equation}\label{eq:introduction-hamiltonian}
    H
    =
    -\frac1n
    \sum_{1\leq i<j\leq n}
    T_{ij}
    \qquad\text{on}\qquad
    \calH=(\mathbb C^d)^{\otimes n},
\end{equation}
where $T_{ij}$ exchanges tensor factors $i$ and $j$. 
The interchange model is a natural higher-dimensional analogue of the mean-field Heisenberg ferromagnet. Indeed, when $d=2$, the identity $T_{ij} = 2\mathbf S_i\mathbin\cdot\mathbf S_j+\frac12I$, where $\mathbf{S}$ denotes the Pauli spin-1/2 vector, 
establishes this equivalence up to an additive constant.

Taking inspiration from classical static-dynamic correspondences, we turn our attention to the model's static phase diagram, which is fairly well understood. 
In~\cite{Bjornberg2016}, Bj\"ornberg determined the limiting free energy of the mean-field interchange model by first writing a variational problem on ordered probability vectors and finding an expression for the partition function in terms of the irreducible representations of the symmetric group. In our normalization, the free energy being maximized is
\begin{equation}\label{eq:introduction-free-energy}
    F_\beta(x)=\frac\beta2\sum_{i=1}^d x_i^2-\sum_{i=1}^d x_i\log x_i,
    \qquad
    x\in\Delta_d^\downarrow
    :=\left\{x_1\geq\cdots\geq x_d\geq0:\sum_i x_i=1\right\},
\end{equation}
with $0\log0=0$.
Here $F_\beta$ is the negative dimensionless physical free energy, so equilibrium states are maxima. It agrees with the mean-field Potts free energy restricted to the ordered simplex $\Delta_d^\down$, although it should be pointed out that this restriction has a variety of nontrivial consequences~\cite{EllisWang1990,Bjornberg2016}. For $d\geq3$, the uniform distribution $u = (1/d,\ldots,1/d)$ is the unique maximizer for $\beta < \beta_c(d)$ where
\begin{equation}\label{eq:introduction-beta-c}
    \beta_c(d)=\frac{2(d-1)}{d-2}\log(d-1),
\end{equation}
whereas for $\beta > \beta_c(d)$ the maximizer has one larger coordinate and $d-1$ equal smaller coordinates. 
% Both distributions maximize $F_\beta$ at $\beta_c(d)$, and the maximizing distribution changes discontinuously there. This is the first-order equilibrium transition.
At $\beta_c(d)$, $F_\beta$ has two local maxima with equal energy and the free energy maximizer changes discontinuously there---a first-order equilibrium transition.

Turning to the dynamical picture, a canonical model of open-system thermalization is provided by the Davies dynamics. The dynamics corresponding to the Heisenberg ($d=2$) model were recently analyzed in~\cite{BassoBergamaschiLinRagoneStubbs2026}. One of the key observations in that work is that the symmetry of the Hamiltonian and of the jumps implies that the Lindbladian is equivariant with respect to the joint actions of $SU(2)$ and $S_n$. Since the observable representations of $SU(2)$ and $S_n$ are unitary with respect the KMS-inner product associated with the Gibbs state $\rho$, this in turn leads to a KMS-orthogonal decomposition of $\calB(\calH)$, the algebra of observables, as 
\begin{equation}
    \calB(\calH) = \underbrace{\mathrm{comm}(SU(2) \times S_n)}_{\calA^{(0)}} \oplus_\rho \underbrace{\mathrm{comm}(SU(2) \times S_n)^{\perp_\rho}}_{\calA_{\mathrm{rest}}}.
\end{equation} The restriction of the dynamics to the space $\calA^{(0)}$ admits an interpretation as a \textit{classical} Markov chain. Their work then introduced the group mixer comparison argument to bound $\gap(\calL\vert_{\calA_\text{rest}})$ and uses classical Markov chain techniques to get the bound $\gap(\calL\vert_{\calA^{(0)}})=\Omega(1)$ for all $\beta\neq 2$.
The authors then proved that the corresponding local Davies dynamics has a gap of constant order with respect to system size at fixed temperatures above the transition and of order $n^{-1}$ below it. The low-temperature slowdown is witnessed by the total magnetization order parameter and is associated with the continuous rotational symmetry of the ordered phase.

In this work, we study the Davies dynamics of the interchange model for $d\geq 3$. Crucially, while this problem is amenable to similar representation theoretic techniques, it differs from the Heisenberg case in that for $d\geq3$ the transition is first-order, whereas for $d=2$ it is continuous~\cite{Bjornberg2016}. This too has a classical analogue in the mean-field Potts model, a generalization of the mean-field Ising model to $d$ colors. The $d=2$ case has a continuous phase transition, while the $d\geq 3$ case has a first-order phase transition. These static phase transitions have dynamical consequences: their Glauber dynamics have a phase transition coinciding with the static phase transition when $d=2$~\cite{LevinPeresWilmer2006}, while the dynamics for the $d\geq 3$ model is known to additionally exhibit a spinodal threshold distinct from the static phase transition~\cite{CuffEtAl2012}.

\subsection{Main results}

We study the Davies generator $\Lloc$ (\cref{def:davies-generator}) with normalized single-site Gell-Mann jump operators and Metropolis weights.

We first revisit Bj\"ornberg's analysis of the interchange model from another perspective. Our approach recovers the free energy and phase diagram directly from the stationary measure of the Young diagram chain, cf. \Cref{app:stationary-measure}. More importantly for the dynamics, we find that the free energy landscape contains two additional distinguished inverse temperatures. Define
\begin{equation}\label{eq:introduction-beta-min}
    \betamin(d)
    :=
    \min_{r>1}
    \frac{r+d-1}{r-1}\log r.
\end{equation}
Then, for all $d \geq 3$, we have $\betamin(d)<\beta_c(d)<d$ where $\beta_c(d)$ is the location of the first-order phase transition, \cref{eq:introduction-beta-c}. At $\betamin(d)$, the free energy develops a degenerate non-uniform critical point. Just above this threshold, it splits into a local maximum and a saddle (see \cref{fig:free-energy-profiles-d3}). At $\beta=d$, the free energy landscape returns to having a single stable point. Thus the interval
\begin{equation}\label{eq:introduction-metastable-window}
    \betamin(d)<\beta<d
\end{equation}
is precisely the region in which the free energy has two locally stable phases separated by a free energy barrier.

This distinction between the critical temperature at the phase transition and the spinodal inverse temperatures is the central phenomenon of the paper. The static phase transition occurs at the coexistence point $\beta_c(d)$, where the uniform and ordered distributions have equal free energy. The dynamics remains exponentially slow on both sides of $\beta_c(d)$, throughout the entire interval \cref{eq:introduction-metastable-window}. The dynamical thresholds are instead the two spinodal points $\betamin(d)$ and $d$. For this reason, we refer to the interval $(\betamin(d),d)$ as the \textit{metastable window}.

\begin{theorem}[Dynamical Phase Diagram]
\label{thm:full-dynamics}
    Fix $d\geq3$ and $\beta>0$. Then there exist positive constants $c_1$ and $c_2,$ depending only on $\beta$ and $d$, such that the spectral gap of the Davies generator scales as
    \begin{equation}
        \gap(\calL_{\loc}) = \begin{cases}
            \Omega(n^{-c_1}) & \text{if }\beta <  \beta_{\min}(d) \qquad \quad \;\;\; \textsf{(high temperature)} \\
            \exp(-\Theta_{\beta,d}(n)) & \text{if }\beta_{\min}(d) <\beta < d \qquad \textsf{(metastable window)} \\
            \Omega(n^{-c_2}) & \text{if }\beta >  d \qquad \qquad \qquad  \; \,  \textsf{(low temperature)}.
        \end{cases} 
    \end{equation}
\end{theorem}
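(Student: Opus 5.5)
The plan is to prove the theorem by splitting $\calB(\calH)$, as in the $d=2$ analysis, into the KMS-orthogonal sum $\calA^{(0)}\oplus_\rho\calA_{\mathrm{rest}}$. Here $\calA^{(0)}=\mathrm{comm}(SU(d)\times S_n)$. Since $\Lloc$ is $SU(d)\times S_n$-equivariant and KMS-self-adjoint, both summands are invariant. Hence
\[
\gap(\Lloc)=\min\bigl\{\gap(\Lloc|_{\calA^{(0)}}),\ \gap(\Lloc|_{\calA_{\mathrm{rest}}})\bigr\}.
\]
By Schur--Weyl duality, $\calA^{(0)}$ is spanned by the isotypic projectors $P_\lambda$, $\lambda\vdash n$ with at most $d$ rows. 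On this span $\Lloc$ acts as a reversible classical Markov generator on Young diagrams. Single-site Gell-Mann jumps move one box between rows, with Metropolis rates set by the energy $E_\lambda=-\frac1n\sum_i\binom{\lambda_i}{2}+(\text{content terms})$. The stationary law is $\pi(\lambda)\propto \dim(S^\lambda)\dim(V_\lambda^{d})e^{-\beta E_\lambda}$.

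The first step bounds the rest sector: $\gap(\Lloc|_{\calA_{\mathrm{rest}}})=\Omega(n^{-1})$ uniformly in $\beta$, a statement the abstract indicates is proved in the paper. I would prove it by the group-mixer comparison: compare $\Lloc$ on $\calA_{\mathrm{rest}}$ with the twirl generator $\mathrm{id}-\int_{SU(d)}\int_{S_n}\mathrm{Ad}_{g\sigma}$, which vanishes exactly on $\calA^{(0)}$ and has gap $1$ elsewhere. The comparison constant is polynomial because the Metropolis weights are bounded below by $e^{-O(\beta)}$. This is because $\|[H,\cdot\,]\|=O(1)$ for single-site jumps, as $H$ changes by $O(1)$ under a local unitary. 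Consequently every case of the theorem reduces to the Young diagram chain.

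The second step treats that chain via the free energy. Writing $x=\lambda/n\in\Delta_d^\down$, Stirling and the hook/Weyl dimension formulas give $\frac1n\log\pi(\lambda)=F_\beta(x)+o(1)$, as in \cref{app:stationary-measure}. For the metastable window $\betamin(d)<\beta<d$, $F_\beta$ has two strict local maxima: the uniform point $u$, which is stable since $\beta<d$ (the Hessian at $u$ is $\beta-d$ on the tangent space), and the ordered maximum. These are separated by a saddle of strictly lower value.

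\begin{itemize}
\item \textbf{Upper bound.} Take $S$ the basin of the lower-$F$ maximum and use the conductance bound $\gap\le \Phi(S)\le \pi(\partial S)/\pi(S)$. This gives $\exp(-n(\Delta F-o(1)))$, where $\Delta F>0$ is the barrier height.
\item \textbf{Lower bound.} Canonical paths along monotone routes in $\Delta_d^\down$ with $O(n)$ steps, together with a Cheeger/path argument, give $\gap\ge e^{-Cn}$. Combined with the first step, this yields $\exp(-\Theta(n))$.
\end{itemize}

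Outside the window $F_\beta$ has a unique local maximum on $\Delta_d^\down$ and no other critical point in the interior of the ordered region. For $\beta<\betamin$ the only critical point is $u$. For $\beta>d$, $u$ becomes a saddle or minimum and only the ordered maximum survives. The one-dimensional reduction $x=(a,b,\dots,b)$ yields the critical equation $\beta=\frac{r+d-1}{r-1}\log r$ with $r=a/b$, whose solvability defines $\betamin$; the other directions must be handled separately. Here I would show the chain's gap is $\Omega(n^{-c})$ by canonical paths with polynomial congestion. The route is to follow a discrete gradient-ascent path for $F_\beta$ from each $\lambda$ to the maximizer, so that $\pi$ is essentially monotone along each path, up to $e^{O(\log n)}$ polynomial corrections from the dimension factors. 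Each path has length $O(n)$, there are $\mathrm{poly}(n)$ states, and the rates are bounded below by $\mathrm{poly}(n)^{-1}$ (box-moving rates scale like $\lambda_i/n$ near the boundary). The congestion is then polynomial. Combined with the first step this gives $\Omega(n^{-c_1})$ and $\Omega(n^{-c_2})$.

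The main obstacle is this last step. It requires, first, a global analysis of $F_\beta$ on all of $\Delta_d^\down$, including boundary faces where some $x_i=0$, and including the verification that no spurious critical points exist outside the window. Second, it requires making the paths quantitatively monotone uniformly at finite $n$, since near the degenerate point $\beta\approx\betamin$ or near $u$ when $\beta\approx d$ the landscape is flat. For this I would first try a Lyapunov/potential argument: show that $F_\beta$ increases by $\ge c|\nabla F_\beta|^2$ along a drift direction. Away from an $n^{-1/2}$-neighborhood of the maximizer, the errors from the finite-$n$ corrections to $\frac1n\log\pi$ must be controlled. Inside that neighborhood the chain should be treated as a local Gaussian walk with $\Theta(1)$ gap.
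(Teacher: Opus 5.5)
Your top-level architecture matches the paper's: the symmetry-sector splitting, a polynomial bound away from the Young diagram sector, conductance of one well inside the window, and canonical paths with $\pi_{\min}\ge e^{-Cn}$ for the exponential lower bound. But the two steps that carry the real content are not supplied.

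The first is the one-well regime, which you leave open. The route you sketch rests on a false premise: for $\beta>d$ there \emph{are} other critical points, so they cannot be excluded. They also need not be. For every $\beta>d$ and $2\le k\le d-1$, the function $\frac{d-k+kr}{r-1}\log r$ runs from $d$ to $\infty$ on $r>1$. So there are saddles $(a,\dots,a,b,\dots,b)$ on the faces $\{x_1=\cdots=x_k\}$ of $\Delta_d^\down$, while $u$ is a local minimum. These faces are invariant under the gradient flow of the symmetric function $F_\beta$. Hence gradient ascent started on one of them (e.g.\ from a diagram with $\lambda_1=\lambda_2$) never reaches the maximizer.

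The missing idea is that uniqueness of the local maximum $m_\beta$ alone suffices. Every nonempty strict superlevel set $\{x\in\Delta_d^\down:F_\beta(x)>a\}$ is path connected. To see this, maximize $F_\beta$ over the closure of a path component $C$. The maximizer has value $>a$, so a small convex neighborhood of it lies in $C$; this makes it a local maximum, hence $m_\beta$. This yields a continuous path from $x_\lambda$ to $m_\beta$ on which $F_\beta>F_\beta(x_\lambda)-1/n$. Now round points spaced $1/n$ apart with $\widehat R_n$, interpolate by single box moves, sort, and use the modulus $|F_\beta(x)-F_\beta(y)|\le\beta\eps+\eps\log(ed/\eps)$. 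This gives legal paths losing only $O(\log n/n)$ in $F_\beta$, i.e.\ a polynomial factor in $\pi_\beta$. Your congestion estimate (or conductance plus Cheeger, as in the paper) then finishes. No gradient lower bounds, $n^{-1/2}$-neighborhoods, or Gaussian local analysis are needed, and for fixed $\beta$ off the spinodals there is no flatness problem.

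The second gap is the rest sector. The operator $\mathrm{id}-\int_{SU(d)}\int_{S_n}\mathrm{Ad}_{g\sigma}$ is just the KMS-orthogonal projection onto $\calA_{\mathrm{rest}}$. So comparing $\Lloc$ with it \emph{is} the Poincar\'e inequality to be proved. A lower bound on Metropolis weights (which is only a bandwidth fact) does not produce it. Single-site rotations do not commute with $\rho$, so there is no direct way to telescope the global twirl into single-site Dirichlet forms in the KMS norm.

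The paper instead splits $\calA_{\mathrm{rest}}=\calK_d\oplus_\rho\mathrm{comm}(S_n)^{\perp_\rho}$ and passes through intermediate mixers whose jumps commute with $H$.
(i) An orthogonal change of basis in the site index leaves a Davies generator unchanged and makes $\Lambda_\alpha^{\tot}/\sqrt{2dn}$ one of the jumps. So $\Lloc=\frac1n\calL_{SU(d)}+\Lrest$ with $\Lrest\le0$, and a Casimir computation gives $-\calL_{SU(d)}\ge1$ on $\mathrm{comm}(SU(d))^{\perp_\rho}$.
(ii) The transposition mixer $\calL_{S_n}$ has gap $2$ by the content formula. Combining $T_{uv}=\frac1dI+d\sum_\alpha J_u^\alpha J_v^\alpha$, the product-commutator bound, and the modular-norm estimate from $|\omega|\le d+1$ gives $\langle O,-\calL_{S_n}(O)\rangle_\rho\le C_1e^{C_2\beta}\mathcal E_{\loc}(O)$.

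With (i) and (ii) in hand, your two-sector formulation also works. The two group averages $\mathbb E_{SU(d)},\mathbb E_{S_n}$ are commuting KMS projections, and $\mathrm{id}-\mathbb E_{SU(d)}\mathbb E_{S_n}\le(\mathrm{id}-\mathbb E_{SU(d)})+(\mathrm{id}-\mathbb E_{S_n})$. However, the resulting bound is $\min\{1/n,\,c\,e^{-C\beta}\}$, not uniform in $\beta$ as you claim.
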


Moreover, we obtain an asymptotic expression for the lower spinodal inverse temperature:
\begin{equation}\label{eq:introduction-beta-min-asymptotic}
    \betamin(d)
    =
    \log d+\log\log d+1+o_d(1).
\end{equation}
Thus, as the local dimension grows, the lower inverse-temperature threshold grows logarithmically in $d$, while its upper threshold is $d$ and so the size of the metastable window grows with the local dimension. The proof of the main theorem is given in \cref{sec:main-results} while the asymptotics for the spinodal threshold are done in~\cref{app:beta-min-asymptotics}. For any fixed trace-distance accuracy, these bounds give polynomial mixing time in the high- and low-temperature regimes and exponential mixing time throughout the metastable window; see \cref{eq:gap_and_mixing_time}.

Our main result also gives the dynamical phase diagram of the local Davies generator away from the boundaries of the metastable window. The free energy picture makes this mechanism quite concrete. If $\betamin(d)<\beta<\beta_c(d)$, the uniform distribution is the global maximum of $F_\beta$, while a lower non-uniform local maximum describes a metastable ordered phase. If $\beta_c(d)<\beta<d$, the roles reverse: the ordered distribution is globally stable, while the uniform distribution survives as a metastable local maximum. At $\beta=\beta_c(d)$, both phases have equal free energy. In every case inside the window, escaping one local maximum requires crossing a region whose free energy is lower by a constant independent of $n$. Since the stationary measure scales as $\pi_\beta(\lambda) \propto \exp\left(n F_\beta(\lambda/n)\right)$, a constant free energy barrier becomes an exponentially small stationary bottleneck. Outside the interval \cref{eq:introduction-metastable-window}, this obstruction disappears.

\subsection{Proof Strategy}

We briefly describe the main ingredients of the proof. See~\cref{sec:preliminaries} for the relevant definitions.

\paragraph{Decomposition of $\calB(\calH)$}
The first step is to decompose the space of observables as
\begin{equation}\label{eq:pf strat decomp}
    \calB(\calH)
    =
    \calA_d \oplus_\rho \calA_d^{\perp_\rho}
\end{equation}
where the direct sum is orthogonal with respect to the KMS inner product. Here, $\calA_d$ is spanned by the Schur--Weyl isotypic projectors $\Pi_\lambda$ and is the direct analogue of what was called $\calA^{(0)}$ in~\cite{BassoBergamaschiLinRagoneStubbs2026}. This decomposition reduces the problem of finding the spectral gap on $\calB(\calH)$ to the independent, smaller problems of finding the spectral gap on each KMS-orthogonal component.

\paragraph{The Young diagram master equation.}
The space $\calA_d$ is a commutative algebra, and so the restriction of the Davies generator to this algebra may be viewed as a \textit{classical} Markov generator. Our next step is to then exploit Schur--Weyl duality to compute this restriction. The Davies generator is expressed in terms of a local twirling operator which has the simple expression
\begin{equation}
    \calT(X)
    =
    \frac1d
    \sum_{i=1}^n
    \Tr_i(X)\otimes I_i.
\end{equation}
Using Young's branching rule and Pieri's rule, we compute the partial trace of an isotypic projector and show that $\calT(\Pi_\lambda)$ has support only on diagrams related to $\lambda$ by removing and re-adding one box. We may then derive the exact transition rates and stationary distribution of the Young diagram Markov chain, which thanks to the representation theory allows only one-box moves.
%In this way, the one-box geometry of the chain emerges directly from the representation theory.

\paragraph{The spectral free energy landscape.}

The exact stationary measure of this chain is given by \cref{eq:stationary-measure}. Combining the Weyl dimension formula, the hook-length formula, and Stirling's approximation gives the uniform estimate
\begin{equation}
    n^{-C}
    e^{n f_\beta(\lambda/n)}
    \leq
    \pi_\beta(\lambda)
    \leq
    n^C
    e^{n f_\beta(\lambda/n)},
\end{equation}
where $f_\beta(x) = F_\beta(x) - \max F_\beta$. We then analyze the critical points of $F_\beta$ explicitly. Every non-uniform local maximum has the form
\begin{equation}
    x_r
    =
    \left(
    \frac{r}{r+d-1},
    \frac1{r+d-1},
    \ldots,
    \frac1{r+d-1}
    \right), \qquad \text{where} \quad  \beta
    =
    \beta_d(r)
    =
    \frac{r+d-1}{r-1}\log r.
\end{equation}
Here $\beta_d(r)$ is the inverse temperature at which $x_r$ is a critical point. The shape of this relationship gives the two spinodal inverse temperatures, while comparison of $F_\beta(x_r)$ with the free energy of the uniform distribution gives the coexistence inverse temperature $\beta_c(d)$. This recovers the static phase diagram of \cref{thm:spectral-free-energy-landscape}: the uniform distribution maximizes $F_\beta$ below $\beta_c(d)$, the ordered distribution maximizes above it, and both maximize at coexistence.

\paragraph{Metastability and conductance.}

Inside the metastable window, choose a small neighborhood $\mathcal W_n$ of the lower local maximum. At coexistence, choose the neighborhood with smaller stationary mass. Its stationary mass may itself be exponentially small; this causes no difficulty because conductance depends on the relative escape flow
\begin{equation}
    \Phi(\mathcal W_n)
    =
    \frac{Q(\mathcal W_n,\mathcal W_n^c)}{\pi_\beta(\mathcal W_n)},
\end{equation}
where $Q(A,B)$ denotes the stationary probability flow from $A$ to $B$. The boundary of $\mathcal W_n$ lies a fixed amount lower in free energy than its local maximum. Consequently,
\begin{equation}
    \frac{\pi_\beta(\partial \mathcal W_n)}
    {\pi_\beta(\mathcal W_n)}
    \leq
    e^{-c_{\beta,d}n}.
\end{equation}
Since the individual transition rates are at most polynomial in $n$, the conductance remains exponentially small. The variational characterization of the spectral gap and Cheeger's inequality then give
\begin{equation}
    \gap(L)
    \leq
    e^{-c_{\beta,d}n}.
\end{equation}
A canonical-path argument supplies the matching exponential lower bound.

\paragraph{Fast mixing in the one-well regimes.}
When $\beta<\betamin(d)$ or $\beta>d$, the free energy has a unique local maximum. We prove that every non-empty superlevel set of $F_\beta$ on the ordered simplex is path connected. Discretizing such paths gives, for every Young diagram $\lambda$, a legal path to the maximizer along which the free energy never falls more than $O_{\beta,d}\left(\log (n)/n\right)$
below its initial value. This corresponds to only a polynomial loss in stationary probability. Every cut has a crossing along one of these paths. This gives an inverse-polynomial lower bound on conductance, and Cheeger's inequality yields
\begin{equation}
    \gap(L)\geq n^{-C_{\beta,d}}.
\end{equation}

\paragraph{The remaining symmetry sectors.}
Finally, we further decompose the second summand in \cref{eq:pf strat decomp} as $\calA_d^{\perp_\rho} = \calK_d \oplus_\rho \mathrm{comm}(S_n)^{\perp_\rho}$. Here $\calK_d = \mathrm{comm}(S_n) \cap \mathrm{comm}(SU(d))^{\perp_\rho}$, the set of permutation-invariant observables that are not $SU(d)$-invariant. We control the two non-commutative sectors by a group mixer comparison strategy analogous to that employed in~\cite{BassoBergamaschiLinRagoneStubbs2026}. For $\calK_d$, unitary freedom in the jump operators allows us to isolate a Lindbladian associated to $SU(d)$, giving
\begin{equation}
    -\Lloc
    \geq
    -\frac1n\calL_{SU(d)}.
\end{equation}
The auxiliary generator $-\calL_{SU(d)}$ is a quadratic Casimir with $\mathrm{gap}(\calL_{SU(d)}) = 1$ and kernel $\mathrm{comm}(SU(d))$, giving the $n^{-1}$ lower bound. For the permutation-non-symmetric sector $\mathrm{comm}(S_n)^{\perp_\rho}$, we compare $\Lloc$ with the $S_n$ group mixer
\begin{equation}
    \calL_{S_n}(O)
    =
    \frac2n
    \sum_{u<v}
    (T_{uv}OT_{uv}-O).
\end{equation}
This generator has kernel $\mathrm{comm}(S_n)$ and gap exactly two. A Dirichlet-form comparison then gives a system-size-independent lower bound on $\Lloc$ over $\mathrm{comm}(S_n)^{\perp_\rho}$.

\vspace{3mm}\noindent
Together, these arguments reduce the exponentially slow part of the quantum dynamics to an explicit classical metastability problem for the Young diagram Markov chain.

\subsection{Discussion}
We find that the Potts--interchange analogy extends beyond the static picture to a Glauber--Davies analogy. The mean-field Potts model shares several features with our classical Young diagram chain. For Glauber dynamics of the mean-field Potts model, there are two types of free energy barrier: one separates ordered distributions favoring different colors, and one which separates the uniform distribution from an ordered distribution when both are locally stable~\cite{CuffEtAl2012}. The latter produces a spinodal threshold at which a metastable well first appears determines a dynamical transition for Glauber mixing. 

However, there is an important distinction: the full Potts dynamics remains exponentially slow at low temperature because it must move between phases favoring different colors. Ordering the coordinates identifies those phases. In our setting, the Young diagram state space can be viewed as the ordered, or permutation-quotiented, version of the Potts state space. This removes the barrier between different choices of favored color, while retaining the barrier between uniform and ordered distributions. The Young diagram chain also has transition rates that differ polynomially from Potts Glauber dynamics, even though the limiting free energy landscapes agree. The quantum problem further contains additional degrees of freedom that are absent from the classical quotient chain; these other symmetry sectors can be controlled by comparison arguments and do not cause an exponential slowdown.

The mean-field interchange model provides a setting in which the complete dynamical effect of a first-order transition can be analyzed while retaining genuinely non-commutative quantum dynamics. Its symmetry reduces the slowest part of the evolution to a tractable free energy landscape, while the remaining observable sectors can be controlled by representation-theoretic comparison arguments. We hope this collection of methods---symmetry decomposition, free energy bottlenecks, and auxiliary group mixers---will be useful more broadly for analyzing quantum Gibbs samplers with large symmetry groups.

Several questions remain open. Most immediately, our estimates exclude the two spinodal inverse temperatures $\betamin(d)$ and $\beta=d$, where the free energy Hessian becomes degenerate and one expects nontrivial critical scaling. It would be interesting to determine the precise gap at these thresholds and in temperature windows that shrink toward them with $n$. Additionally, it would also be useful to understand which features of the present argument survive away from the complete graph, where Schur--Weyl symmetry is partially broken and the Young diagram chain no longer closes exactly.

\subsection{Organization}

The remainder of the paper is organized as follows. In \cref{sec:preliminaries}, we review Schur--Weyl duality, introduce the interchange Hamiltonian and Davies generator, and establish the $SU(d)\times S_n$ equivariance of the local dynamics. In \cref{sec:main-results}, we decompose the observable algebra into the three invariant symmetry sectors, state the corresponding gap bounds, and combine them into the full dynamical phase diagram.

The appendices contain the proofs. \Cref{app:rep-theory-background} collects the representation-theoretic background. \Cref{appdx:young-diagram-master-proof} derives the Young diagram master equation. \Cref{app:stationary-measure} computes its stationary measure and proves the static free energy phase diagram. \Cref{app:young-diagram-mixing} establishes fast mixing in the one-well regimes and exponential slowdown throughout the metastable window. \Cref{app:beta-min-asymptotics} determines the asymptotic behavior of the lower spinodal inverse temperature. Finally, \cref{app:sud-sector,app:Sn-sector} prove the gap bounds on the non-trivial $SU(d)$ and permutation-non-symmetric sectors, respectively.

\section{Preliminaries}\label{sec:preliminaries}
Fix integers $n\geq2$ and $d\geq2$, and consider the Hamiltonian $H$ on $\calH=(\mathbb C^d)^{\otimes n}$ defined in \cref{eq:introduction-hamiltonian}. We keep $d$ and $\beta$ fixed when taking $n\to\infty$; all constants may depend on these fixed parameters. The Hamiltonian is invariant under both collective $SU(d)$ rotations and permutations of the tensor factors, and these commuting symmetries organize its spectrum and dynamics. We begin by recalling the Schur--Weyl decomposition associated with these actions. We then introduce Davies generators and the KMS inner product, before describing the covariance properties that will later allow us to decompose the Lindbladian into invariant symmetry sectors.

\subsection{Schur--Weyl Duality}\label{subsec:schur-weyl}

Here we briefly review Schur--Weyl duality and the important consequences it has for the interchange model. We also fix the representation-theoretic notation used throughout the paper. We will use standard definitions from e.g. Sagan~\cite{Sagan2001} or Fulton-Harris~\cite{FultonHarris1991}. Let
\begin{equation}
    \Omega_{n,d}
    :=
    \left\{
        \lambda=(\lambda_1,\ldots,\lambda_d)\in\mathbb Z_{\geq0}^d:
        \lambda_1\geq\cdots\geq\lambda_d\geq0,\
        \sum_{i=1}^d\lambda_i=n
    \right\}
\end{equation}
be the set of partitions of $n$ with at most $d$ parts. We identify a partition $\lambda\in\Omega_{n,d}$ with its Young diagram on $n$ boxes and at most $d$ rows. For $\lambda \in \Omega_{n,d}$, we use $\lambda^- \subset \Omega_{n-1,d}$ to denote the set
\begin{equation}
    \lambda^- = \{\nu \in \Omega_{n-1,d} : \nu\ \text{is attained from $\lambda$ by removing a box}\}
\end{equation}
For $\nu\in\Omega_{n-1,d}$, let $\nu^+$ denote the set of valid $d$-row diagrams obtained by adding one box. We write $\lambda\overset{\nu}{\sim}\mu$ when $\nu\in\lambda^-\cap\mu^-$, and $\lambda\sim\mu$ when such a $\nu$ exists. For distinct $\lambda,\mu \in \Omega_{n,d}$, the common subdiagram is unique.

Define the diagonal $SU(d)$ action and tensor-permutation action of $S_n$ on $\calH$ by
\begin{equation}
    Q(U)=U^{\otimes n},
    \qquad
    P(\sigma)\ket{i_1,\ldots,i_n}
    =
    \ket{i_{\sigma^{-1}(1)},\ldots,i_{\sigma^{-1}(n)}}. 
\end{equation}
These actions define representations (see \cref{app:rep-theory-background}) of $SU(d)$ and $S_n$. A tremendously important fact is that these representations can be decomposed into irreducible representations -- also referred to as irreps -- which are labeled by partitions $\lambda \in \Omega_{n,d}$. These representations commute, giving us Schur-Weyl duality.

\begin{theorem}[Schur--Weyl duality {\cite{Sagan2001,FultonHarris1991}}]
\label{thm:schur-weyl}
As an $SU(d)\times S_n$-module,
\begin{equation}
    \calH
    \cong
    \bigoplus_{\lambda\in\Omega_{n,d}}
    V^\lambda\otimes W^\lambda,
\end{equation}
where $W^\lambda$ is the irreducible Specht module of $S_n$ indexed by $\lambda$, and $V^\lambda$ is the irreducible polynomial $U(d)$-module indexed by $\lambda$, restricted to $SU(d)$. The irreducible $SU(d)\times S_n$-modules $V^\lambda\otimes W^\lambda$ are pairwise inequivalent, and each occurs with multiplicity one. We denote by $\Pi_\lambda$ the orthogonal projector onto the corresponding summand $V^\lambda\otimes W^\lambda$. 
\end{theorem}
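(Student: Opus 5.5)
The plan is to prove Schur--Weyl duality by the double commutant theorem. Write $\mathcal{Q}=\mathrm{span}\{Q(U):U\in SU(d)\}$ and $\mathcal{P}=\mathrm{span}\{P(\sigma):\sigma\in S_n\}$, both unital $*$-subalgebras of $\calB(\calH)$. The two actions obviously commute, since $U^{\otimes n}$ acts factorwise identically and permuting factors does not change it. The key claim is that
\[
\mathcal{Q}=\mathrm{comm}(\mathcal{P}).
\]
To prove it, I will identify $\mathrm{comm}(\mathcal{P})$ with the symmetric tensors $\mathrm{Sym}^n(\calB(\C^d))\subset\calB(\C^d)^{\otimes n}$. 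Polarization shows that $\mathrm{Sym}^n(\calB(\C^d))$ is spanned by $A^{\otimes n}$ with $A\in GL(d)$, because $GL(d)$ is Zariski dense in $\calB(\C^d)$. Next, $A^{\otimes n}$ lies in the span of $\{U^{\otimes n}:U\in SU(d)\}$. To see this, take the linear span of $\{g^{\otimes n}:g\in GL(d)\}$. It contains $\{g^{\otimes n}: g\in U(d)\}$, and by holomorphy (the Weyl unitary trick) the spans coincide: a linear functional vanishing on $U(d)^{\otimes n}$ gives a holomorphic polynomial on $GL(d)$ vanishing on $U(d)$, hence identically. Finally, $U(d)=\{e^{i\theta}\}\cdot SU(d)$ and $(e^{i\theta}U)^{\otimes n}=e^{in\theta}U^{\otimes n}$, so the spans over $U(d)$ and $SU(d)$ agree.

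With $\mathcal{Q}=\mathrm{comm}(\mathcal{P})$ in hand, the double commutant theorem for semisimple algebras gives the rest. Since $\mathcal{P}$ is a finite-dimensional $*$-algebra, $\calH\cong\bigoplus_\lambda V_\lambda\otimes W^\lambda$, where $W^\lambda$ runs over the Specht modules occurring in $\calH$ and $V_\lambda=\mathrm{Hom}_{S_n}(W^\lambda,\calH)$. Moreover $\mathrm{comm}(\mathcal{P})=\bigoplus_\lambda \calB(V_\lambda)\otimes I$. Since $\mathcal{Q}$ equals this commutant, each $V_\lambda$ is an irreducible $SU(d)$-module. These modules are pairwise inequivalent because the central projectors of $\mathcal{P}$ lie in $\mathcal{Q}$ and separate them. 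Multiplicity one for the pair $V^\lambda\otimes W^\lambda$ as an $SU(d)\times S_n$-module follows directly.

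It remains to identify which $\lambda$ occur and to match $V_\lambda$ with the polynomial irrep $V^\lambda$.
\begin{itemize}[leftmargin=*]
\item The multiplicity of $W^\lambda$ in $\calH$ is computed by the Young symmetrizer: $c_\lambda\calH\neq0$ iff $\lambda$ has at most $d$ rows, since antisymmetrizing over a column of length $>d$ kills $(\C^d)^{\otimes n}$. This gives the index set $\Omega_{n,d}$.
\item $V_\lambda\cong c_\lambda\calH$ is the Schur functor $\mathbb{S}_\lambda(\C^d)$, which by the standard theory in Fulton--Harris is the irreducible polynomial $U(d)$-module with highest weight $\lambda$. Its restriction to $SU(d)$ stays irreducible, since scalars act by $e^{in\theta}$.
\end{itemize}
The identity $\mathcal{Q}=\mathrm{comm}(\mathcal{P})$, specifically the density step from symmetric tensors to $SU(d)$, is the main obstacle. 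I will handle it by the polarization-plus-unitary-trick argument above rather than through Lie-algebra methods. The projector $\Pi_\lambda$ is then simply the central idempotent of $\mathcal{P}$ associated with the character of $W^\lambda$.
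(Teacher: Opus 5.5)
The paper does not prove this theorem itself; it is quoted from Sagan and Fulton--Harris. Your double-centralizer argument is correct and is essentially the standard proof in the cited Fulton--Harris (Lecture~6: $\mathrm{comm}(S_n)=\mathrm{Sym}^n(\calB(\C^d))$ is spanned by the $g^{\otimes n}$, then the double commutant theorem identifies the multiplicity spaces with Schur functors), with the unitary trick added to descend from $GL(d)$ to $SU(d)$.

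The only step you leave unjustified is the nonvanishing half of ``$c_\lambda\calH\neq0$ iff $\ell(\lambda)\leq d$'' (you argue only that $c_\lambda\calH=0$ when $\ell(\lambda)>d$). You can close it by applying $c_\lambda$ to the tensor carrying $e_i$ in every slot of row $i$ of the tableau: row permutations fix it and nontrivial column permutations send it to a different basis tensor, so its coefficient in the result is $|R_\lambda|\neq0$, where $R_\lambda$ is the row group.
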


In the Schur--Weyl basis, these group actions admit the following block diagonalizations:
\begin{align}
    Q(U)
    &\cong
    \bigoplus_{\lambda}Q_\lambda(U)\otimes I_{W^\lambda},
    &
    P(\sigma)
    &\cong
    \bigoplus_{\lambda}I_{V^\lambda}\otimes P_\lambda(\sigma).
\end{align}

\begin{proposition}[Interchange Hamiltonian in the Schur--Weyl basis]
\label{prop:interchange-schur-weyl-form}
The actions $Q$ and $P$ commute, and the Hamiltonian $H$ commutes with both. Moreover,
\begin{equation}
    H
    =
    \sum_{\lambda\in\Omega_{n,d}}E_\lambda\Pi_\lambda,
\end{equation}
where the energy eigenvalue $E_\lambda$ has the value
\begin{align}
    E_\lambda
    =
    -\frac1n\sum_{b\in\lambda}c(b)
    =
    \frac{1}{2n} \left( -\sum_{i=1}^d\lambda_i^2 + \sum_{i=1}^d(2i-1)\lambda_i \right).
\end{align}
Here $c(b)=j-i$ is the content of a box $b$ in row $i$ and column $j$. If several diagrams have the same energy, the corresponding energy projector is
\begin{equation}
    \Pi_E
    =
    \sum_{\lambda:E_\lambda=E}\Pi_\lambda.
\end{equation}
\end{proposition}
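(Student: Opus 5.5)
The plan has three parts: show that the symmetries commute with each other and with $H$, use Schur--Weyl duality to make $H$ scalar on each isotypic block, and identify the scalar through the Jucys--Murphy content formula.

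\textbf{Commutation.} First check that $Q(U)$ and $P(\sigma)$ commute. On a product vector,
\begin{equation}
    U^{\otimes n}P(\sigma)\,(v_1\otimes\cdots\otimes v_n)
    = Uv_{\sigma^{-1}(1)}\otimes\cdots\otimes Uv_{\sigma^{-1}(n)}
    = P(\sigma)\,U^{\otimes n}(v_1\otimes\cdots\otimes v_n).
\end{equation}
Next, note that $T_{ij}=P((i\,j))$. Hence $H$ commutes with every $Q(U)$ by the computation above. Conjugation gives $P(\sigma)T_{ij}P(\sigma)^{-1}=T_{\sigma(i)\sigma(j)}$, so conjugation by $P(\sigma)$ permutes the summands of $H$, and $H$ commutes with $P(\sigma)$.

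\textbf{Block structure.} By \cref{thm:schur-weyl}, $H$ lies in the commutant of $SU(d)\times S_n$. The summands $V^\lambda\otimes W^\lambda$ are irreducible, pairwise inequivalent, and multiplicity-free. Schur's lemma therefore forces $H=\sum_\lambda E_\lambda\Pi_\lambda$ for some scalars $E_\lambda$. To compute them, write
\begin{equation}
    H=-\tfrac1n\,C, \qquad C:=\sum_{i<j}(i\,j)\ \text{acting via } P.
\end{equation}
The element $C$ is the sum of all transpositions, which is central in $\C[S_n]$. It acts on $W^\lambda$ by the scalar $\sum_{b\in\lambda}c(b)$. To see this, use the Jucys--Murphy elements
\begin{equation}
    X_k=\sum_{i<k}(i\,k), \qquad C=\sum_{k=1}^n X_k,
\end{equation}
together with the standard fact that on Young's seminormal basis vector indexed by a standard tableau $t$, $X_k$ acts by the content of the box of $t$ containing $k$. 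An alternative is Frobenius' formula $\chi^\lambda(\tau)\binom n2/\dim W^\lambda=\sum_b c(b)$. This gives $E_\lambda=-\tfrac1n\sum_b c(b)$.

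\textbf{Explicit formula.} Finally, sum the contents row by row. Row $i$ contributes
\begin{equation}
    \sum_{j=1}^{\lambda_i}(j-i)=\frac{\lambda_i(\lambda_i+1)}{2}-i\lambda_i=\frac{\lambda_i^2-(2i-1)\lambda_i}{2}.
\end{equation}
Summing over $i$ and multiplying by $-\tfrac1n$ yields the stated closed form. The claim about $\Pi_E$ is immediate: the spectral projector for the eigenvalue $E$ is the sum of the $\Pi_\lambda$ with $E_\lambda=E$, since the $\Pi_\lambda$ are mutually orthogonal and sum to the identity.

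\textbf{Main obstacle.} The only nontrivial input is the content eigenvalue of the class sum of transpositions. I would cite it from \cite{Sagan2001}, or \cref{app:rep-theory-background} if it is included there, rather than reprove it.
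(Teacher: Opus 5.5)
Your proposal is correct and follows essentially the same route as the paper. Both arguments use the covariance of $T_{ij}$ under $Q$ and $P$, the centrality of the transposition class sum acting on each $W^\lambda$ by the content sum $\sum_{b\in\lambda}c(b)$, and a row-by-row summation of contents (the paper's \cref{lem:energy-eigenvalues}), except that you justify the content eigenvalue via Jucys--Murphy elements (the tool the paper uses later in \cref{lem:complete-transposition-gap}) where the paper simply cites \cite{Sagan2001}.
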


\begin{proof}
The covariance relations
\begin{equation}
    Q(U)T_{ij}Q(U)^\dagger=T_{ij},
    \qquad
    P(\sigma)T_{ij}P(\sigma)^\dagger=T_{\sigma(i)\sigma(j)}
\end{equation}
show that $H$ commutes with both actions. Furthermore, $\sum_{i<j}T_{ij}$ is the class sum of transpositions and is central in $\mathbb C[S_n]$. By Schur's Lemma, it therefore acts as a scalar on each $W^\lambda$. The scalar is the content sum $\sum_{b\in\lambda}c(b)$; see, for example, \cite{Sagan2001}. The displayed formula is computed explicitly in
\cref{lem:energy-eigenvalues}.
\end{proof}

For $d=2$, the energy eigenspaces of $H$ are labeled by the total spin $j$, or equivalently by two-row Young diagrams
\begin{equation}
    \lambda
    =
    \left(\frac n2+j,\frac n2-j\right),
    \qquad
    j=\frac{\lambda_1-\lambda_2}{2},
    \qquad
    E_j = -\frac{j(j+1)}n-\frac{n-4}{4}
\end{equation}
where $E_j$ is the corresponding energy. In particular, the fully symmetric diagram $(n)$ is the maximal-spin sector $j=n/2$ and is the ground-state sector. For general $d$, Young diagrams similarly label the collective $SU(d)$ representation sectors and therefore play the role of generalized total-spin quantum numbers.

\begin{remark}
For $d=2$, the map $\lambda\mapsto E_\lambda$ is injective on $\Omega_{n,2}$; for $d\geq3$, it need not be. When the map is injective, a Young diagram $\lambda$ and its corresponding energy eigenvalue $E_\lambda$ are perfectly interchangeable.
\end{remark}

\begin{lemma}[Jointly commutative observable algebra]\label{lem:commutant-is-span-projectors}
The joint commutant of both the $SU(d)$ and $S_n$ actions is exactly the set of observables that are scalar on each Schur--Weyl block. Equivalently,
\begin{equation}
    \mathrm{comm}(SU(d)\times S_n)
    =
    \mathrm{span}\{\Pi_\lambda:\lambda\in\Omega_{n,d}\}.
\end{equation}
\end{lemma}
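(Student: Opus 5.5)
The plan is to prove the two inclusions separately, working throughout in the Schur--Weyl decomposition of \cref{thm:schur-weyl}. Write $\calH\cong\bigoplus_\lambda V^\lambda\otimes W^\lambda$, with $Q(U)\cong\bigoplus_\lambda Q_\lambda(U)\otimes I_{W^\lambda}$ and $P(\sigma)\cong\bigoplus_\lambda I_{V^\lambda}\otimes P_\lambda(\sigma)$.

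The inclusion ``$\supseteq$'' is immediate. Each $\Pi_\lambda$ is the identity on the summand $V^\lambda\otimes W^\lambda$ and zero elsewhere. Both $Q(U)$ and $P(\sigma)$ preserve every summand, so they commute with each $\Pi_\lambda$, and hence with any linear combination of the $\Pi_\lambda$.

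For ``$\subseteq$'', take $X$ commuting with all $Q(U)$ and $P(\sigma)$. It suffices that $X$ commute with the joint action of $SU(d)\times S_n$, i.e.\ with $Q(U)P(\sigma)$. Decompose $X$ into blocks
\begin{equation}
    X_{\mu\lambda}=\Pi_\mu X\Pi_\lambda : V^\lambda\otimes W^\lambda\to V^\mu\otimes W^\mu .
\end{equation}
Each block is an intertwiner between the irreducible $SU(d)\times S_n$-modules $V^\lambda\otimes W^\lambda$ and $V^\mu\otimes W^\mu$. Irreducibility holds because the tensor product of irreducible representations of two groups is irreducible for the product group; this is stated in \cref{thm:schur-weyl}. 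By Schur's lemma, $X_{\mu\lambda}=0$ for $\mu\neq\lambda$, since the modules are pairwise inequivalent. For $\mu=\lambda$, Schur's lemma (over $\C$) gives $X_{\lambda\lambda}=c_\lambda I$. Therefore $X=\sum_\lambda c_\lambda\Pi_\lambda$.

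The only point needing care is the irreducibility and inequivalence of the product modules, together with the fact that $V^\lambda$ remains irreducible after restricting from $U(d)$ to $SU(d)$. All of this is asserted in \cref{thm:schur-weyl}, so no real obstacle remains.

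As an alternative for the diagonal step, one can argue block by block. An operator on $V^\lambda\otimes W^\lambda$ commuting with $Q_\lambda(U)\otimes I$ has the form $I\otimes B$, and $B$ must then commute with $P_\lambda(\sigma)$, so $B$ is scalar.
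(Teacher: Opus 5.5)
Your proposal is correct and follows the same route as the paper. The paper also decomposes $X=\sum_{\lambda,\mu}\Pi_\mu X\Pi_\lambda$ and applies Schur's lemma to the pairwise inequivalent irreducible $SU(d)\times S_n$-modules $V^\lambda\otimes W^\lambda$ from \cref{thm:schur-weyl}, which kills the off-diagonal blocks and makes the diagonal blocks scalar. It then notes the converse inclusion in one line, as you do.
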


\begin{proof}
Schur--Weyl duality gives $\calH \cong \bigoplus_{\lambda\in\Omega_{n,d}}V^\lambda\otimes W^\lambda$.
Each tensor product is irreducible for the product group, and distinct summands are inequivalent. For $X\in\mathrm{comm}(SU(d)\times S_n)$, write
\begin{equation}
    X=\sum_{\lambda,\mu\in\Omega_{n,d}}\Pi_\mu X\Pi_\lambda.
\end{equation}
Each block $\Pi_\mu X\Pi_\lambda$ is $SU(d) \times S_n$-equivariant. Schur's lemma makes it zero when $\mu\neq\lambda$ and a scalar multiple of the identity when $\mu=\lambda$. Thus
\begin{equation}
    X=\sum_{\lambda\in\Omega_{n,d}}c_\lambda\Pi_\lambda.
\end{equation}
Conversely, every $\Pi_\lambda$ commutes with the joint action.
\end{proof}

\subsection{Davies Generator}

For a finite dimensional Hilbert space $\calH$ and a system Hamiltonian $H$, the Gibbs state at inverse temperature $\beta>0$ is given by
\begin{equation}
    \rho
    :=
    Z_\beta^{-1}e^{-\beta H},
    \qquad
    Z_\beta:=\Tr(e^{-\beta H}).
\end{equation}
We equip $\calB(\calH)$ with the KMS inner product $\langle X,Y\rangle_\rho := \Tr\left(\rho^{1/2}X^\dagger\rho^{1/2}Y\right)$ and norm $\Vert X \Vert_\rho^2:=\langle X,X\rangle_\rho$. The Davies generators considered below are Lindbladians $\calL: \calB(\calH) \to \calB(\calH)$ that satisfy quantum detailed balance with respect to the Gibbs state $\rho$, i.e. $\langle X, \calL(Y) \rangle_\rho = \langle \calL(X), Y \rangle_\rho$ for all $X,Y \in \calB(\calH)$. A Lindbladian which is self-adjoint with respect to a KMS inner product is also said to be reversible. The reversible Lindbladians we consider will have eigenvalue 0 and all other eigenvalues will be negative; we denote by $\gap(\calL)$ the smallest non-zero eigenvalue of $-\calL$.  If $\calV\subseteq\calB(\calH)$ is a subspace, we write $\calV^{\perp_{\rho}}$ to denote the KMS orthogonal complement of $\calV$.

Write $\mathrm{Bohr}(H) = \{\omega = E_+ - E_-: E_+,E_- \in \mathrm{spec}(H)\}$ for the set of Bohr frequencies of $H$. For an operator $A\in\calB(\calH)$, its Bohr-frequency decomposition is given by
\begin{equation}
    A = \sum_{\omega \in \mathrm{Bohr}(H)} A(\omega), \qquad A(\omega) = \sum_{E_+-E_- = \omega} \Pi_{E_+} A \Pi_{E_-}.
\end{equation}
Observe that this decomposition implies $\rho^{1/2}A(\omega)=e^{-\beta\omega/2}A(\omega)\rho^{1/2}$. Moreover, distinct Bohr sectors are KMS-orthogonal:
\begin{equation}
    \langle X(\omega),Y(\nu)\rangle_\rho = 0
    \qquad
    \text{whenever }\omega\neq\nu. 
\end{equation}
We also will fix a weight function $\gamma:\R\to\R_{\geq0}$ satisfying the classical detailed-balance condition $\gamma(\omega)=e^{-\beta\omega}\gamma(-\omega)$. Unless stated otherwise, we assume the Metropolis weight $\gamma_M(\omega)=\min\{1,e^{-\beta\omega}\}$. The weight function controls the rate at which the environment interactions change the system energy by each Bohr frequency $\omega$.

\begin{definition}[Davies generator]\label{def:davies-generator}
    For a finite family of Hermitian operators $\calJ\subset\calB(\calH)$ define $\calL:\calB(\calH)\to\calB(\calH)$ by
    \begin{equation}
        \calL(X) = \sum_{\omega \in \Bohr(H)} \gamma(\omega) \sum_{A \in \calJ}\left[A(\omega)^\dagger X A(\omega) - \frac{1}{2} \left\{A(\omega)^\dagger A(\omega),X\right\} \right].
    \end{equation}
    We call the operators in $\calJ$ jump operators. $\calL_{\loc}$ is defined by the set $\calJ_{\loc}$ given in~\cref{def:Jloc}.
\end{definition}

Since the jumps are Hermitian, $A(\omega)^\dagger=A(-\omega)$. Pairing the $\omega$ and $-\omega$ terms using detailed balance gives the Dirichlet form in terms of commutators~\cite{basso2025quantumclassicalcomparisondavies}:
\begin{equation}\label{eq:dirichlet-form-commutator}
    \mathcal E_{\calJ}(X)
    :=\langle X,-\calL(X)\rangle_\rho
    =\frac12\sum_{A\in\calJ}\sum_{\omega\in\Bohr(H)}
    h_\omega\bigl\Vert[A(\omega),X]\bigr\Vert_\rho^2,
    \qquad h_\omega:=e^{\beta\omega/2}\gamma(\omega).
\end{equation}
For Metropolis weights, $h_\omega=e^{-\beta|\omega|/2}>0$. This identity shows that $\calL$ is KMS self-adjoint and negative semidefinite, and that its kernel consists of the operators commuting with every $A(\omega)$ for which $h_\omega>0$. In particular, if all jumps commute with $H$ and $\gamma(0)>0$, then
\begin{equation}\label{eq:zero-frequency-davies}
    \calL(X)=-\frac{\gamma(0)}2\sum_{A\in\calJ}[A,[A,X]],
    \qquad
    \ker(\calL)=\{X:[A,X]=0\text{ for every }A\in\calJ\}.
\end{equation}
We will use this observation for the group mixers below.

It will often be useful to recast the Davies generator in terms of its associated twirl.
\begin{definition} The twirl operator $\calT_{\calJ}:\calB(\calH)\to \calB(\calH)$ associated to a set of jump operators $\calJ$ is given by
    \begin{equation}
        \calT_\calJ (X) = \sum_{A \in \calJ} A^\dagger XA.
    \end{equation}
\end{definition}
Observe that the Davies generator can be expressed in terms of the twirl $\calT_\calJ$ as
\begin{equation}\label{eq:davies-twirl}
    \calL(X) = \sum_{\omega \in \mathrm{Bohr}(H)} \gamma(\omega) \left[
    \sum_{\substack{E_+-E_-=\omega\\F_+-F_-=\omega}}
    \Pi_{E_-}\calT_\calJ(\Pi_{E_+}X\Pi_{F_+})\Pi_{F_-} 
    -\frac12\left\{
    \sum_{E_+-E_-=\omega}
    \Pi_{E_-}\calT_\calJ(\Pi_{E_+})\Pi_{E_-},X
    \right\} \right]
\end{equation}

In this work, we use local jumps given by the normalized single-site Gell--Mann matrices. We use the physicists' convention in which $\mathfrak{su}(d)$ is represented by traceless Hermitian generators $\{\lambda^\alpha\}_{\alpha=1}^{d^2-1}$, normalized by
\begin{equation}
    \Tr(\lambda^\alpha\lambda^\beta)=2\delta_{\alpha\beta}.
\end{equation}
For $\alpha = 1, \dots, d^2-1$ define $J^\alpha := \frac{1}{\sqrt{2d}}\lambda^\alpha$, and set $J^0:=I/d$. Then $\{J^\alpha\}_{\alpha=0}^{d^2-1}$ is an orthogonal basis of $\calB(\C^d)$ with respect to the Hilbert--Schmidt inner product. The local Davies generator studied in this paper is the Davies generator associated with the traceless jump set
\begin{equation} \label{def:Jloc}
    \Jloc := \{J_i^\alpha:i\in[n],\ \alpha=1,\ldots,d^2-1\}
\end{equation}
and is denoted by $\Lloc$. Equivalently, the jumps are the single-site Gell--Mann matrices, normalized by the factor $(2d)^{-1/2}$. This normalization only fixes the time scale: using the unnormalized Gell--Mann matrices instead would multiply the generator by the constant $2d$.

For several calculations of $\Lloc$, it is convenient to adjoin the identity element $J_i^0=I/d$ to the traceless jump set $\calJ_\loc$. This is an algebraic device only: $J_i^0$ has the single Bohr frequency component $J_i^0(0)=I/d$, and its contribution to the Lindbladian vanishes since for any operator $X$ 
\begin{equation}
    \frac1{d^2}X
    -
    \frac12\left\{\frac1{d^2}I,X\right\}
    =0.
\end{equation}
We therefore define the auxiliary \textit{Lie algebra twirl}
\begin{equation}\label{eq:lie-algebra-twirl}
    \calT(X)
    :=
    \sum_{i=1}^n\sum_{\alpha=0}^{d^2-1}
    (J_i^\alpha)^\dagger XJ_i^\alpha
\end{equation}
and note that since $J_i^0$ does not contribute to the Davies generator, we can express $\Lloc$ in terms of the Lie algebra twirl $\calT$ in the form of \cref{eq:davies-twirl}.

We write $\mathcal E_{\loc}(X):=\langle X,-\Lloc(X)\rangle_\rho$ for the local Dirichlet form, given by \cref{eq:dirichlet-form-commutator} with $\calJ=\Jloc$.

\begin{proposition}[Primitivity]\label{prop:local-primitivity}
    For every finite $n,d$ and $\beta>0$, the kernel of $\Lloc$ is
    $\mathbb C I$. Hence the Gibbs state is the unique full-rank stationary
    state of the Schr\"odinger-picture semigroup.
\end{proposition}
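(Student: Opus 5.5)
We must show that $\ker(\Lloc)=\mathbb C I$. The route is the commutator form of the Dirichlet form, \cref{eq:dirichlet-form-commutator}. For Metropolis weights every $h_\omega=e^{-\beta|\omega|/2}$ is strictly positive. Hence $X\in\ker(\Lloc)$ if and only if $\mathcal E_\loc(X)=0$, which happens if and only if $[J_i^\alpha(\omega),X]=0$ for every site $i$, every $\alpha\geq1$, and every $\omega\in\Bohr(H)$. Here we use that $\Lloc$ is KMS self-adjoint and negative semidefinite, so $\langle X,-\Lloc X\rangle_\rho=0$ forces $\Lloc X=0$. Conversely, anything commuting with all Bohr components lies in the kernel.

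Summing the Bohr components over $\omega$ gives $[J_i^\alpha,X]=\sum_\omega[J_i^\alpha(\omega),X]=0$. So $X$ commutes with every single-site Gell-Mann matrix $J_i^\alpha$, $\alpha=1,\dots,d^2-1$, on every site $i$. These matrices together with the identity span $\calB(\C^d)$ at each site. The algebra they generate is therefore $\bigotimes_i\calB(\C^d)=\calB(\calH)$, since products of single-site operators on different sites give all elementary tensors.

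It follows that $X$ lies in the commutant of $\calB(\calH)$, which is $\mathbb C I$. Conversely, $\Lloc(I)=0$, since each term $A(\omega)^\dagger A(\omega)-\frac12\{A(\omega)^\dagger A(\omega),I\}$ vanishes. This proves $\ker(\Lloc)=\mathbb C I$.

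For the Schrödinger-picture statement, take the KMS adjoint. The Schrödinger-picture generator $\calL^\dagger$ satisfies $\calL^\dagger(\rho^{1/2}Y\rho^{1/2})=\rho^{1/2}\calL(Y)\rho^{1/2}$, which follows from KMS detailed balance. Its kernel is therefore $\rho^{1/2}\ker(\Lloc)\rho^{1/2}=\mathbb C\rho$, so the Gibbs state is the unique stationary state.

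No step is a real obstacle. The only point needing care is the passage from commuting with all Bohr components to commuting with $J_i^\alpha$ itself, and that is immediate by summation. The positivity of $h_\omega$ is what makes the Dirichlet-form characterization of the kernel valid.
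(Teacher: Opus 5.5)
Your proof is correct and follows the same route as the paper. You use the commutator form of the Dirichlet form with $h_\omega>0$, sum the Bohr components to get $[J_i^\alpha,X]=0$, and note that the single-site Gell-Mann matrices together with $I$ generate $\calB(\calH)$, so $X$ is scalar. Your KMS-adjoint identity $\calL^\dagger(\rho^{1/2}Y\rho^{1/2})=\rho^{1/2}\calL(Y)\rho^{1/2}$ makes explicit the Schr\"odinger-picture ``Hence'' step that the paper leaves implicit.
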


\begin{proof}
    If $X\in\ker(\Lloc)$, then \cref{eq:dirichlet-form-commutator} and $h_\omega>0$ imply $[J_i^\alpha(\omega),X]=0$ for every $i,\alpha,\omega$. Summing over $\omega$ gives $[J_i^\alpha,X]=0$ for all $i,\alpha$. Moreover, the Gell--Mann matrices (together with the identity) span the full matrix algebra of single site operators. So, an operator commuting with all the $J_i^\alpha$ lies in the center of $\calB((\C^d)^{\otimes n})$ and is thus a scalar multiple of the identity. The converse is immediate.
\end{proof}

Let $t_{\mathrm{mix}}(\eps)$ denote the worst-case trace-distance mixing time of the Schr\"odinger-picture semigroup, using trace distance $\frac12\Vert\sigma-\rho\Vert_1$. For a primitive KMS-self-adjoint Lindbladian and $0<\eps<1/2$, the gap is related to the mixing time via
\begin{equation}\label{eq:gap_and_mixing_time}
    \frac{1}{\gap(\mathcal{L})}
    \log\left(\frac{1}{2 \eps}\right)
    \leq
    t_{\mathrm{mix}}(\eps)
    \leq
    \frac{1}{2\gap(\mathcal L)}
    \log\left(\frac{1}{\lambda_{\min}(\rho)\eps^2}\right).
\end{equation}
The upper bound follows from~\cite[Theorem~22]{KastoryanoTemme2013}. For the lower bound, choose a Hermitian eigenobservable $X$ with $\mathcal L(X)=-\gap(\mathcal L)X$ and $\Vert X\Vert_\infty=1$, so $\Tr(\rho X)=0$. Starting from a pure state on an eigenvector of $X$ with eigenvalue of absolute value one, trace-norm duality gives trace distance at least $\frac12e^{-t\gap(\mathcal L)}$ from $\rho$ at time $t$.
For the interchange Hamiltonian \cref{eq:introduction-hamiltonian}, $\log \lambda_{\min}(\rho)^{-1} = O_{\beta,d}(n)$. Since $\dim(\calH) = d^n$ and $E_{\max}(H)-E_{\min}(H) \leq 2\Vert H\Vert \leq n-1$, we have
\begin{equation}\label{eq:gibbs-min-eigenvalue-bound}
    \log\lambda_{\min}(\rho)^{-1}
    \leq n\log d+\beta\bigl(E_{\max}(H)-E_{\min}(H)\bigr)
    \leq n\log d+\beta(n-1) = O_{\beta,d}(n).
\end{equation}

This makes the spectral gap the central dynamical quantity in our analysis, since its scaling distinguishes rapidly mixing regimes from exponentially slow ones. In other words, estimates of $\gap(\Lloc)$ translate directly, up to a factor linear in $n$, into estimates on the mixing time.

\subsection{Symmetry of the Lindbladian}

\begin{definition}[Observable representation]\label{def:observable-representation}
    Let $R:G\to U(\calH)$ be a unitary representation. Its \textbf{observable representation}, also called the conjugation representation, is the representation on $\calB(\calH)$ defined by
    \begin{equation}
        \calR(g)(X):=R(g)XR(g)^\dagger,
        \qquad g\in G,\quad X\in\calB(\calH).
    \end{equation}
\end{definition}

The fixed point operators of the observable representation are exactly those in $\mathrm{comm}(G)$. Here we apply this definition to $R(U,\sigma)=Q(U)P(\sigma)$ and its restrictions to $SU(d)$ and $S_n$. Since the interchange Hamiltonian $H$ commutes with both the on-site $SU(d)$ action and the permutation action of $S_n$, the Gibbs state $\rho\propto e^{-\beta H}$ commutes with both actions as well. As a consequence, the corresponding observable representations are unitary for the KMS inner product, i.e. for all $X\in \calB(\calH)$ and all $U\in SU(d), \sigma\in S_n$,
\begin{equation}
    \Vert X\Vert_{\rho} = \Vert Q(U)X Q(U)^\dagger\Vert_\rho = \Vert P(\sigma) X P(\sigma)^\dagger\Vert_\rho.  
\end{equation}
Analogously to the $d=2$ case~\cite[Thm. 3.1] {BassoBergamaschiLinRagoneStubbs2026}, we may show that the Davies generator $\calL_{\loc}$ likewise enjoys this symmetry.

\begin{lemma}\label{lem:lie-algebra-twirl-equivariant}
    The Lie algebra twirl $\calT$ is equivariant with respect to the joint
    action of $SU(d)\times S_n$. That is, for every $(U,\sigma) \in SU(d) \times S_n$ and $X\in\calB(\calH)$,
    \begin{equation}
        \calT\left(Q(U)P(\sigma)X P(\sigma)^\dagger Q(U)^\dagger\right) = Q(U)P(\sigma)\calT(X)P(\sigma)^\dagger Q(U)^\dagger.
    \end{equation}
\end{lemma}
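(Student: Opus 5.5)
We will prove the equivariance separately for the two factors, $SU(d)$ and $S_n$. Since the joint action is the composition $X\mapsto Q(U)\bigl(P(\sigma)XP(\sigma)^\dagger\bigr)Q(U)^\dagger$, equivariance for each factor implies it for the product.

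\emph{Permutations.} For $\sigma\in S_n$ we use the covariance $P(\sigma)^\dagger J_i^\alpha P(\sigma)=J_{\sigma^{-1}(i)}^\alpha$. This holds because $P(\sigma)$ relabels tensor factors, exactly as in the identity $P(\sigma)T_{ij}P(\sigma)^\dagger=T_{\sigma(i)\sigma(j)}$ from the proof of \cref{prop:interchange-schur-weyl-form}. Writing $Y=P(\sigma)XP(\sigma)^\dagger$, we get
\begin{equation}
(J_i^\alpha)^\dagger Y J_i^\alpha
= P(\sigma)\,\bigl(P(\sigma)^\dagger J_i^\alpha P(\sigma)\bigr)^\dagger X\,\bigl(P(\sigma)^\dagger J_i^\alpha P(\sigma)\bigr)P(\sigma)^\dagger
= P(\sigma)(J^\alpha_{\sigma^{-1}(i)})^\dagger X J^\alpha_{\sigma^{-1}(i)}P(\sigma)^\dagger .
\end{equation}
Summing over $i$ and reindexing by the bijection $i\mapsto\sigma^{-1}(i)$ gives $\calT(Y)=P(\sigma)\calT(X)P(\sigma)^\dagger$.

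\emph{Unitary rotations.} For $U\in SU(d)$ we have $Q(U)^\dagger J_i^\alpha Q(U)=(U^\dagger J^\alpha U)_i$, since the factors of $U^{\otimes n}$ on sites other than $i$ cancel. The key observation is that the single-site map
\begin{equation}
\tau(Z):=\sum_{\alpha=0}^{d^2-1}(J^\alpha)^\dagger Z J^\alpha,\qquad Z\in\calB(\C^d),
\end{equation}
is basis independent. Indeed $\{J^\alpha\}_{\alpha=0}^{d^2-1}$ is an orthogonal basis of $\calB(\C^d)$ with all elements of Hilbert--Schmidt norm squared $1/d$: $\Tr(J^\alpha J^\beta)=\frac{1}{2d}\cdot2\delta_{\alpha\beta}=\delta_{\alpha\beta}/d$ for $\alpha,\beta\ge1$, and $\Tr((I/d)^2)=1/d$. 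Then the standard completeness identity $\sum_\alpha \overline{(J^\alpha)}_{ab}(J^\alpha)_{cd}=\frac1d\delta_{ac}\delta_{bd}$ gives $\tau(Z)=\frac1d\Tr(Z)I$. Extended to site $i$ of the full space, this says $\sum_\alpha (J_i^\alpha)^\dagger X J_i^\alpha=\frac1d\Tr_i(X)\otimes I_i$, which is the formula for $\calT$ quoted in the proof strategy.

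From this formula the $SU(d)$ equivariance is immediate. The partial trace satisfies
\begin{equation}
\Tr_i\bigl(U^{\otimes n}XU^{\dagger\otimes n}\bigr)=U^{\otimes (n-1)}\Tr_i(X)\,U^{\dagger\otimes(n-1)}
\end{equation}
by cyclicity on the $i$-th factor, and $I_i=UI_iU^\dagger$. Hence each summand, and therefore $\calT$, commutes with conjugation by $Q(U)$. The partial-trace formula also gives an alternative proof of the $S_n$ part.

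The only step requiring care is the completeness identity, specifically checking that the normalization $J^0=I/d$ matches the norm of the traceless elements so that the basis is orthogonal with uniform norm. Everything else is relabeling of indices.
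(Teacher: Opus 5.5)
Your proof is correct and follows essentially the paper's route: both derive $\calT(X)=\frac1d\sum_i\Tr_i(X)\otimes I_i$ from Hilbert--Schmidt completeness of $\{J^\alpha\}_{\alpha=0}^{d^2-1}$ and then get $SU(d)$-equivariance from covariance of the partial trace under $U^{\otimes n}$. The only difference is cosmetic: you handle $S_n$ by relabeling the jumps via $P(\sigma)^\dagger J_i^\alpha P(\sigma)=J^\alpha_{\sigma^{-1}(i)}$, while the paper relabels the partial trace via $P(\sigma)(\Tr_i(X)\otimes I_i)P(\sigma)^\dagger=\Tr_{\sigma(i)}(P(\sigma)XP(\sigma)^\dagger)\otimes I_{\sigma(i)}$, and the two are equivalent.
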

\begin{proof}
    The completeness relation for the Hilbert--Schmidt orthogonal basis $\{J^\alpha\}_{\alpha=0}^{d^2-1}$ gives us a useful expression for the twirl in terms of the partial trace
    \begin{equation}\label{eq:lie-algebra-twirl-partial-trace}
        \calT(X)
        =
        \frac1d\sum_{i=1}^n\Tr_i(X)\otimes I_i.
    \end{equation}
    Thus, the claim is immediate from equivariance of the partial trace. Indeed, for the $SU(d)$-action,
    \begin{equation}
        \Tr_i\left(U^{\otimes n}X(U^\dagger)^{\otimes n}\right) = U^{\otimes(n-1)}\Tr_i(X)(U^\dagger)^{\otimes(n-1)}.
    \end{equation}
    Therefore
    \begin{equation}
        \Tr_i\left(U^{\otimes n}X(U^\dagger)^{\otimes n}\right)\otimes I_i = U^{\otimes n}\left(\Tr_i(X)\otimes I_i\right)(U^\dagger)^{\otimes n}.
    \end{equation}
    Summing over $i$ gives equivariance under $SU(d)$. For the permutation action, we have
    \begin{equation}
        P(\sigma)\left(\Tr_i(X)\otimes I_i\right)P(\sigma)^\dagger = \Tr_{\sigma(i)}\left(P(\sigma)X P(\sigma)^\dagger\right) \otimes I^{(\sigma(i))}.
    \end{equation}
    Since $i\mapsto\sigma(i)$ is a permutation of $[n]$, summing over $i$ gives equivariance under $S_n$. Combining the two proves the claim.
\end{proof}

\begin{corollary}\label{cor:local-davies-equivariant}
    The local Davies generator is equivariant under conjugation by $SU(d)\times S_n$.
\end{corollary}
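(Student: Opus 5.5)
The plan is to write $\Lloc$ in the twirl form \cref{eq:davies-twirl}, with the Lie algebra twirl $\calT$ in place of $\calT_{\calJ}$, and then check that every ingredient commutes with conjugation by $R(U,\sigma)=Q(U)P(\sigma)$. This form is available because, as noted above, adjoining $J_i^0=I/d$ does not change the generator. Write $\calR(U,\sigma)(X)=R X R^\dagger$ for the observable representation. There are three ingredients: the twirl $\calT$, the spectral projectors $\Pi_E$, and the scalar weights $\gamma(\omega)$.

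The first ingredient is handled by \cref{lem:lie-algebra-twirl-equivariant}, which already gives $\calT\circ\calR=\calR\circ\calT$. For the second, \cref{prop:interchange-schur-weyl-form} shows that $H$ commutes with both $Q(U)$ and $P(\sigma)$. Each spectral projector $\Pi_E$ is a polynomial in $H$, so $R\,\Pi_E=\Pi_E R$ for every energy $E$. The weights $\gamma(\omega)$ and the index set $\Bohr(H)$ do not depend on $X$ at all.

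With these facts, substitute $RXR^\dagger$ into \cref{eq:davies-twirl}. In the jump term,
\begin{equation}
\Pi_{E_-}\calT\bigl(\Pi_{E_+}RXR^\dagger\Pi_{F_+}\bigr)\Pi_{F_-}
=\Pi_{E_-}\calT\bigl(R\,\Pi_{E_+}X\Pi_{F_+}R^\dagger\bigr)\Pi_{F_-}
=R\,\Pi_{E_-}\calT\bigl(\Pi_{E_+}X\Pi_{F_+}\bigr)\Pi_{F_-}R^\dagger .
\end{equation}
Each equality moves $R$ and $R^\dagger$ past a projector or through $\calT$. For the anticommutator term, first note that $\calT(\Pi_{E_+})=\calT(R\Pi_{E_+}R^\dagger)=R\,\calT(\Pi_{E_+})R^\dagger$, so $\calT(\Pi_{E_+})$ commutes with $R$. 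The operator $K_\omega:=\sum_{E_+-E_-=\omega}\Pi_{E_-}\calT(\Pi_{E_+})\Pi_{E_-}$ therefore commutes with $R$, and so
\begin{equation}
\{K_\omega,RXR^\dagger\}=R\{K_\omega,X\}R^\dagger .
\end{equation}
Summing over $\omega$ with the weights $\gamma(\omega)$ gives $\Lloc(RXR^\dagger)=R\,\Lloc(X)R^\dagger$.

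The only point needing care is whether \cref{eq:davies-twirl} with the full twirl $\calT$ really equals $\Lloc$. The $J_i^0$ term contributes $d^{-2}\Pi_{E_-}\Pi_{E_+}X\Pi_{F_+}\Pi_{F_-}$, which is nonzero only at $\omega=0$. It cancels there against the matching anticommutator piece, since $\sum_E\Pi_E X\Pi_E$ has to be compared with $X$ itself. This is exactly the cancellation recorded before \cref{eq:lie-algebra-twirl}, so it can be cited rather than redone.

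As an alternative, one can avoid this bookkeeping by arguing with the Dirichlet form \cref{eq:dirichlet-form-commutator}. Under the map $J_i^\alpha\mapsto RJ_i^\alpha R^\dagger$, the real span of the traceless single-site jumps is preserved, and the jump set is transformed by an orthogonal change of basis. Unitary invariance of the Lindblad form under such a change then yields the same conclusion.
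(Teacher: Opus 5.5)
Your argument is correct and is essentially the paper's proof: the twirl $\calT$ is equivariant by \cref{lem:lie-algebra-twirl-equivariant}, the spectral projectors $\Pi_E$ commute with $Q(U)P(\sigma)$, and so the jump and anticommutator parts of \cref{eq:davies-twirl} are equivariant term by term. You spell this out in more detail than the paper does.

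One small slip in your side remark on $J_i^0$: at $\omega=0$ the sums over $E$ and $F$ are independent, so the $J_i^0$ jump term is $d^{-2}\sum_{E,F}\Pi_E X\Pi_F=d^{-2}X$, not $\sum_E\Pi_E X\Pi_E$. That is exactly why it cancels the anticommutator piece. Equivariance would hold even without this cancellation, since the $J_i^0$ contribution is itself equivariant.
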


\begin{proof}
    By \cref{lem:lie-algebra-twirl-equivariant}, $\calT$ is equivariant under conjugation by $SU(d)\times S_n$. Moreover, each spectral projector $\Pi_E$ commutes with the corresponding unitary representations. Consequently, composing $\calT$ with left and right multiplication by spectral projectors preserves equivariance. The same is true of the anticommutator terms in \cref{eq:davies-twirl}. Thus every summand in $\Lloc$ is equivariant, and hence so is $\Lloc$.
\end{proof}

\section{Main Results}\label{sec:main-results}

We first state the invariant decomposition of the observable algebra, then give the result for each summand and combine them into the full phase diagram.

\subsection{Symmetry Decomposition and Non-Symmetric Sector Bounds}\label{subsec:symmetry-sector-decomposition}

\subsubsection{Decomposition into Symmetry Sectors}

First, $\calB(\calH)$ splits into the $S_n$-invariant subalgebra and its KMS-orthogonal complement:
\begin{equation}
    \calB(\calH)
    =
    \mathrm{comm}(S_n)
    \oplus_\rho
    \mathrm{comm}(S_n)^{\perp_\rho}.
\end{equation}
Inside $\mathrm{comm}(S_n)$, set
\begin{equation}
    \calA_d
    :=
    \mathrm{comm}(S_n)\cap\mathrm{comm}(SU(d))
    =
    \mathrm{comm}(SU(d)\times S_n).
\end{equation}
By \cref{lem:commutant-is-span-projectors},
\begin{equation}
    \calA_d
    =
    \mathrm{span}\{\Pi_\lambda:\lambda\in\Omega_{n,d}\}.
\end{equation}
This is the commutative Young diagram sector. Since the conjugation representations of $S_n$ and $SU(d)$ on $\calB(\calH)$ are unitary with respect to the KMS inner product, complete reducibility implies that the $SU(d)$-invariant subspace of $\mathrm{comm}(S_n)$ admits a KMS-orthogonal complement, see e.g.~\cite{hall2013lie}. We denote this subspace by
\begin{equation}
    \calK_d
    :=
    \mathrm{comm}(S_n)\cap\mathrm{comm}(SU(d))^{\perp_\rho}.
\end{equation}
Thus, we have the full decomposition
\begin{equation}
    \calB(\calH)
    =
    \calA_d \oplus_\rho \calK_d \oplus_\rho \mathrm{comm}(S_n)^{\perp_\rho}.
\end{equation}

\begin{proposition}\label{prop:invariant-symmetry-sectors}
    The three summands above are invariant under $\Lloc$. Consequently,
    \begin{equation}
        \gap(\Lloc)
        =
        \min\left\{
            \gap(\Lloc\vert_{\calA_d}),
            \lambda_{\min}(-\Lloc\vert_{\calK_d}),
            \lambda_{\min}(-\Lloc\vert_{\mathrm{comm}(S_n)^{\perp_\rho}})
        \right\}.
    \end{equation}
\end{proposition}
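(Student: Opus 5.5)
The argument has two parts. First, each of the three summands is invariant under $\Lloc$; this follows from equivariance (\cref{cor:local-davies-equivariant}) together with KMS self-adjointness. Second, the gap formula is spectral bookkeeping for a self-adjoint operator that is block diagonal with respect to a KMS-orthogonal decomposition.

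\emph{Invariance.} Let $\calR(g)$ denote the observable representation of $g\in SU(d)\times S_n$. By \cref{cor:local-davies-equivariant}, $\Lloc\calR(g)=\calR(g)\Lloc$ for every $g$.

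If $X\in\mathrm{comm}(S_n)$, then $\calR(\sigma)\Lloc(X)=\Lloc(\calR(\sigma)X)=\Lloc(X)$, so $\Lloc(X)\in\mathrm{comm}(S_n)$. The same argument with $SU(d)$ shows that $\mathrm{comm}(SU(d))$ is invariant. Hence the intersection $\calA_d$ is also invariant.

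For the complements we use that $\Lloc$ is KMS self-adjoint. If $\calV$ is $\Lloc$-invariant and $Y\in\calV^{\perp_\rho}$, then for all $X\in\calV$ we have $\langle X,\Lloc Y\rangle_\rho=\langle \Lloc X,Y\rangle_\rho=0$. Thus $\calV^{\perp_\rho}$ is invariant. This gives invariance of $\mathrm{comm}(S_n)^{\perp_\rho}$ and of $\mathrm{comm}(SU(d))^{\perp_\rho}$.

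The intersection $\calK_d=\mathrm{comm}(S_n)\cap\mathrm{comm}(SU(d))^{\perp_\rho}$ is then invariant as an intersection of invariant subspaces. One point needs checking here: that this definition of $\calK_d$ is the same as the KMS-orthogonal complement of $\calA_d$ inside $\mathrm{comm}(S_n)$. It is, because the KMS-orthogonal projection onto $\mathrm{comm}(SU(d))$ is the Haar average $\int \calR(U)\,dU$. The observable representation is KMS-unitary, so this average is indeed the KMS-orthogonal projection. It also commutes with $\calR(\sigma)$, so it preserves $\mathrm{comm}(S_n)$. Therefore $\mathrm{comm}(S_n)=\calA_d\oplus_\rho\calK_d$, and this is consistent with the decomposition stated before the proposition. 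Alternatively, $\calK_d$ is the KMS-orthogonal complement of $\calA_d$ within the invariant subspace $\mathrm{comm}(S_n)$, and invariance follows again from self-adjointness.

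\emph{Gap formula.} Since the three summands are mutually KMS-orthogonal, invariant, and together span $\calB(\calH)$, the operator $-\Lloc$ is the direct sum of its self-adjoint restrictions. Its spectrum is therefore the union of the spectra of the three restrictions.

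By \cref{prop:local-primitivity}, $\ker\Lloc=\mathbb C I$, and $I\in\calA_d$. Consequently:
\begin{itemize}[nosep]
\item the restrictions to $\calK_d$ and to $\mathrm{comm}(S_n)^{\perp_\rho}$ have trivial kernel, so their smallest eigenvalues are positive and all of their eigenvalues are nonzero;
\item on $\calA_d$ the eigenvalue $0$ appears with multiplicity one, so the smallest nonzero eigenvalue there is $\gap(\Lloc\vert_{\calA_d})$.
\end{itemize}
Taking the minimum over the nonzero part of the union of spectra gives the stated formula.

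The only potentially delicate step is identifying $\calK_d$ as an orthogonal complement, which the Haar-average argument handles. The rest of the proof is routine.
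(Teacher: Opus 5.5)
Your proof is correct and follows the paper's argument. Equivariance gives invariance of both commutants and hence of $\calA_d$, KMS self-adjointness carries invariance over to the orthogonal complements and hence to $\calK_d$, and $\ker(\Lloc)=\mathbb{C}I\subseteq\calA_d$ yields the gap formula. Your Haar-average check, showing that $\mathrm{comm}(S_n)\cap\mathrm{comm}(SU(d))^{\perp_\rho}$ is exactly the KMS-complement of $\calA_d$ inside $\mathrm{comm}(S_n)$ so that the three summands exhaust $\calB(\calH)$, is a detail the paper only asserts via complete reducibility before the proposition, and it is a sound addition.
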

\begin{proof}
    By $S_n$- and $SU(d)$-equivariance, $\Lloc$ preserves $\mathrm{comm}(S_n)$ and $\mathrm{comm}(SU(d))$, respectively. Hence it preserves their intersection $\calA_d$.
    Since $\Lloc$ is KMS self-adjoint, the KMS-orthogonal complement of any invariant subspace is also invariant. Thus $\mathrm{comm}(SU(d))^{\perp_\rho}$ and $\mathrm{comm}(S_n)^{\perp_\rho}$ are invariant, and consequently so is $\calK_d = \mathrm{comm}(S_n)\cap\mathrm{comm}(SU(d))^{\perp_\rho}$. Therefore $-\Lloc$ is block diagonal with respect to
    \begin{equation}
        \calB(\calH)
        =
        \calA_d
        \oplus_\rho
        \calK_d
        \oplus_\rho
        \mathrm{comm}(S_n)^{\perp_\rho}.
    \end{equation}
    Finally, $\ker(\Lloc)=\mathbb CI\subseteq\calA_d$, so the smallest non-zero eigenvalue of $-\Lloc$ is the minimum of the spectral gap on $\calA_d$ and the smallest eigenvalues on the other two summands.
\end{proof}

As such, the analysis of $\gap(\Lloc)$ splits into three independent problems:
\begin{enumerate}
    \item the Young diagram sector $\calA_d$,
    \item the non-trivial $SU(d)$ (order parameter) sector $\calK_d$,
    \item and the permutation-non-symmetric sector $\mathrm{comm}(S_n)^{\perp_\rho}$.
\end{enumerate}

The three summands have distinct dynamical interpretations. The Young diagram sector $\calA_d$ records only the Schur--Weyl label, equivalently the normalized Young diagram shape $x=\lambda/n$, and carries the spectral free energy landscape. The sector $\calK_d$ contains permutation-invariant observables that transform non-trivially under global $SU(d)$ rotations and therefore carries the collective order parameter. At low temperatures $\beta > \beta_c(d)$, these observables distinguish orientations with the same Young diagram states. Finally, $\mathrm{comm}(S_n)^{\perp_\rho}$ contains the permutation-non-symmetric observables.

\subsubsection{Bounds on \texorpdfstring{$\calK_d$}{} and \texorpdfstring{$\mathrm{comm}(S_n)^{\perp_\rho}$}{}}

The two non-commutative sectors are controlled by comparison with auxiliary generators that mix the corresponding symmetry actions. Their gap bounds are summarized in the following theorem.

\begin{theorem}[Gap bounds on the symmetry-controlled sectors]
\label{thm:symmetry-controlled-sector-bounds}
    There exist constants $C_1,C_2>0$, depending only on $d$ such that the following hold for every $n\geq2$.
    
    \begin{enumerate}[label=(\roman*)]
        \item On the permutation-invariant, non-trivial $SU(d)$ sector $\calK_d$,
        \begin{equation}\label{eq:main-su-sector-lower-bound}
            -\Lloc\big\vert_{\calK_d}
            \geq
            \frac1n\idty\big\vert_{\calK_d}.
        \end{equation}
        Equivalently,
        \begin{equation}
            \lambda_{\min}\left(-\Lloc\big\vert_{\calK_d}\right)
            \geq
            \frac1n.
        \end{equation}
    
        \item On the permutation-non-symmetric sector $\mathrm{comm}(S_n)^{\perp_\rho}$,
        \begin{equation}\label{eq:main-Sn-sector-lower-bound}
            -\Lloc
            \big\vert_{\mathrm{comm}(S_n)^{\perp_\rho}}
            \geq
            \frac{1}{C_1e^{C_2\beta}}
            \idty
            \big\vert_{\mathrm{comm}(S_n)^{\perp_\rho}}.
        \end{equation}
        In particular, for fixed $d$ and $\beta$,
        \begin{equation}
            \lambda_{\min}\left(
                -\Lloc
                \big\vert_{\mathrm{comm}(S_n)^{\perp_\rho}}
            \right)
            =
            \Omega_{\beta,d}(1).
        \end{equation}
    \end{enumerate}
\end{theorem}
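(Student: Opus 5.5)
Both parts are proved by Dirichlet-form comparison with an auxiliary generator whose kernel is the relevant commutant, following the group mixer strategy of \cite{BassoBergamaschiLinRagoneStubbs2026}. Throughout we use the commutator form \cref{eq:dirichlet-form-commutator} with $h_\omega=e^{-\beta|\omega|/2}$.

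\emph{Part (i).} Define the collective generators $G^\alpha:=\sum_{i=1}^n J_i^\alpha$. These commute with $H$, since $H$ is $SU(d)$-invariant and each $G^\alpha$ is a Lie algebra generator of the diagonal action. By \cref{eq:zero-frequency-davies}, the generator $\calL_{SU(d)}(X)=-\tfrac12\sum_\alpha[G^\alpha,[G^\alpha,X]]$ is KMS self-adjoint with kernel $\mathrm{comm}(SU(d))$. Up to normalization, $-\calL_{SU(d)}$ is the quadratic Casimir of the adjoint (conjugation) action on $\calB(\calH)$.

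We normalize $\calL_{SU(d)}$ so that its least nonzero eigenvalue is $1$. The adjoint Casimir is smallest on the adjoint irrep, so one computes its value there and rescales. Doing so gives $\gap(\calL_{SU(d)})=1$ on $\mathrm{comm}(SU(d))^{\perp_\rho}$, and in particular on $\calK_d$.

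The main step is the inequality $\calE_{\loc}(X)\ge \tfrac1n\langle X,-\calL_{SU(d)}X\rangle_\rho$ for $X\in\mathrm{comm}(S_n)$. Note that $[G^\alpha,X]=\sum_\omega\sum_i[J_i^\alpha(\omega),X]$. Two facts enter here:
\begin{itemize}
\item $G^\alpha$ has only the zero Bohr component, so its frequency-zero part is $\sum_i J_i^\alpha(0)$;
\item for permutation-invariant $X$, the summands $[J_i^\alpha(\omega),X]$ are permuted among one another as $i$ varies.
\end{itemize}
Hence $[G^\alpha,X]=\sum_i[J_i^\alpha(0),X]$, and Cauchy--Schwarz over the $n$ sites gives
\[
\|[G^\alpha,X]\|_\rho^2\le n\sum_i\|[J_i^\alpha(0),X]\|_\rho^2 .
\]
Since $h_0=1$, the frequency-zero part of $\calE_{\loc}$ dominates the right-hand side, and the factor $n$ yields the $1/n$ in \eqref{eq:main-su-sector-lower-bound}.

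The obstacle is verifying $[G^\alpha,X]=\sum_i[J_i^\alpha(0),X]$, i.e.\ that the nonzero-frequency pieces cancel after summing over $i$. This holds because $\sum_i J_i^\alpha(\omega)=G^\alpha(\omega)=0$ for $\omega\ne0$. If the frequency-zero restriction gives trouble, the fallback is to use the unitary freedom mentioned in the proof strategy: rotate the jump basis so that it contains the collective-compatible combinations, then compare directly.

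\emph{Part (ii).} Compare with $\calL_{S_n}(O)=\frac2n\sum_{u<v}(T_{uv}OT_{uv}-O)$. This generator is KMS self-adjoint because $T_{uv}$ commutes with $\rho$; its kernel is $\mathrm{comm}(S_n)$ and its gap is $2$ (as stated in the proof strategy). Its Dirichlet form is proportional to $\frac1n\sum_{u<v}\|[T_{uv},X]\|_\rho^2$.

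The key step is to write the swap in local terms,
\[
T_{uv}=\tfrac1{d}\,I+2\sum_{\alpha\ge1}\lambda_u^\alpha\lambda_v^\alpha\cdot\tfrac14 ,
\]
i.e.\ a sum of products of single-site jumps, and then apply the Leibniz rule:
\[
[AB,X]=A[B,X]+[A,X]B .
\]
Two further points are needed:
\begin{itemize}
\item Each $J_u^\alpha$ changes the energy by $O(1)$, because the Bohr frequencies of single-site jumps are bounded by $\|[H,J_u^\alpha]\|=O(1)$. So decomposing $[J_u^\alpha,X]$ into its frequency parts costs only constant factors $e^{O(\beta)}$ when compared against $h_\omega$.
\item Conjugation by bounded single-site operators changes KMS norms by at most $e^{O(\beta)}$. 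This holds because $\rho^{1/2}J\rho^{-1/2}$ has norm $e^{O(\beta)}$, by the same $O(1)$ frequency bound.
\end{itemize}

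Combining these, $\|[T_{uv},X]\|_\rho^2\le C e^{C\beta}\sum_\alpha\sum_{w\in\{u,v\}}\sum_\omega h_\omega\|[J_w^\alpha(\omega),X]\|^2_\rho$. Summing over pairs $u<v$, each site appears $n-1$ times, which cancels the prefactor $\frac1n$ of $\calL_{S_n}$. This gives $-\calL_{S_n}\le C_1e^{C_2\beta}(-\Lloc)$, and since $\gap(\calL_{S_n})=2$ on $\mathrm{comm}(S_n)^{\perp_\rho}$, we obtain \eqref{eq:main-Sn-sector-lower-bound}.

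The main obstacle is the $e^{O(\beta)}$ control of KMS-weighted commutators of products. This requires the single-site Bohr frequency support to be $O(1)$ uniformly in $n$. That bound follows from the one-box-move structure: $E_\lambda$ changes by $O(1)$ under moving one box.
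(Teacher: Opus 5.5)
Your part (i) is correct and is the paper's argument written as a quadratic-form inequality. Cauchy--Schwarz on $\sum_i[J_i^\alpha(0),X]$, using $\sum_iJ_i^\alpha(\omega)=0$ for $\omega\neq0$, gives exactly what the paper obtains by rotating the jumps with an orthogonal matrix whose first row is $n^{-1/2}(1,\ldots,1)$. The inequality holds for all $X$, so the permutation bullet is not needed.

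Two caveats on the mixer's gap:
\begin{itemize}
\item The normalization is not free. Your comparison fixes $\calL_{SU(d)}=-\tfrac12\sum_\alpha[G^\alpha,[G^\alpha,\cdot\,]]$, so you must check that the adjoint eigenvalue is exactly $1$ at that normalization. It is, since $\sum_\alpha[\lambda^\alpha,[\lambda^\alpha,X]]=4dX$ for traceless $X$.
\item The claim that the smallest nonzero eigenvalue sits on the adjoint irrep needs an argument. In the paper's normalization the fundamental has Casimir $2(d^2-1)/d<4d$. The claim holds only because the center acts trivially on $\calB(\calH)$, which forces $|\nu|\equiv0\pmod d$. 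The paper proves this before its bound $C_2(\nu)\geq4d$.
\end{itemize}

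Part (ii) has a genuine gap at the step ``decomposing $[J_u^\alpha,X]$ into its frequency parts costs only constant factors.'' After the Leibniz rule you hold $\Vert[J_w^\alpha,X]\Vert_\rho^2$, a commutator with the \emph{full} jump. But $\calE_{\loc}$ only contains $\sum_\omega h_\omega\Vert[J_w^\alpha(\omega),X]\Vert_\rho^2$. For general $X$ the pieces $[J_w^\alpha(\omega),X]$ are not KMS-orthogonal. Moreover, $J_w^\alpha$ has order-$n$ distinct Bohr frequencies, namely $(\lambda_r-\lambda_s+s-r-1)/n$. So Cauchy--Schwarz over $\omega$ costs a factor of order $n$ and yields only $\Omega(1/n)$ on $\mathrm{comm}(S_n)^{\perp_\rho}$, not the $n$-independent constant claimed. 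Bounded bandwidth only controls the weights $h_\omega$.

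The missing idea is to Bohr-decompose the observable rather than the jump, and to do so before expanding the swap:
\begin{itemize}
\item Since $[H,T_{uv}]=0$, you have $\Vert[T_{uv},X]\Vert_\rho^2=\sum_\nu\Vert[T_{uv},X(\nu)]\Vert_\rho^2$.
\item Apply the Leibniz rule to each $X(\nu)$ separately.
\item Then use the exact identity $\sum_\nu\Vert[J,X(\nu)]\Vert_\rho^2=\sum_\omega\Vert[J(\omega),X]\Vert_\rho^2$. Both sides equal $\sum_{\omega,\nu}\Vert[J(\omega),X(\nu)]\Vert_\rho^2$, because $[J(\omega),X(\nu)]$ lies in Bohr sector $\omega+\nu$.
\end{itemize}

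The same counting problem affects your modular-norm step. Bounding $\rho^{1/4}J\rho^{-1/4}=\sum_\omega e^{-\beta\omega/4}J(\omega)$ termwise again loses the number of frequencies. The paper instead splits $J_i^\alpha$ into the $1+d(d-1)$ pieces $A^0$ and $A^{r\to s}$. Each piece is a direct sum of blocks with mutually orthogonal sources and targets, so conjugation by $\rho^{\pm1/4}$ only rescales blocks by at most $e^{\beta\Omega_d/4}$. A smooth Fourier cutoff would also work.

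Finally, $\Vert[H,J]\Vert=O(1)$ does not bound the Bohr spectrum: a tiny matrix element between distant levels has small commutator norm but large frequency. The valid reason is the one-box selection rule, $\Pi_\mu J_i^\alpha\Pi_\lambda=0$ unless $\lambda\sim\mu$ or $\lambda=\mu$, which you do cite at the end.
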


\begin{proof}[Proof sketch]
    For part~(i), set $\Lambda_\alpha^{\tot}:=\sum_{i=1}^n(\lambda^\alpha)^{(i)}$ and introduce the normalized global $SU(d)$ mixer
    \begin{equation}
        \calL_{SU(d)}(O)
        =
        -\frac{1}{4d}
        \sum_{\alpha=1}^{d^2-1}
        [\Lambda_\alpha^{\tot},[\Lambda_\alpha^{\tot},O]].
    \end{equation}
    Its kernel is $\mathrm{comm}(SU(d))$, and its spectral gap on $\mathrm{comm}(SU(d))^{\perp_\rho}$ is exactly one. A change of basis among the single-site jumps gives a decomposition
    \begin{equation}
        \Lloc
        =
        \frac1n\calL_{SU(d)}
        +
        \Lrest,
    \end{equation}
    where $\Lrest$ is another negative-semidefinite Davies generator. Hence
    \begin{equation}
        -\Lloc
        \geq
        -\frac1n\calL_{SU(d)},
    \end{equation}
    which proves~\cref{eq:main-su-sector-lower-bound}. The details are given in \cref{app:sud-sector}.
    
    For part~(ii), we introduce the normalized $S_n$ group mixer
    \begin{equation}
        \calL_{S_n}(O)
        :=
        \frac2n
        \sum_{1\leq u<v\leq n}
        \left(T_{uv}OT_{uv}-O\right).
    \end{equation}
    Its kernel is $\mathrm{comm}(S_n)$ and its spectral gap is exactly two. We then prove the uniform comparison
    \begin{equation}
        \langle O,-\calL_{S_n}(O)\rangle_\rho
        \leq
        C_1e^{C_2\beta}
        \langle O,-\Lloc(O)\rangle_\rho.
    \end{equation}
    Combining these two statements gives \cref{eq:main-Sn-sector-lower-bound}. The complete argument appears in \cref{app:Sn-sector}.
\end{proof}

Thus neither of the two symmetry-controlled sectors can produce an exponentially small gap at fixed $d$ and $\beta$. Any exponential slowdown must come from the Young diagram sector $\calA_d$.

\subsection{The Young Diagram Sector \texorpdfstring{$\calA_d$}{A d}}\label{subsec:young-diagram-sector}

The restriction of the Davies dynamics of $\Lloc$ to $\calA_d$ gives rise to a classical Markov chain called the master equation. This is a consequence of the following general result (cf.~\cite[Theorem~4.2]{carlen2017gradient}).

\begin{proposition}\label{prop:commutative-subalgebra-markov}
    Let $\calA \subset \calB(\calH)$ be a finite-dimensional commutative $*$-subalgebra, and suppose
    \begin{equation}
        \calL(\calA)\subseteq \calA,
    \end{equation}
    where $\calL$ generates a unital completely positive semigroup $\{e^{t\calL}\}_{t\geq0}$ and is KMS self-adjoint with respect to a unique full-rank stationary state $\rho$. Moreover, suppose that
    \begin{equation}
        [\rho,A] = 0
        \qquad
        \text{for every }A\in\calA.
    \end{equation}
    Then $\calL\vert_{\calA}$ gives rise to a reversible continuous-time Markov chain on the spectrum of $\calA$.
\end{proposition}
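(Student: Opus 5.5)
Since $\calA$ is a finite-dimensional commutative $*$-subalgebra of $\calB(\calH)$, I would start from its minimal projections: there are mutually orthogonal projections $P_1,\ldots,P_m$ with $\sum_k P_k=I$ (unitality holds because $I\in\ker\calL$ and the invariant algebra may be taken to contain $I$) and $\calA=\mathrm{span}\{P_k\}$. The spectrum of $\calA$ is the finite set $[m]$, and $\calA\cong\C^m$ via $\sum_k f(k)P_k\leftrightarrow f$. Invariance $\calL(\calA)\subseteq\calA$ means there is a matrix $L=(L_{kl})$ with
\begin{equation}
\calL(P_l)=\sum_k L_{kl}P_k .
\end{equation}
Under the identification, $\calL\vert_\calA$ acts on functions as $f\mapsto Lf$ with $(Lf)(k)=\sum_l L_{kl}f(l)$. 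The plan is to show that $L$ is a Markov generator that is reversible with respect to $\pi(k):=\Tr(\rho P_k)$.

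\textbf{Markov property.} Unitality gives $\calL(I)=0$, so $\sum_l L_{kl}=0$. For $l\neq k$ I would use complete positivity of $e^{t\calL}$. The operator $e^{t\calL}(P_l)$ is positive and lies in $\calA$, so it equals $\sum_k p_t(k,l)P_k$ with $p_t(k,l)\geq0$. Since $p_t(k,l)=\delta_{kl}+tL_{kl}+O(t^2)$, we get $L_{kl}\geq0$ for $k\neq l$. Mere positivity suffices here, and $e^{t\calL}$ preserves $\calA$ because $\calL$ does. Hence $L$ is a Q-matrix, and $e^{tL}$ is the transition semigroup on $[m]$.

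\textbf{Reversibility.} Because $\rho$ commutes with each $P_k$, we have $\rho^{1/2}P_k=P_k\rho^{1/2}$. The KMS inner product on $\calA$ then becomes
\begin{equation}
\langle P_k,P_l\rangle_\rho=\Tr(\rho^{1/2}P_k\rho^{1/2}P_l)=\Tr(\rho P_kP_l)=\delta_{kl}\pi(k),
\end{equation}
so it is the weighted $\ell^2(\pi)$ inner product. KMS self-adjointness of $\calL$ gives $\langle P_k,\calL(P_l)\rangle_\rho=\langle\calL(P_k),P_l\rangle_\rho$, which reads $\pi(k)L_{kl}=\pi(l)L_{lk}$; this is detailed balance. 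Full rank of $\rho$ gives $\pi(k)>0$, and $\sum_k\pi(k)=\Tr\rho=1$, so $\pi$ is a strictly positive reversible stationary distribution. Uniqueness of the stationary state makes the chain irreducible, since $\ker L$ sits inside $\ker\calL=\C I$.

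\textbf{Main obstacle.} I expect the only delicate point to be the off-diagonal positivity. The argument needs $e^{t\calL}$ to map $\calA$ into itself, which follows from invariance, and to preserve positive elements. One must also check that the expansion coefficients of a positive element of $\calA$ in the basis $\{P_k\}$ are nonnegative, which is immediate because the $P_k$ are orthogonal projections. Everything else is bookkeeping.
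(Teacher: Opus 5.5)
Your argument is correct, and it supplies more than the paper does. The paper gives no proof of this proposition: it defers to \cite[Theorem~4.2]{carlen2017gradient} and then records the generator $L_{i\to j}=\Tr(\Pi_i\calL(\Pi_j))/\Tr(\Pi_i)$. That is exactly your $L_{kl}$, the coefficient of $P_k$ in $\calL(P_l)$, and your $\pi(k)=\Tr(\rho P_k)$ reproduces \cref{eq:stationary-measure} in the application.

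Your self-contained route buys a clear accounting of which hypothesis does what:
\begin{itemize}
\item positivity of $e^{t\calL}$ alone, without complete positivity, gives nonnegative off-diagonal rates;
\item $\calL(I)=0$ gives zero row sums;
\item $[\rho,\calA]=0$ makes the KMS form diagonal in $\{P_k\}$, so KMS symmetry becomes detailed balance;
\item uniqueness of $\rho$ is needed only to get $\ker\calL=\C I$, which you use for irreducibility (the statement does not even ask for it).
\end{itemize}
The citation, in exchange, spares the reader these routine checks.

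One step should be stated more carefully. Saying the invariant algebra ``may be taken to contain $I$'' is not harmless in general: adjoining $I$ to a non-unital $\calA$ would add a point to its spectrum and so change the state space. Under the hypotheses, however, $\calA$ is automatically unital.
\begin{itemize}
\item Let $P=\sum_kP_k$ be the unit of $\calA$. For $X\in\calA$ we have $\langle X,P\rangle_\rho=\langle X,I\rangle_\rho$, because $X^\dagger P=X^\dagger$ and $[\rho^{1/2},P]=0$.
\item Hence, for every $k$, $\langle P_k,\calL(P)\rangle_\rho=\langle\calL(P_k),I\rangle_\rho=\langle P_k,\calL(I)\rangle_\rho=0$.
\item Since $\calL(P)\in\calA$ and the KMS form is diagonal with positive weights $\Tr(\rho P_k)$, this forces $\calL(P)=0$. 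So $P\in\ker\calL=\C I$, i.e.\ $P=I$.
\end{itemize}
The fact $\ker\calL=\C I$ deserves one line of justification. KMS symmetry gives $\calL^\dagger(\rho^{1/2}X\rho^{1/2})=\rho^{1/2}\calL(X)\rho^{1/2}$, with $\calL^\dagger$ the Hilbert--Schmidt adjoint, so $\dim\ker\calL=\dim\ker\calL^\dagger$. The latter equals one because $\ker\calL^\dagger$ is spanned by stationary states and the stationary state is unique.

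In the paper's application both points are immediate anyway: $\sum_\lambda\Pi_\lambda=I$, and primitivity is proved directly in \cref{prop:local-primitivity}.
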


In the setting of \cref{prop:commutative-subalgebra-markov}, let $\{\Pi_1,\dots,\Pi_k\}$ be the nonzero, mutually orthogonal minimal projections of $\calA$, with $\sum_i\Pi_i=I$. We let $L \in \R^{k\times k}$ denote the generator matrix for the Markov chain $\calL\vert_{\calA}$. Its coefficients are given by
\begin{equation}
    L_{i\to j} = \frac{\Tr(\Pi_i\calL(\Pi_j))}{\Tr(\Pi_i)}.
\end{equation}

\begin{corollary}\label{A_d-is-markov}
    The Davies generator $\Lloc$ restricted to $\calA_d = \mathrm{span}\{\Pi_\lambda\}_{\lambda \in \Omega_{n,d}}$ gives rise to a classical reversible Markov chain on $\Omega_{n,d}$.
\end{corollary}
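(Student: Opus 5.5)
The plan is to verify, one by one, the hypotheses of \cref{prop:commutative-subalgebra-markov} for $\calA=\calA_d$ and $\calL=\Lloc$, and then identify the spectrum of $\calA_d$ with $\Omega_{n,d}$.

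First, $\calA_d=\mathrm{span}\{\Pi_\lambda:\lambda\in\Omega_{n,d}\}$ by \cref{lem:commutant-is-span-projectors}. The $\Pi_\lambda$ are mutually orthogonal Hermitian projectors summing to $I$, by \cref{thm:schur-weyl}. So $\calA_d$ is a finite-dimensional commutative $*$-subalgebra, and its minimal projections are exactly the nonzero $\Pi_\lambda$. Each $\Pi_\lambda$ is nonzero because $V^\lambda$ is nontrivial for partitions with at most $d$ rows. Hence the spectrum of $\calA_d$ (its set of characters) is in bijection with $\Omega_{n,d}$, via $\Pi_\mu\mapsto\delta_{\lambda\mu}$.

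Second, invariance $\Lloc(\calA_d)\subseteq\calA_d$ is exactly the first step of the proof of \cref{prop:invariant-symmetry-sectors}. By \cref{cor:local-davies-equivariant}, $\Lloc$ commutes with conjugation by $Q(U)$ and $P(\sigma)$, so it maps joint fixed points to joint fixed points.

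Third, for commutation with the Gibbs state: by \cref{prop:interchange-schur-weyl-form}, $H=\sum_\lambda E_\lambda\Pi_\lambda$, so $\rho=Z_\beta^{-1}\sum_\lambda e^{-\beta E_\lambda}\Pi_\lambda\in\calA_d$. Since $\calA_d$ is commutative, $[\rho,A]=0$ for all $A\in\calA_d$.

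Fourth, we need the semigroup properties. The Heisenberg-picture Davies generator of \cref{def:davies-generator} has GKLS form with Lindblad operators $\sqrt{\gamma(\omega)}A(\omega)$ and annihilates $I$. Hence $e^{t\Lloc}$ is unital and completely positive. KMS self-adjointness follows from \cref{eq:dirichlet-form-commutator}. Uniqueness and full rank of the stationary state $\rho$ follow from \cref{prop:local-primitivity}.

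\cref{prop:commutative-subalgebra-markov} then yields a reversible continuous-time Markov chain on $\Omega_{n,d}$. Its rates are $L_{\lambda\to\mu}=\Tr(\Pi_\lambda\Lloc(\Pi_\mu))/\Tr(\Pi_\lambda)$ and its stationary law is $\pi(\lambda)=\Tr(\rho\Pi_\lambda)$.

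If one wants to see the rates directly rather than through the cited proposition, write $\Lloc(\Pi_\mu)=\sum_\lambda L_{\lambda\to\mu}\Pi_\lambda$. Off-diagonal nonnegativity comes from $\Tr(\Pi_\lambda A(\omega)^\dagger\Pi_\mu A(\omega))\geq0$ for $\lambda\neq\mu$, together with the fact that the anticommutator term is supported on $\Pi_\mu$. Row sums vanish because $\Lloc(I)=0$. Reversibility, $\pi(\lambda)L_{\lambda\to\mu}=\pi(\mu)L_{\mu\to\lambda}$, is $\langle\Pi_\lambda,\Lloc\Pi_\mu\rangle_\rho=\langle\Lloc\Pi_\lambda,\Pi_\mu\rangle_\rho$ rewritten using $\rho\Pi_\lambda=e^{-\beta E_\lambda}\Pi_\lambda/Z_\beta$.

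None of these steps is a real obstacle. The only point needing care is the off-diagonal positivity of the rates, which is where complete positivity, rather than mere positivity, of the $\omega$-resolved jump terms is used.
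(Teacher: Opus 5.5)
Your proposal is correct and follows the paper's route: the corollary comes from checking the hypotheses of \cref{prop:commutative-subalgebra-markov} for $\calA_d$ (span of orthogonal projectors by \cref{lem:commutant-is-span-projectors}, invariance via equivariance as in \cref{prop:invariant-symmetry-sectors}, $\rho\in\calA_d$, and a unique full-rank stationary state from \cref{prop:local-primitivity}), and your direct check of the rates and of $\pi(\lambda)=\Tr(\rho\Pi_\lambda)$ agrees with the formulas the paper uses later. One small correction to your closing remark: off-diagonal nonnegativity needs only positivity of each map $X\mapsto A(\omega)^\dagger XA(\omega)$, since $\Tr(\Pi_\lambda A(\omega)^\dagger\Pi_\mu A(\omega))=\Tr\bigl((\Pi_\mu A(\omega)\Pi_\lambda)^\dagger\Pi_\mu A(\omega)\Pi_\lambda\bigr)\geq0$, so complete positivity plays no special role there.
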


Our first contribution is a representation-theoretic approach to finding the exact master equation on the Young diagram sector. This work is collected in \cref{appdx:young-diagram-master-proof}. The master equation governs transitions between subspaces labeled by Young diagrams where the allowed transitions correspond exactly to one-box moves. In such moves, a corner box is removed and then added back so that the resulting diagram is a legal Young diagram. This induces a natural graph structure on $\Omega_{n,d}$, with the adjacency relation defined in \cref{subsec:schur-weyl}: two distinct Young diagrams $\lambda$ and $\mu$ are connected by an edge if and only if $\lambda \sim \mu$.

\begin{center}
\begin{minipage}{0.95\textwidth}
    \centering

    \begin{tikzpicture}[
        diagram/.style={
            align=center,
            inner sep=4pt
        },
        move/.style={
            ->,
            thick,
            shorten >=4pt,
            shorten <=4pt
        }
    ]

        % Initial diagram
        \node[diagram] (lambda) at (0,3.5) {
            $\lambda=(4,3,2)$\\[0.6em]
            \ydiagram{4,3,2}
        };

        % Resulting diagrams
        \node[diagram] (muone) at (-5,0) {
            $\mu^{(1)}=(5,2,2)$\\[0.4em]
            \ydiagram{5,2,2}
        };

        \node[diagram] (mutwo) at (0,0) {
            $\mu^{(2)}=(4,4,1)$\\[0.4em]
            \ydiagram{4,4,1}
        };

        \node[diagram] (muthree) at (5,0) {
            $\mu^{(3)}=(5,3,1)$\\[0.4em]
            \ydiagram{5,3,1}
        };

        % Downward arrows
        \draw[move]
            (lambda.south west) --
            node[midway,above,sloped] {$2\to1$}
            (muone.north east);

        \draw[move]
            (lambda.south) --
            node[midway,right] {$3\to2$}
            (mutwo.north);

        \draw[move]
            (lambda.south east) --
            node[midway,above,sloped] {$3\to1$}
            (muthree.north west);

    \end{tikzpicture}

    \captionof{figure}{Here we depict three legal one-box moves from $\lambda=(4,3,2)$. The label $r\to s$ indicates that a corner box is moved from row $r$ to row $s$. Each resulting diagram $\mu$ shares its own eight-box subdiagram with $\lambda$ in which the box in row $r$ has been removed.}
    \label{fig:young-diagram-one-box-moves}
\end{minipage}
\end{center}

\begin{theorem}[Young diagram Markov chain generator]
\label{thm:master-eqn-coefficients}
The induced generator
$L\in\mathbb R^{\vert\Omega_{n,d}\vert\times\vert\Omega_{n,d}\vert}$ has entries
\begin{equation}
    L_{\lambda \to \mu}
    =
    \begin{cases}
        \gamma(E_\mu-E_\lambda)
        \frac nd
        \frac{\dim(V^\mu)}{\dim(V^\nu)}
        \frac{\dim(W^\nu)}{\dim(W^\lambda)}
        &\text{if}\ \lambda \overset{\nu}{\sim} \mu\ \text{and}\ \lambda\neq\mu,\\
        - \sum_{\eta \neq \lambda} L_{\lambda \to \eta} &\text{if}\ \lambda = \mu\\
        0 &\text{otherwise}
    \end{cases}
\end{equation}
where, in the first case, $\nu$ is the unique diagram in
$\lambda^-\cap\mu^-$.
\end{theorem}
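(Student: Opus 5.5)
The plan is to compute the matrix entries directly from the formula $L_{\lambda\to\mu}=\Tr(\Pi_\lambda\Lloc(\Pi_\mu))/\Tr(\Pi_\lambda)$ that follows \cref{prop:commutative-subalgebra-markov}. We use the twirl representation \cref{eq:davies-twirl} of $\Lloc$, with $\calT$ the Lie algebra twirl \cref{eq:lie-algebra-twirl}; this is legitimate because $J^0_i$ does not contribute.

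\emph{Step 1: reduction to the twirl.} Fix $\lambda\neq\mu$. Since $\Pi_\mu\le\Pi_{E_\mu}$, only the terms with $E_+=F_+=E_\mu$ survive in the jump part of $\Lloc(\Pi_\mu)$. Multiplying by $\Pi_\lambda\le\Pi_{E_\lambda}$ and using cyclicity of the trace, only $E_-=F_-=E_\lambda$ contributes, so $\omega=E_\mu-E_\lambda$. Possible energy degeneracies among diagrams are harmless, because the trace against $\Pi_\lambda$ selects a single block. The anticommutator term contributes
\[
\Tr\bigl(\Pi_\lambda\{A,\Pi_\mu\}\bigr)=\Tr(\Pi_\mu\Pi_\lambda A)+\Tr(\Pi_\lambda\Pi_\mu A)=0,
\]
since $\Pi_\lambda\Pi_\mu=0$. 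Hence
\[
L_{\lambda\to\mu}=\gamma(E_\mu-E_\lambda)\,\frac{\Tr(\Pi_\lambda\calT(\Pi_\mu))}{\Tr(\Pi_\lambda)} .
\]
The diagonal entries follow from unitality: $\Lloc(I)=0$ gives $\sum_\mu\Lloc(\Pi_\mu)=0$, so each row of $L$ sums to zero.

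\emph{Step 2: partial trace of an isotypic projector.} By \cref{eq:lie-algebra-twirl-partial-trace} and $S_n$-invariance of $\Pi_\lambda,\Pi_\mu$, every site $i$ gives the same contribution. Therefore
\[
\Tr(\Pi_\lambda\calT(\Pi_\mu))=\frac nd\,\Tr\bigl(\Pi_\lambda(\Tr_n(\Pi_\mu)\otimes I)\bigr).
\]
The operator $\Tr_n(\Pi_\mu)$ commutes with $U^{\otimes(n-1)}$ and with $S_{n-1}$. By \cref{lem:commutant-is-span-projectors} at size $n-1$ it equals $\sum_\nu c_\nu\Pi^{(n-1)}_\nu$, with
\[
c_\nu=\frac{\Tr\bigl(\Pi_\mu(\Pi_\nu^{(n-1)}\otimes I)\bigr)}{\dim V^\nu\dim W^\nu}.
\]
The key claim is
\[
\Tr\bigl(\Pi_\mu(\Pi^{(n-1)}_\nu\otimes I)\bigr)=\dim V^\mu\dim W^\nu\,\mathbf 1[\nu\in\mu^-].
\]
To prove it, note that $\Pi_\mu(\Pi_\nu\otimes I)$ projects onto the $(\mu,\nu)$-isotypic component for the commuting pair $U(d)\times S_{n-1}$ inside $(\C^d)^{\otimes n}$. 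Its dimension is $\dim V^\mu\cdot m\cdot\dim W^\nu$, where $m$ is the multiplicity of $W^\nu$ in $\mathrm{Res}^{S_n}_{S_{n-1}}W^\mu$. Young's branching rule gives $m=\mathbf 1[\nu\in\mu^-]$. Equivalently, by Pieri's rule, $V^\nu\otimes\C^d=\bigoplus_{\mu\in\nu^+}V^\mu$ multiplicity-freely. This yields $\Tr_n(\Pi_\mu)=\sum_{\nu\in\mu^-}\frac{\dim V^\mu}{\dim V^\nu}\Pi^{(n-1)}_\nu$.

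\emph{Step 3: assembly.} Applying the same identity with $\lambda$ in place of $\mu$ gives $\Tr(\Pi_\lambda(\Pi_\nu\otimes I))=\dim V^\lambda\dim W^\nu$ for $\nu\in\lambda^-$. Hence
\[
\Tr(\Pi_\lambda\calT(\Pi_\mu))=\frac nd\sum_{\nu\in\lambda^-\cap\mu^-}\frac{\dim V^\mu}{\dim V^\nu}\dim V^\lambda\dim W^\nu .
\]
Divide by $\Tr\Pi_\lambda=\dim V^\lambda\dim W^\lambda$. For $\lambda\neq\mu$ the intersection $\lambda^-\cap\mu^-$ is empty unless $\lambda\sim\mu$, and then it consists of a single diagram $\nu$; this gives exactly the stated formula.

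The step I expect to be the main obstacle is Step 2. It requires checking that the $U(d)\times S_{n-1}$ isotypic decomposition is compatible with the $U(d)\times S_n$ Schur--Weyl decomposition. The trace identity should be justified cleanly through branching multiplicities rather than explicit bases.
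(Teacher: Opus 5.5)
Your proposal is correct and follows essentially the same route as the paper: reduce to the Lie algebra twirl, compute $\Tr_i(\Pi_\mu)=\sum_{\nu\in\mu^-}\frac{\dim V^\mu}{\dim V^\nu}\Pi_\nu^{\ihat}$ using the commutant lemma and the multiplicity-free branching/Pieri count, and then assemble. The only difference is minor: you compute the scalar $\Tr(\Pi_\lambda\calT(\Pi_\mu))$ directly, reducing to one site by $S_n$-invariance and selecting the Bohr frequency by trace cyclicity, whereas the paper first expands $\calT(\Pi_\lambda)$ in the $\Pi_\mu$ basis via the auxiliary sum $\sum_i\Pi_\nu^{\ihat}\otimes I_i$ and then reads off the coefficient.
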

\begin{proof}[Proof Sketch]
    The coefficients are given by
    \begin{equation}
        L_{\lambda \to \lambda'} = \frac{\Tr(\Pi_\lambda \Lloc(\Pi_{\lambda'}))}{\Tr(\Pi_\lambda)}.
    \end{equation}
    By \cref{eq:davies-twirl}, we can express this as a sum of terms of the form $\Tr(\Pi_\lambda \calT(\Pi_{\lambda'}))$. To determine $L_{\lambda \to \lambda'}$, we compute a representation-theoretic expression for the Lie algebra twirl of an isotypic projector, i.e. $\calT(\Pi_{\lambda'})$, in \cref{partialtraceprojector} and \cref{prop:schur-weyl-expansion-twirl}. With this expression in hand, we can easily derive all the coefficients of $L$. See \cref{appdx:young-diagram-master-proof} for the details.
\end{proof}

The $d=2$ case has a particularly nice physical interpretation. In this case, the 2-row Young diagrams $\lambda = (n-k,k)$, with $k=0,\ldots,\lfloor n/2\rfloor$, correspond exactly to spin sectors where the total spin is $s = \frac{n-2k}{2}$. The Wigner-Eckart Theorem tells us that the one-box moves correspond to changing the total spin by at most one: $s \to \{s-1,s,s+1\}$. Thus, the Markov chain generator $L$ becomes tridiagonal, or equivalently, a birth-death chain for which there are known results for controlling the mixing time. This is the setting studied in~\cite{BassoBergamaschiLinRagoneStubbs2026}.

\subsection{The Spectral Free Energy Landscape}

The stationary measure on Young diagrams is governed, to leading exponential order, by the free energy
\begin{equation}\label{eq:free-energy-1}
    F_\beta(x)
    =
    \frac{\beta}{2}\sum_{i=1}^d x_i^2
    -
    \sum_{i=1}^d x_i\log x_i,\qquad
    \text{for }
    x \in \Delta_d^\downarrow
    :=
    \left\{y\in\mathbb R^d: y_1\geq\cdots\geq y_d\geq0,\  \sum_{i=1}^d y_i=1 \right\}.
\end{equation}
Note that $\frac{1}{n}\Omega_{n,d} \subset \Delta_d^\down$. For $r>1$, define
\begin{equation}
    \beta_d(r)
    :=
    \frac{r+d-1}{r-1}\log r,\quad \text{and} \quad
    \betamin(d)
    :=
    \min_{r>1}\beta_d(r).
\end{equation}
as well as the critical inverse temperature
\begin{equation}
    \beta_c(d)
    :=
    \frac{2(d-1)}{d-2}\log(d-1).
\end{equation}
The points $\beta = \betamin(d)$ and $\beta = d$ are the spinodal thresholds for metastability, since, as we shall see in \cref{thm:full-dynamics}, these points define the boundaries of the slow-mixing region for dynamics, i.e. the metastable window. We also determine tight bounds on the temperature range for metastability.

\begin{proposition}\label{prop:betamin-asymptotics-and-inequalities}
    For $\betamin(d)$, $d \geq 3$ defined as above, we have
    \begin{equation}
        \betamin(d) = \log d + \log\log d + 1 + o_d(1).
    \end{equation}
    Moreover,
    \begin{equation}
        \betamin(d) < \beta_c(d) < d.
    \end{equation}
\end{proposition}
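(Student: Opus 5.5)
The proof splits into two independent parts: the asymptotics of $\betamin(d)$ and the chain of inequalities $\betamin(d)<\beta_c(d)<d$. Both reduce to elementary calculus on the explicit function $\beta_d(r)=\frac{r+d-1}{r-1}\log r$.

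\emph{Asymptotics.} First, $\beta_d(r)\to\infty$ both as $r\to1^+$ (since $\log r\sim r-1$ gives $\beta_d(r)\to d$, not $\infty$) and as $r\to\infty$. In particular $\beta_d(1^+)=d$, which is where the upper spinodal comes from. Setting $\beta_d'(r)=0$ and clearing denominators gives the critical point equation
\begin{equation}
    d\log r=\frac{(r-1)(r+d-1)}{r}.
\end{equation}
Substitute $r=t\,d$. To leading order this reads $\log(td)\approx t+1$ once $t$ is large compared to $1/d$. I therefore expect the minimizer to satisfy $r_*\sim d\log d$, i.e.\ $t\approx\log d+\log\log d$. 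Then
\begin{equation}
    \beta_d(r_*)=\Bigl(1+\frac{d}{r_*-1}\Bigr)\log r_*\approx\log r_*+\frac{d\log r_*}{r_*}.
\end{equation}
Here $\log r_*=\log d+\log\log d+o(1)$ and $d\log r_*/r_*\to1$, which yields the claimed expansion.

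To make this rigorous, avoid solving for $r_*$ exactly. For the upper bound, evaluate $\beta_d$ at the explicit point $r=d\log d$. For the lower bound, use the critical equation to rewrite $\beta_d(r_*)$ as
\begin{equation}
    \beta_d(r_*)=\frac{(r_*+d-1)^2}{r_*d},
\end{equation}
and bound $r_*$ between $d\log d\,(1\pm\epsilon)$ by checking the sign of $\beta_d'$ at those two points. Since $(r+d)^2/(rd)=r/d+2+d/r$, this gives $\log d+\log\log d+O(\log\log d/\log d)$ plus constants. The constants need care: $r/d$ must be computed to $O(1)$, namely $r_*/d=\log r_*-1+o(1)=\log d+\log\log d-1+o(1)$, and then $r/d+2+d/r\to\log d+\log\log d+1$. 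I expect this bookkeeping to be the main obstacle, so I would track the error terms explicitly.

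\emph{Inequalities.} The bound $\betamin(d)<d$ is immediate: $\beta_d(r)\to d$ as $r\to1^+$, and $\beta_d$ decreases initially. Indeed, expanding $\log r=(r-1)-(r-1)^2/2+\dots$ gives
\begin{equation}
    \beta_d(1+s)=d+s\Bigl(1-\frac{d}{2}\Bigr)+O(s^2),
\end{equation}
which lies below $d$ for $d\ge3$. For $\beta_c(d)<d$, set $m=d-1\ge2$; the claim becomes $2m\log m<(m-1)(m+1)$. This follows because $g(m)=m^2-1-2m\log m$ has $g(1)=0$ and $g'(m)=2(m-1-\log m)>0$.

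For $\betamin(d)<\beta_c(d)$, observe that $\beta_c(d)=\beta_d(r)$ at $r=(d-1)^2$:
\begin{equation}
    \beta_d\bigl((d-1)^2\bigr)=\frac{(d-1)d}{(d-1)^2-1}\cdot2\log(d-1)=\frac{2(d-1)}{d-2}\log(d-1)=\beta_c(d).
\end{equation}
This is the known coexistence point $x_r$ with $r=(d-1)^2$. Hence $\betamin\le\beta_c$. Strict inequality follows once $(d-1)^2$ is shown not to be a critical point, by checking that the critical equation fails there. Plugging in, this amounts to $2d\log(d-1)\neq d(d-2)$, i.e.\ $2\log m\ne m-1$ with $m=d-1$. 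That equality holds only at $m=1$ and at one value near $m\approx3.5$, so the non-integer case needs justification, since $m=d-1$ is an integer. Alternatively, and more cleanly, I would show $\beta_d'((d-1)^2)>0$, i.e.\ $m-1>2\log m$ for $m\ge4$, and handle $d=3,4$ by direct numerical evaluation of the sign.
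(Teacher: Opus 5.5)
\textbf{There is a genuine gap:} your strictness argument for $\betamin(d)<\beta_c(d)$ rests on an algebra error. At $r=(d-1)^2$ you have $r-1=d(d-2)$ and $r+d-1=d(d-1)$. The critical equation $d\log r=(r-1)(r+d-1)/r$ therefore reads $2d\log(d-1)=d^2(d-2)/(d-1)$, i.e.\ $2(d-1)\log(d-1)=d(d-2)$, i.e.\ $2m\log m=m^2-1$ with $m=d-1$. You dropped the factor $(r+d-1)/r=d/(d-1)$.

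As a result, your reduced conditions $2\log m\neq m-1$ and $m-1>2\log m$ are not equivalent to what you need. The second one actually fails at $m=2,3$, so your own criterion would predict $\beta_d'((d-1)^2)\le 0$ for $d=3,4$, which is false. The correct computation gives
\[
(r-1)(r+d-1)-dr\log r\Big|_{r=(d-1)^2}=d(d-1)\bigl[d(d-2)-2(d-1)\log(d-1)\bigr].
\]
This is positive for every $d\geq3$ by exactly the inequality $2m\log m<m^2-1$ you already proved for $\beta_c(d)<d$. Hence $\beta_d'((d-1)^2)>0$, so $(d-1)^2$ is not a critical point and therefore not the interior minimizer. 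This gives $\betamin(d)<\beta_d((d-1)^2)=\beta_c(d)$ uniformly in $d$, with no exceptional root near $m\approx3.5$ and no numerical cases. That is essentially the paper's argument. The rest of your inequality part (the identity $\beta_d((d-1)^2)=\beta_c(d)$, and $\beta_c(d)<d$ via $g(m)=m^2-1-2m\log m$) matches the paper.

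For the asymptotics you take a different route from the paper, which never locates $r_*$. The paper evaluates at $r=1+d\log d$ for the upper bound, and proves the explicit bound $\betamin(d)>\log(d\log d)+1$ for all $d\geq3$ via the substitution $x=\log(1+y)$ and the one-variable estimate $p(u)\leq A$. Your implicit approach through the critical equation can work, but it needs two repairs.

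\emph{First repair.} As initially phrased, bracketing $r_*\in d\log d\,(1\pm\epsilon)$ and inserting this into $(r_*+d-1)^2/(r_*d)$ determines $r_*/d$ only up to $\epsilon\log d$. That error swamps the $\log\log d+1$ you want. Your later fix of using the critical equation a second time is what saves it. It is cleanest as the exact identity $\beta_d(r_*)=\log r_*+1+(d-1)/r_*$, which follows from $d\log r_*/(r_*-1)=(r_*+d-1)/r_*$. After that, $r_*=d\log d\,(1+o(1))$ suffices, obtainable from a sign check of $\beta_d'$ with $\epsilon\approx2\log\log d/\log d$.

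\emph{Second repair.} A sign check at two points only locates \emph{a} critical point, not the minimizer. You need that $\beta_d'$ changes sign exactly once; the paper gets this from the monotonicity of $(r-1)^2/(r\log r-r+1)$.

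With these repairs, your route gives the asymptotic with an explicit $O(\log\log d/\log d)$ error. The paper's route instead gives non-asymptotic two-sided bounds valid for every $d\geq3$, without needing any information about $r_*$.
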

\begin{proof}
    See \cref{app:beta-min-asymptotics} for the asymptotic formula and \cref{subsec:proof-spectral-free-energy-landscape} for the inequalities.
\end{proof}

The following result is known in the literature~\cite{Bjornberg2016}; we provide a new proof that instead relies on our explicit computation for the stationary distribution -- and hence the free energy -- via the exact Markov chain matrix.

\begin{theorem}[Static Phase Diagram]\label{thm:spectral-free-energy-landscape}
    Assume $d\geq3$. Then, the global maximizers of $F_\beta$ on $\Delta_d^\downarrow$ are as follows.
    \begin{enumerate}[label=(\roman*)]
        \item If $0<\beta<\beta_c(d)$, then the unique global maximizer is the uniform distribution
        \begin{equation}
            u
            =
            \left(\frac1d,\ldots,\frac1d\right).
        \end{equation}
    
        \item If $\beta=\beta_c(d)$, then the two global maximizers are
        \begin{equation}
            u
            =
            \left(\frac1d,\ldots,\frac1d\right)\qquad \text{and}\qquad
            x_c
            =
            \left(
                \frac{d-1}{d},
                \frac{1}{d(d-1)},
                \ldots,
                \frac{1}{d(d-1)}
            \right).
        \end{equation}
    
        \item If $\beta>\beta_c(d)$, then the unique global maximizer is
        \begin{equation}
            x_\beta
            =
            \left(
                \frac{r_\beta}{r_\beta+d-1},
                \frac{1}{r_\beta+d-1},
                \ldots,
                \frac{1}{r_\beta+d-1}
            \right),
        \end{equation}
        where $r_\beta$ is the solution of $\beta=\beta_d(r)$ on the increasing branch of $\beta_d$. For $\beta\geq d$, $x_\beta$ is also the only local maximizer.
    \end{enumerate}
    Finally, the transition at $\beta_c(d)$ is first order: the largest coordinate of the global maximizer jumps discontinuously from $1/d$ to $(d-1)/d$.
\end{theorem}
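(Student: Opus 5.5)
The proof is a direct optimization of $F_\beta$ on the compact set $\Delta_d^\downarrow$. A maximizer exists by continuity. The first step is to show it lies in the relative interior of the simplex, with all $x_i>0$. Since $\partial_{x_i}(-x_i\log x_i)\to+\infty$ as $x_i\to0$, moving mass from a largest coordinate to a zero coordinate strictly increases $F_\beta$. The ordering constraints $x_i\ge x_{i+1}$ are harmless, because $F_\beta$ is symmetric. We may therefore maximize over the full simplex and then sort.

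At an interior critical point, Lagrange multipliers give
\[
\beta x_i-\log x_i=\kappa \quad\text{for all } i .
\]
The function $t\mapsto\beta t-\log t$ is strictly convex, so it takes each value at most twice. Hence the coordinates take at most two distinct values $a>b$.

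The next step is to exclude critical points with two or more copies of the larger value. For this I would use the second-order condition. On the tangent space $\{\sum v_i=0\}$ the Hessian is $\mathrm{diag}(\beta-1/x_i)$. A coordinate equal to $a$ satisfies $\beta>1/a$, because $a$ lies on the increasing side of $\beta t-\log t$, where $\beta a-\log a$ has derivative $\beta-1/a>0$. If $a$ appeared at two indices, the vector $v=e_i-e_j$ supported on them would give a positive Hessian value, so the point is not a local maximum. Consequently every local maximizer is either $u$ or has the form $(a,b,\dots,b)$. Writing $r=a/b>1$ and normalizing gives $a=r/(r+d-1)$ and $b=1/(r+d-1)$. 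The condition $\beta(a-b)=\log(a/b)$ is exactly $\beta=\beta_d(r)$, recovering $x_r$.

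Comparing values is a one-variable problem. Set
\[
g(r):=F_{\beta_d(r)}(x_r)-F_{\beta_d(r)}(u).
\]
Substituting $\beta=\beta_d(r)$, I expect $g$ to simplify to an explicit expression in $r$ and $d$. One then checks that $g(r)=0$ exactly at $r=d-1$, where $\beta_d(d-1)=\beta_c(d)$ and $x_{d-1}=x_c$.

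\textbf{Main obstacle.} The delicate part is the monotonicity bookkeeping for the two branches of $\beta_d$ and for $\Delta(\beta):=\max_{\text{ordered branch}}F_\beta-F_\beta(u)$. The function $\beta_d(r)$ blows up as $r\to1^+$ and as $r\to\infty$, so it has a minimum $\betamin(d)$. I would show this minimum is unique, for instance by proving that $\beta_d'$ has a single sign change via convexity of $(r-1)\beta_d(r)$-type manipulations. Call $r_*$ the minimizer. For $\beta>\betamin(d)$ there are then two ordered critical points, $r_-(\beta)<r_*<r_+(\beta)$, and only $r_+$ (the increasing branch) is a local maximum; the Hessian check on the one-dimensional family $(a,b,\dots,b)$ decides this.

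The envelope theorem gives
\[
\frac{d}{d\beta}\bigl[F_\beta(x_{r_+(\beta)})-F_\beta(u)\bigr]=\tfrac12\Bigl(\textstyle\sum x_i^2-\tfrac1d\Bigr)>0 .
\]
So $\Delta$ is strictly increasing on the increasing branch and has the single zero $\beta_c$. This yields (i)–(iii) and the strict inequality $\betamin<\beta_c$, since $r_*\neq d-1$. Checking that $r_*\ne d-1$ requires $\beta_d'(d-1)>0$, a direct computation.

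For $u$, the Hessian is $(\beta-d)I$ on the tangent space, so $u$ is a local maximum exactly when $\beta<d$. This gives the claim that $x_\beta$ is the only local maximizer for $\beta\ge d$; at $\beta=d$ one needs a higher-order check or a direct comparison. It also gives $\beta_c<d$: at $\beta_c$ the point $u$ is a global maximizer, so $\beta_c\le d$, and equality is excluded by checking $\beta_d(d-1)<d$, i.e. $2(d-1)\log(d-1)<d(d-2)$ for $d\ge3$.

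Finally, the first-order jump is read off from $x_c$: the largest coordinate of the global maximizer moves from $1/d$ to $(d-1)/d$.
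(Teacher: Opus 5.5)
Your overall route matches the paper's: interior critical points, at most two coordinate values, excluding $k\ge2$ copies of the larger value via the Hessian, reading stability off the sign of $\beta_d'$, and comparing with $u$. Your envelope identity is a correct and cleaner replacement for the paper's explicit formulas for $\varphi$ and $\varphi'$. Since $x_r$ is critical for $F_{\beta_d(r)}$, differentiating $g$ in $r$ gives $g'(r)=\frac12(\sum_i (x_r)_i^2-\frac1d)\,\beta_d'(r)$. Combined with $\sum_i (x_r)_i^2-\frac1d=\frac{(d-1)(r-1)^2}{d(r+d-1)^2}$, this is exactly the paper's $\varphi'$.

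The genuine error is in locating the zero. The equation $\beta_d(r)=\beta_c(d)$ has two roots, $r=d-1$ and $r=(d-1)^2$, both giving $\frac{2(d-1)}{d-2}\log(d-1)$. The coexistence point is the larger one, $r_c=(d-1)^2$, and $x_{(d-1)^2}=x_c$. By contrast, $x_{d-1}=(\frac12,\frac1{2(d-1)},\dots,\frac1{2(d-1)})$ is not $x_c$; it is the saddle on the decreasing branch. Consequently both of your proposed checks fail:
\begin{itemize}
\item $g(d-1)<0$; for $d=3$, $g(2)=\log\frac43-\frac5{12}\log2\approx-1.1\times10^{-3}$.
\item $\beta_d'(d-1)<0$, not $>0$; for $d=3$ the numerator of $\beta_3'(2)$ is $4-6\log2<0$.
\end{itemize}
In fact $d-1<r_*<(d-1)^2$ for every $d\ge3$. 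The repair is short. Substituting $r=(d-1)^2$ gives $g((d-1)^2)=0$ directly. Because $\beta_d(d-1)=\beta_d((d-1)^2)$ with $d-1<(d-1)^2$, strict unimodality of $\beta_d$ already forces $(d-1)^2>r_*$. So $(d-1)^2$ is the increasing-branch root at $\beta_c$, and $\betamin(d)<\beta_c(d)$ follows with no derivative check. The paper instead verifies $\beta_d'(r_c)>0$, which reduces to the same inequality $2x\log x<x^2-1$ (with $x=d-1$) that you correctly use for $\beta_c<d$.

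Your description of $\beta_d$ is also wrong at the left endpoint: $\beta_d(r)\to d$ as $r\downarrow1$, not $+\infty$. So an interior minimizer does not exist automatically. It comes from $\beta_d(1+\eps)=d-\frac{d-2}{2}\eps+O(\eps^2)$, which is exactly where $d\ge3$ enters; for $d=2$, $\beta_2$ is increasing and there is no $r_*$. This also means the decreasing branch only covers $(\betamin(d),d)$, so for $\beta\ge d$ the equation $\beta=\beta_d(r)$ has a single solution, not two. That does not change which points are maximizers, since decreasing-branch points are saddles, but it is why $\beta=d$ is a threshold at all.

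Finally, the suggested route to uniqueness of $r_*$ via convexity of $(r-1)\beta_d(r)=(r+d-1)\log r$ fails as stated, because its second derivative $(r-d+1)/r^2$ is negative on $(1,d-1)$. The paper instead uses monotonicity of $(r-1)^2/(r\log r-r+1)$. Alternatively, the numerator $N(r)=(r-1)(r+d-1)-dr\log r$ of $\beta_d'$ satisfies $N(1)=N'(1)=0$ and $N''(r)=2-d/r$, which gives exactly one sign change.
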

\begin{proof}
    See \cref{subsec:proof-spectral-free-energy-landscape}.
\end{proof}

\begin{center}
\begin{minipage}{0.9\textwidth}
\centering

\includegraphics[width=.99\linewidth]{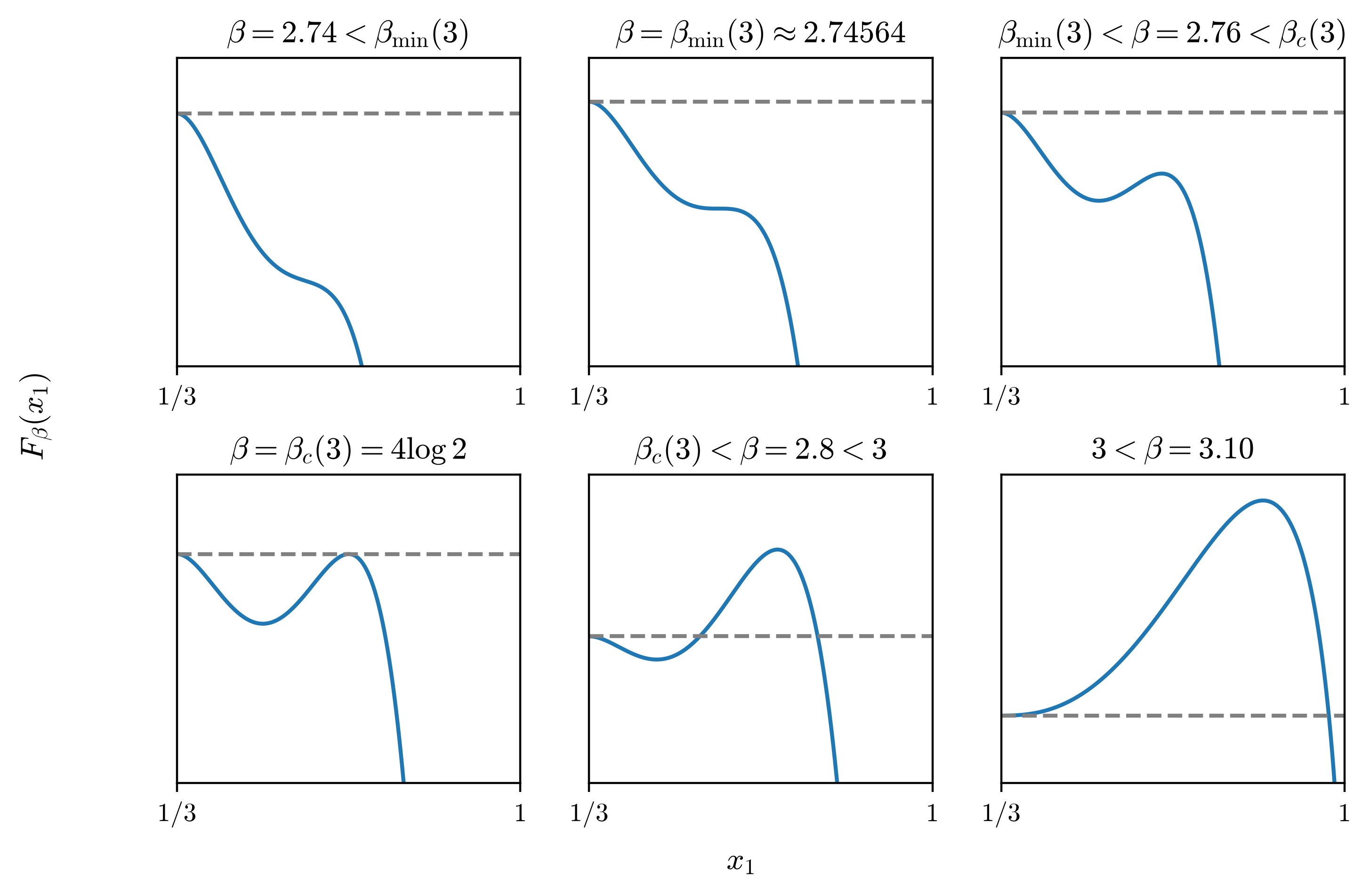}
\captionof{figure}{The free energy distribution for $d=3$ along the symmetry-breaking branch $(x_1,x_2,x_3) = \left(x_1, \frac{1-x_1}{2}, \frac{1-x_1}{2} \right)$. At $\betamin(3)$, a degenerate non-uniform critical point appears. Just above this threshold, it splits into a local maximum and a saddle. For $\betamin(3)<\beta<\beta_c(3)$, the uniform distribution remains the global maximizer while a non-uniform local maximizer is present. At $\beta_c(3)=4\log 2$, the uniform and non-uniform maxima have equal free energy. For $\beta_c(3)<\beta<3$, the non-uniform maximum is global and the uniform distribution remains locally stable. At $\beta=3$, the uniform distribution loses local stability. Dashed lines indicate the free energy of the uniform distribution. The panels use different vertical scales.}
\label{fig:free-energy-profiles-d3}

\end{minipage}
\end{center}

The next result summarizes the fast and slow mixing regimes of the commutative sector $\calA_d$. Here $L$ denotes the Young diagram generator from \cref{thm:master-eqn-coefficients}. The intuition for why the theorem holds is captured by \cref{fig:free-energy-profiles-d3}. Roughly speaking, we should expect this result because in the slow mixing window $(\betamin(d), d)$ the free energy admits an energy barrier separating two local maxima. We show that a neighborhood of one of the local maxima has exponentially small conductance. This gives an exponentially small upper bound on the gap of the classical Markov chain.

Outside of the metastable window the free energy admits only a single local maximum and thus has no metastable states to slow down the mixing. A conductance argument gives a polynomial lower bound on that gap, and hence a polynomial upper bound on the mixing time.

\begin{theorem}\label{thm:spectral-sector-gap-bounds}
Fix $d\geq3$ and use Metropolis weights.

\begin{enumerate}[label=(\roman*)]
    \item If $0<\beta<\betamin(d)$ or $\beta>d$, then there exists $C_{\beta,d}<\infty$ such that, for all sufficiently large $n$,
    \begin{equation}
        \gap(L)
        =
        \gap(\Lloc\vert_{\calA_d})
        \geq
        n^{-C_{\beta,d}}.
    \end{equation}

    \item If $\betamin(d)<\beta<d$, then there exist $0<c_{\beta,d}<C_{\beta,d}<\infty$ such that, for all sufficiently large $n$,
    \begin{equation}
        e^{-C_{\beta,d}n}
        \leq
        \gap(L)
        =
        \gap(\Lloc\vert_{\calA_d})
        \leq
        e^{-c_{\beta,d}n}.
    \end{equation}
\end{enumerate}
\end{theorem}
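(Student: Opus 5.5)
Both parts will be proved with conductance (Cheeger) arguments for the reversible chain $L$ of \cref{thm:master-eqn-coefficients}. We will rely on two inputs. The first is the stationary-measure estimate $n^{-C}e^{nf_\beta(\lambda/n)}\le\pi_\beta(\lambda)\le n^{C}e^{nf_\beta(\lambda/n)}$, uniform over $\Omega_{n,d}$. The second concerns the rates. By the Weyl dimension and hook-length formulas, each ratio $\dim V^\mu/\dim V^\nu$ and $\dim W^\nu/\dim W^\lambda$ is a rational function of the row lengths of bounded degree. It is therefore bounded above by $\mathrm{poly}(n)$ and below by $1/\mathrm{poly}(n)$ for every legal one-box move. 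The Metropolis factor is at least $e^{-\beta|E_\mu-E_\lambda|}\ge e^{-O(\beta)}$, since a one-box move changes $E$ by $O(1)$. Hence every edge $\lambda\sim\mu$ carries rate $n^{\pm O(1)}$, and the total exit rate from any state is $\mathrm{poly}(n)$. Because $|\Omega_{n,d}|\le (n+1)^d$, all polynomial losses stay polynomial.

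\textbf{Part (ii), upper bound.} We use the critical-point structure from \cref{subsec:proof-spectral-free-energy-landscape}: for $\betamin<\beta<d$ there are two strict local maxima on $\Delta_d^\down$, namely $u$ and $x_{r_\beta}$ on the increasing branch. Let $x^\ast$ be the one with smaller $F_\beta$; at $\beta_c$ either choice works. Choose a small closed ball $B$ around $x^\ast$ in $\Delta_d^\down$ such that $F_\beta\le F_\beta(x^\ast)-\delta$ on $\partial B$ (relative boundary) and $B$ excludes the other maximum. Set $\mathcal W_n=\{\lambda:\lambda/n\in B\}$. One-box moves change $\lambda/n$ by $O(1/n)$, so the edge boundary of $\mathcal W_n$ lies within $O(1/n)$ of $\partial B$. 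There $F_\beta\le F_\beta(x^\ast)-\delta/2$ by continuity, since $F_\beta$ is Lipschitz away from the faces and uniformly continuous overall. Then
\[Q(\mathcal W_n,\mathcal W_n^c)\le \mathrm{poly}(n)\,e^{n(F_\beta(x^\ast)-\delta/2-\max F_\beta)},\]
while $\pi_\beta(\mathcal W_n)\ge n^{-C}e^{n(F_\beta(x^\ast)-\max F_\beta)-o(n)}$ by taking a lattice point within $1/n$ of $x^\ast$. Thus $\Phi(\mathcal W_n)\le e^{-cn}$. Moreover $\pi(\mathcal W_n)\le 1/2$ for large $n$, even at $\beta_c$ after possibly swapping wells, because $\mathcal W_n$ excludes a neighbourhood of the other maximizer of comparable mass. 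The variational bound $\gap(L)\le \Phi(\mathcal W_n)/(1-\pi(\mathcal W_n))$, obtained from the test function $\mathbf 1_{\mathcal W_n}-\pi(\mathcal W_n)$, then gives $\gap\le e^{-cn}$.

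\textbf{Part (ii), lower bound.} This is the crude canonical-path bound. Connect any two states by a monotone path of $O(n)$ one-box moves, for instance moving boxes toward the target row by row. The Poincaré constant is then at most
\[\max_e \frac{1}{\pi(e^-)L_e}\sum_{\gamma\ni e}\pi(x)\pi(y)|\gamma|\le \mathrm{poly}(n)\,\max_{\lambda}\pi(\lambda)^{-1}\le e^{Cn},\]
using $\pi(\lambda)\ge n^{-C}e^{n\min f_\beta}$ and the boundedness of $f_\beta$ on the compact simplex.

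\textbf{Part (i).} Let $x^\ast$ be the unique maximizer. We take from the appendix the claim that every nonempty superlevel set $\{F_\beta\ge t\}$ of $\Delta_d^\down$ is path connected. For each $\lambda$ this yields a continuous path from $\lambda/n$ to $x^\ast$ along which $F_\beta\ge F_\beta(\lambda/n)$. We discretize it into one-box moves, projecting to the nearest legal diagram. By uniform continuity, and with a $\log n/n$ estimate near faces where $x\log x$ has unbounded derivative, the free energy along the discrete path drops by at most $O(\log n/n)$. Hence $\pi$ along the path is at least $n^{-C}\pi(\lambda)$. Now fix a cut $S$ with $\pi(S)\le1/2$. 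If $x^\ast$'s lattice neighbourhood lies in $S^c$, then every $\lambda\in S$ has a path leaving $S$ through an edge $(a,b)$ with $\pi(a)\ge n^{-C}\pi(\lambda)$. Summing over $\lambda\in S$, with at most $|\Omega_{n,d}|$ states per edge, gives $Q(S,S^c)\ge n^{-C'}\pi(S)$. If instead the neighbourhood of $x^\ast$ lies in $S$, apply the same argument to $S^c$ using $\pi(S^c)\ge\pi(S)$. Cheeger's inequality then gives $\gap\ge\Phi^2/(2\max_\lambda |L_{\lambda\lambda}|)\ge n^{-C}$.

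The main obstacle is the path-connectivity of superlevel sets. Connectivity is immediate if $F_\beta$ has no saddles, but it must be proved via the classification of critical points, including boundary critical points on the faces $x_i=x_{i+1}$ and $x_d=0$ of the ordered simplex. I would try a gradient-ascent flow argument. Using the critical-point analysis, in the one-well regimes every critical point other than $x^\ast$ is a strict saddle or lies on a face where the projected flow still increases $F_\beta$. One can then flow any point upward to $x^\ast$, perturbing slightly off the stable manifolds of saddles at negligible cost.
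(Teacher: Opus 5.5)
Your plan has the same architecture as the paper's proof. For (ii) you use a neighbourhood $\mathcal W_n$ of the lower well tested with $\mathbf 1_{\mathcal W_n}$, and crude canonical paths with $\pi_{\min}\ge e^{-Cn}$; the paper bounds $\pi_{\min}$ by $\lambda_{\min}(\rho)$, and your route through $\min f_\beta$ works equally well. For (i) you use free-energy-controlled paths to the maximizer, a one-crossing-edge conductance bound, and Cheeger. A few routine points need tightening.
\begin{itemize}
\item At $\beta_c(d)$, what gives $\pi_\beta(\mathcal W_n)\le 1/2$ is that balls around both maximizers, fixed in advance with a common $\delta$, are disjoint, so the lighter one has mass at most $1/2$. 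Comparability of the masses does not give this.
\item A naive row-by-row order of moves can pass through illegal shapes. For example, from $(3,3,0)$ toward $(2,2,2)$ the first box cannot come out of row $1$. Route through $(n,0,\ldots,0)$ as the paper does, or always remove from the last row in excess and add to the first row in deficit.
\item ``Projecting to the nearest legal diagram'' should become the paper's device: round to compositions, move single boxes between consecutive ones, then sort. This works because two partitions at $\ell^1$ distance $2$ always differ by a legal one-box move.
\end{itemize}

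The real gap is the step you flag yourself: path-connectivity of superlevel sets is assumed, and only a gradient-flow strategy is sketched. You do not need flows, stable manifolds, or a face-by-face critical-point analysis; uniqueness of the local maximum suffices, via a compactness argument.
\begin{itemize}
\item Work with strict superlevel sets $U_a=\{x\in\Delta_d^\downarrow:F_\beta(x)>a\}$, with $a=F_\beta(x_\lambda)-1/n$. The slack costs only $O(1/n)$ in free energy. Being relatively open in a convex set, $U_a$ is locally path connected. For the compact sets $\{F_\beta\ge t\}$ you use, connectedness would not give path-connectedness for free.
\item Let $C$ be a path component of $U_a$ and $z$ a maximizer of $F_\beta$ on $\overline C$. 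Then $F_\beta(z)>a$, so a small relatively convex neighbourhood $N$ of $z$ lies in $U_a$. Since $N$ meets $C$, we get $N\subseteq C$.
\item Hence $z\in C$ and $z$ is a local maximum of $F_\beta$ on $\Delta_d^\downarrow$, so $z=m_\beta$. Every path component therefore contains $m_\beta$.
\end{itemize}
Uniqueness of the local maximum on the ordered simplex, faces included, comes directly from the classification. By permutation invariance, a local maximum on $\Delta_d^\downarrow$ is also one on $\Delta_d$. Faces with a vanishing coordinate are excluded because $-\log t-1\to+\infty$ as $t\downarrow0$.

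Your flow sketch would also have to handle $u$ when $\beta>d$. There $u$ is a critical point with tangent Hessian $(\beta-d)I\succ0$, so it is a local minimum rather than a saddle. It sits on faces of $\Delta_d^\downarrow$, and any ascent flow started there is stationary.
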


\begin{proof}
    Part~\textup{(i)} is \cref{thm:young-chain-fast-mixing}. Part~\textup{(ii)} is \cref{thm:young-chain-slow-mixing}.
\end{proof}

\begin{remark}
    The theorem says that the only exponentially small gaps in the Young diagram sector occur in the metastable window $\betamin(d)<\beta<d$. Outside this window, the spectral free energy landscape has a single well. Since there are no barriers to mixing, the Young diagram chain has at worst a polynomially small gap.
\end{remark}

\subsection{Full Dynamic Phase Diagram (Proof of \cref{thm:full-dynamics})}\label{subsec:full-dyn-phase-diagram}

The key quantity in the static picture is only the critical inverse temperature $\beta_c(d)$. In contrast, the dynamics is beholden to a pair of spinodal inverse temperatures $\betamin(d) < \beta_c(d) < d$ which are the thresholds defining the window for fast and slow mixing. Combining the symmetry decomposition of \cref{prop:invariant-symmetry-sectors} with the sector gap bounds of \cref{thm:symmetry-controlled-sector-bounds,thm:spectral-sector-gap-bounds} gives our main result---a full dynamical phase diagram for the Davies evolution of the interchange model. We give the proof here.

\begin{proof}
    By \cref{prop:invariant-symmetry-sectors},
    \begin{equation}
        \gap(\Lloc)
        =
        \min\left\{
            \gap(\Lloc\vert_{\calA_d}),
            \lambda_{\min}(-\Lloc\vert_{\calK_d}),
            \lambda_{\min}
            \left(
                -\Lloc\vert_{\mathrm{comm}(S_n)^{\perp_\rho}}
            \right)
        \right\}.
    \end{equation}
    
    If $0<\beta<\betamin(d)$ or $\beta>d$, the Young diagram sector estimate in \cref{thm:spectral-sector-gap-bounds}\textup{(i)} and lower bounds presented in \cref{thm:symmetry-controlled-sector-bounds} give polynomial lower bounds on all three terms. Hence $\gap(\Lloc) \geq n^{-c_{\beta,d}}$ for suitably large $c_{\beta,d}$.
    
    Now suppose $\betamin(d)<\beta<d$. \Cref{thm:spectral-sector-gap-bounds}\textup{(ii)} gives $e^{-C_{\beta,d}n}\leq\gap(\Lloc\vert_{\calA_d})\leq e^{-c_{\beta,d}n}$. For sufficiently large $n$, the polynomial lower bounds on the other two sectors also exceed $e^{-C_{\beta,d}n}$, after increasing $C_{\beta,d}$ if necessary. Taking the minimum proves both bounds.
\end{proof}

\section*{Acknowledgments}
This material is based upon work supported by the U.S. Department of Energy, Office of Science, National Quantum Information Science Research Centers, Quantum Systems Accelerator. Additional support is acknowledged from the Simons Targeted Grants in Mathematics and Physical Sciences on Moir\'e Materials Magic (K.D.S., L.L.), and the Simons Investigator in Mathematics award through Grant No. 825053 (L.L.). We thank Ehud Altman, Joao Basso, Thiago Bergamaschi, Pablo Sala, and Jack Spilecki for insightful discussions.

We used ChatGPT 5.6 and ChatGPT 6 to assist with editing, identifying minor errors, and (in appendix D) developing arguments from sketches provided by the authors. The authors wrote up and verified the arguments and take full responsibility for the content of this work.

\newpage
\appendix

\crefalias{section}{appendix}

\section{Representation Theory Background}\label{app:rep-theory-background}

This appendix collects the representation-theoretic definitions and facts used throughout the paper. These can all be found in standard texts on representation theory and the representation theory of $S_n$ such as, e.g.~\cite{FultonHarris1991} or~\cite{Sagan2001}.

\subsection{Representations and Equivariance}
Let $G$ be a group and let $V$ be a finite-dimensional complex vector space. A \textbf{representation} of $G$ on $V$ is a group homomorphism
$R:G\to GL(V)$. If $V$ is a Hilbert space and $R(g)$ is unitary for every $g\in G$, we call $R$ a \textbf{unitary representation}. A subspace $W\subseteq V$ is called \textbf{$G$-invariant} if $R(g)W\subseteq W$ for every $g\in G$. If $W\subseteq V$ is $G$-invariant, then the restriction $R\vert_W:G\to GL(W)$ given by $R\vert_W(g):=R(g)\vert_W$, is called a \textbf{subrepresentation} of $R$. Every representation has two trivial subrepresentations, corresponding to the trivial subspaces $\{0\}$ and $V$. A representation is said to be \textbf{irreducible} if it has no nontrivial subrepresentations.

Let $(R_1,V_1)$ and $(R_2,V_2)$ be representations of $G$. A linear map $T:V_1\to V_2$ is said to be \textbf{$G$-equivariant} if $TR_1(g) = R_2(g)T$ for every $g\in G$. We say that two representations are \textbf{isomorphic} and write $V_1\cong V_2$ if there exists an invertible $G$-equivariant map $T$ between them. When these representations are both unitary, then $T$ can be chosen to be unitary. The space of $G$-equivariant maps is denoted $\mathrm{Hom}_G(V_1,V_2)$. We recall the invaluable Schur's lemma: if $(R_1,V_1)$ and $(R_2,V_2)$ are finite-dimensional irreducible unitary representations of $G$ and $T\in\mathrm{Hom}_G(V_1,V_2)$, then either $T=0$ or $T$ is an     isomorphism. In particular, if $V_1=V_2$ and $R_1 = R_2$, then $T=cI$ for some $c\in\mathbb C$.

\begin{definition}[Commutant]
    If $G$ acts on $V$ through a representation $R$, its commutant is
    \begin{equation}
        \mathrm{comm}(G)
        :=
        \{X\in\calB(V):XR(g) = R(g)X\text{ for every }g\in G\}.
    \end{equation}
\end{definition}

\begin{fact}
    Let $G$ and $H$ be two groups acting on a finite-dimensional vector space $V$, and suppose that their actions commute. Then,
    \begin{equation}
    \mathrm{comm}(G\times H)
    =
    \mathrm{comm}(G)\cap \mathrm{comm}(H).
\end{equation}
\end{fact}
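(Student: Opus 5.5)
The plan is to unwind the definition of the representation of the product group and check the two inclusions directly. Since the actions $R_G$ of $G$ and $R_H$ of $H$ on $V$ commute, the product group acts through
\begin{equation}
    R(g,h) := R_G(g)R_H(h), \qquad (g,h)\in G\times H.
\end{equation}
The first step is to confirm that this is a homomorphism. Indeed, $R(g,h)R(g',h') = R_G(g)R_H(h)R_G(g')R_H(h') = R_G(gg')R_H(hh')$, where the middle two factors are swapped using commutativity of the actions. This is the only place the commutativity hypothesis enters. It is also exactly how the joint action $R(U,\sigma)=Q(U)P(\sigma)$ is used elsewhere in the paper.

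For the inclusion $\mathrm{comm}(G)\cap\mathrm{comm}(H)\subseteq\mathrm{comm}(G\times H)$, take $X$ commuting with every $R_G(g)$ and every $R_H(h)$. Then
\begin{equation}
    X R_G(g)R_H(h) = R_G(g) X R_H(h) = R_G(g)R_H(h) X,
\end{equation}
so $X$ commutes with every $R(g,h)$.

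For the reverse inclusion $\mathrm{comm}(G\times H)\subseteq\mathrm{comm}(G)\cap\mathrm{comm}(H)$, specialize to the elements $(g,e_H)$ and $(e_G,h)$. Since $R_H(e_H)=I$ and $R_G(e_G)=I$, we get $R(g,e_H)=R_G(g)$ and $R(e_G,h)=R_H(h)$. Hence any $X$ commuting with the full product action commutes with each factor separately.

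There is no real obstacle. The only point needing care is making explicit that $G\times H$ acts through the product $R_G(g)R_H(h)$, and that commutativity is what makes this a representation. Finite-dimensionality of $V$ is not needed for the argument.
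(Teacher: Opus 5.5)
Your proof is correct. The paper records this as a standard fact without proof, and your direct check of both inclusions via the product action $R(g,h)=R_G(g)R_H(h)$ (the joint action $Q(U)P(\sigma)$ the paper uses for $SU(d)\times S_n$) is the standard argument that fills it in, with commutativity needed only to make $R$ a homomorphism.
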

Finite dimensional unitary representations are \textbf{completely reducible}, meaning that every unitary representation $(R,V)$ is isomorphic to an orthogonal direct sum of irreducible subrepresentations. Every representative $R(g)$ is simultaneously block-diagonalized in any orthogonal basis realizing this direct sum: 
\begin{equation}
    V \cong \bigoplus_\lambda V_\lambda^{\oplus n_\lambda} = \bigoplus_\lambda V_\lambda\otimes \C^{n_\lambda}, \qquad R(g) = \bigoplus_\lambda R_\lambda(g)\otimes I_{n_\lambda}, 
\end{equation} where each $(V_\lambda, R_\lambda)$ is irreducible and $n_{\lambda}$ is its corresponding multiplicity.

\subsection{Representations of the Symmetric Group}

Irreducible complex representations of the symmetric group $S_n$ are indexed by partitions $\lambda\vdash n$, or equivalently Young diagrams with $n$ boxes. We denote the irreducible representation corresponding to $\lambda$ by $W^\lambda$ -- we shall also refer to these as Specht modules. If $\lambda=(\lambda_1,\lambda_2,\ldots)$, then $\lambda_i$ is the length of the $i$-th row. We write $\ell(\lambda)$ for the number of non-zero rows and $\vert \lambda \vert$ for the number of boxes $n$. For a box $b$ in row $i$ and column $j$, its \textbf{content} is defined as $c(b) := j-i$.

\begin{definition}[Hook length]
For a Young diagram $\lambda$, let $b=(i,j)\in\lambda$ be the box in row $i$ and column $j$. The hook of $b$ consists of the box $b$ itself, together with all boxes in $\lambda$ directly to the right of $b$ in row $i$ and all boxes directly below $b$ in column $j$. The hook length of $b$ is the number of boxes in the hook of $b$.
\end{definition}

\begin{fact}[Hook-length formula]\label{fact:hook-length-formula}
The dimension of the Specht module $W^\lambda$ is
\begin{equation}
    \dim(W^\lambda)
    =
    \frac{n!}{\prod_{b\in\lambda}h(b)},
\end{equation}
where $h(b)$ is the hook length of the box $b$.
\end{fact}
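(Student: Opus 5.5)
The plan is to identify $\dim(W^\lambda)$ with the number $f^\lambda$ of standard Young tableaux of shape $\lambda$. Then I will show that both $f^\lambda$ and the hook quotient $H(\lambda):=n!/\prod_{b\in\lambda}h(b)$ satisfy the same recursion with the same initial value.

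\textbf{Step 1: dimension equals the tableau count.} The standard polytabloids $\{e_T\}$, with $T$ ranging over standard tableaux of shape $\lambda$, form a basis of the Specht module $W^\lambda$. This is the standard straightening and dominance-order argument of Sagan. Hence $\dim(W^\lambda)=f^\lambda$.

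\textbf{Step 2: recursion for $f^\lambda$.} In a standard tableau the entry $n$ occupies a removable corner. Deleting it gives a bijection between standard tableaux of shape $\lambda$ and pairs consisting of some $\nu\in\lambda^-$ together with a standard tableau of shape $\nu$. Therefore
\begin{equation}
    f^\lambda=\sum_{\nu\in\lambda^-}f^\nu,
    \qquad
    f^{(1)}=1 .
\end{equation}
This is the representation-theoretic branching rule. It is convenient to allow $\nu$ to range over all $\lambda-e_i$, with the convention that a non-partition contributes $0$.

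\textbf{Step 3: recursion for $H(\lambda)$.} Fix $k\geq\ell(\lambda)$ and set $l_i=\lambda_i+k-i$, which gives strictly decreasing integers. I will first prove the Frobenius form of the hook product,
\begin{equation}
    \prod_{b\in\lambda}h(b)=\frac{\prod_{i=1}^k l_i!}{\prod_{i<j}(l_i-l_j)} .
\end{equation}
To do this, note that for row $i$ the hook lengths $\{h(i,j)\}_{j\leq\lambda_i}$ are exactly $\{1,\dots,l_i\}\setminus\{l_i-l_j: j>i\}$. This follows from a column-by-column count of where the arm/leg lengths jump.

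With this formula, $H(\lambda)=n!\,\Delta(l)/\prod_i l_i!$, where $\Delta(l)=\prod_{i<j}(l_i-l_j)$. Removing a box from row $i$ replaces $l_i$ by $l_i-1$. Moreover, $\Delta$ vanishes exactly when $\lambda-e_i$ is not a partition, which matches the convention in Step 2. The identity $H(\lambda)=\sum_i H(\lambda-e_i)$ thus becomes
\begin{equation}
    \sum_{i=1}^k l_i\,\Delta(l-e_i)=\Bigl(\sum_i l_i-\binom k2\Bigr)\Delta(l)=n\,\Delta(l).
\end{equation}
To prove it, observe that the left side is an alternating polynomial of degree $\binom k2+1$. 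It is therefore $\Delta(l)$ times a symmetric polynomial of degree one, namely $a\sum_i l_i+b$. Comparing leading terms gives $a=1$. Evaluating a lower-order term, for instance at $l=(k-1,\dots,0)$ where $\lambda=\emptyset$, gives $b=-\binom k2$.

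\textbf{Step 4: induction.} Since $H((1))=1$, induction on $n$ yields $f^\lambda=H(\lambda)$. Combined with Step 1, this proves the Fact.

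I expect the main obstacle to be the combinatorial identification of the row-$i$ hook lengths in Step 3. I would verify it by walking along row $i$ from right to left: each time the column height increases, the skipped value is exactly some $l_i-l_j$. If this becomes messy, the fallback is the Greene--Nijenhuis--Wilf hook walk, which proves $\sum_{\nu} H(\nu)/H(\lambda)=1$ probabilistically.
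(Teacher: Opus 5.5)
Your proof is correct. It is a genuinely different route from the paper, because the paper gives no proof of this Fact. It quotes the formula from Sagan and Fulton--Harris, along with the Frobenius form of the hook product displayed immediately afterwards, which is also unproved and is used in \cref{lem:dim-Pi_lambda}. Compared with the textbook arguments (the Greene--Nijenhuis--Wilf hook walk in Sagan, or the derivation from the Frobenius character formula plus the row-hook lemma in Fulton--Harris), your approach is short and purely algebraic. It combines the branching recursion with the alternating-polynomial identity $\sum_i l_i\Delta(l-e_i)=(\sum_i l_i-\binom k2)\Delta(l)$. As a by-product, your Step 3 with $k=d$ proves exactly the hook-product identity the paper relies on.

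Within the paper's framework you can drop Step 1 entirely. \Cref{fact:Young-branching-rule}, which the paper derives from Pieri's rule and Schur--Weyl duality (take $d\geq\ell(\lambda)$), already gives $\dim W^\lambda=\sum_{\nu\in\lambda^-}\dim W^\nu$ with $\dim W^{(1)}=1$. That is all Step 4 needs, so the standard-polytabloid basis theorem is unnecessary.

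Two small points to tighten:
\begin{itemize}
\item Your claim that $\Delta(l-e_i)$ vanishes exactly when $\lambda-e_i$ is not a partition fails for $i=k$ with $\lambda_k=0$. There $l_k-1=-1$, so $\Delta(l-e_k)\neq0$ in general, and it is the prefactor $l_k=0$ that kills the term. Moreover, $H(\lambda-e_k)$ written with $(l_k-1)!$ is undefined. State your convention for $l_i\Delta(l-e_i)$ rather than for $H(\lambda-e_i)$, and the polynomial identity then handles every case uniformly.
\item In the row-hook lemma, your right-to-left walk accounts for the interior jumps. Also record the two boundary families:
\begin{itemize}
\item rows $j>i$ with $\lambda_j=\lambda_i$ supply the skipped values $1,\dots,h(i,\lambda_i)-1$;
\item empty rows $j>\ell(\lambda)$ supply the values from $h(i,1)+1$ up to $l_i$.
\end{itemize}
\end{itemize}
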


For partitions with at most $d$ rows, the hook product may equivalently be written as
\begin{equation}
    \prod_{b\in\lambda}h(b)
    =
    \frac{
        \prod_{i=1}^d(\lambda_i+d-i)!
    }{
        \prod_{1\leq i<j\leq d}(\lambda_i-\lambda_j+j-i)
    }.
\end{equation}

\subsection{Representations of \texorpdfstring{$SU(d)$}{SU(d)}}\label{subsec:su-representations}
Irreducible polynomial representations of $GL(d)$ are indexed by partitions $\lambda$ with at most $d$ rows. Restricting these representations to $SU(d)$ gives every finite-dimensional irreducible representation of $SU(d)$. We denote the corresponding $SU(d)$-representation by $V^\lambda$.

\begin{fact}[Weyl dimension formula]\label{fact:weyl-dimension}
For $\lambda=(\lambda_1,\ldots,\lambda_d)$ with $\lambda_1\geq\cdots\geq\lambda_d\geq0$, the dimension of the corresponding $SU(d)$-irrep is
\begin{equation}
    \dim(V^\lambda)
    =
    \prod_{1\leq i<j\leq d}
    \frac{\lambda_i-\lambda_j+j-i}{j-i}.
\end{equation}
\end{fact}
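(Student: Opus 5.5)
This is a standard fact, so rather than reprove it from scratch I would derive it from the Weyl character formula, with a direct combinatorial check as backup. The plan is to compute the character of $V^\lambda$ and evaluate it at the identity. Here $V^\lambda$ is the irreducible polynomial $GL(d)$-module.

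\textbf{Step 1 (character).} On the diagonal torus, the character of $V^\lambda$ is the Schur polynomial
\[
s_\lambda(x_1,\ldots,x_d)=\frac{\det\bigl(x_i^{\lambda_j+d-j}\bigr)_{i,j}}{\det\bigl(x_i^{d-j}\bigr)_{i,j}}.
\]
I would take this as given from Fulton--Harris. It is compatible with Schur--Weyl duality (\cref{thm:schur-weyl}): taking the trace of $Q(\mathrm{diag}(x))P(\sigma)$ gives Frobenius' formula for power sums.

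\textbf{Step 2 (specialize on a line).} Set $x_i=q^{\,d-i}$. Both determinants then become Vandermonde determinants in the variables $q^{\lambda_j+d-j}$ and $q^{d-j}$, respectively. This gives
\[
s_\lambda(q^{d-1},\ldots,1)=\prod_{i<j}\frac{q^{\lambda_i+d-i}-q^{\lambda_j+d-j}}{q^{d-i}-q^{d-j}}.
\]

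\textbf{Step 3 (limit $q\to1$).} Each factor tends to
\[
\frac{(\lambda_i+d-i)-(\lambda_j+d-j)}{(d-i)-(d-j)}=\frac{\lambda_i-\lambda_j+j-i}{j-i}.
\]
Since $s_\lambda(1,\ldots,1)=\dim V^\lambda$, this yields the formula. Restricting from $U(d)$ to $SU(d)$ does not change the dimension.

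The main obstacle is Step 3. The direct substitution $x_i=1$ gives $0/0$, so the principal specialization along the line $x_i=q^{d-i}$ is the essential trick. Once the ratio is written factor-by-factor, the limit is immediate.

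As a combinatorial cross-check, I could instead count semistandard tableaux with entries in $[d]$ using the hook-content formula,
\[
\dim V^\lambda=\prod_{b\in\lambda}\frac{d+c(b)}{h(b)}.
\]
Converting this product row by row, via the hook product identity stated above, reproduces the same expression.
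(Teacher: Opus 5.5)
Your argument is correct. The bialternant formula, the specialization $x_i=q^{d-i}$ that turns both alternants into Vandermonde determinants, and the limit $q\to1$ give exactly the stated product, and the paper does not prove this itself but cites it as a standard fact from Fulton--Harris, where it is derived by essentially this same specialization of the Schur polynomial.
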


We will also use the action of scalar matrices on these
representations. The center of $SU(d)$ consists of the matrices
$zI$ with $z^d=1$. On $V^\lambda$ with $\lambda \vdash n$, such a matrix acts as multiplication by $z^n$. Indeed, on the tensor power
containing $V^\lambda$ its action is $(zI)^{\otimes n} = z^nI$ and restricting to $V^\lambda$ gives the claimed action.

\section{Young Diagram Master Equation}\label{appdx:young-diagram-master-proof}

We derive the transition coefficients using the multiplicity-free branching rules for $S_n$ and $SU(d)$, proving \cref{thm:master-eqn-coefficients}. Recall that the coefficients are defined as 
\begin{equation}
    L_{\lambda\to\lambda'}
    =
    \frac{\Tr\left(\Pi_\lambda\Lloc(\Pi_{\lambda'})\right)}{\Tr(\Pi_\lambda)}.
\end{equation}

\subsection{The Branching Rule and One-Box Moves}

As we described in the dialogue above \cref{thm:master-eqn-coefficients}, the allowed transitions in the Markov chain are between irreps labeled by diagrams that differ by a single box-move. This interpretation is derived from two key representation-theoretic facts: Young's branching rule and Pieri's rule.

In what follows, we will often isolate a single qudit $i \in [n]$, work on the space of $n-1$ qudits with qudit $i$ traced out, and then embed this back into the original space. This game will naturally lead us to consider representations of the group $S_n$ and its subgroup $S_{n-1}$ corresponding to permutations fixing this qudit, as well as the interplay between Schur-Weyl duality on the Hilbert space thought of as $(\C^d)^{\otimes n}$ and as $(\C^d)^{\otimes (n-1)} \otimes \C^d$.

For $S\subseteq[n]$, write $\calH_S:=\bigotimes_{j\in S}\mathbb C^d$. For $i\in[n]$, let $\ihat:=[n]\setminus\{i\}$ and use the natural tensor-factor identification $\calH_{[n]} \cong\calH_{\ihat}\otimes\mathbb C_i^d$. We embed operators on $\calH_{\ihat}$ into operators on $\calH_{[n]}$ by
\begin{equation}
    \iota_i:\calB(\calH_{\ihat}) \rightarrow \calB(\calH_{[n]}),
    \qquad
    \iota_i(A):=A \otimes I_i.
\end{equation}
When an operator on $\calH_{\ihat}$ appears in an expression with operators on $\calH_{[n]}$, this embedding is understood. As usual, $\Tr_i:\calB(\calH_{[n]})\to\calB(\calH_{\ihat})$ denotes the partial trace over the $i^{th}$ tensor factor. We use the superscript $\ihat$ for an operator acting on all sites except $i$, e.g. if $A$ is an operator which acts trivially on qudit $i$, then we write it as $A^{\ihat} \otimes I_i$ or just $A^{\ihat}$.

Given a group $G$ with subgroup $H$ and a representation $R:G\to GL(V)$, the \textbf{restricted representation} of $R$ from $G$ to $H$ is denoted by $R\down_H(h) = R(h)$ for all $h \in H$. In our setting we will consider $G=S_n$ and $H=S_{n-1}$. Young's branching rule and Pieri's rule will allow us to identify adding or removing a qudit from the system with adding or removing one box from the Young diagram. More specifically, Young's branching rule will imply that $\Tr_i(\Pi_\lambda)$ is supported on diagrams in $\lambda^-$, while the Pieri rule will imply that $\sum_i \Pi_\nu^{\ihat}\otimes I_i$ is supported on diagrams in $\nu^+$. These rules are dual to each other and in fact can be proved from one another. Since it will provide valuable intuition for understanding the twirl, we show how to prove the branching rule from Pieri's rule.

\begin{fact}[Young's Branching Rule]\label{fact:Young-branching-rule}
    If $\lambda \in \Omega_{n,d}$, then
    \begin{equation}
        W^\lambda \down_{S_{n-1}} \cong \bigoplus_{\nu \in \lambda^-} W^\nu.
    \end{equation}
\end{fact}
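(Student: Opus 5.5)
The paper indicates that the branching rule will be derived from Pieri's rule via Schur--Weyl duality on $\calH_{[n]}\cong\calH_{\ihat}\otimes\C^d_i$. The plan is to compute the decomposition of $\calH$ under the subgroup $U(d)\times S_{n-1}$, where $S_{n-1}$ is the stabilizer of site $i$, in two different ways and compare multiplicities.

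\textbf{First computation.} Restrict the full Schur--Weyl decomposition of \cref{thm:schur-weyl}:
\begin{equation}
    \calH\cong\bigoplus_{\lambda\in\Omega_{n,d}}V^\lambda\otimes W^\lambda\down_{S_{n-1}}.
\end{equation}
Write $W^\lambda\down_{S_{n-1}}\cong\bigoplus_{\nu\vdash n-1}m_{\lambda\nu}W^\nu$ for unknown multiplicities $m_{\lambda\nu}\geq0$. This gives
\begin{equation}
    \calH\cong\bigoplus_{\lambda,\nu}m_{\lambda\nu}\,V^\lambda\otimes W^\nu.
\end{equation}

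\textbf{Second computation.} Apply Schur--Weyl duality to the first $n-1$ factors and tensor with the remaining site:
\begin{equation}
    \calH\cong\bigoplus_{\nu\in\Omega_{n-1,d}}(V^\nu\otimes\C^d)\otimes W^\nu.
\end{equation}
Pieri's rule for $U(d)$ gives $V^\nu\otimes V^{(1)}\cong\bigoplus_{\lambda\in\nu^+}V^\lambda$. Here $\nu^+$ only contains diagrams with at most $d$ rows, because $V^\lambda=0$ once $\lambda$ has more than $d$ rows. Hence
\begin{equation}
    \calH\cong\bigoplus_{\nu}\bigoplus_{\lambda\in\nu^+}V^\lambda\otimes W^\nu.
\end{equation}

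\textbf{Comparing the two.} The $U(d)\times S_{n-1}$-irreps $V^\lambda\otimes W^\nu$ are pairwise inequivalent across distinct pairs $(\lambda,\nu)$. Equating isotypic multiplicities therefore gives $m_{\lambda\nu}=1$ when $\lambda\in\nu^+$, equivalently $\nu\in\lambda^-$, and $m_{\lambda\nu}=0$ otherwise. This is valid for $\nu$ with at most $d$ rows.

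\textbf{Main obstacle.} The diagrams $\nu\in\lambda^-$ automatically have at most $d$ rows, but I must check that no $\nu$ with more than $d$ rows appears in $W^\lambda\down$. This is automatic: removing a box cannot increase the number of rows, and any $S_{n-1}$-constituent of $\calH$ has at most $d$ rows by Schur--Weyl duality for $n-1$ sites. One further subtlety is that \cref{thm:schur-weyl} is stated for $SU(d)$. I would use the polynomial $U(d)$ structure so that distinct $\lambda$ of the same size give inequivalent modules. For $SU(d)$ restricted to a fixed size $n$, distinct $\lambda\in\Omega_{n,d}$ are still inequivalent, since they differ by more than a multiple of the full column $(1^d)$ and such shifts change $|\lambda|$. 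Either route is fine.
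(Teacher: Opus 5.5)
Your proposal is correct and follows essentially the same route as the paper: decompose $\calH\cong\calH_{\ihat}\otimes\C^d$ using Schur--Weyl duality on $n-1$ sites together with Pieri's rule, then compare this with the full Schur--Weyl decomposition restricted to $SU(d)\times S_{n-1}$. Your explicit multiplicity bookkeeping (the $V^\lambda\otimes W^\nu$ are pairwise inequivalent, and no constituent with more than $d$ rows can appear) simply makes precise the ``comparing both decompositions'' step that the paper leaves implicit.
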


\begin{fact}[Pieri's Rule]\label{fact:Pieri-rule}
    Let $\nu \in \Omega_{n-1,d}$ and $V^\nu$ be an irrep of $SU(d)$. Then,
    \begin{equation}
        V^\nu \otimes \C^d \cong \bigoplus_{\mu \in \nu^+} V^\mu.
    \end{equation}
\end{fact}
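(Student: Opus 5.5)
The plan is to prove Pieri's rule at the level of characters and then appeal to the fact that characters determine finite-dimensional representations of a compact group. It suffices to work with polynomial $U(d)$-modules and then restrict to $SU(d)$. Restriction commutes with tensor products and direct sums, and by \cref{subsec:su-representations} it sends the $U(d)$-irrep $V^\mu$ to the $SU(d)$-irrep $V^\mu$. So an isomorphism of $U(d)$-modules $V^\nu\otimes\C^d\cong\bigoplus_{\mu\in\nu^+}V^\mu$ gives the claim.

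First I will recall the Weyl character formula in its bialternant form. On a diagonal element $\mathrm{diag}(x_1,\ldots,x_d)$, the character of $V^\nu$ is the Schur polynomial
\begin{equation}
    s_\nu(x)=\frac{a_{\nu+\delta}(x)}{a_\delta(x)},\qquad a_\alpha(x):=\det\bigl(x_i^{\alpha_j}\bigr)_{i,j=1}^d,\qquad \delta=(d-1,d-2,\ldots,0).
\end{equation}
The character of the defining representation $\C^d$ is $x_1+\cdots+x_d$. Since diagonal matrices meet every conjugacy class of $U(d)$, it suffices to prove the polynomial identity
\begin{equation}
    (x_1+\cdots+x_d)\,s_\nu(x)=\sum_{\mu\in\nu^+}s_\mu(x).
\end{equation}

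The key step is to multiply the alternant by the symmetric polynomial $p_1=\sum_k x_k$. Expanding $a_{\nu+\delta}=\sum_{w\in S_d}\mathrm{sgn}(w)\,x^{w(\nu+\delta)}$ and using the symmetry of $p_1$ to move it inside the antisymmetrization gives
\begin{equation}
    p_1\,a_{\nu+\delta}=\sum_{i=1}^d a_{\nu+e_i+\delta}.
\end{equation}
Here $e_i$ is the $i$-th standard basis vector. Next I sort the terms. If $\nu+e_i$ is not a partition, then $i\geq2$ and $\nu_{i-1}=\nu_i$. In that case the exponent vector $\nu+e_i+\delta$ has equal entries in positions $i-1$ and $i$, so the determinant $a_{\nu+e_i+\delta}$ has two equal columns and vanishes. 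Otherwise $\nu+e_i$ is a partition of $n$ with at most $d$ rows obtained by adding one box to row $i$, which is exactly an element of $\nu^+$, and each element of $\nu^+$ arises from exactly one $i$. Dividing by $a_\delta$ gives the identity, and hence the isomorphism, by complete reducibility and linear independence of irreducible characters.

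The only point requiring care is the bookkeeping in this last step: checking that the vanishing terms correspond exactly to the illegal additions, and that no rows beyond $d$ can appear, which holds automatically since exponent vectors have length $d$. I do not expect a real obstacle. Alternatively, one could derive the rule from Young's branching rule via Schur--Weyl duality applied to $(\C^d)^{\otimes(n-1)}\otimes\C^d$, but the character argument is self-contained.
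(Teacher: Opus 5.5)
The paper does not prove this statement. It imports Pieri's rule as a standard fact from Fulton--Harris and Sagan, and uses it as the primitive from which it derives Young's branching rule via Schur--Weyl duality on $(\C^d)^{\otimes(n-1)}\otimes\C^d$.

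Your argument is correct; it is the standard bialternant proof. The four steps all hold:
\begin{itemize}
\item The identity $p_1 a_{\nu+\delta}=\sum_{i=1}^d a_{\nu+e_i+\delta}$ follows by reindexing $x_k=x_{w(i)}$ inside the antisymmetrization.
\item The terms with $i\geq2$ and $\nu_{i-1}=\nu_i$ vanish because columns $i-1$ and $i$ of the determinant coincide.
\item The surviving terms are in bijection with $\nu^+$. No row beyond $d$ can appear, since the exponent vectors have length $d$.
\item The relation $p_1 s_\nu=\sum_{\mu\in\nu^+}s_\mu$ holds as an identity of polynomials, not just of rational functions. It therefore also holds at non-regular diagonal elements where $a_\delta$ vanishes. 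So the characters agree on all of $U(d)$, and after restriction on $SU(d)$.
\end{itemize}

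Relative to the paper, your route is logically independent of the branching rule. That matters here, because the paper proves branching \emph{from} Pieri. The alternative you mention at the end, deriving Pieri from branching via Schur--Weyl duality, would be circular in this paper's setup. It would only work if the branching rule were established separately, for instance by Jucys--Murphy/Okounkov--Vershik methods or by Frobenius reciprocity. The price of your approach is reliance on the Weyl character formula. That is the same level of standard representation theory the paper already cites freely, so nothing is lost.
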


It is crucial to point out that the decomposition appearing in both rules are multiplicity-free.

\begin{proof}[Proof of \cref{fact:Young-branching-rule}]

We start by applying Schur--Weyl duality on $(\C^d)^{\otimes (n-1)}$ to obtain the representation of $SU(d)\times S_{n-1}$
    \begin{equation}
        (\C^d)^{\otimes (n-1)} \cong \bigoplus_{\nu \in \Omega_{n-1,d}} V^\nu \otimes W^\nu.
    \end{equation}
    In the case of a single qudit $\C^d$, the permutation group acts trivially, so we write $\C^d\cong \C^d\otimes W^{\square}$. 
    Thus, we may decompose the space of $n$ qudits as 
    \begin{align}
        (\C^d)^{\otimes (n-1)} \otimes \C^d &\cong \bigoplus_{\nu \in \Omega_{n-1,d}} (V^\nu \otimes W^\nu) \otimes (\C^d\otimes W^{\square})\\
        &\cong \bigoplus_{\nu \in \Omega_{n-1,d}} (V^\nu \otimes \C^d) \otimes W^\nu\\
        &\cong \bigoplus_{\nu \in \Omega_{n-1,d}} \bigoplus_{\lambda \in \nu^+} V^\lambda \otimes W^\nu,
    \end{align} where in the second line we have used that $W^{\nu}\otimes W^{\square}\cong W^{\nu}$ and in the third we have used Pieri's rule. Continuing, we may swap the order of the finite direct sums:
    
    \begin{align}
        (\C^d)^{\otimes (n-1)} \otimes \C^d &\cong \bigoplus_{\lambda \in \Omega_{n,d}} V^\lambda \otimes \bigoplus_{\nu \in \lambda^-} W^\nu.
    \end{align}
    Meanwhile, by Schur--Weyl duality on $(\C^d)^{\otimes n} = (\C^d)^{\otimes (n-1)} \otimes \C^d$ we have as a representation of $SU(d)\times S_n$
    \begin{align}
        (\C^d)^{\otimes (n-1)} \otimes \C^d \cong \bigoplus_{\lambda \in \Omega_{n,d}} V^\lambda \otimes W^\lambda
    \end{align}
    The restriction of $S_n$ to the subgroup $S_{n-1}$ given by permutations fixing qudit $i$ does not affect the action of $SU(d)$, and so comparing both decompositions proves Young's branching Rule.
\end{proof}

\subsection{Understanding the Twirl}
Recall that the Lindbladian can be expressed in terms of the Lie algebra twirl, which furthermore can be expressed in terms of the partial trace~\cref{eq:lie-algebra-twirl-partial-trace}. For the Markov chain coefficients we will need to compute $\Lloc(\Pi_\lambda)$ and hence $\calT(\Pi_\lambda)$. To do this, we shall first find a useful expression for $\Tr_i(\Pi_\lambda)$, then leverage that to develop a representation-theoretic expression for $\calT(\Pi_\lambda)$.

\begin{lemma}[Partial trace of a Schur--Weyl projector]\label{partialtraceprojector}
    For $\lambda\in\Omega_{n,d}$, the linear operator $\Tr_i(\Pi_\lambda)\in \calB(\calH_{\ihat})$ may be written as
    \begin{equation}
        \Tr_i(\Pi_\lambda)
        =
        \sum_{\nu\in\lambda^-}
        \frac{\dim(V^\lambda)}{\dim(V^\nu)}
        \Pi_\nu^{\ihat}.
    \end{equation}
\end{lemma}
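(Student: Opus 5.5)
Write $\Pi_\lambda$ in the Schur--Weyl basis for the $n$-qudit space, then re-express it in the $(n-1)$-qudit-plus-one basis used in the proof of \cref{fact:Young-branching-rule}. Concretely, fix $i$ and identify $\calH_{[n]}\cong\calH_{\ihat}\otimes\C^d_i$. By Schur--Weyl duality on $\calH_{\ihat}$ and Pieri's rule (\cref{fact:Pieri-rule}),
\begin{equation}
    \calH_{\ihat}\otimes\C^d_i\cong\bigoplus_{\nu\in\Omega_{n-1,d}}(V^\nu\otimes\C^d)\otimes W^\nu\cong\bigoplus_{\nu}\bigoplus_{\mu\in\nu^+}V^\mu\otimes W^\nu .
\end{equation}
Since $\Pi_\lambda$ is the isotypic projector for $V^\lambda$ under the diagonal $SU(d)$ action, which is the same action in both descriptions, $\Pi_\lambda$ is the projector onto $\bigoplus_{\nu\in\lambda^-}V^\lambda\otimes W^\nu$. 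Hence $\Pi_\lambda=\sum_{\nu\in\lambda^-}\Pi_\lambda\Pi^{\ihat}_\nu$. Here $\Pi_\nu^{\ihat}\otimes I_i$ commutes with $\Pi_\lambda$, because $\Pi_\nu^{\ihat}$ lies in the algebra generated by $S_{n-1}$ and $\Pi_\lambda$ commutes with all of $S_n$. The summand $\Pi_\lambda\Pi^{\ihat}_\nu$ is the projector $P_{\nu,\lambda}$ onto the $V^\lambda$-component inside $(V^\nu\otimes\C^d)\otimes W^\nu$.

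Next compute $\Tr_i(P_{\nu,\lambda})$. The operator $P_{\nu,\lambda}$ has the form $\bigl(p_{\nu\to\lambda}\otimes I_{W^\nu}\bigr)$, where $p_{\nu\to\lambda}$ is the orthogonal projector inside $V^\nu\otimes\C^d$ onto its $V^\lambda$ summand. The partial trace over $\C^d_i$ only touches the $V^\nu\otimes\C^d$ factor. So $\Tr_i(P_{\nu,\lambda})=\Tr_{\C^d}(p_{\nu\to\lambda})\otimes I_{W^\nu}$, placed in the $\nu$ block of $\calH_{\ihat}$.

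The key point is that $\Tr_{\C^d}(p_{\nu\to\lambda})$ is an operator on $V^\nu$ that commutes with $SU(d)$. This holds because $p_{\nu\to\lambda}$ commutes with $U\otimes U$, and the partial trace intertwines conjugation by $U\otimes U$ with conjugation by $U$. By Schur's lemma on the irrep $V^\nu$, it equals $c\,I_{V^\nu}$. Taking full traces gives $c\dim V^\nu=\dim V^\lambda$, so $c=\dim V^\lambda/\dim V^\nu$. Summing over $\nu\in\lambda^-$ then yields
\begin{equation}
    \Tr_i(\Pi_\lambda)=\sum_{\nu\in\lambda^-}\frac{\dim V^\lambda}{\dim V^\nu}\,\Pi^{\ihat}_\nu .
\end{equation}
Here we use that $I_{V^\nu}\otimes I_{W^\nu}$ is exactly $\Pi^{\ihat}_\nu$.

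The main point requiring care is the identification of $\Pi_\lambda$ with a sum of tensor-product-structured projectors $p_{\nu\to\lambda}\otimes I_{W^\nu}$. We must make sure that the isomorphism $\calH_{\ihat}\cong\bigoplus V^\nu\otimes W^\nu$ is unitary and that the $\C^d_i$ factor sits inside the $V$-part. This holds since $S_{n-1}$ acts trivially on $\C^d_i$ and $SU(d)$ acts diagonally. Multiplicity-freeness of Pieri's rule guarantees that $p_{\nu\to\lambda}$ is well-defined as the isotypic projector. As a sanity check, the traces of both sides agree: $\dim V^\lambda\dim W^\lambda\cdot d/d$ on the left versus $\sum_{\nu\in\lambda^-}\dim V^\lambda\dim W^\nu$ on the right, and these match by \cref{fact:Young-branching-rule}.
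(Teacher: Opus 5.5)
Your proof is correct. It uses the same ingredients as the paper's proof but arranges them differently.

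The paper first shows that $\Tr_i(\Pi_\lambda)$ commutes with $SU(d)\times S_{n-1}$. It then invokes the commutant lemma on $n-1$ qudits to write $\Tr_i(\Pi_\lambda)=\sum_\nu c_\nu\Pi_\nu^{\ihat}$. Finally it extracts each $c_\mu$ by pairing with $\Pi_\mu^{\ihat}$ and evaluating $\Tr\bigl((\Pi_\mu^{\ihat}\otimes I_i)\Pi_\lambda\bigr)=\dim(V^\lambda)\dim(W^\mu)\,\mathbf 1_{\{\mu\in\lambda^-\}}$. That evaluation uses the same comparison of the branching and Pieri decompositions, via range intersections, that you use.

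You instead use that comparison at the start, to split $\Pi_\lambda$ into blocks $p_{\nu\to\lambda}\otimes I_{W^\nu}$. You then apply Schur's lemma for $SU(d)$ alone, on the irreducible $V^\nu$, to the partial trace of the Pieri projector.

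What your route buys: the support on $\lambda^-$ and the operator form come out together. It also makes clear why the coefficient is a ratio of $SU(d)$ dimensions, since it is literally the normalized partial trace of a Pieri projector. The cost is that you must track where the $\C^d_i$ factor sits under the unitary identification, which you correctly flag.

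What the paper's route buys: it only needs traces of products of commuting projectors, so it avoids that bookkeeping. In exchange it leans on the global commutant lemma for $SU(d)\times S_{n-1}$.

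A minor point: in your sanity check the factor $d/d$ is superfluous, since $\Tr\bigl(\Tr_i(\Pi_\lambda)\bigr)=\Tr(\Pi_\lambda)$ directly.
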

\begin{proof}
    First, we show that $\Tr_i(\Pi_\lambda)$ commutes with the actions of both $SU(d)$ and $S_{n-1}$ on $(\C^d)^{\otimes(n-1)}$. For any $U \in U(d)$ we have
    \begin{align}
        U^{\otimes(n-1)} &\Tr_i(\Pi_\lambda) = \Tr_i((U^{\otimes(n-1)} \otimes I_i)\Pi_\lambda) = \Tr_i(U^{\otimes n}\Pi_\lambda(I^{\otimes(n-1)} \otimes U_i^\dagger))\\ 
        &= \Tr_i(\Pi_\lambda (U^{\otimes(n-1)} \otimes I_i)) = \Tr_i(\Pi_\lambda) U^{\otimes(n-1)}
    \end{align}
    and for any $\sigma \in S_{n-1}$ we have
    \begin{align}
        P^{\ihat}(\sigma)\Tr_i(\Pi_\lambda) = \Tr_i((P^{\ihat}(\sigma) \otimes I_i)\Pi_\lambda) = \Tr_i(\Pi_\lambda(P^{\ihat}(\sigma)\otimes I_i)) = \Tr_i(\Pi_\lambda)P^{\ihat}(\sigma),
    \end{align}
    Thus, $\Tr_i(\Pi_\lambda)$ lies in the commutant of $SU(d) \times S_{n-1}$ so by \cref{lem:commutant-is-span-projectors} has the form
    \begin{equation}
        \Tr_i(\Pi_\lambda) = \sum_{\nu \in \Omega_{n-1,d}} c_{\nu} \Pi_\nu^{\ihat}.
    \end{equation}
    We now compute the coefficients $c_\nu$. 
    Fix $\mu \in \Omega_{n-1,d}$ and consider the isotypic projector on $\calH_{\ihat}$ (thought of as embedded into $\calH$ via $\iota_i$) associated to diagram $\mu$. We have
    \begin{equation}
        \Pi_\mu^{\ihat}\Tr_i(\Pi_\lambda) = \Pi_\mu^{\ihat} \sum_{\nu \in \Omega_{n-1,d}} c_\nu \Pi_\nu^{\ihat} = c_\mu \Pi_\mu^{\ihat}.
    \end{equation}
    Taking the trace of the right-hand side gives
    \begin{equation}\label{eq:partial-trace-coefficient}
        c_\mu \dim(V^\mu)\dim(W^\mu).
    \end{equation}
    Taking the trace of the left-hand side gives
    \begin{align}
        \Tr(\Pi_\mu^{\ihat}\Tr_i(\Pi_\lambda)) &= \Tr((\Pi_\mu^{\ihat} \otimes I_i) \Pi_\lambda)
    \end{align}
    
    We compute this trace as follows. Since this trace is unchanged by relabeling the tensor factors, we may assume WLOG that $i=n$. Write
    \begin{equation}
        \widetilde\Pi_\mu
        :=
        \Pi_\mu^{[n-1]}\otimes I_n.
    \end{equation}

    Applying Schur--Weyl duality and Young's branching rule onto $\calH = (\C^d)^{\otimes n}$ gives an orthogonal decomposition
    \begin{equation}\label{eq:Cdn-decomp}
    \calH = \bigoplus_{\alpha\in\Omega_{n,d}} \bigoplus_{\nu\in\alpha^-}\calH_{\alpha,\nu},
        \qquad
        \calH_{\alpha,\nu}\cong V^\alpha\otimes W^\nu.
    \end{equation}
    Moreover, applying Schur--Weyl duality and Pieri's rule onto $(\C^d)^{\otimes(n-1)} \otimes \C^d$ gives an equivalent form
    \begin{equation}\label{eq:cdn-1-decomp}
        \calH = \bigoplus_{\nu \in \Omega_{n-1,d}} (V^\nu \otimes \C^d) \otimes W^\nu = \bigoplus_{\nu \in \Omega_{n-1,d}} \bigoplus_{\alpha \in \nu^+} \calH_{\alpha,\nu}
    \end{equation}
    In both decompositions, the summands can be chosen to be orthogonal because the group actions are unitary. The projector $\Pi_\lambda$ selects the summands in \cref{eq:Cdn-decomp} with $\alpha=\lambda$, while $\widetilde\Pi_\mu$, the $S_{n-1}$-isotypic projector, selects those in \cref{eq:cdn-1-decomp} with $\nu=\mu$. Thus,
    \begin{equation}
        \mathrm{Ran}(\Pi_\lambda)
        =
        \bigoplus_{\nu\in\lambda^-}
        \calH_{\lambda,\nu},\qquad
        \text{and}
        \qquad
        \mathrm{Ran}(\widetilde\Pi_\mu)
        =
        \bigoplus_{\alpha\in\mu^+}
        \calH_{\alpha,\mu}.
    \end{equation}
    Consequently, the two projectors commute and the intersection of their ranges is given by
    \begin{equation}\label{eq:intersection-of-schur-weyl-sectors}
        \mathrm{Ran}(\Pi_\lambda)\cap\mathrm{Ran}(\widetilde\Pi_\mu)
        =
        \begin{cases}
            \calH_{\lambda,\mu},
            &\mu\in\lambda^-,\\
            \{0\},
            &\mu\notin\lambda^-.
        \end{cases}
    \end{equation}
    For $\mu\in\lambda^-$, we have $\dim(\calH_{\lambda,\mu})=\dim(V^\lambda)\dim(W^\mu)$. Since $\widetilde\Pi_\mu\Pi_\lambda$ is the orthogonal projector onto the intersection in \cref{eq:intersection-of-schur-weyl-sectors}, we conclude that
    \begin{equation}
        \Tr\left(\widetilde\Pi_\mu\Pi_\lambda\right)
        =
        \begin{cases}
            \dim(V^\lambda)\dim(W^\mu),
            &\mu\in\lambda^-,\\
            0,
            &\mu\notin\lambda^-.
        \end{cases}
    \end{equation}
    Comparing with \cref{eq:partial-trace-coefficient} gives
    \begin{equation}
            c_\mu
            =
            \mathbf 1_{\{\mu\in\lambda^-\}}
            \frac{\dim(V^\lambda)}{\dim(V^\mu)},
    \end{equation}
    as desired.
\end{proof}

The next identity is a selection rule for the depolarizing channel: the twirl connects only diagrams related through a one-box move.

\begin{proposition}[Schur--Weyl Expansion of the Twirl]\label{prop:schur-weyl-expansion-twirl}
    The Lie algebra twirl of an isotypic Schur--Weyl projector can be expressed as
    \begin{align}
        \calT(\Pi_\lambda) = \frac{n}{d} \sum_{\nu \in \lambda^-} \frac{\dim(V^\lambda)}{\dim(V^\nu)} \sum_{\mu \in \nu^+} \frac{\dim(W^\nu)}{\dim(W^\mu)} \Pi_\mu.
    \end{align}
\end{proposition}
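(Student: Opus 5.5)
We plan to start from the partial-trace form of the twirl, \cref{eq:lie-algebra-twirl-partial-trace}, namely $\calT(\Pi_\lambda)=\frac1d\sum_{i=1}^n\Tr_i(\Pi_\lambda)\otimes I_i$. Substituting \cref{partialtraceprojector} then gives
\begin{equation}
    \calT(\Pi_\lambda)=\frac1d\sum_{\nu\in\lambda^-}\frac{\dim(V^\lambda)}{\dim(V^\nu)}\sum_{i=1}^n\Pi_\nu^{\ihat}\otimes I_i .
\end{equation}
The proof therefore reduces to a single identity for each fixed $\nu\in\Omega_{n-1,d}$:
\begin{equation}
    \sum_{i=1}^n\Pi_\nu^{\ihat}\otimes I_i=n\sum_{\mu\in\nu^+}\frac{\dim(W^\nu)}{\dim(W^\mu)}\Pi_\mu .
\end{equation}
This is the ``Pieri dual'' of \cref{partialtraceprojector} announced in the text. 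Once it is proved, the two pieces combine into the claimed formula.

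To prove the identity, we first show that $S:=\sum_i\Pi_\nu^{\ihat}\otimes I_i$ lies in $\mathrm{comm}(SU(d)\times S_n)$. For $SU(d)$ this holds termwise: $\Pi_\nu^{\ihat}$ commutes with $U^{\otimes(n-1)}$, and $I_i$ commutes with $U$. For $S_n$, we use $P(\sigma)(\Pi_\nu^{\ihat}\otimes I_i)P(\sigma)^\dagger=\Pi_\nu^{\widehat{\sigma(i)}}\otimes I_{\sigma(i)}$, which holds because the isotypic projector on $n-1$ sites is invariant under relabeling those sites. Summing over $i$ shows that $S$ is permutation invariant. By \cref{lem:commutant-is-span-projectors}, $S=\sum_{\mu\in\Omega_{n,d}}s_\mu\Pi_\mu$.

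Next we extract the coefficients by taking traces against $\Pi_\mu$. On one side, $\Tr(\Pi_\mu S)=s_\mu\dim(V^\mu)\dim(W^\mu)$. On the other side, by permutation symmetry every $i$ contributes equally, so $\Tr(\Pi_\mu S)=n\Tr(\Pi_\mu\widetilde\Pi_\nu)$. The computation in the proof of \cref{partialtraceprojector}, using the intersection of ranges \cref{eq:intersection-of-schur-weyl-sectors}, gives $\Tr(\Pi_\mu\widetilde\Pi_\nu)=\mathbf 1_{\{\nu\in\mu^-\}}\dim(V^\mu)\dim(W^\nu)$. Since $\nu\in\mu^-$ is equivalent to $\mu\in\nu^+$, we obtain $s_\mu=n\,\mathbf 1_{\{\mu\in\nu^+\}}\dim(W^\nu)/\dim(W^\mu)$, which is exactly the identity above.

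Substituting this back and interchanging the finite sums proves the proposition. We expect the main obstacle to be the permutation-equivariance bookkeeping, that is, checking carefully that conjugating by $P(\sigma)$ carries the embedded projector at site $i$ to the one at site $\sigma(i)$. Everything else reuses the range-intersection argument already established in \cref{partialtraceprojector}.
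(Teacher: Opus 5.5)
Your proposal is correct and follows the paper's proof essentially step for step. You reduce via \cref{partialtraceprojector} to the identity for $\sum_i \Pi_\nu^{\ihat}\otimes I_i$, place that operator in the joint commutant, and read off its coefficients by tracing against $\Pi_\mu$. The only cosmetic difference is that you evaluate $\Tr(\Pi_\mu\widetilde\Pi_\nu)$ directly from the range-intersection computation, whereas the paper re-applies the statement of \cref{partialtraceprojector} to $\Tr_i(\Pi_\mu)$; this amounts to the same trace.
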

\begin{proof}
    Applying~\cref{partialtraceprojector} to ~\cref{eq:lie-algebra-twirl-partial-trace} and swapping the sums,
    \begin{align}\label{eq:raw-twirl}
        \calT(\Pi_\lambda) &= \frac{1}{d}\sum_{i=1}^n \Tr_i(\Pi_\lambda) \otimes I_i\\
        &= \frac{1}{d} \sum_{\nu \in \lambda^-} \frac{\dim(V^\lambda)}{\dim(V^\nu)} \left(\sum_{i=1}^n \Pi_\nu^{\ihat} \otimes I_i\right)
    \end{align}
    Define
    \begin{equation}
        A_\nu = \sum_{i=1}^n \Pi_\nu^{\ihat} \otimes I_i.
    \end{equation}
    Note that $A_\nu$ commutes with the actions of $SU(d)$ and $S_n$, so by \cref{lem:commutant-is-span-projectors}, we have
    \begin{equation}
        A_\nu = \sum_{\kappa \in \Omega_{n,d}} a_\kappa \Pi_\kappa,
    \end{equation}
    and we can determine the coefficients $a_\kappa$ by taking traces in two ways:
    \begin{equation}
        \Tr(A_\nu \Pi_\mu) = a_\mu \Tr(\Pi_\mu)
    \end{equation}
    and
    \begin{align}
        \Tr(A_\nu \Pi_\mu) &= \Tr\left(\left(\sum_{i=1}^n \Pi_\nu^{\ihat} \otimes I_i\right) \Pi_\mu\right)\\
        &= \sum_{i=1}^n \Tr(\Tr_i((\Pi_\nu^{\ihat} \otimes I_i)\Pi_\mu))\\
        &= \sum_{i=1}^n \Tr(\Pi_\nu^{\ihat}\Tr_i(\Pi_\mu))\\
        &= \sum_{i=1}^n \Tr\left(\Pi_\nu^{\ihat}\sum_{\alpha \in \mu^-} \frac{\dim(V^\mu)}{\dim(V^\alpha)} \Pi_\alpha^{\ihat}\right)\\
        &= \sum_{i=1}^n \sum_{\alpha \in \mu^-}\delta_{\alpha\nu} \frac{\dim(V^\mu)}{\dim(V^\alpha)}\Tr(\Pi_\nu^{\ihat})\\
        &= n\mathbf 1_{\{\nu \in \mu^-\}} \frac{\dim(V^\mu)}{\dim(V^\nu)} \Tr(\Pi_\nu^{[n]\setminus \{1\}}).
    \end{align}
    In the third equality we used the defining property of the partial trace, in the fourth equality we applied the previous lemma, and in the final equality we used the fact that all $\Pi_\nu^{\ihat}$ have the same trace. Thus, 
    \begin{align}
        a_\mu &= n\mathbf 1_{\{\mu \in \nu^+\}} \frac{\dim(V^\mu)}{\dim(V^\nu)} \frac{\Tr(\Pi_\nu^{[n]\setminus \{1\}})}{\Tr(\Pi_\mu)}\\
        &= n\mathbf 1_{\{\mu \in \nu^+\}} \frac{\dim(W^\nu)}{\dim(W^\mu)}.
    \end{align}
    Substitution into \cref{eq:raw-twirl} gives
    \begin{equation}
        \calT(\Pi_\lambda) = \frac{n}{d} \sum_{\nu \in \lambda^-} \frac{\dim(V^\lambda)}{\dim(V^\nu)} \sum_{\mu \in \nu^+} \frac{\dim(W^\nu)}{\dim(W^\mu)} \Pi_\mu,
    \end{equation}
    as desired.
\end{proof}

\begin{proof}[Proof of \cref{thm:master-eqn-coefficients}]
We compute $L_{\lambda\to\lambda'}$ for a fixed pair
$\lambda,\lambda'$. From \cref{eq:davies-twirl} applied to $\Lloc$,
\begin{equation}
    \Lloc(\Pi_{\lambda'})
    =
    \sum_{E\in\mathrm{spec}(H)}
    \gamma(E_{\lambda'}-E)
    \Pi_E\calT(\Pi_{\lambda'})\Pi_E
    -
    \sum_{E\in\mathrm{spec}(H)}
    \gamma(E-E_{\lambda'})
    \Pi_{\lambda'}\calT(\Pi_E)\Pi_{\lambda'}.
\end{equation}
For $\lambda\neq\lambda'$, the Schur--Weyl expansion of \cref{prop:schur-weyl-expansion-twirl} gives
\begin{align}
    \Pi_{\lambda}\Lloc(\Pi_{\lambda'})
    &=
    \mathbf 1_{\{\lambda\sim\lambda'\}}
    \gamma(E_{\lambda'}-E_\lambda)
    \frac nd
    \frac{\dim(V^{\lambda'})}{\dim(V^\nu)}
    \frac{\dim(W^\nu)}{\dim(W^\lambda)}
    \Pi_\lambda .
\end{align}
Therefore
\begin{align}
    L_{\lambda\to\lambda'}
    &=
    \frac{
        \Tr\!\left(\Pi_\lambda\Lloc(\Pi_{\lambda'})\right)
    }{
        \Tr(\Pi_\lambda)
    } \\
    &=
    \mathbf 1_{\{\lambda\sim\lambda'\}}
    \gamma(E_{\lambda'}-E_\lambda)
    \frac nd
    \frac{\dim(V^{\lambda'})}{\dim(V^\nu)}
    \frac{\dim(W^\nu)}{\dim(W^\lambda)}.
\end{align}
The diagonal is determined by the row-sum condition $L_{\lambda\to\lambda} = -\sum_{\lambda'\neq\lambda}L_{\lambda\to\lambda'}.$
\end{proof}

\section{Stationary Measure and Free Energy}\label{app:stationary-measure}

In this appendix we find the stationary measure $\pi_\beta(\lambda)$ and derive a uniform polynomial equivalence between the stationary measure and the exponential of the free energy $F_\beta$, \cref{eq:free-energy-1}. We then prove \cref{thm:spectral-free-energy-landscape}.

\subsection{The Free Energy and Polynomial Estimates}
Detailed balance requires
\begin{equation}
    \frac{L_{\lambda \to \lambda'}}{L_{\lambda' \to \lambda}} = \frac{\pi_\beta(\lambda')}{\pi_\beta(\lambda)}.
\end{equation}

For $\lambda\neq\lambda'$ with $\lambda\overset{\nu}{\sim}\lambda'$, write $\omega=E_{\lambda'}-E_\lambda$. Then
\begin{align}
    \frac{L_{\lambda\to\lambda'}}{L_{\lambda'\to\lambda}} =
    \frac{\gamma(\omega)}{\gamma(-\omega)}
    \frac{\dim(V^{\lambda'})}{\dim(V^\lambda)}
    \frac{\dim(W^{\lambda'})}{\dim(W^\lambda)}=
    e^{-\beta(E_{\lambda'}-E_\lambda)}
    \frac{\dim(V^{\lambda'})\dim(W^{\lambda'})}
    {\dim(V^\lambda)\dim(W^\lambda)} =
    \frac{\pi_\beta(\lambda')}{\pi_\beta(\lambda)}.
\end{align}
Thus
\begin{equation}\label{eq:stationary-measure}
    \pi_\beta(\lambda)
    =
    \frac1{Z_\beta}
    e^{-\beta E_\lambda}
    \dim(V^\lambda)\dim(W^\lambda).
\end{equation}

\begin{lemma}[Schur--Weyl block dimension]\label{lem:dim-Pi_lambda}
    For $\lambda=(\lambda_1,\ldots,\lambda_d)\in\Omega_{n,d}$, we have
    \begin{align}
      \dim(\Pi_{\lambda})
      &=
      \frac{n!}{\lambda_{1}!\cdots\lambda_d!}
      \frac{\displaystyle\prod_{1\leq i<j\leq d}
      (\lambda_i-\lambda_j+j-i)^2}
      {\displaystyle
      \left(\prod_{1\leq i<j\leq d}(j-i)\right)
      \left(\prod_{i=1}^d\prod_{r=1}^{d-i}(\lambda_i+r)\right)}\\
      &= \frac{n!}{\lambda_1! \cdots \lambda_d!} \frac{p_1(\lambda;d)}{p_2(\lambda;d)},
    \end{align}
    where, for fixed $d$, $p_i(\lambda;d)$ are polynomials in the row lengths $\lambda_1,\ldots,\lambda_d$.
    Moreover, uniformly in $\lambda\in\Omega_{n,d}$,
    \begin{equation}\label{eq:schur-weyl-dimension-bounds}
        \frac{e^{nH(x_\lambda)}}{(n+1)^d(n+d)^{\binom d2}}
        \leq \dim(\Pi_\lambda)
        \leq (n+d)^{\binom d2}e^{nH(x_\lambda)},
    \end{equation}
    where $x_\lambda=\lambda/n$ and $H(x):=-\sum_i x_i\log x_i$, with $0\log0=0$.
\end{lemma}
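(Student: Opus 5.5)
We will start from $\dim(\Pi_\lambda)=\dim(V^\lambda)\dim(W^\lambda)$, which holds by \cref{thm:schur-weyl}, and substitute the Weyl dimension formula (\cref{fact:weyl-dimension}) together with the hook-length formula (\cref{fact:hook-length-formula}) in its $d$-row form. The hook product contributes $n!\prod_{i<j}(\lambda_i-\lambda_j+j-i)/\prod_i(\lambda_i+d-i)!$. Writing $(\lambda_i+d-i)!=\lambda_i!\prod_{r=1}^{d-i}(\lambda_i+r)$ and multiplying by the Weyl factor, we obtain exactly
\begin{equation}
\frac{n!}{\prod_i\lambda_i!}\frac{\prod_{i<j}(\lambda_i-\lambda_j+j-i)^2}{\prod_{i<j}(j-i)\prod_i\prod_{r=1}^{d-i}(\lambda_i+r)}.
\end{equation}
This is a pure bookkeeping step. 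It also shows that $p_1$ and $p_2$ are polynomials in the rows for fixed $d$.

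For the bounds, we split the product into the multinomial coefficient and the rational correction. For the multinomial, we use the standard method-of-types estimate
\begin{equation}
\frac{e^{nH(x)}}{(n+1)^d}\le\binom{n}{\lambda_1,\ldots,\lambda_d}\le e^{nH(x)}.
\end{equation}
The upper bound follows because the multinomial is one term of $(\sum_i x_i)^n=1$ multiplied by $\prod_i x_i^{-\lambda_i}=e^{nH(x)}$. The lower bound follows because the multinomial term with type $\lambda$ is the largest among the at most $(n+1)^d$ types in the expansion of $1=(\sum_i x_i)^n$.

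It remains to bound the rational correction $R(\lambda)=\dim(V^\lambda)\cdot\prod_{i<j}(\lambda_i-\lambda_j+j-i)/\prod_i\prod_r(\lambda_i+r)$ between $(n+d)^{-\binom d2}$ and $(n+d)^{\binom d2}$. For the upper bound, each Weyl factor $(\lambda_i-\lambda_j+j-i)/(j-i)$ is at most $n+d$, so $\dim(V^\lambda)\le(n+d)^{\binom d2}$. For the other factor, we pair the $\binom d2$ numerator terms with the $\binom d2$ denominator terms by matching $(i,j)$ with $(i,r=j-i)$. Each ratio $(\lambda_i-\lambda_j+j-i)/(\lambda_i+j-i)$ is then at most $1$, so the upper bound follows.

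For the lower bound, $\dim(V^\lambda)\ge1$, and each numerator factor is at least $1$. Under the same pairing, each denominator factor satisfies $\lambda_i+r\le n+d$. Hence the second factor is at least $(n+d)^{-\binom d2}$.

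The only step needing care is this pairing of the numerator and denominator products, which makes the exponents come out to exactly $\binom d2$. Everything else reduces to the elementary types estimate.
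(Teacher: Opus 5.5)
Your proposal is correct and follows essentially the same route as the paper. It uses the same Weyl/hook-length bookkeeping for the exact formula, the same pairing $(i,j)\leftrightarrow(i,r=j-i)$ of the $\binom d2$ numerator and denominator factors, and the same method-of-types multinomial estimate (which the paper simply asserts, while you sketch why it holds). The only cosmetic difference is that the paper bounds each combined factor $(\lambda_i-\lambda_j+j-i)^2/\bigl((j-i)(\lambda_i+j-i)\bigr)$ directly between $(n+d)^{-1}$ and $n+d$, whereas you split off $\dim(V^\lambda)$ and bound the two pieces separately.
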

\begin{proof}
    Our starting point is
    \begin{equation}
        \dim(\Pi_\lambda) = \dim(V^\lambda)\dim(W^\lambda).
    \end{equation}
    By Weyl's dimension formula~\cite{FultonHarris1991} for $\dim(V^\lambda)$ and the hook-length formula~\cite{Sagan2001} for the Specht module $W^\lambda$ we have
    \begin{equation}
        \dim(V^\lambda) = \prod_{1\leq i<j\leq d} \frac{\lambda_i-\lambda_j+j-i}{j-i},\qquad
    \dim(W^\lambda) = \frac{n!}{\prod_{b\in\lambda}h(b)}.
    \end{equation}
    For a partition with at most $d$ rows, the hook product may be written as
    \begin{equation}
        \prod_{b\in\lambda}h(b) = \frac{\prod_{i=1}^d(\lambda_i+d-i)!}{\prod_{1\leq i<j\leq d}(\lambda_i-\lambda_j+j-i)}.
    \end{equation}
    Therefore
    \begin{equation}
        \dim(W^\lambda) = n! \frac{\prod_{1\leq i<j\leq d}(\lambda_i-\lambda_j+j-i)}{\prod_{i=1}^d(\lambda_i+d-i)!}.
    \end{equation}
    Now, factor
    \begin{equation}
        (\lambda_i + d-i)! = \lambda_i! \prod_{m=1}^{d-i} (\lambda_i + m),
    \end{equation}
    so that 
    \begin{equation}
        \prod_{i=1}^d (\lambda_i + d - i)! = \left(\prod_{i=1}^d \lambda_i!\right)\left(\prod_{i=1}^d \prod_{m=1}^{d-i} (\lambda_i + m)\right).
    \end{equation}
    Substituting this into the formula for $\dim(W^\lambda)$ gives
    \begin{align}
        \dim(\Pi_\lambda) &= \dim(V^\lambda)\dim(W^\lambda)\\
        &= \frac{n!}{\lambda_1!\cdots\lambda_d!}\frac{\prod_{1\leq i<j\leq d}(\lambda_i-\lambda_j+j-i)^2}{\left(\prod_{1\leq i<j\leq d}(j-i)\right)\left(\prod_{i=1}^d\prod_{m=1}^{d-i} (\lambda_i+m)\right)}
    \end{align}
    as desired. To obtain the bounds, write
    \begin{equation}
        \frac{p_1(\lambda;d)}{p_2(\lambda;d)}
        =\prod_{1\leq i<j\leq d}
        \frac{(\lambda_i-\lambda_j+j-i)^2}{(j-i)(\lambda_i+j-i)}.
    \end{equation}
    For each factor, set $a=\lambda_i-\lambda_j+j-i$ and $b=\lambda_i+j-i$.
    Since $1\leq j-i\leq a\leq b\leq n+d$, each factor lies between $(n+d)^{-1}$ and $n+d$. Hence
    \begin{equation}\label{eq:dimension-prefactor-bounds}
        (n+d)^{-\binom d2}
        \leq\frac{p_1(\lambda;d)}{p_2(\lambda;d)}
        \leq(n+d)^{\binom d2}.
    \end{equation}
    The multinomial coefficient satisfies
    \begin{equation}\label{eq:multinomial-entropy-bounds}
        (n+1)^{-d}e^{nH(x_\lambda)}
        \leq\frac{n!}{\lambda_1!\cdots\lambda_d!}
        \leq e^{nH(x_\lambda)}.
    \end{equation}
    Combining \cref{eq:dimension-prefactor-bounds} and \cref{eq:multinomial-entropy-bounds} proves \cref{eq:schur-weyl-dimension-bounds}, including diagrams with empty rows.
\end{proof}

\begin{lemma}[Energy eigenvalues]\label{lem:energy-eigenvalues}
The interchange Hamiltonian
\begin{equation}
    H=-\frac1n\sum_{1\leq i<j\leq n}T_{ij}.
\end{equation}
acts as the scalar
\begin{equation}
    E_\lambda = \frac{1}{2n} \left(-\sum_{i=1}^d\lambda_i^2 + \sum_{i=1}^d(2i-1)\lambda_i \right)
\end{equation}
on the Schur--Weyl block $V^\lambda\otimes W^\lambda$.
\end{lemma}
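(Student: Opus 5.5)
The plan is to compute the scalar by which the class sum of transpositions acts on $W^\lambda$, then sum contents row by row. By \cref{prop:interchange-schur-weyl-form}, $\sum_{i<j}T_{ij}=\sum_\lambda P_\lambda(\mathcal C)$, where $\mathcal C=\sum_{i<j}(i\,j)\in\mathbb C[S_n]$ is central. So $P_\lambda(\mathcal C)$ acts on $V^\lambda\otimes W^\lambda$ as $I_{V^\lambda}\otimes c_\lambda I_{W^\lambda}$ for some scalar $c_\lambda$. This gives $E_\lambda=-c_\lambda/n$, and it remains to identify $c_\lambda$.

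To identify $c_\lambda$, I would use the Jucys--Murphy elements $X_k=\sum_{j<k}(j\,k)$. Their sum is $\mathcal C=\sum_{k=2}^n X_k$. In the Young (Gelfand--Tsetlin) basis of $W^\lambda$, indexed by standard tableaux $T$, each $X_k$ acts diagonally with eigenvalue the content $c(T^{-1}(k))$ of the box containing $k$. This is the standard fact cited from \cite{Sagan2001}, and I would quote it rather than reprove it. It follows that $c_\lambda=\sum_{b\in\lambda}c(b)$, independent of $T$.

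Alternatively, one can avoid the Jucys--Murphy machinery with the character formula $c_\lambda=\binom n2\chi^\lambda((12))/\dim W^\lambda$ together with Frobenius' formula. I expect identifying the scalar to be the main obstacle, and I would settle it by citation.

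The remaining step is arithmetic. Row $i$ contains boxes in columns $j=1,\ldots,\lambda_i$, so its contents sum to
\begin{equation}
\sum_{j=1}^{\lambda_i}(j-i)=\frac{\lambda_i(\lambda_i+1)}{2}-i\lambda_i=\frac{\lambda_i^2-(2i-1)\lambda_i}{2}.
\end{equation}
Summing over $i$ and multiplying by $-1/n$ gives the stated $E_\lambda=\frac1{2n}\left(-\sum_i\lambda_i^2+\sum_i(2i-1)\lambda_i\right)$. Empty rows contribute zero, so the formula holds for all $\lambda\in\Omega_{n,d}$.
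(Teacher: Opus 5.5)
Your proposal is correct and follows the paper's proof. Centrality of $\sum_{i<j}T_{ij}$ in $\mathbb C[S_n]$ forces a scalar on each $W^\lambda$, that scalar is identified by citation as the content sum $\sum_{b\in\lambda}c(b)$ (your Jucys--Murphy remark is the standard justification), and the row-by-row arithmetic is identical. One small citation fix: attribute the block form $\sum_\lambda I_{V^\lambda}\otimes P_\lambda(\mathcal C)$ to Schur--Weyl duality (\cref{thm:schur-weyl}) rather than to \cref{prop:interchange-schur-weyl-form}, because that proposition's proof defers its energy formula to this very lemma.
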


\begin{proof}
Since $\sum_{i<j}T_{ij}$ is central in the group algebra of $S_n$, this operator acts as a scalar on each Specht module $W^\lambda$. This scalar is given by~\cite{Sagan2001}
\begin{equation}
    h_\lambda = \sum_{b\in\lambda}c(b),
\end{equation}
where $c(b)$ denotes the content of the box $b$. If $b$ lies in row $i$ and column $j$, then $c(b)=j-i$. Hence
\begin{align}
    h_\lambda = \sum_{i=1}^d\sum_{j=1}^{\lambda_i}(j-i) = \sum_{i=1}^d \left(\frac{\lambda_i(\lambda_i+1)}{2} - i\lambda_i \right) = \frac12 \sum_{i=1}^d \lambda_i(\lambda_i-2i+1).
\end{align}
Since $H=-\frac1n\sum_{i<j}T_{ij}$, the energy on the $\lambda$-block is $E_\lambda=-\frac1n h_\lambda$. Therefore
\begin{align}
    E_\lambda = -\frac{1}{2n} \sum_{i=1}^d \lambda_i(\lambda_i-2i+1) = \frac{1}{2n}
    \left(-\sum_{i=1}^d\lambda_i^2 + \sum_{i=1}^d(2i-1)\lambda_i \right).
\end{align}
\end{proof}

\begin{lemma}[Stationary Measure]\label{lem:stationary-measure-free-energy}
    Define the free energy $F_\beta: \Delta_d^\downarrow \to \R$ via
    \begin{equation}
        F_\beta(x)
        =
        \frac{\beta}{2}\sum_{i=1}^d x_i^2
        -
        \sum_{i=1}^d x_i\log x_i.
      \end{equation}
    Let $x_\lambda=\lambda/n$ and set $r = \ell(\lambda)$,
    \begin{equation}\label{eq:exact-young-weight}
      \pi_\beta(\lambda) = \frac{\Theta(1)}{Z_{\beta}} n^{-\frac{r-1}{2}}\prod_{i=1}^r x_{\lambda,i}^{-\frac{1}{2}} e^{-(\beta/2) \sum_{i} (2i - 1) x_{\lambda,i}} \frac{p_{1}(\lambda;d)}{p_{2}(\lambda;d)} e^{nF_\beta(x_\lambda)}.
    \end{equation}
\end{lemma}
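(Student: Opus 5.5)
The plan is to start from the exact formula \cref{eq:stationary-measure}, $\pi_\beta(\lambda)=Z_\beta^{-1}e^{-\beta E_\lambda}\dim(\Pi_\lambda)$. We then substitute the factorization of \cref{lem:dim-Pi_lambda}, $\dim(\Pi_\lambda)=\frac{n!}{\lambda_1!\cdots\lambda_d!}\frac{p_1(\lambda;d)}{p_2(\lambda;d)}$, and the energy of \cref{lem:energy-eigenvalues}. Writing $\lambda_i=nx_{\lambda,i}$, the energy splits cleanly:
\begin{equation}
    -\beta E_\lambda=\frac{\beta n}{2}\sum_i x_{\lambda,i}^2-\frac{\beta}{2}\sum_i(2i-1)x_{\lambda,i}.
\end{equation}
The first term supplies the quadratic part of $nF_\beta(x_\lambda)$. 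The second is exactly the factor $e^{-(\beta/2)\sum_i(2i-1)x_{\lambda,i}}$ in \cref{eq:exact-young-weight}, so the energy needs no approximation at all. The ratio $p_1/p_2$ is carried along unchanged.

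The remaining work is a sharp Stirling estimate for the multinomial coefficient. Set $r=\ell(\lambda)$, so $\lambda_{r+1}=\cdots=\lambda_d=0$, and these rows contribute $0!=1$ and $0\log 0=0$. Apply $m!=\sqrt{2\pi m}\,(m/e)^m(1+O(1/m))$ to $n!$ and to each $\lambda_i!$ with $i\le r$. Because $\sum_i\lambda_i=n$, the $e^{-m}$ factors cancel, and $n^n/\prod_i\lambda_i^{\lambda_i}=e^{nH(x_\lambda)}$. The square roots give
\begin{equation}
    \frac{\sqrt{2\pi n}}{\prod_{i\le r}\sqrt{2\pi n x_{\lambda,i}}}=(2\pi)^{-\frac{r-1}{2}}\,n^{-\frac{r-1}{2}}\prod_{i\le r}x_{\lambda,i}^{-1/2}.
\end{equation}
Since $r\le d$, the constants $(2\pi)^{-(r-1)/2}$ are bounded above and below and can be absorbed into $\Theta(1)$. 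Adding $nH(x_\lambda)$ to the quadratic term from the energy gives $nF_\beta(x_\lambda)$, which yields \cref{eq:exact-young-weight}.

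The main obstacle is the uniformity of the $\Theta(1)$ in $\lambda$. The Stirling error factors $1+O(1/\lambda_i)$ are not small when a nonzero row is short, for instance $\lambda_i=1$. I would handle this with the two-sided bound $1\le m!/(\sqrt{2\pi m}(m/e)^m)\le e^{1/12}$, valid for every $m\ge1$. It gives uniform constants in $[e^{-d/12},e^{1/12}]$ for every $\lambda\in\Omega_{n,d}$ at once. This also explains why the prefactor must depend on $r=\ell(\lambda)$ rather than on $d$: empty rows must be excluded before Stirling is applied.
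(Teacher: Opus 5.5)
Your proposal is correct and follows the paper's proof. You substitute the block-dimension factorization and the energy formula into the exact stationary measure, keep the linear energy term and $p_1/p_2$ exactly, and apply Stirling only to $n!$ and the $r=\ell(\lambda)$ nonempty rows. Your explicit use of the bound $1\le m!/(\sqrt{2\pi m}(m/e)^m)\le e^{1/12}$ to make the $\Theta(1)$ uniform in $\lambda$ is a welcome sharpening of the paper's unspecified $O_d(1)$ error term. That uniformity is exactly what the subsequent stationary-measure estimates rely on.
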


\begin{proof}
    From \cref{eq:stationary-measure}, \cref{lem:dim-Pi_lambda}, and \cref{lem:energy-eigenvalues} we get
    \begin{equation}\label{eq:stationary-measure-simple}
        \pi_\beta(\lambda)
        =
        \frac1{Z_\beta} \frac{p_1(\lambda;d)}{p_2(\lambda;d)} \frac{n!}{\lambda_1!\cdots\lambda_r!} \exp\left\{\frac{\beta n}{2}\sum_{i=1}^d x_{\lambda,i}^2 -\frac\beta2\sum_{i=1}^d(2i-1)x_{\lambda,i}\right\}.
    \end{equation}
    
    Using Stirling's approximation on the $r=\ell(\lambda)$ non-empty rows and recalling $\sum_{i=1}^r \lambda_i = n$,
    \begin{align}
        \ln \frac{n!}{\lambda_1!\lambda_2! \cdots \lambda_r!}
        & = \left( n \ln(n) - n + \frac{1}{2} \ln(2 \pi n) \right) - \sum_{i=1}^r \left( \lambda_{i} \ln(\lambda_{i}) - \lambda_{i} + \frac{1}{2} \ln(2 \pi \lambda_{i}) \right) + O_d(1) \\
        & = \left\{ n \ln(n) - \sum_{i=1}^r \lambda_{i} \ln(\lambda_{i}) \right\} + \left\{ \frac{1}{2} \ln(2 \pi n) - \frac{1}{2} \sum_{i=1}^r  \ln\left( 2 \pi \lambda_{i} \right) \right\} + O_d(1) \\
        & = - \sum_{i=1}^r \lambda_{i} \ln\left(\frac{\lambda_{i}}{n}\right) + \frac{1}{2} \ln\left( \frac{n}{\lambda_{1} \cdots \lambda_r} \right) + O_d(1)
    \end{align}
    Therefore,
    \begin{align}
        \frac{n!}{\lambda_1!\lambda_2! \cdots \lambda_r!}
        & = \Theta(1) \sqrt{\frac{n}{\lambda_{1} \cdots \lambda_{r}} } \exp\left( -\sum_{i=1}^{r} \lambda_{i} \ln\left(\frac{\lambda_{i}}{n}\right) \right)\\
        & = \Theta(1) n^{-\frac{r - 1}{2}} \sqrt{\frac{n^{r}}{\lambda_{1} \cdots \lambda_{r}} }  \exp\left( -n \sum_{i=1}^{r} \frac{\lambda_{i}}{n} \ln\left(\frac{\lambda_{i}}{n}\right) \right)\\
        & = \Theta(1) n^{-\frac{r - 1}{2}} \prod_{i=1}^r x_{\lambda,i}^{-\frac{1}{2}}  \exp\left(n H(x_{\lambda,1}, \cdots, x_{\lambda,d}) \right)
    \end{align}
    where $H$ is the entropy from \cref{lem:dim-Pi_lambda}, so the empty rows contribute zero. Plugging this into \cref{eq:stationary-measure-simple} gives 
    \begin{equation}
      \pi_\beta(\lambda) = \frac{\Theta(1)}{Z_{\beta}} n^{-\frac{r-1}{2}}\prod_{i=1}^r x_{\lambda,i}^{-\frac{1}{2}} e^{-(\beta / 2) \sum_{i} (2i - 1) x_{\lambda,i}} \frac{p_{1}(\lambda;d)}{p_{2}(\lambda;d)}  \exp\left\{n \left(\frac{\beta}{2} \sum_{i=1}^{d} x_{\lambda,i}^{2} + H(x_\lambda)\right)\right\}
    \end{equation}
    as desired.
\end{proof}

\begin{corollary}[Stationary probability ratios]\label{cor:stationary-probability-ratios}

For any $\lambda,\eta\in\Omega_{n,d}$,
\begin{equation}\label{eq:stationary-ratio-bounds}
    \frac{\pi_\beta(\lambda)}{\pi_\beta(\eta)}
    \leq e^{\beta(d-1)}(n+1)^d(n+d)^{d(d-1)}
    e^{n(F_\beta(x_\lambda)-F_\beta(x_\eta))}.
\end{equation}
\end{corollary}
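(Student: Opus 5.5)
We start from the exact formula \cref{eq:stationary-measure-simple}, which gives
\begin{equation}
    \pi_\beta(\lambda)
    =
    \frac1{Z_\beta}\,\dim(\Pi_\lambda)\,
    \exp\left\{\frac{\beta n}{2}\sum_{i=1}^d x_{\lambda,i}^2-\frac\beta2\sum_{i=1}^d(2i-1)x_{\lambda,i}\right\}.
\end{equation}
In the ratio $\pi_\beta(\lambda)/\pi_\beta(\eta)$ the partition function cancels. What remains are three factors, which we bound separately.

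\textbf{Dimension factor.} By \cref{lem:dim-Pi_lambda} we have $\dim(\Pi_\lambda)\leq (n+d)^{\binom d2}e^{nH(x_\lambda)}$ and $\dim(\Pi_\eta)\geq (n+1)^{-d}(n+d)^{-\binom d2}e^{nH(x_\eta)}$. Dividing these gives
\begin{equation}
    \frac{\dim(\Pi_\lambda)}{\dim(\Pi_\eta)}
    \leq
    (n+1)^d(n+d)^{2\binom d2}\,e^{n(H(x_\lambda)-H(x_\eta))}.
\end{equation}
Since $2\binom d2=d(d-1)$, this produces exactly the polynomial prefactor in \cref{eq:stationary-ratio-bounds}.

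\textbf{Quadratic exponential factor.} The terms $\exp\{\frac{\beta n}{2}\sum_i x_i^2\}$ combine with the entropy terms to give $e^{n(F_\beta(x_\lambda)-F_\beta(x_\eta))}$, by the definition \cref{eq:free-energy-1}.

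\textbf{Linear correction.} The remaining factor is $\exp\{\frac\beta2(g(x_\eta)-g(x_\lambda))\}$, where $g(x)=\sum_i(2i-1)x_i$. On $\Delta_d^\downarrow$ the weights $2i-1$ range over $[1,2d-1]$, and $\sum_i x_i=1$, so $g(x)\in[1,2d-1]$. Hence $g(x_\eta)-g(x_\lambda)\leq 2d-2$, and the factor is at most $e^{\beta(d-1)}$.

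Multiplying the three bounds yields \cref{eq:stationary-ratio-bounds}. No step is a real obstacle. The only care needed is to use the non-asymptotic bounds of \cref{lem:dim-Pi_lambda} rather than the $\Theta(1)$ form of \cref{lem:stationary-measure-free-energy}, so that the estimate holds uniformly, including for diagrams with empty rows.
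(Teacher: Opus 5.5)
Your proof is correct and follows essentially the same argument as the paper. You cancel $Z_\beta$, bound the dimension ratio with the non-asymptotic estimate \cref{eq:schur-weyl-dimension-bounds} to get the factor $(n+1)^d(n+d)^{d(d-1)}$, combine the entropy and quadratic terms into $n(F_\beta(x_\lambda)-F_\beta(x_\eta))$, and bound the leftover linear exponent by $\beta(d-1)$ using $1\leq\sum_i(2i-1)x_i\leq 2d-1$.
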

\begin{proof}
Cancel the partition function in \cref{eq:stationary-measure} and apply the dimension bounds~\cref{eq:schur-weyl-dimension-bounds} and \cref{lem:energy-eigenvalues}. The entropy and quadratic terms give $n(F_\beta(x_\lambda)-F_\beta(x_\eta))$, while the remaining exponent is
\begin{equation}
    \frac\beta2\sum_{i=1}^d(2i-1)(x_{\eta,i}-x_{\lambda,i})
    \leq\beta(d-1),
\end{equation}
since $1\leq\sum_i(2i-1)x_i\leq2d-1$ on the simplex.
\end{proof}

\begin{lemma}[Stationary Measure Estimates]\label{lem:stationary-measure-estimates}
    For fixed $d$ and $\beta$, uniformly in $\lambda\in\Omega_{n,d}$, we have the following polynomial equivalence between the stationary measure $\pi_\beta$ and the exponential of the free energy:
\begin{equation}\label{eq:stationary-measure-bounds}
n^{-c} e^{n f_\beta(x_\lambda)} \leq \pi_\beta(\lambda) \leq n^c e^{n f_\beta(x_\lambda)}
\end{equation}
where $f_\beta(x_\lambda) = F_\beta(x_\lambda) - F_0$, and $F_0 = \max_{x\in\Delta_d^\downarrow}F_\beta(x)$. Equivalently,
\begin{equation}\label{eq:stationary-log-asymptotic}
    \log\pi_\beta(\lambda)
    =
    n (F_\beta(x_\lambda)-F_0) + O_{\beta,d}(\log n).
\end{equation}
\end{lemma}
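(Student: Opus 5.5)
The plan is to get both bounds from the exact formula \cref{eq:stationary-measure} together with the uniform dimension bounds of \cref{lem:dim-Pi_lambda}. The only quantity these do not already control is the partition function $Z_\beta$. Write
\begin{equation}
    \pi_\beta(\lambda)=\frac{w(\lambda)}{Z_\beta},
    \qquad
    w(\lambda):=e^{-\beta E_\lambda}\dim(V^\lambda)\dim(W^\lambda),
    \qquad
    Z_\beta=\sum_{\eta\in\Omega_{n,d}}w(\eta).
\end{equation}
By \cref{lem:energy-eigenvalues},
\begin{equation}
    -\beta E_\lambda=\frac{\beta n}{2}\sum_i x_{\lambda,i}^2-\frac\beta2\sum_i(2i-1)x_{\lambda,i},
\end{equation}
and the second term lies in $[-\beta(2d-1)/2,-\beta/2]$, so it is bounded. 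Combining this with \cref{eq:schur-weyl-dimension-bounds} gives, uniformly in $\lambda$,
\begin{equation}
    n^{-c'}e^{nF_\beta(x_\lambda)}\leq w(\lambda)\leq n^{c'}e^{nF_\beta(x_\lambda)}
\end{equation}
for some $c'=c'(\beta,d)$ and all $n\geq2$. The constants $(n+1)^d(n+d)^{\binom d2}$ and $e^{\beta(2d-1)/2}$ are absorbed into $n^{c'}$.

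Next, I will show that $Z_\beta$ is comparable to $e^{nF_0}$ up to polynomial factors. For the upper bound, $|\Omega_{n,d}|\leq(n+1)^d$ and $F_\beta(x_\eta)\leq F_0$ give
\begin{equation}
    Z_\beta\leq(n+1)^dn^{c'}e^{nF_0}.
\end{equation}
For the lower bound, I need a single diagram $\eta_n\in\Omega_{n,d}$ whose free energy is within $O(1/n)$ of $F_0$. Such a diagram then gives $Z_\beta\geq w(\eta_n)\geq n^{-c'}e^{nF_0-O(1)}$.

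Constructing $\eta_n$ is the only step needing care, because $F_\beta$ contains the entropy term, which is not Lipschitz near the boundary of the simplex. I will avoid that region. By \cref{thm:spectral-free-energy-landscape}, every maximizer $x^\ast$ (either $u$ or some $x_r$) has all coordinates strictly positive, and hence they are bounded below by some $\delta>0$ depending only on $\beta,d$. If one prefers not to rely on the theorem: $\partial_{x_i}F_\beta\to+\infty$ as $x_i\to0$, so any maximizer on the compact set $\Delta_d^\downarrow$ has full support.

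I round $nx^\ast$ to an ordered integer vector with sum $n$. Concretely, take floors of the partial sums $n(x^\ast_1+\cdots+x^\ast_k)$ and difference them, then adjust where ties require it to keep the entries ordered. Each coordinate of the result is within $O(d)$ of $nx^\ast_i$, so $|x_{\eta_n}-x^\ast|_\infty=O(d/n)$. For $n$ large, $x_{\eta_n}$ stays in the region $\{x_i\geq\delta/2\}$. On that region $F_\beta$ is Lipschitz with constant $L=O(\beta+|\log(\delta/2)|+1)$, so
\begin{equation}
    F_\beta(x_{\eta_n})\geq F_0-O(1/n).
\end{equation}
The finitely many small $n$ are handled by enlarging $c$.

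Finally, I combine the pieces:
\begin{equation}
    \pi_\beta(\lambda)=\frac{w(\lambda)}{Z_\beta}\in\left[\frac{n^{-c'}}{(n+1)^dn^{c'}},\;n^{2c'}e^{O(1)}\right]\cdot e^{n(F_\beta(x_\lambda)-F_0)}.
\end{equation}
This is \cref{eq:stationary-measure-bounds} with $c=2c'+d+1$, say. Taking logarithms then yields \cref{eq:stationary-log-asymptotic}.
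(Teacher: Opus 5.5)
Your proof is correct and has the same skeleton as the paper's. Both arguments bound the non-exponential prefactor of $\pi_\beta(\lambda)$ by a power of $n$ uniformly in $\lambda$. Both then compare $Z_\beta$ with $e^{nF_0}$ on both sides, with the upper side obtained by counting $|\Omega_{n,d}|$.

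The differences are in the details, and they favor your version. You work from the nonasymptotic inequalities \cref{eq:schur-weyl-dimension-bounds} rather than the Stirling expansion of \cref{lem:stationary-measure-free-energy}, so uniformity in $\lambda$ needs no further comment.

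More importantly, you actually justify the lower bound on $Z_\beta$. The paper writes $e^{nF_0}n^{-c_0}\leq Z_\beta$ directly, which tacitly assumes some diagram has $f_\beta(x_\lambda)=0$. No such diagram exists in general: $u\in\frac1n\Omega_{n,d}$ only when $d\mid n$, and $x_\beta$ typically has irrational coordinates. Your rounding of a full-support maximizer, combined with the Lipschitz bound for $F_\beta$ on $\{x_i\geq\delta/2\}$, supplies the missing diagram $\eta_n$ with $nf_\beta(x_{\eta_n})\geq-O(1)$. The paper's later modulus estimate \cref{eq:free-energy-modulus} would also close this step, at the cost of an extra polynomial factor.
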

\begin{proof}
    We decompose the stationary measure as $\pi(\lambda) = \frac{1}{Z_\beta} P(n, x_\lambda) e^{n (f_\beta(x_\lambda) + F_0)}$, where, writing $r=\ell(\lambda)$ for the number of non-empty rows, the prefactor is given by
    \begin{equation}
    P(n, x_\lambda) = \Theta(1) n^{-\frac{r-1}{2}}\prod_{i=1}^r x_{\lambda,i}^{-1/2} e^{-\frac{\beta}{2} \sum_{i=1}^d (2i - 1) x_{\lambda,i}} \frac{p_{1}(\lambda;d)}{p_{2}(\lambda;d)}.
    \end{equation}
    We bound each component of $P(n, x_\lambda)$ uniformly over the state space $\Delta_d^\downarrow$:
    \begin{enumerate}
        \item Since $1 \leq \lambda_i \leq n$ for $1\leq i\leq r$, $1 \leq x_{\lambda,i}^{-1} \leq n$. Thus, using $1\leq r\leq d$, $n^{-\frac{d-1}{2}} \leq n^{-\frac{r-1}{2}}\prod_{i=1}^r x_{\lambda,i}^{-1/2} \leq n^{1/2}$. Empty rows do not appear in the product or in the Stirling expansion.
        \item Since $\sum_i x_i = 1$ and $x_i \in [0,1]$, the exponent is bounded by constants: $e^{-\beta(2d-1)/2} \leq e^{-\frac{\beta}{2} \sum_{i} (2i-1)x_i} \leq e^{-\beta/2}$.
        \item By \cref{eq:dimension-prefactor-bounds}, $n^{-c_p} \leq p_1/p_2 \leq n^{c_p}$ for some $c_p>0$ and all sufficiently large $n$.
    \end{enumerate}
    Combining these yields $n^{-c_0} \leq P(n, x_\lambda) \leq n^{c_0}$ for a sufficiently large constant $c_0 > 0$.
    
    Next, we bound the partition function $Z_\beta = e^{n F_0} \sum_{\lambda \in \Omega_{n,d}} P(n,x_\lambda) e^{n f_\beta(x_\lambda)}$. The cardinality of the state space $\Omega_{n,d}$ (partitions of $n$ into at most $d$ parts) is bounded by stars-and-bars: $\vert \Omega_{n,d} \vert \leq \binom{n+d-1}{d-1} \leq (n+1)^{d-1}$. Thus,
    \begin{equation}
        e^{n F_0} n^{-c_0} \leq Z_\beta \leq e^{n F_0} \cdot \vert\Omega_{n,d} \vert \cdot n^{c_0} \cdot \max_{x_\lambda \in \Delta_d^\downarrow} \left(e^{n f_\beta(x_\lambda)} \right) \leq e^{n F_0} n^{d-1+c_0}
    \end{equation}
    where we used the fact that $\max_{x_\lambda} f_\beta(x_\lambda) = 0$. Substituting the bounds for $P(n, x_\lambda)$ and $Z_\beta$ back into $\pi(\lambda)$ yields:
    \begin{equation}
    \frac{n^{-c_0} e^{n F_0} e^{n f_\beta(x_\lambda)}}{e^{n F_0} n^{d-1+c_0}} \leq \pi(\lambda) \leq \frac{n^{c_0} e^{n F_0} e^{n f_\beta(x_\lambda)}}{e^{n F_0} n^{-c_0}} 
    \end{equation}
    which simplifies directly to
    \begin{equation}
    n^{-(2c_0 + d - 1)} e^{n f_\beta(x_\lambda)} \leq \pi(\lambda) \leq n^{2c_0} e^{n f_\beta(x_\lambda)}.
    \end{equation}
    Setting $c = 2c_0 + d - 1$ completes the proof.
\end{proof}

\subsection{The Spectral Free Energy Landscape}\label{subsec:proof-spectral-free-energy-landscape}

This section is dedicated to proving the second claim in \cref{prop:betamin-asymptotics-and-inequalities} and \cref{thm:spectral-free-energy-landscape}. We shall do this in three parts. First, we establish some facts about the relationship between the inverse temperature $\beta$, the structure of non-uniform local maxima (captured by the parameter $r$ defined below), and the local dimension $d$. This will be used to prove that $\betamin(d) < \beta_c(d) < d$. Then, we use that analysis to classify the global maximizers of the free energy at all temperatures. Lastly, we show that the transition at the critical inverse temperature $\beta_c(d)$ is first order.

Since $F_\beta$ is invariant under permutations of the coordinates, maximizing $F_\beta$ on the ordered simplex $\Delta_d^\downarrow$ is equivalent to maximizing it on the full simplex
\begin{equation}
    \Delta_d
    :=
    \left\{
        x\in\mathbb R_{\geq0}^d:
        \sum_{i=1}^d x_i=1
    \right\}
\end{equation}
and then sorting the maximizer in decreasing order. No maximizer lies on a boundary face where some coordinate is zero. Indeed, the entropy derivative $-\log t-1$ tends to $+\infty$ as $t\downarrow0$, while the quadratic part has bounded derivative on $\Delta_d$. Thus moving an infinitesimal amount of mass into a zero coordinate strictly increases
$F_\beta$. Hence every maximizer is an interior critical point. By Lagrange multipliers, an interior critical point satisfies
\begin{equation}\label{eq:free-energy-critical-equation}
    \beta x_i-\log x_i=c
    \qquad
    \text{for all }i.
\end{equation}
The function $t\mapsto \beta t-\log t$ is strictly convex on $(0,\infty)$, so \cref{eq:free-energy-critical-equation} has at most two positive solutions. If all coordinates are equal, we obtain the uniform point $u=(1/d,\ldots,1/d)$. Otherwise, after sorting, every non-uniform critical point has the form
\begin{equation}
    x
    =
    (\underbrace{a,\ldots,a}_{k\text{ times}}, \underbrace{b,\ldots,b}_{d-k\text{ times}}),
    \qquad
    a>b>0,
    \qquad
    d > k \geq 1
\end{equation}
We show below that only $k=1$ can give a non-uniform local maximizer of $F_\beta$. Writing $r=a/b>1$, the constraint $ka+(d-k)b=1$ gives
\begin{equation}
    a=\frac{r}{d-k+kr},
    \qquad
    b=\frac{1}{d-k+kr}.
\end{equation}
The critical-point \cref{eq:free-energy-critical-equation} gives
\begin{equation}\label{eq:beta-k-branch}
    \beta
    =
    \frac{\log(a/b)}{a-b}
    =
    \frac{d-k+kr}{r-1}\log r.
\end{equation}

\begin{lemma}
    When $k \geq 2$, $x = (\underbrace{a,\dots, a}_\text{$k$ times}, \underbrace{b, \dots, b}_\text{$d-k$ times})$ is an unstable critical point of $F_\beta(x)$. It cannot be a local maximum.
\end{lemma}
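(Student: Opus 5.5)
We will exhibit an explicit direction in the tangent space of the simplex along which the second derivative of $F_\beta$ at $x$ is strictly positive. Because $x$ is an interior critical point, such a direction rules out a local maximum: Taylor expansion along it shows that $F_\beta$ strictly increases nearby. Staying inside the ordered simplex $\Delta_d^\downarrow$ takes a little care. We can work on the full simplex $\Delta_d$, where $F_\beta$ is symmetric, and then re-sort. Alternatively, we choose the perturbation so that order is preserved, which is what we do below.

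The Hessian of $F_\beta$ is diagonal, $\partial_i^2F_\beta=\beta-1/x_i$. Restricted to the tangent space $\{v:\sum_i v_i=0\}$, its quadratic form is
\begin{equation}
Q(v)=\sum_i(\beta-1/x_i)v_i^2.
\end{equation}
Since $k\geq2$, there are at least two coordinates equal to $a$. Take $v=e_1-e_2$, which moves mass between two of the $a$-coordinates. This gives
\begin{equation}
Q(v)=2(\beta-1/a).
\end{equation}
So it suffices to show $\beta a>1$. The perturbation $x+t(e_1-e_2)$ with $t>0$ keeps the coordinates ordered. Indeed $a+t\geq a-t>b$ for small $t$, so it stays in $\Delta_d^\downarrow$, and the claim follows from the second-order Taylor expansion, since the first-order term vanishes at a critical point.

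The main step is the inequality $\beta a>1$ on the branch given by the critical-point relation for general $k$, namely $\beta=\log(a/b)/(a-b)$. With $r=a/b>1$ this reads
\begin{equation}
\beta a=\frac{a\log r}{a-b}=\frac{r\log r}{r-1}.
\end{equation}
Note that this expression does not depend on $k$ or $d$. The inequality $r\log r>r-1$ for $r>1$ is elementary. Setting $g(r)=r\log r-r+1$, we have $g(1)=0$ and $g'(r)=\log r>0$ for $r>1$. Hence $Q(e_1-e_2)>0$, so $x$ is a strict local minimum along this line, and the critical point is unstable.

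We do not expect a serious obstacle. The only subtle point is noting that the destabilizing direction must mix two equal large coordinates. Mixing two equal small coordinates would give $\beta b=\log r/(r-1)<1$, which is stabilizing, and this is exactly why $k=1$ survives. The same argument also explains that contrast.
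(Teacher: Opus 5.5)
This is essentially the paper's proof. The paper also evaluates the diagonal Hessian on a direction that moves mass between two $a$-coordinates, obtains $2\varepsilon^2(\beta-1/a)$, and reduces positivity to $\log r>1-1/r$, which is your inequality $r\log r>r-1$.

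One slip needs fixing. For $k\geq 3$ the point $x+t(e_1-e_2)=(a+t,a-t,a,\dots)$ is not in $\Delta_d^\downarrow$, because its second coordinate lies below its third. So your claim that it stays in the ordered simplex is false. Instead take $v=e_1-e_k$, as the paper does. This gives the ordered point $(a+t,a,\dots,a,a-t,b,\dots,b)$, which is just the re-sorting of your point, so your remark about permutation symmetry already covers this.
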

\begin{proof}
    Assume $k \geq 2$ and consider the Hessian of $F_\beta$:
    \begin{equation}
    \nabla^2 F_\beta(x) = \mathrm{diag}\left(\beta - \frac{1}{x_1}, \dots, \beta - \frac{1}{x_d}\right).
    \end{equation}
    Recall that a necessary condition for a point to be a local maximum is that the Hessian is negative semi-definite (NSD) on the tangent space $\sum_i v_i=0$. Let $v = (v_1, \dots, v_d)$ be a valid perturbation of $(x_1, \dots, x_d)$, i.e. $\sum_i v_i = 0$. Then,
    \begin{equation}
    v^T \nabla^2F_\beta(x) v = \sum_{i=1}^d \left(\beta - \frac{1}{x_i}\right) v_i^2.
    \end{equation}
    Take $v_1 = \eps, v_k = -\eps$ and all other $v_i = 0$, i.e. $v = (\underbrace{\eps, 0, \dots, 0, -\eps}_\text{$k$}, \underbrace{0, \dots, 0}_\text{$d-k$})$. This is a valid perturbation for sufficiently small $\eps>0$. Then,
    \begin{equation}
    v^T \nabla^2F_\beta(x) v = 2\eps^2\left(\beta - \frac{1}{a}\right).
    \end{equation}
    This is non-positive only if $\beta \leq 1/a$. However, this is not possible:
    \begin{align}
        \beta = \frac{\log(a/b)}{a-b} \leq \frac{1}{a}
        \Leftrightarrow a\log(a/b) \leq a-b
        \Leftrightarrow \log(r) \leq 1 - \frac{1}{r},
    \end{align}
    but for $r > 1$ we have $\log(r) > 1-\frac{1}{r}$. Thus, $\beta > 1/a$, so the Hessian of $F_\beta$ is not NSD at those critical points.
\end{proof}

\begin{corollary}
    All non-uniform local maxima of $F_\beta(x)$ occur at
    \begin{equation}
    x_r = (a,b,\dots, b) = \left(\frac{r}{r+d-1}, \frac{1}{r+d-1}, \dots, \frac{1}{r+d-1}\right),
    \end{equation}
    where for fixed $\beta > 0$, the parameter $r>1$ must satisfy
\begin{equation}
\beta = \beta_d(r) = \frac{d + r - 1}{r-1}\log(r).
\end{equation}
\end{corollary}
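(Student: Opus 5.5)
This follows by combining the facts established just before it. The plan is to start from an arbitrary non-uniform local maximum $x$ of $F_\beta$ on $\Delta_d^\downarrow$ and first argue that it is an interior critical point of the full simplex. If $x$ had a zero coordinate, moving a small mass $\eps$ into that coordinate from a positive one would change $F_\beta$ by $-\eps\log\eps+O(\eps)>0$. The reason is that the entropy derivative diverges while the quadratic part has bounded gradient. This contradicts local maximality. Moving mass into the last (smallest) coordinate keeps the vector ordered, so the perturbation is admissible within $\Delta_d^\downarrow$, and the boundary argument given earlier for global maximizers applies verbatim to local ones.

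Next, since $x$ is interior, local maximality on $\Delta_d^\downarrow$ implies local maximality on $\Delta_d$ at least along perturbations preserving the order. The Lagrange condition \cref{eq:free-energy-critical-equation} still holds, because coordinates that are strictly ordered can be perturbed freely. The ties are handled by symmetry: $F_\beta$ is permutation invariant, so $x$ is also a local maximum on $\Delta_d$. By strict convexity of $t\mapsto\beta t-\log t$, the coordinates then take at most two values. Since $x$ is non-uniform, after sorting it has the form $(a^{(k)},b^{(d-k)})$ with $a>b>0$ and $1\le k<d$.

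The preceding lemma rules out $k\ge2$: the perturbation $v=\eps(e_1-e_k)$ lies in the tangent space and gives a positive second variation $2\eps^2(\beta-1/a)$. One subtlety is that this perturbation breaks the ordering within the block of $a$'s. This is harmless after the symmetrization above, and otherwise it is equivalent to using $-v$ together with a permutation of coordinates. Hence $k=1$.

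With $k=1$ and $r=a/b>1$, the normalization $a+(d-1)b=1$ gives $a=r/(r+d-1)$ and $b=1/(r+d-1)$. Substituting into \cref{eq:beta-k-branch} with $k=1$ yields $\beta=\frac{r+d-1}{r-1}\log r=\beta_d(r)$, which is exactly the stated form $x_r$. The only point needing care is the passage from the ordered simplex to the full simplex at ties; I expect it to be the main, though mild, obstacle, and permutation invariance of $F_\beta$ resolves it.
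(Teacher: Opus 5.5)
Your proposal is correct and follows the paper's own route. You exclude boundary points via the divergent entropy derivative, use the Lagrange condition and strict convexity of $t\mapsto\beta t-\log t$ to get at most two coordinate values, invoke the preceding Hessian lemma to rule out $k\geq 2$, and then parametrize by $r=a/b$ to read off $\beta=\beta_d(r)$. Your explicit reduction from $\Delta_d^\downarrow$ to $\Delta_d$ (permutation invariance plus non-expansiveness of sorting) is a useful clarification the paper leaves implicit; note only that for $\eps>0$ the direction $e_1-e_k$ already preserves the ordering, and that with several zero coordinates mass should go into the first zero coordinate rather than the last.
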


For the rest of this section we shall keep track of the explicit dependence of $\beta$ on $r$ and $d$. First, we analyze the shape of $\beta_d(r)$ and show that there is a unique minimum $\betamin(d) := \beta_d(r_*) < d$. Observe that for $\eps > 0$ small enough,
\begin{equation}
    \beta_d(1 + \eps) = \frac{d+\eps}{\eps}\log(1+\eps) \approx d + \frac{\eps}{2}(2-d) + O(\eps^2).
\end{equation}
For $d \geq 3$, this estimate tells us that $\beta_d(r)$ is decreasing near $r = 1$. Moreover,
\begin{equation}\label{eq:beta-prime-proof}
    \beta_d'(r)
    =
    \frac{r^2-dr\log r+(d-2)r-(d-1)}{r(r-1)^2}
    =
    \frac{(r-1)^2 - d(r\log r-r+1)}{r(r-1)^2},
\end{equation}
which is 0 if and only if $r$ satisfies
\begin{equation}
    d = f(r) := \frac{(r-1)^2}{r\log(r)-r+1}.
\end{equation}
But $f' > 0$ for all $r > 1$, so $\beta_d'(r)$ can vanish at most once and so $\beta_d(r)$ has at most one minimum. Hence
there is a unique $r_*=r_*(d)>1$ such that
\begin{equation}
    \beta_d'(r)<0
    \quad\text{for }1<r<r_*,
    \qquad
    \beta_d'(r)>0
    \quad\text{for }r>r_*.
\end{equation}
Therefore
\begin{equation}
    \betamin(d)
    =
    \beta_d(r_*)
    =
    \min_{r>1}\beta_d(r).
\end{equation}

\begin{center}
\begin{minipage}{0.92\textwidth}
\centering
\begin{tikzpicture}
\begin{axis}[
    width=0.88\textwidth,
    height=0.42\textwidth,
    axis lines=left,
    xlabel={$r$},
    ylabel={$\beta_3(r)$},
    xmin=1,
    xmax=10,
    ymin=2.70,
    ymax=3.10,
    domain=1.01:10,
    samples=300,
    xtick={1,2.8169605,4,10},
    xticklabels={$1$,$r_*$,$r_c$,$10$},
    ytick={2.7456436,2.7725887,3},
    yticklabels={$\betamin(3)$,$\beta_c(3)$,$3$},
    clip=false
]
\addplot[thick] {(x+2)/(x-1)*ln(x)};
\addplot[dashed] coordinates {(1,3) (10,3)};
\addplot[only marks,mark=*] coordinates
    {(2.8169605,2.7456436) (4,2.7725887)};
\node[anchor=north east] at (axis cs:2.8169605,2.7456436)
    {spinodal};
\node[anchor=south west] at (axis cs:4,2.735)
    {coexistence};
\end{axis}
\end{tikzpicture}
\captionof{figure}{Plot of $\beta_d(r)$ for $d=3$. Its unique minimum gives the spinodal threshold $\betamin(3)$; the point $r_c$ gives the critical inverse temperature $\beta_c(3)$; and the limiting value $\lim_{r\downarrow1}\beta_3(r)=3$ is the threshold for the stability of the uniform distribution. The same qualitative geometry holds for every $\beta_d(r)$, $d\geq3$.}
\label{fig:spectral-critical-branch}
\end{minipage}
\end{center}

Observe that for fixed $\beta$, there are either 0, 1, or 2 solutions in $r$ for $r > 1$ (i.e. non-uniform solutions). For $\beta < \betamin(d)$, no non-uniform solution exists. When $\beta > d$, there is a unique non-uniform solution. In the window $\beta \in (\betamin(d), d)$ there are two solutions for $r$ satisfying $\beta = \beta_d(r)$. Call them $r_1$ and $r_2$ ordered so that $r_1 < r_* < r_2$. We determine which of these points correspond to local maxima of $F_\beta$. At $x_r$, the Hessian is diagonal with entries
\begin{equation}
    \beta-\frac1a,
    \qquad
    \beta-\frac1b,
    \ldots,
    \beta-\frac1b.
\end{equation}
On the tangent subspace supported on the last $d-1$ coordinates with coordinate sum zero, the Hessian has value
\begin{equation}
    \beta-\frac1b
    =
    \beta-(r+d-1)
    =
    (r+d-1)
    \left(
        \frac{\log r}{r-1}-1
    \right)
    <0\quad \text{for } r>1.
\end{equation}
The remaining tangent direction is spanned by
\begin{equation}
    \left(
        1,
        -\frac1{d-1},
        \ldots,
        -\frac1{d-1}
    \right).
\end{equation}
Along this direction, the sign of the Hessian is the sign of
\begin{equation}
    d\beta_d(r)-\frac{(r+d-1)^2}{r} = - (r-1)(r+d-1)\beta_d'(r),
\end{equation}
using \cref{eq:beta-prime-proof}. Therefore the Hessian is negative definite on the tangent space exactly when $r>r_*$. Thus the smaller solution $1<r_1<r_*$ is an unstable saddle point, while the larger solution $r_2>r_*$ is a local maximum.

Consider the uniform point $u = (\tfrac{1}{d}, \dots, \tfrac{1}{d})$. Its Hessian on the tangent space is $(\beta-d)I$. Thus the uniform point is a strict local maximum for $\beta<d$, degenerate at $\beta=d$, and not a local maximum for $\beta>d$. At $\beta=d$, the feasible perturbation $u+t(1,-1/(d-1),\ldots,-1/(d-1))$, with $t>0$ small, increases $F_d$ through a positive cubic term, so $u$ is not a local maximum at this endpoint either.

It remains to compare the free energy of the uniform point with the local maximum $x_{r_2}$. Henceforth we refer to this point as simply $x_r$. Along the branch $\beta=\beta_d(r)$, define
\begin{equation}
    \varphi(r)
    :=
    F_{\beta_d(r)}(x_r)-F_{\beta_d(r)}(u).
\end{equation}
A direct calculation gives
\begin{equation}\label{eq:varphi-formula-proof}
    \varphi(r)
    =
    \log\left(\frac{r+d-1}{d}\right)
    -
    \frac{r(d+1)+d-1}{2d(r+d-1)}\log r.
\end{equation}
Differentiating,
\begin{equation}\label{eq:varphi-prime-proof}
    \varphi'(r)
    =
    \frac{d-1}{2dr(r+d-1)^2}
    \left[
        r^2+(d-2)r-(d-1)-dr\log r
    \right].
\end{equation}
The bracket in \cref{eq:varphi-prime-proof} is exactly the numerator of $\beta_d'(r)$. Hence $\varphi'$ has the same sign as $\beta_d'$:
\begin{equation}
    \varphi'(r)<0
    \quad\text{for }1<r<r_*,
    \qquad
    \varphi'(r)>0
    \quad\text{for }r>r_*.
\end{equation}
We now locate the unique crossing on the local-maximum branch. Let $r_c := (d-1)^2$. Substituting $r_c$ into \cref{eq:varphi-formula-proof} gives $\varphi(r_c) = 0$. Moreover,
\begin{equation}
    \beta_d(r_c)
    =
    \frac{2(d-1)}{d-2}\log(d-1)
    =
    \beta_c(d).
\end{equation}
This is the critical inverse temperature at which the uniform point $u$ and non-uniform point $x_r$ swap dominance of the free energy. We can also easily see that $\beta_c(d)<d$. Indeed, writing $x=d-1>1$, then $\beta_c(d) < d$ is equivalent to
\begin{equation}
    2x\log x<x^2-1.
\end{equation}
To see that this inequality holds for $x>1$, note that at $x=1$, the two sides are equal and the derivative of the LHS is $2(1 + \log x)$, while the derivative on the RHS is $2x$, which is larger for all $x >1$. The same inequality shows that $r_c>r_*$. Indeed, the numerator of
$\beta_d'(r)$ at $r=r_c$ is
\begin{align}
    (r_c-1)(r_c+d-1)-dr_c\log r_c &= d(d-1) [d(d-2)-2(d-1)\log(d-1)]\\
    &= (x+1)x[(x+1)(x-1) - 2x\log x],
\end{align}
which is positive by the inequality just proved. Hence $\beta_d'(r_c)>0$, and therefore $r_c$ lies on the increasing interval $r>r_*$. Since $\varphi$ is strictly increasing on this interval and $\varphi(r_c)=0$, the crossing at $r_c$ is the unique crossing between the uniform point and the non-uniform local maximum branch.

We can now classify the global maximizers of the free energy $F_\beta(x)$. If $0<\beta<\beta_c(d)$, then the uniform point $u$ is the global maximum. This is because for $0 < \beta<\betamin(d)$ the uniform point is the only local maximum. At $\beta=\betamin(d)$, the non-uniform critical point $x_{r_*}$ has a nonzero cubic term along the symmetry-breaking direction and is not a local maximum, so $u$ remains the only local maximum. For $\betamin(d) < \beta < \beta_c(d)$, the non-uniform point $x_r$ is a local maximum with parameter $r < r_c$, and so the uniform point is the higher local maximum.

If $\beta=\beta_c(d)$, then at $r = r_c$ the points $x_r$ and $u$ have the same free energy. Thus, there are two distinct global maximizers on
$\Delta_d^\downarrow$,
\begin{equation}
    u
    =
    \left(\frac1d,\ldots,\frac1d\right)
\end{equation}
and
\begin{equation}
    x_c
    =
    x_{r_c}
    =
    \left(
        \frac{d-1}{d},
        \frac{1}{d(d-1)},
        \ldots,
        \frac{1}{d(d-1)}
    \right).
\end{equation}

If $\beta>\beta_c(d)$, then the non-uniform local maximum has parameter $r_\beta>r_c$ and hence has strictly larger free energy than the uniform point. All other non-uniform critical points are saddles. Thus the unique global maximizer on $\Delta_d^\downarrow$ is
\begin{equation}
    x_\beta
    =
    \left(
        \frac{r_\beta}{r_\beta+d-1},
        \frac{1}{r_\beta+d-1},
        \ldots,
        \frac{1}{r_\beta+d-1}
    \right),
\end{equation}
where $r_\beta>r_*$ is the solution of $\beta = \beta_d(r)$ on the increasing branch of $\beta_d$. For $\beta\geq d$, this solution is unique because $\lim_{r\downarrow1}\beta_d(r)=d$, $\beta_d$ initially decreases below $d$, and then increases strictly to infinity on the branch $r>r_*$.

Finally, since $r_c>r_*$,
\begin{equation}
    \betamin(d)=\beta_d(r_*)<\beta_d(r_c)=\beta_c(d).
\end{equation}
Together with $\beta_c(d)<d$, this proves
\begin{equation}
    \betamin(d)<\beta_c(d)<d.
\end{equation}
At $\beta_c(d)$, the largest coordinate of the global maximizer jumps from $\frac1d$ to
\begin{equation}
    \frac{r_c}{r_c+d-1}
    =
    \frac{(d-1)^2}{(d-1)^2+d-1}
    =
    \frac{d-1}{d}.
\end{equation}
Therefore the transition is first order for $d\geq3$. This proves all the claims in \cref{thm:spectral-free-energy-landscape}.

\section{Mixing in the Young Diagram Sector}
\label{app:young-diagram-mixing}

This appendix proves \cref{thm:spectral-sector-gap-bounds}. Throughout, $d\geq 3$ and $\beta>0$ are fixed, the Davies weights are Metropolis weights, and $L$ denotes the Young diagram generator from \cref{thm:master-eqn-coefficients}. For $\lambda\in\Omega_{n,d}$, recall that $x_\lambda:=\frac{\lambda}{n}$. Recall the graph structure on $\Omega_{n,d}$ where two vertices $\lambda,\mu$ are connected by an edge iff $\lambda \sim \mu$.

\subsection{Markov Chain Preliminaries}

First we will provide the relevant definitions and results needed from the Markov chain literature. These results and their proofs can be found, for instance, in~\cite{LevinPeresWilmer2006}. The continuous-time versions below follow by applying the discrete-time bounds to $P=I+L/r$, where $r\geq\max_x(-L(x,x))$ and $r>0$, and rescaling by $r$.

Let $\Omega$ be a finite state space and let $L$ be the generator of an irreducible reversible continuous-time Markov chain on $\Omega$ with stationary distribution $\pi$. For $x\neq y$, define the \textbf{edge measure} $Q(x,y) := \pi(x)L(x,y)$, and set $Q(x,x):=0$. By reversibility, $Q(x,y)=Q(y,x)$. We call $(x,y)$ a directed edge whenever $L(x,y)>0$, and write $\calE := \{(x,y)\in\Omega^2:x\neq y,\ L(x,y)>0\}$. For a directed edge $e=(x,y)$, we also write $Q(e):=Q(x,y)$. For subsets $A,B\subseteq\Omega$, define
\begin{equation}
    Q(A,B)
    :=
    \sum_{x\in A}
    \sum_{y\in B}
    Q(x,y).
\end{equation}
For a non-empty set $S\subseteq\Omega$, its \textbf{conductance} is
\begin{equation}\label{eq:set-conductance}
    \Phi(S)
    :=
    \frac{Q(S,S^c)}{\pi(S)}.
\end{equation}
The conductance of the chain is
\begin{equation}
    \Phi
    :=
    \min_{\substack{S\subseteq\Omega\\
    0<\pi(S)\leq 1/2}}
    \Phi(S).
\end{equation}

\begin{fact}[Conductance bounds {\cite[Theorem~13.14 and its proof]{LevinPeresWilmer2006}}]
\label{fact:conductance-gap-upper-bound}
For every $S\subseteq\Omega$ satisfying $0<\pi(S)\leq\frac12$, one has
\begin{equation}\label{eq:conductance-gap-upper-bound}
    \gap(L)
    \leq
    \frac{Q(S,S^c)}
    {\pi(S)\pi(S^c)}
    \leq
    2\Phi(S).
\end{equation}
If $r\geq\max_x\sum_{y\neq x}L(x,y)$ and $r>0$, then Cheeger's inequality gives
\begin{equation}\label{eq:conductance-gap-lower-bound}
    \frac{\Phi^2}{2r}\leq\gap(L)\leq 2\Phi.
\end{equation}
\end{fact}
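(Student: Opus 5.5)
The plan is to prove both inequalities directly from the variational (Rayleigh quotient) characterization of the gap for the upper bound, and to reduce the Cheeger lower bound to the discrete-time statement in \cite{LevinPeresWilmer2006} by uniformization.

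\emph{Upper bound.} Reversibility makes $-L$ self-adjoint on $\ell^2(\pi)$, and $\ker L$ consists of the constants by irreducibility. Hence
\begin{equation}
    \gap(L)=\min_{f\not\equiv\mathrm{const}}\frac{\mathcal E(f)}{\mathrm{Var}_\pi(f)},
    \qquad
    \mathcal E(f)=\frac12\sum_{x,y}Q(x,y)\bigl(f(x)-f(y)\bigr)^2 .
\end{equation}
Fix $S$ with $0<\pi(S)\leq\frac12$ and take the test function $f=\mathbf 1_S$.
\begin{itemize}
    \item Each unordered pair $\{x,y\}$ with $x\in S$, $y\in S^c$ contributes to the symmetric sum twice. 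Using $Q(x,y)=Q(y,x)$, this gives $\mathcal E(\mathbf 1_S)=Q(S,S^c)$.
    \item The variance is $\mathrm{Var}_\pi(\mathbf 1_S)=\pi(S)\pi(S^c)$.
\end{itemize}
This yields the first inequality. Since $\pi(S^c)\geq\frac12$, the bound $\leq 2\Phi(S)$ follows.

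\emph{Reduction to discrete time.} Let $r\geq\max_x\sum_{y\neq x}L(x,y)=\max_x(-L(x,x))$ with $r>0$, and set $P=I+L/r$. The entries of $P$ are nonnegative and its rows sum to one, so $P$ is a stochastic matrix. It is reversible with respect to the same $\pi$ and is irreducible.
\begin{itemize}
    \item The eigenvalues correspond via $1-\lambda_j(P)=\lambda_j(-L)/r$, so the absolute spectral quantity we need satisfies $\gap(P)=\gap(L)/r$.
    \item The discrete edge measure is $Q_P(x,y)=\pi(x)P(x,y)=Q(x,y)/r$ for $x\neq y$, so $\Phi_P=\Phi/r$.
\end{itemize}
The discrete Cheeger inequality for reversible chains \cite[Theorem~13.14]{LevinPeresWilmer2006} gives $\Phi_P^2/2\leq\gap(P)\leq2\Phi_P$. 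Here $\gap(P)=1-\lambda_2$, and laziness is not needed for this version. Multiplying through by $r$ gives
\begin{equation}
    \frac{\Phi^2}{2r}\leq\gap(L)\leq2\Phi .
\end{equation}

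\emph{Main obstacle.} The only nontrivial step is the discrete lower bound $\gap(P)\geq\Phi_P^2/2$, which I would cite. If a self-contained proof is wanted, I would use the standard argument:
\begin{enumerate}
    \item Take the second eigenfunction $f$ and replace it by its positive part $g=f_+$, shifting so that $\pi(\{g>0\})\leq\frac12$. Show that $\mathcal E_P(g)\leq(1-\lambda_2)\|g\|_\pi^2$.
    \item Bound $\sum_{x,y}Q_P(x,y)\,|g(x)^2-g(y)^2|$ from below by $\Phi_P\|g\|_\pi^2$, using the co-area formula over the level sets $\{g^2>t\}$.
    \item Bound the same quantity from above by $\sqrt{2\mathcal E_P(g)}\cdot\sqrt{2}\,\|g\|_\pi$, using Cauchy--Schwarz.
\end{enumerate}
Combining steps 2 and 3 and squaring gives $\Phi_P^2/2\leq1-\lambda_2$. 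The care needed is in tracking the factor of $2$ from the symmetric sum and ordered pairs.
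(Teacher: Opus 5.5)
Your proposal is correct and follows the paper's route. The bound $\gap(L)\le Q(S,S^c)/(\pi(S)\pi(S^c))\le 2\Phi(S)$ via the test function $\mathbf 1_S$ is exactly the argument in the proof of \cite[Theorem~13.14]{LevinPeresWilmer2006}, and the Cheeger bound comes, as in the paper, from applying the discrete-time theorem to $P=I+L/r$ and rescaling by $r$.

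One caution about your optional self-contained sketch: as written, Steps~2 and~3 combine only to $1-\lambda_2\ge\Phi_P^2/4$. Summing over unordered pairs $\{x,y\}$, Step~2 gives $\Phi_P\|g\|_\pi^2$, but Cauchy--Schwarz gives the upper bound $\sqrt{\mathcal E_P(g)}\cdot\sqrt2\,\|g\|_\pi$, not $\sqrt{2\mathcal E_P(g)}\cdot\sqrt2\,\|g\|_\pi$. Also, $\pi(\{g>0\})\le\frac12$ should be arranged by replacing $f$ with $-f$ rather than by a shift, since a shift destroys the eigen-relation that yields $\mathcal E_P(g)\le(1-\lambda_2)\|g\|_\pi^2$.
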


It will be convenient to distinguish the states inside a set which can exit the set in one step.

\begin{definition}[Boundary]
\label{def:internal-vertex-boundary}
For $S\subseteq\Omega$, define its (internal) vertex boundary by
\begin{equation}
    \partial S
    :=
    \left\{
        x\in S:
        \text{there exists }y\in S^c
        \text{ with }L(x,y)>0
    \right\}.
\end{equation}
Thus,
\begin{equation}\label{eq:boundary-flow}
    Q(S,S^c)
    =
    \sum_{x\in\partial S}
    \pi(x)
    \sum_{y\in S^c}L(x,y).
\end{equation}
\end{definition}

The second standard tool we use is the canonical-path method. For every ordered pair $x,y\in\Omega$, choose a simple path $\Gamma_{x,y} = (x=x_0,x_1,\ldots,x_\ell=y)$ such that $L(x_i,x_{i+1})>0$ for $i=0,\ldots,\ell-1$. We write $\vert \Gamma_{x,y} \vert :=\ell$ for the length of the path. The \textbf{congestion} of the collection
$\Gamma=\{\Gamma_{x,y}\}_{x,y\in\Omega}$ is
\begin{equation}\label{eq:canonical-path-congestion}
    \rho(\Gamma)
    :=
    \max_{e\in\calE}
    \frac1{Q(e)}
    \sum_{\substack{x,y\in\Omega\\
    \Gamma_{x,y} \ni e}}
    \pi(x)\pi(y)
    \vert\Gamma_{x,y}\vert.
\end{equation}

\begin{fact}[Canonical-path bound {\cite[Theorem~5]{Sinclair1992}}]
\label{fact:canonical-path-bound}
For any choice of paths
$\Gamma=\{\Gamma_{x,y}\}_{x,y\in\Omega}$,
\begin{equation}\label{eq:canonical-path-gap-bound}
    \gap(L)^{-1}
    \leq
    \rho(\Gamma).
\end{equation}
\end{fact}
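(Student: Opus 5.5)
The plan is the standard Poincaré-inequality argument. First I would recall the variational characterization of the gap of a reversible generator:
\begin{equation}
    \gap(L)=\inf_{f \text{ non-constant}}\frac{\mathcal E(f)}{\mathrm{Var}_\pi(f)},
\end{equation}
where
\begin{equation}
    \mathcal E(f)=\frac12\sum_{x\neq y}Q(x,y)\bigl(f(x)-f(y)\bigr)^2=\frac12\sum_{e\in\calE}Q(e)\,\nabla f(e)^2
    \quad\text{and}\quad
    \mathrm{Var}_\pi(f)=\frac12\sum_{x,y}\pi(x)\pi(y)\bigl(f(x)-f(y)\bigr)^2 .
\end{equation}
Here $\nabla f(e):=f(z)-f(w)$ for a directed edge $e=(w,z)$, and the two expressions for $\mathcal E$ agree because $\calE$ is exactly the set of ordered pairs with $Q>0$. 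With this in hand, the claim $\gap(L)^{-1}\leq\rho(\Gamma)$ is equivalent to the Poincaré inequality $\mathrm{Var}_\pi(f)\leq\rho(\Gamma)\,\mathcal E(f)$ for every real $f$ on $\Omega$. By self-adjointness it suffices to treat real $f$.

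To prove the Poincaré inequality, fix an ordered pair $(x,y)$. I would telescope $f(y)-f(x)$ along $\Gamma_{x,y}$ as $\sum_{e\in\Gamma_{x,y}}\nabla f(e)$. Cauchy--Schwarz then gives
\begin{equation}
    \bigl(f(x)-f(y)\bigr)^2\leq|\Gamma_{x,y}|\sum_{e\in\Gamma_{x,y}}\nabla f(e)^2 .
\end{equation}
I would multiply by $\pi(x)\pi(y)/2$, sum over all pairs, and exchange the order of summation so that the outer sum runs over directed edges $e\in\calE$:
\begin{equation}
    \mathrm{Var}_\pi(f)\leq\frac12\sum_{e\in\calE}\nabla f(e)^2\sum_{\Gamma_{x,y}\ni e}\pi(x)\pi(y)|\Gamma_{x,y}| .
\end{equation}
By the definition \cref{eq:canonical-path-congestion}, the inner sum is at most $\rho(\Gamma)Q(e)$. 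The right-hand side is therefore at most $\rho(\Gamma)\cdot\frac12\sum_{e}Q(e)\nabla f(e)^2=\rho(\Gamma)\,\mathcal E(f)$, which is the desired inequality.

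The only point requiring care is bookkeeping, not analysis. The paths are directed and the congestion is measured on directed edges, so I must check that no factor of two is lost when converting between the sum over unordered pairs in $\mathcal E$ and the sum over directed edges. Since both $\mathcal E$ and $\mathrm{Var}_\pi$ are written as symmetric sums over ordered pairs carrying the same prefactor $\tfrac12$, the factors match. Beyond that, the argument uses only reversibility through the symmetry of $Q$ and the variational formula for the gap, so no further obstacle is expected.
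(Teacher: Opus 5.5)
Your argument is correct and is essentially the proof behind the cited result, Sinclair's Theorem~5: the variational formula for the gap, Cauchy--Schwarz along each path, and exchanging the order of summation onto directed edges. The only difference is that the paper obtains the continuous-time statement by applying the discrete-time bound to $P=I+L/r$ and rescaling by $r$. You instead work directly with the continuous-time Dirichlet form, which gives the same constant and avoids the uniformization step.
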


We apply these facts below to the Young diagram chain with
\begin{equation}
    \Omega=\Omega_{n,d},
    \qquad
    L(\lambda,\mu)=L_{\lambda\to\mu},
    \qquad
    \pi=\pi_\beta.
\end{equation}
The slow-mixing argument will construct a set $\mathcal W_n$ with exponentially small conductance, while the fast-mixing argument will use paths to the unique maximum to bound conductance from below. The canonical-path bound supplies the universal exponential lower bound.

\subsection{Useful Estimates}

We begin with elementary results that will be used at all temperatures.

\begin{lemma}[Size and connectivity of the state space]
\label{lem:young-state-space-geometry}
The state space satisfies $\vert\Omega_{n,d}\vert \leq (n+1)^{d-1}$. The legal one-box move graph on $\Omega_{n,d}$ is connected, every state has at most $d(d-1)$ neighbors, and any two states can be connected by a path of legal one-box moves of length at most $2n$.
\end{lemma}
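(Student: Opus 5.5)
The lemma has four claims, and I will check each one directly. Nothing beyond the description of the one-box move graph in \cref{thm:master-eqn-coefficients} is needed. There, $\lambda\sim\mu$ exactly when $\mu$ is obtained from $\lambda$ by removing a corner box from some row $r$ and re-adding it in some row $s\neq r$.

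\emph{Cardinality.} A partition $\lambda\in\Omega_{n,d}$ is determined by $(\lambda_2,\ldots,\lambda_d)$, because $\lambda_1=n-\sum_{i\geq2}\lambda_i$. Each $\lambda_i$ with $i\geq 2$ lies in $\{0,\ldots,n\}$, which gives $\vert\Omega_{n,d}\vert\leq(n+1)^{d-1}$. This is the same stars-and-bars count already used in \cref{lem:stationary-measure-estimates}.

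\emph{Degree.} A neighbor $\mu\neq\lambda$ is determined by the ordered pair of rows $(r,s)$ with $r\neq s$. Indeed, $\nu=\lambda-e_r$ and $\mu=\nu+e_s$ are determined by $r$ and $s$, and $s=r$ would return $\lambda$. There are at most $d(d-1)$ such pairs, so every state has at most $d(d-1)$ neighbors.

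\emph{Connectivity and diameter.} The plan is to show that every $\lambda$ is joined to the one-row diagram $(n)$ by a path of length at most $n-\lambda_1\leq n$. Concatenating two such paths, one of them reversed (the adjacency relation $\sim$ is symmetric), joins any two states by a path of length at most $2n$, and connectivity follows immediately.

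The path is built greedily. If $\lambda\neq(n)$, let $r\geq2$ be the last non-empty row. Removing the last box of row $r$ gives a legal diagram $\nu\in\Omega_{n-1,d}$, since row $r$ is the bottom non-empty row and $\lambda_r-1\geq\lambda_{r+1}=0$. Adding a box at the end of row $1$ to $\nu$ is always legal. The resulting $\mu$ differs from $\lambda$ and satisfies $\lambda\overset{\nu}{\sim}\mu$. Each such move increases $\lambda_1$ by one, so after at most $n-\lambda_1$ moves we reach $(n)$.

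The only point needing care is the legality of each step, which the choice of the bottom row guarantees. This is also the step I would double-check most carefully, though I do not expect a real obstacle there.
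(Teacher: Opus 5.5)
Your proposal is correct and follows the paper's proof essentially verbatim. Both arguments determine a diagram by $d-1$ row lengths, count ordered source/target row pairs for the degree bound, and use the greedy move of the bottom-row corner box to row one to reach $(n)$ in $n-\lambda_1\leq n$ steps before concatenating through $(n)$.
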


\begin{proof}
A diagram is determined by its first $d-1$ row lengths, each of which lies between $0$ and $n$. This proves the cardinality bound. A legal move is specified by picking one of $d$ source rows and one of $d-1$ target rows. This gives the degree bound. To prove connectivity, set $\lambda^\star=(n,0,\ldots,0)$. If $\lambda\neq\lambda^\star$, remove a box from the lowest non-empty row of $\lambda$ and add it to the first row. The removed box is a corner box, and adding a box to the first row always produces a Young diagram. Repeating this operation reaches $\lambda^\star$ after $n-\lambda_1\leq n$ moves. Joining two states through $\lambda^\star$ gives a path of length at most $2n$.
\end{proof}

\begin{lemma}[Polynomial transition rates]\label{lem:young-polynomial-rates}
For $n\geq d$, every legal non-zero transition rate satisfies
\begin{equation}\label{eq:young-uniform-rates}
    a_{\beta,d}:=\frac{e^{-2\beta}}{d^2}
    \leq L_{\lambda\to\mu}\leq n.
\end{equation}
\end{lemma}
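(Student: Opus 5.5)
The plan is to bound each of the three factors in the rate formula of \cref{thm:master-eqn-coefficients} separately:
\begin{equation}
    L_{\lambda\to\mu}=\gamma(E_\mu-E_\lambda)\,\frac nd\,\frac{\dim(V^\mu)}{\dim(V^\nu)}\,\frac{\dim(W^\nu)}{\dim(W^\lambda)} .
\end{equation}
Here $\lambda\overset{\nu}{\sim}\mu$, the diagram $\nu$ is $\lambda$ with a corner box removed from some row $r$, and $\mu$ is $\nu$ with a box added to some row $s\neq r$.

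\textbf{Upper bound.} I would use that the branching and Pieri decompositions are multiplicity-free and take dimensions.
\begin{itemize}
    \item Pieri's rule (\cref{fact:Pieri-rule}) gives $\sum_{\mu'\in\nu^+}\dim(V^{\mu'})=d\dim(V^\nu)$, so $\dim(V^\mu)/\dim(V^\nu)\leq d$.
    \item Young's branching rule (\cref{fact:Young-branching-rule}) gives $\sum_{\nu'\in\lambda^-}\dim(W^{\nu'})=\dim(W^\lambda)$, so $\dim(W^\nu)/\dim(W^\lambda)\leq 1$.
    \item The Metropolis weight satisfies $\gamma\leq 1$.
\end{itemize}
Multiplying the three bounds gives $L_{\lambda\to\mu}\leq n$.

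\textbf{Lower bound, energy factor.} Since $\gamma_M(\omega)\geq e^{-\beta|\omega|}$, it suffices to show $|\omega|\leq 2$. By \cref{prop:interchange-schur-weyl-form}, $E_\lambda=-\frac1n\sum_{b}c(b)$. The move replaces one box by another, so $\omega=-\frac1n(c_{\mathrm{new}}-c_{\mathrm{old}})$. Every content lies in $[-(d-1),n-1]$, hence $|\omega|\leq (n+d-2)/n\leq 2$ once $n\geq d-2$. This is where the hypothesis $n\geq d$ enters.

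\textbf{Lower bound, Specht factor.} By the hook-length formula (\cref{fact:hook-length-formula}),
\begin{equation}
    \frac{\dim(W^\nu)}{\dim(W^\lambda)}=\frac1n\cdot\frac{\prod_{b\in\lambda}h_\lambda(b)}{\prod_{b\in\nu}h_\nu(b)} .
\end{equation}
Removing a corner box deletes a box of hook length $1$. It decreases by one only the hooks in that box's row and column, and leaves all other hooks unchanged. So the hook ratio is at least $1$, and $\dim(W^\nu)/\dim(W^\lambda)\geq 1/n$.

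\textbf{Lower bound, Weyl factor.} This is the step that needs care: a naive factor-by-factor bound gives only $2^{-(d-1)}$, not the claimed $1/d$. By the Weyl dimension formula (\cref{fact:weyl-dimension}), adding a box in row $s$ changes only the factors with an index equal to $s$.
\begin{itemize}
    \item For $j>s$ the factor becomes $(a+1)/a\geq 1$.
    \item For $i<s$, set $b_i:=\nu_i-\nu_s+s-i$. The factor becomes $(b_i-1)/b_i$. Legality of the addition gives $\nu_{s-1}>\nu_s$, so each $b_i\geq 2$.
\end{itemize}
Since $\nu_i-i$ is strictly decreasing in $i$, the integers $b_i$ for $i<s$ are distinct and all at least $2$. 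The product of $(b-1)/b$ over $s-1$ distinct integers $b\geq 2$ is minimized by taking $b=2,\dots,s$. That product telescopes to $1/s\geq 1/d$.

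Combining the three lower bounds gives
\begin{equation}
    L_{\lambda\to\mu}\geq e^{-2\beta}\cdot\frac nd\cdot\frac1d\cdot\frac1n=\frac{e^{-2\beta}}{d^2}=a_{\beta,d}.
\end{equation}
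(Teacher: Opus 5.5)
Your proposal is correct and follows the paper's proof essentially step for step. Both use Pieri and branching multiplicity-freeness for the upper bounds, the hook-length comparison for $\dim(W^\nu)/\dim(W^\lambda)\geq 1/n$, and the product over distinct integers giving $\dim(V^\mu)/\dim(V^\nu)\geq 1/s\geq 1/d$. The only cosmetic difference is that you bound $|E_\mu-E_\lambda|$ through the range of box contents, which gives $(n+d-2)/n$ instead of the paper's $(n+d-1)/n\leq 1+d/n$; either bound yields the factor $e^{-2\beta}$ once $n\geq d$.
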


\begin{proof}
    Recall that
    \begin{equation}\label{eq:transition-coefficients}
        L_{\lambda \to \mu}
        =
        \gamma(E_\mu - E_\lambda)
        \frac nd
        \frac{\dim(V^\mu)}{\dim(V^\nu)}
        \frac{\dim(W^\nu)}{\dim(W^\lambda)}\quad
    \text{if}\ \lambda \overset{\nu}{\sim} \mu\ \text{and}\ \lambda\neq\mu.
    \end{equation}
    Suppose that $\mu=\lambda-e_r+e_s$. That is, $\mu$ is attained from $\lambda$ by removing a box from row $r$ and adding it back into row $s$. By \cref{lem:energy-eigenvalues},
    \begin{equation}
        \vert E_\mu-E_\lambda \vert
        =
        \frac{\vert \lambda_r-\lambda_s+s-r-1\vert}{n}
        \leq
        \frac{n+d-1}{n}
        \leq
        1 + \frac{d}{n}.
    \end{equation}
    Hence the Metropolis factor $\gamma(\omega) = \min\{1,e^{-\beta\omega}\}$ lies between $e^{-\beta(1+d/n)}$ and $1$.
    
    We bound the $SU(d)$ dimension ratios and the $S_n$ dimension ratios independently to get uniform upper and lower bounds. Let $\nu=\lambda-e_r = \mu - e_s$ be the common $(n-1)$ box diagram. From the Weyl dimension formula (\cref{fact:weyl-dimension}), defining $x_i = \mu_i - i$, we get
    \begin{equation}
        \frac{\dim(V^\mu)}{\dim(V^\nu)}
        =
        \prod_{\substack{i=1\\ i\neq s}}^d \frac{x_s - x_i}{x_s-x_i-1}
        =
        \prod_{i < s} \frac{x_s - x_i}{x_s-x_i-1} \prod_{i > s} \frac{x_s - x_i}{x_s-x_i-1}.
    \end{equation}
    If $s<d$, then $\mu_s \geq \mu_{s+1} + 1$ since we are able to remove a box from row $s$ in $\mu$. Hence, $x_s - x_{s+1} \geq 2$, and $x_s-x_i\geq2$ for every $i>s$. Therefore
    \begin{equation}
        \prod_{i > s}\frac{x_s-x_i}{x_s-x_i-1} \geq 1
    \end{equation}
    since each factor is at least one. If $s=d$, this is an empty product and equals one.
    
    For $i < s$ we can write the terms in the product as $\tfrac{y_i}{y_i + 1}$ where $y_i= x_i - x_s$. From the condition $\mu_1 \geq \mu_2 \geq \dots, \geq \mu_d$ we can see that the $y_i$ are distinct positive integers. So, we minimize the product by taking the $y_i$ as small as possible, e.g. take $y_i = 1, 2, \dots, s-1$. Then,
    \begin{equation}
        \prod_{i < s} \frac{y_i}{y_i+1} \geq \frac{1}{2}\cdot \frac{2}{3} \cdots \frac{s-1}{s} = \frac{1}{s} \geq \frac{1}{d}.
    \end{equation}
    Putting these together gives
    \begin{equation}
        \frac{\dim(V^\mu)}{\dim(V^\nu)} \geq \frac{1}{d}.
    \end{equation}
    For the upper bound, Pieri's rule (\cref{fact:Pieri-rule}) shows that $V^\mu$ is a summand of $V^\nu\otimes\mathbb C^d$, since $\mu\in\nu^+$. Taking dimensions gives
    \begin{equation}
        \frac{\dim(V^\mu)}{\dim(V^\nu)} \leq d.
    \end{equation}
    Now we turn to the $S_n$ dimension ratio. By the hook-length formula (\cref{fact:hook-length-formula}),
    \begin{equation}
        \frac{\dim(W^\nu)}{\dim(W^\lambda)} = \frac{(n-1)!/\prod_{b \in \nu} h_\nu(b)}{n!/\prod_{b\in \lambda} h_\lambda(b)},
    \end{equation}
    where $h_\lambda(b)$ is the hook-length of the box $b$ in the diagram $\lambda$. The removed corner has hook-length one. Every remaining box has the same hook-length in both diagrams, except those in the same row or column as the removed box, whose hook-length decreases by one. Thus,
    \begin{equation}
        \frac{\dim(W^\nu)}{\dim(W^\lambda)} = \frac{1}{n} \prod_{\substack{b\in\nu\\ h_\nu(b)=h_\lambda(b)-1}} \frac{h_\lambda(b)}{h_\lambda(b)-1},
    \end{equation}
    since for the affected boxes we have $h_\nu(b) = h_\lambda(b) - 1$. Any box that is not the removed box but whose hook-length is affected by the removal must have had a hook length of at least 2. Hence, each term in the product is $\geq 1$ giving a lower bound of
    \begin{equation}
        \frac{\dim(W^\nu)}{\dim(W^\lambda)} \geq \frac{1}{n}.
    \end{equation}
    An upper bound of 1 follows from Young's branching Rule (\cref{fact:Young-branching-rule}). Combining these estimates with the bounds on the Metropolis factor gives
    \begin{equation}
        \frac{e^{-\beta(1+d/n)}}{d^2}
        \leq L_{\lambda\to\mu}\leq n.
    \end{equation}
    For $n\geq d$, this gives \cref{eq:young-uniform-rates}.
\end{proof}

We now give a rough lower bound that supplies the lower half of the gap estimate in the metastable regime.

\begin{lemma}[A universal exponential lower bound]
\label{lem:young-universal-exponential-lower-bound}
For fixed $d$ and $\beta>0$, there is $C_{\beta,d}<\infty$ such that
\begin{equation}
    \gap(L)\geq e^{-C_{\beta,d}n}
\end{equation}
for every sufficiently large $n$.
\end{lemma}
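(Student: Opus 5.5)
We apply the canonical-path bound (\cref{fact:canonical-path-bound}) with a crude choice of paths. For each ordered pair $\lambda,\eta\in\Omega_{n,d}$, let $\Gamma_{\lambda,\eta}$ be the path from \cref{lem:young-state-space-geometry}: from $\lambda$ to $\lambda^\star=(n,0,\ldots,0)$ by moving boxes from the lowest non-empty row into the first row, then the reverse of the corresponding path from $\eta$. If this concatenation is not simple, we erase loops, which only shortens it. Every such path has length at most $2n$ and uses only legal edges, so $L(x_i,x_{i+1})>0$ along it.

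The congestion is bounded edge by edge. For a directed edge $e=(\alpha,\alpha')$ we have $Q(e)=\pi_\beta(\alpha)L_{\alpha\to\alpha'}\geq \pi_\beta(\alpha)\,a_{\beta,d}$ by \cref{lem:young-polynomial-rates}. Using $\pi_\beta(\lambda)\pi_\beta(\eta)\leq 1$, $\vert\Gamma_{\lambda,\eta}\vert\leq 2n$, and the fact that there are at most $\vert\Omega_{n,d}\vert^2\leq (n+1)^{2(d-1)}$ pairs, we get
\[
\rho(\Gamma)\leq \frac{2n\,(n+1)^{2(d-1)}}{a_{\beta,d}\,\min_{\alpha}\pi_\beta(\alpha)}.
\]
It therefore suffices to show that $\min_\alpha \pi_\beta(\alpha)\geq e^{-C n}$.

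By \cref{lem:stationary-measure-estimates}, $\pi_\beta(\alpha)\geq n^{-c}e^{n f_\beta(x_\alpha)}$ with $f_\beta=F_\beta-F_0\leq 0$. On the compact set $\Delta_d^\downarrow$, $F_\beta$ is bounded: the entropy term lies in $[0,\log d]$ and the quadratic term lies in $[0,\beta/2]$. Hence $f_\beta\geq -(\beta/2+\log d)$, and so
\[
\min_\alpha\pi_\beta(\alpha)\geq n^{-c}e^{-(\beta/2+\log d)n}.
\]
Alternatively, one can argue directly: $\pi_\beta(\alpha)\geq e^{-\beta E_\alpha}/Z_\beta$ with $\beta$-energy spread $O(\beta n)$ and $Z_\beta\leq d^n e^{-\beta E_{\min}}$, which gives a bound of the same order. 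Combining the two displays yields $\gap(L)^{-1}\leq \mathrm{poly}(n)\,e^{(\beta/2+\log d)n}$. Absorbing the polynomial factor into the exponential for large $n$ gives $\gap(L)\geq e^{-C_{\beta,d}n}$ for any $C_{\beta,d}>\beta/2+\log d$.

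The only point needing care is the uniform lower bound on $\pi_\beta$, including diagrams with empty rows. This is already covered by \cref{lem:stationary-measure-estimates}, or by the direct bound via $\dim(\Pi_\alpha)\geq1$, so no real obstacle is expected.
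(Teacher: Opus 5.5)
Your proof is correct and follows the paper's argument: the same canonical paths of length at most $2n$ through $(n,0,\ldots,0)$, the edge bound $Q(e)\geq a_{\beta,d}\pi_{\min}$, and an exponential lower bound on $\pi_{\min}$. The paper sums the pair weights to one instead of counting pairs, which only removes a polynomial factor. The paper bounds $\pi_{\min}\geq\lambda_{\min}(\rho)$ via \cref{eq:gibbs-min-eigenvalue-bound}, which is your ``alternative'' route; your primary route through \cref{lem:stationary-measure-estimates} is also valid and gives the slightly better exponent $\beta/2+\log d$.
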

\begin{proof}
Every Schur--Weyl block is nonzero, so its Gibbs probability is at least the smallest eigenvalue of $\rho$. Thus the Gibbs estimate~\cref{eq:gibbs-min-eigenvalue-bound} gives
\begin{equation}
    \pi_{\min}:=\min_\lambda\pi_\beta(\lambda)
    \geq\lambda_{\min}(\rho)
    \geq e^{-C_0n}
\end{equation}
for some $C_0=C_0(\beta,d)$.

By \cref{lem:young-state-space-geometry}, choose simple legal paths between all pairs of states with length at most $2n$. For $n\geq d$, \cref{lem:young-polynomial-rates} gives $Q(e)\geq a_{\beta,d}\pi_{\min}$ on every edge. Since the sum of all pair weights $\pi_\beta(\lambda)\pi_\beta(\mu)$ is one, the canonical-path bound (\cref{fact:canonical-path-bound}) yields
\begin{equation}
    \gap(L)^{-1}
    \leq\rho(\Gamma)
    \leq\frac{2n}{a_{\beta,d}\pi_{\min}}
    \leq e^{C_{\beta,d}n}
\end{equation}
for a sufficiently large constant $C_{\beta,d}$.
\end{proof}

\subsection{The Metastable Window}

We shall also use the following rounding map to bridge the gap between the continuous free energy defined on $\Delta_d^\down$ and the discrete chain on $\Omega_{n,d}$. For $x\in\Delta_d$, define an integer composition of $n$ by
\begin{equation}
\label{eq:composition-rounding-map}
    \widehat R_n(x)_i
    :=
    \lfloor nx_i\rfloor
    \quad (1\leq i<d),
    \qquad
    \widehat R_n(x)_d
    :=
    n-\sum_{i=1}^{d-1}\lfloor nx_i\rfloor.
\end{equation}
Then
\begin{equation}
\label{eq:composition-rounding-error}
    \left\Vert\frac{\widehat R_n(x)}n-x\right\Vert_1
    \leq
    \frac{2(d-1)}n.
\end{equation}
If $a^\downarrow$ denotes the decreasing rearrangement of $a$, set $R_n(x) := \widehat R_n(x)^\downarrow$. Decreasing rearrangement is non-expansive in $\ell^1$, hence whenever $x\in\Delta_d^\downarrow$,
\begin{equation}
\label{eq:young-diagram-rounding-error}
    R_n(x)\in\Omega_{n,d},
    \qquad
    \left\Vert\frac{R_n(x)}n-x\right\Vert_1
    \leq
    \frac{2(d-1)}n.
\end{equation}

We shall also repeatedly use the following modulus of continuity. If $x,y\in\Delta_d$ and $\varepsilon=\Vert x-y\Vert_1\leq1/2$, then
\begin{equation}
\label{eq:free-energy-modulus}
    \vert F_\beta(x)-F_\beta(y)\vert
    \leq
    \beta\varepsilon
    +
    \varepsilon\log\left(\frac{ed}{\varepsilon}\right).
\end{equation}
Indeed, $\frac\beta2 \left\vert \sum_i x_i^2-\sum_i y_i^2 \right\vert \leq \beta\Vert x-y\Vert_1$, while $\vert-a\log a+b\log b\vert \leq \vert a-b \vert \log\left(\frac{e}{\vert a-b \vert}\right)$ for $a,b\in[0,1]$. Summing the latter inequality and using concavity of $t\mapsto t\log(e/t)$ gives \cref{eq:free-energy-modulus}. In particular, there is $C_{\beta,d}<\infty$ such that any legal one-box move satisfies
\begin{equation}
\label{eq:one-box-free-energy-modulus}
    \vert F_\beta(x_\lambda)-F_\beta(x_\mu)\vert
    \leq
    C_{\beta,d}\frac{\log(n+1)}n.
\end{equation}

\begin{theorem}[Slow mixing in the metastable window]\label{thm:young-chain-slow-mixing}
    If $\betamin(d)<\beta<d$, then there exist $0<c_{\beta,d}<C_{\beta,d}<\infty$ such that
    \begin{equation}
        e^{-C_{\beta,d}n}
        \leq
        \gap(L)
        \leq
        e^{-c_{\beta,d}n}
    \end{equation}
    for every sufficiently large $n$.
\end{theorem}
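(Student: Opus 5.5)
The lower bound $\gap(L)\geq e^{-C_{\beta,d}n}$ is exactly \cref{lem:young-universal-exponential-lower-bound}, so only the upper bound needs work. For that I would exhibit a set $\mathcal W_n\subseteq\Omega_{n,d}$ with $0<\pi_\beta(\mathcal W_n)\leq 1/2$ and exponentially small conductance, and then apply \cref{fact:conductance-gap-upper-bound}.

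\textbf{Choice of the well.} For $\betamin(d)<\beta<d$, \cref{subsec:proof-spectral-free-energy-landscape} gives two strict local maxima of $F_\beta$ on $\Delta_d^\downarrow$. One is the uniform point $u$, whose Hessian on the tangent space is $(\beta-d)I$, negative definite. The other is $x_{r_2}$ with $r_2>r_*$, whose Hessian is negative definite. Let $p$ be the one with smaller free energy; at $\beta=\beta_c(d)$, take either. Because $p$ is a strict local maximum with a nondegenerate Hessian, there exist $\delta>0$ and $\eta>0$ with two properties: the closed ball $B=\{x\in\Delta_d^\downarrow:\Vert x-p\Vert_1\leq 2\delta\}$ does not contain the other maximum $q$, and
\begin{equation}
F_\beta(x)\leq F_\beta(p)-2\eta \qquad \text{for } \delta\leq\Vert x-p\Vert_1\leq 2\delta.
\end{equation}
I would define $\mathcal W_n:=\{\lambda:\Vert x_\lambda-p\Vert_1<\delta\}$. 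A one-box move changes $x_\lambda$ by $2/n$ in $\ell^1$. Hence every $\lambda\in\partial\mathcal W_n$ satisfies $\delta-2/n\leq\Vert x_\lambda-p\Vert_1<\delta$, which for large $n$ lies in the annulus where the analogous bound with $\eta$ holds. This uses compactness and the modulus \cref{eq:free-energy-modulus}, after shrinking $\delta$ slightly to get a bound on the annulus $[\delta/2,2\delta]$.

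\textbf{Estimates.} Using the rounding map \cref{eq:young-diagram-rounding-error} together with \cref{eq:free-energy-modulus}, the diagram $R_n(p)$ lies in $\mathcal W_n$ and satisfies $F_\beta(x_{R_n(p)})\geq F_\beta(p)-o(1)$. Then \cref{cor:stationary-probability-ratios} gives, for every $\lambda\in\partial\mathcal W_n$,
\begin{equation}
\pi_\beta(\lambda)/\pi_\beta(\mathcal W_n)\leq \mathrm{poly}(n)\,e^{-n\eta}.
\end{equation}
By \cref{eq:boundary-flow}, \cref{lem:young-polynomial-rates} (rates at most $n$, degree at most $d(d-1)$) and $\vert\Omega_{n,d}\vert\leq(n+1)^{d-1}$, this yields
\begin{equation}
\Phi(\mathcal W_n)\leq d(d-1)\,n\,(n+1)^{d-1}\,\mathrm{poly}(n)\,e^{-n\eta}\leq e^{-cn}.
\end{equation}

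\textbf{Mass condition.} The remaining step, which I expect to be the main obstacle, is verifying $\pi_\beta(\mathcal W_n)\leq 1/2$. Away from coexistence it is straightforward. Since $F_\beta(p)<F_\beta(q)=\max F_\beta$, \cref{lem:stationary-measure-estimates} gives $\pi_\beta(\mathcal W_n)\leq \mathrm{poly}(n)\,e^{-n(F_\beta(q)-F_\beta(p)-o(1))}\to 0$. At $\beta=\beta_c(d)$ the two maxima have equal free energy, and the polynomial prefactors in \cref{eq:exact-young-weight} decide which well is heavier. The fix is to take whichever of the two disjoint wells $\mathcal W_n(u)$ and $\mathcal W_n(x_c)$ has smaller mass. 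The two wells are disjoint for small $\delta$, so the lighter one has mass at most $1/2$, and its conductance bound holds by the same argument. This choice may depend on $n$, but that is harmless because both wells satisfy the conductance estimate uniformly.

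Finally, \cref{fact:conductance-gap-upper-bound} gives $\gap(L)\leq 2\Phi(\mathcal W_n)\leq e^{-c_{\beta,d}n}$ after adjusting constants. Enlarging $C_{\beta,d}$ if necessary ensures $c_{\beta,d}<C_{\beta,d}$.
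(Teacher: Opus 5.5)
Your proposal is correct and follows essentially the same route as the paper: the lower bound comes from \cref{lem:young-universal-exponential-lower-bound}, and the upper bound comes from the conductance of a neighborhood of the lower well, where you compare boundary states with $R_n(p)$, bound the well's mass against the global maximizer, and take the lighter of the two wells at $\beta_c(d)$. The only difference is cosmetic: you locate $\partial\mathcal W_n$ in an $\ell^1$ annulus and use compactness, whereas the paper requires $\sup_{\partial B}F_\beta\leq F_\beta(m)-4\delta$ and applies the modulus of continuity where a one-box segment crosses $\partial B$; the paper also states explicitly that $F_\beta\leq F_\beta(m)$ on $\overline B$, which your mass bound uses implicitly and which your strict-local-maximum choice of $\delta$ supplies.
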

\begin{proof}
    The lower bound follows from \cref{lem:young-universal-exponential-lower-bound}. For the upper bound, we'll construct for each $n$ a set $\mathcal W_n \subset \Omega_{n,d}$ whose conductance is exponentially small. This provides an exponentially small upper bound on the gap via \cref{fact:conductance-gap-upper-bound}. First, for $\beta \in (\betamin(d), d)$ we know from the local-maxima classification in \cref{subsec:proof-spectral-free-energy-landscape} that the free energy $F_\beta$ admits two distinct strict local maxima on $\Delta_d^\down$. When $\beta \neq \beta_c$, denote the smaller of the two maxima by $m$ and the global maximum by $g$. Choose a sufficiently small open ball $B$ around $m$ and also $\delta > 0$ so that
    \begin{equation}\label{eq:metastable-ball-properties}
        F_\beta(x) \leq F_\beta(m) \quad \text{for all } x \in \overline{B} \qquad \text{and} \qquad \sup_{x \in \partial B} F_\beta(x) \leq F_\beta(m) - 4\delta.
    \end{equation}
    Here $\overline{B}$ and $\partial B$ denote the closure and boundary of $B$ in $\Delta_d^\down$, respectively. Such a choice is possible because $m$ is a strict local maximum (choose $\delta$ to be smaller than a quarter of the height from $m$ to the trough separating $m$ and $g$). At $\beta = \beta_c(d)$, denote the two local maxima by $m_1$ and $m_2$ and define disjoint balls $B_1$ and $B_2$ around them. Choose a common $\delta>0$ small enough that both triples $(m_i,B_i,\delta)$ satisfy \cref{eq:metastable-ball-properties}. In either case, fix these choices as functions of $\beta,d$, independently of $n$.

    We now round these to a set $\mathcal W_n \subset \Omega_{n,d}$ satisfying $0 < \pi_\beta(\mathcal W_n) \leq \frac12$ by rounding $m$ and $B$ to Young diagrams. First suppose that $\beta \neq \beta_c(d)$ and set
    \begin{equation}
        \mathcal W_n = \{\lambda \in \Omega_{n,d}: x_\lambda \in B\},
    \end{equation}
    this takes the points in $B$ which correspond to Young diagrams. Set $\lambda^{(g)} = R_n(g)$ so that $\Vert x_{\lambda^{(g)}} - g \Vert_1 < \frac{2(d-1)}{n}$ which implies $F_\beta(x_{\lambda^{(g)}}) \geq F_\beta(g) - C_{\beta,d} \frac{\log(n+1)}{n}$. Let $\Delta = F_\beta(g) - F_\beta(m) > 0$ be the difference between the two well heights. By \cref{eq:metastable-ball-properties} and \cref{cor:stationary-probability-ratios}, for $\lambda \in \mathcal W_n$ we have
    \begin{align}
        \frac{\pi_\beta(\lambda)}{\pi_\beta(\lambda^{(g)})} &\leq n^{c'}\exp\{n(f_\beta(x_\lambda) - f_\beta(x_{\lambda^{(g)}}))\}\\
        &\leq n^{c'} \exp\{n(f_\beta(m) - f_\beta(g) + C_{\beta,d}\frac{\log(n+1)}{n})\}\\
        &\leq n^Ce^{-n\Delta} \leq e^{-n\Delta/2}
    \end{align}
    for all large enough $n$. Since $\vert\Omega_{n,d}\vert$ is polynomial in $n$, we see that
    \begin{equation}
        \begin{aligned}
        \pi_\beta(\mathcal W_n)
        &= \sum_{\lambda \in \mathcal W_n} \pi_\beta(\lambda)
        \leq \vert\Omega_{n,d}\vert e^{-\Delta n/2}\pi_\beta(\lambda^{(g)})\\
        &\leq (n+1)^{d-1}e^{-\Delta n/2}<\frac12,
        \end{aligned}
    \end{equation}
    for all sufficiently large $n$.
    At $\beta = \beta_c(d)$, define
    \begin{equation}
        \mathcal W_{i,n} = \{\lambda \in \Omega_{n,d} : x_\lambda \in B_i\} \quad \text{for } i = 1,2.
    \end{equation}
    These sets are non-empty and disjoint for large enough $n$, thus at least one of them has stationary mass $\leq \frac{1}{2}$. For each $n$ choose $i = i(n)$ so that $\pi_\beta(\mathcal W_{i,n}) \leq \frac12$ and for the remainder of the proof set $\mathcal W_n := \mathcal W_{i,n}, B := B_i, m := m_i$. This completes the construction of the desired set with mass $\leq \frac12$ in every case.

    We now show that this set has small conductance. Set $\lambda^{(m)} = R_n(m)$ and note that $\lambda^{(m)} \in \mathcal W_n$ for all large enough $n$. Moreover, $\Vert x_{\lambda^{(m)}}-m\Vert_1 \leq \frac{2(d-1)}n$ and the modulus estimate gives $F_\beta(x_{\lambda^{(m)}}) \geq F_\beta(m)-C_{\beta,d}\frac{\log(n+1)}n$. Pick $\lambda \in \partial \mathcal W_n$. Then, $x_\lambda \in B$ and $x_\lambda$ has a neighbor $x_\mu \in B^c$. The line segment between $x_\lambda$ and $x_\mu$ crosses $\partial B$ at a point whose $\ell^1$-distance from $x_\lambda$ is at most $2/n$. Hence, \cref{eq:free-energy-modulus,eq:metastable-ball-properties} imply that for all sufficiently large $n$,
    \begin{equation}
    \label{eq:metastable-boundary-height}
        F_\beta(x_\lambda)
        \leq
        F_\beta(m)-2\delta
        \qquad
        \text{for every }\lambda\in\partial \mathcal W_n.
    \end{equation}
    It follows from \cref{cor:stationary-probability-ratios} and \cref{eq:metastable-boundary-height} that for every $\lambda \in \partial \mathcal W_n$,
    \begin{align}
        \frac{\pi_\beta(\lambda)}
             {\pi_\beta(\lambda^{(m)})}
        &\leq n^{c'} \exp\{n(f_\beta(m) - 2\delta - f_\beta(m))\}\\
        &\leq e^{-n\delta}
    \end{align}
    for all sufficiently large $n$, where the polynomial factor also absorbs the rounding error at $m$. Therefore
    \begin{equation}
    \label{eq:metastable-boundary-mass}
        \frac{\pi_\beta(\partial \mathcal W_n)}{\pi_\beta(\mathcal W_n)}
        \leq
        \frac{\vert \partial \mathcal W_n\vert e^{-\delta n}
              \pi_\beta(\lambda^{(m)})}
             {\pi_\beta(\lambda^{(m)})}
        \leq (n+1)^{d-1}e^{-\delta n}
        \leq e^{-\delta n/2}.
    \end{equation}
    By \cref{lem:young-state-space-geometry} and the rate bound~\cref{eq:young-uniform-rates}, the total exit rate from any state is at most $d(d-1)n$ for $n\geq d$. Thus, by \cref{fact:conductance-gap-upper-bound}
\begin{equation}
    \begin{aligned}
    \gap(L) &\leq
    \frac{Q(\mathcal W_n,\mathcal W_n^c)}
         {\pi_\beta(\mathcal W_n)\pi_\beta(\mathcal W_n^c)}
    \leq
    2\frac{\sum_{\lambda \in \mathcal W_n}\sum_{\mu \in \mathcal W_n^c}\pi_\beta(\lambda)L_{\lambda \to \mu}}{\pi_\beta(\mathcal W_n)}
    \\
    &\leq
    2d(d-1)n\frac{\pi_\beta(\partial \mathcal W_n)}{\pi_\beta(\mathcal W_n)} \leq e^{-\delta n/4}.
    \end{aligned}
\end{equation}
where we used $\pi_\beta(\mathcal W_n^c) \geq \frac12$. Both final exponential bounds hold for all sufficiently large $n$. Since $\delta$ was fixed in terms of $\beta,d$, this proves the upper bound with $c_{\beta,d}=\delta/4$.
\end{proof}

\subsection{The One-Well Regimes}

By the classification of local maxima in
\cref{subsec:proof-spectral-free-energy-landscape}, $F_\beta$
has a unique local maximum $m_\beta$ on $\Delta_d^\downarrow$ whenever
$0<\beta<\betamin(d)$ or $\beta>d$. We first construct continuous
paths to this maximum and then approximate them by legal one-box moves.

\begin{lemma}[Discrete paths with controlled free energy loss]\label{lem:one-well-discrete-paths}
    Suppose that $0<\beta<\betamin(d)$ or $\beta>d$, and set $\lambda^{(m)}:=R_n(m_\beta)$. For $n\geq14(d-1)$ and every $\lambda\in\Omega_{n,d}$, there is a simple legal path $\Gamma_\lambda$ from $\lambda$ to $\lambda^{(m)}$ such that
    \begin{equation}\label{eq:one-well-path-height}
        F_\beta(x_\eta)
        \geq F_\beta(x_\lambda)
           -\frac{7(d-1)\log n+C_{\beta,d}}{n}
        \qquad(\eta\in\Gamma_\lambda),
    \end{equation}
    where $C_{\beta,d}$ is independent of $n$.
\end{lemma}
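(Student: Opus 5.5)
The plan is to build a continuous path in a superlevel set of $F_\beta$ from $x_\lambda$ to $m_\beta$, and then shadow it by legal one-box moves. The only property of the one-well regime used is that $F_\beta$ has the unique local maximum $m_\beta$ on $\Delta_d^\downarrow$.

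\textbf{Step 1 (continuous path).} Fix $\lambda$ and a level $t<F_\beta(x_\lambda)$; I will take $t=F_\beta(x_\lambda)-1/n$. Consider the open set
\[
U_t:=\{x\in\Delta_d^\downarrow:F_\beta(x)>t\},
\]
open relative to $\Delta_d^\downarrow$. Relatively open subsets of the convex set $\Delta_d^\downarrow$ are locally path connected, so the connected component $K$ of $U_t$ containing $x_\lambda$ is path connected.

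Next I show $K$ contains a local maximum of $F_\beta$. The closure $\overline K$ is compact, so $F_\beta$ attains its maximum over $\overline K$ at some $y$. Then $F_\beta(y)\geq F_\beta(x_\lambda)>t$, while $F_\beta\le t$ on $\partial K$, so $y$ lies in $K$. Since $K$ is open in $\Delta_d^\downarrow$, $y$ is a local maximum of $F_\beta$ on $\Delta_d^\downarrow$.

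I then check that $y$ is a genuine local maximum in the sense of \cref{subsec:proof-spectral-free-energy-landscape}. The walls $x_i=x_{i+1}$ are harmless because $F_\beta$ is symmetric: a local maximum on $\Delta_d^\downarrow$ is a local maximum on $\Delta_d$ after symmetrizing neighborhoods. Faces with a zero coordinate are excluded by the entropy-derivative argument already given there. By uniqueness, $y=m_\beta$. This yields a continuous path $\gamma:[0,1]\to\Delta_d^\downarrow$ from $x_\lambda$ to $m_\beta$ with $F_\beta(\gamma(s))>F_\beta(x_\lambda)-1/n$ throughout.

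\textbf{Step 2 (discretization).} Choose times $0=s_0<\dots<s_K=1$ with $\Vert\gamma(s_{k+1})-\gamma(s_k)\Vert_1\le 1/n$. Let $\lambda^{(k)}:=R_n(\gamma(s_k))$, with $\lambda^{(0)}=\lambda$ and $\lambda^{(K)}=\lambda^{(m)}$. By \cref{eq:young-diagram-rounding-error}, $\Vert x_{\lambda^{(k+1)}}-x_{\lambda^{(k)}}\Vert_1\le (4(d-1)+1)/n$.

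To join consecutive diagrams, I work in composition space. Two integer compositions at $\ell^1$ distance $D$ can be joined by $D/2$ unit transfers $a\mapsto a+e_i-e_j$, each moving only toward the target, so every intermediate composition stays within $\ell^1$ distance $D$ of both endpoints. The key observation is what sorting does to a single transfer. Sorting $a$ and $a+e_i-e_j$ gives either the same partition, or two partitions related by removing a corner box from a row of length $a_j$ and adding it to a row of length $a_i$. To get a legal move, take the last row of length $a_j$ and the first row of length $a_i$. This is therefore either trivial or a legal one-box move.

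Hence consecutive $\lambda^{(k)}$ are joined by at most $O(d)$ legal moves. Every intermediate $\eta$ satisfies $\Vert x_\eta-\gamma(s_k)\Vert_1\le c_d/n$, with $c_d$ of order $7(d-1)$, which is where the hypothesis $n\ge 14(d-1)$ enters: it keeps $\varepsilon\le 1/2$ in \cref{eq:free-energy-modulus}.

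Applying the modulus \cref{eq:free-energy-modulus} with $\varepsilon=c_d/n$ gives
\[
F_\beta(x_\eta)\ge F_\beta(\gamma(s_k))-\frac{c_d\log n+C}{n}\ge F_\beta(x_\lambda)-\frac{7(d-1)\log n+C_{\beta,d}}{n}.
\]
Finally, delete loops from the concatenated walk to make it simple. Loop erasure only removes vertices, so the bound is preserved.

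\textbf{Main obstacle.} The delicate point is Step 1. Compact connected sets need not be path connected, which is why I use the open superlevel set rather than the closed one. The boundary faces of $\Delta_d^\downarrow$ also need care so that the maximizer found over $\overline K$ really is one of the classified local maxima. Uniformity in $\lambda$ is automatic: the path itself depends on $\lambda$, but the discrete estimate uses only the uniform modulus of continuity.
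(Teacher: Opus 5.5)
Your proposal is correct and follows essentially the same route as the paper: a continuous path inside the strict superlevel set $\{F_\beta>F_\beta(x_\lambda)-1/n\}$, obtained by maximizing $F_\beta$ over the closure of a component and invoking uniqueness of the local maximum, then rounding at $\ell^1$-spacing $1/n$, unit transfers, sorting, loop erasure, and the modulus bound with $\varepsilon\leq 7(d-1)/n\leq 1/2$. The differences are cosmetic. You sort after each unit transfer and check directly that the result is trivial or a legal one-box move, and you spell out why a local maximum on $\Delta_d^\downarrow$ is one of the classified critical points; the paper instead sorts all compositions at the end and uses that partitions at $\ell^1$ distance $2$ differ by a legal move.
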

\begin{proof}
    We first choose a continuous path to $m_\beta$ and then approximate it by a sequence of legal box moves. We will keep every diagram close to the continuous path, so that the modulus of continuity controls the resulting loss in free energy.
    
    First, every nonempty strict superlevel set $U_a:=\{x\in\Delta_d^\downarrow:F_\beta(x)>a\}$ is path connected. Indeed, let $C$ be a path component of $U_a$, and maximize $F_\beta$ on its compact closure, obtaining a point $z$. Since $C$ contains points of value greater than $a$, $F_\beta(z)>a$. A sufficiently small relative convex neighborhood of $z$ lies in $U_a$ and intersects $C$, so it lies entirely in $C$. Consequently $z\in C$, and $z$ is a local maximum of $F_\beta$ on $\Delta_d^\downarrow$. Hence $z=m_\beta$. Every path component therefore contains $m_\beta$, proving the claim.

    Fix $\lambda\in\Omega_{n,d}$. Both $x_\lambda$ and $m_\beta$ belong to $U_a$ for $a:=F_\beta(x_\lambda)-1/n$, so there is a continuous path $\gamma:[0,1]\to\Delta_d^\downarrow$ between them satisfying
    \begin{equation}
        F_\beta(\gamma(t))>F_\beta(x_\lambda)-\frac1n\qquad(0\leq t\leq1).
    \end{equation}
    By uniform continuity, we can choose finitely many points along $\gamma$, including both endpoints, so that consecutive points are at most $1/n$ apart in $\ell^1$ distance. Round each point to an integer composition using $\widehat R_n$ from \cref{eq:composition-rounding-map}. The rounding error bound \cref{eq:composition-rounding-error} shows that consecutive compositions are at most $4(d-1)+1$ apart in integer $\ell^1$ distance.

    To connect two consecutive compositions, move one box at a time from a coordinate that is larger than its target value to one that is smaller. Each coordinate moves only toward its target, so every intermediate composition stays within $4(d-1)+1$ of the starting composition. Including rounding error, this puts every intermediate composition within $6(d-1)+1\leq7(d-1)$ of the preceding continuous path point.
    
    We now sort all these compositions in decreasing order to obtain Young diagrams. The continuous path is already ordered, so sorting does not increase the distance to it, as in \cref{eq:young-diagram-rounding-error}. Consecutive sorted compositions have the same sum and integer $\ell^1$ distance at most $2$. They therefore either coincide or differ by a legal one-box move. The endpoints are $R_n(x_\lambda)=\lambda$ and $R_n(m_\beta)=\lambda^{(m)}$. Removing repeated consecutive diagrams and erasing loops gives a simple legal path between them. This only removes vertices, so the distance bound is preserved.

    It remains to check the free energy along this path. Every rescaled vertex $x_\eta$ is within $A_d/n$ of a point on $\gamma$, where $A_d:=7(d-1)$. Since $n\geq2A_d$, this distance is at most $1/2$, and we can apply \cref{eq:free-energy-modulus}. Together with the lower bound on $F_\beta$ along $\gamma$, this gives
    \begin{equation}
        n\bigl(F_\beta(x_\lambda)-F_\beta(x_\eta)\bigr)
        \leq1+A_d\left[\beta+\log\left(\frac{edn}{A_d}\right)\right]
        \leq A_d\log n+C_{\beta,d},
    \end{equation}
    where the terms independent of $n$ are absorbed into $C_{\beta,d}$. This proves \cref{eq:one-well-path-height}.
\end{proof}

\begin{theorem}[Fast mixing in the one-well regimes]\label{thm:young-chain-fast-mixing}
    There is a universal constant $c>0$ such that, whenever $0<\beta<\betamin(d)$ or $\beta>d$,
    \begin{equation}
        \gap(L)
        =\gap\!\left(\Lloc\big\vert_{\calA_d}\right)
        \geq n^{-cd^2}
    \end{equation}
    for all sufficiently large $n$, with the threshold depending on $\beta,d$.
\end{theorem}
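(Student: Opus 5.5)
The plan is to bound the conductance $\Phi$ of the Young diagram chain from below by an inverse polynomial and then apply Cheeger's inequality, \cref{fact:conductance-gap-upper-bound}. For the rate constant $r$ in that inequality, \cref{lem:young-state-space-geometry} and \cref{lem:young-polynomial-rates} show that every state has at most $d(d-1)$ neighbours and every rate is at most $n$. So we may take $r=d(d-1)n$, and it suffices to prove $\Phi\geq n^{-c'd^2}$, since then $\gap(L)\geq\Phi^2/(2d(d-1)n)\geq n^{-cd^2}$ for large $n$.

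Fix $S\subseteq\Omega_{n,d}$ with $0<\pi_\beta(S)\leq 1/2$, and let $\lambda^{(m)}=R_n(m_\beta)$ be the rounded unique maximizer. The argument splits into two cases.

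\textbf{Case $\lambda^{(m)}\notin S$.} Pick $\lambda^\circ\in S$ with maximal stationary weight, so that $\pi_\beta(\lambda^\circ)\geq\pi_\beta(S)/\vert\Omega_{n,d}\vert\geq\pi_\beta(S)(n+1)^{-(d-1)}$. The path $\Gamma_{\lambda^\circ}$ of \cref{lem:one-well-discrete-paths} starts in $S$ and ends outside $S$, so it contains an edge $(\eta,\eta')$ with $\eta\in S$ and $\eta'\notin S$. Along this path the free energy drops by at most $(7(d-1)\log n+C_{\beta,d})/n$. Combining this with the reverse direction of \cref{cor:stationary-probability-ratios} (obtained by swapping the roles of $\lambda,\eta$) gives
\[
\pi_\beta(\eta)\geq n^{-O(d^2)}\pi_\beta(\lambda^\circ).
\]
Here the exponent collects $7(d-1)$ from the path and $d+d(d-1)$ from the prefactors. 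Using the rate lower bound $a_{\beta,d}$ of \cref{lem:young-polynomial-rates}, we get
\[
Q(S,S^c)\geq a_{\beta,d}\,\pi_\beta(\eta)\geq n^{-O(d^2)}\pi_\beta(S).
\]

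\textbf{Case $\lambda^{(m)}\in S$.} We run the same argument on $S^c$, which has $\pi_\beta(S^c)\geq 1/2\geq\pi_\beta(S)$. Choose $\lambda^\circ\in S^c$ of maximal weight; its path to $\lambda^{(m)}\in S$ must cross from $S^c$ into $S$. Reversibility $Q(x,y)=Q(y,x)$ then gives
\[
Q(S,S^c)=Q(S^c,S)\geq n^{-O(d^2)}\pi_\beta(S^c)\geq n^{-O(d^2)}\pi_\beta(S).
\]

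In both cases $\Phi(S)\geq n^{-c'd^2}$ with $c'$ universal once $n$ is large enough to absorb $a_{\beta,d}$, $e^{\beta(d-1)}$ and $C_{\beta,d}$. That absorption is why the threshold on $n$ depends on $\beta,d$, while the exponent depends only on $d$.

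The main obstacle is already handled by \cref{lem:one-well-discrete-paths}: it produces paths whose free energy loss is only logarithmic, and this relies on the uniqueness of the local maximum to get connected superlevel sets. What remains is bookkeeping. We need to confirm that the polynomial factors in \cref{cor:stationary-probability-ratios} have exponent $O(d^2)$ with no dependence on $\beta$ in the power, which holds because the $\beta$-dependence enters only through the constant $e^{\beta(d-1)}$, and that $n\geq 14(d-1)$ so that the path lemma applies.
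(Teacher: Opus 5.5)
Your proposal is correct and follows the paper's proof essentially step for step. You use the paths from \cref{lem:one-well-discrete-paths}, the ratio bound of \cref{cor:stationary-probability-ratios} with loss factor $n^{d^2+7(d-1)}$, a maximal-weight state on the side of the cut not containing $\lambda^{(m)}$, the rate floor $a_{\beta,d}$, and Cheeger's inequality with $r=d(d-1)n$; your two cases are just the paper's single argument applied to whichever of $S$ and $S^c$ omits $\lambda^{(m)}$, with reversibility moving the crossing edge into $Q(S,S^c)$.
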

\begin{proof}
    Take the paths from \cref{lem:one-well-discrete-paths}. Their free energy never falls far below its starting value, so \cref{cor:stationary-probability-ratios} gives
    \begin{equation}\label{eq:one-well-path-weights}
        \pi_\beta(\lambda)\leq K_{n,\beta,d}\pi_\beta(\eta)
        \qquad(\eta\in\Gamma_\lambda),
    \end{equation}
    where
    \begin{equation}\label{eq:one-well-weight-loss}
        K_{n,\beta,d}\leq C_{\beta,d}n^{d^2+7(d-1)}.
    \end{equation}
    Here the dimension bounds contribute $n^{d^2}$, while exponentiating the free energy loss in \cref{eq:one-well-path-height} contributes $n^{7(d-1)}$. All factors independent of $n$ are absorbed into $C_{\beta,d}$.

    We use these paths to bound the conductance from below. Write $N:=|\Omega_{n,d}|$, and consider any set $S$ with $0<\pi_\beta(S)\leq1/2$. Let $A$ be whichever of $S$ and $S^c$ does not contain $\lambda^{(m)}$. In either case, $\pi_\beta(A)\geq\pi_\beta(S)$. Choose a state $\lambda\in A$ of largest stationary probability. Since $A$ contains at most $N$ states,
    \begin{equation}
        \pi_\beta(\lambda)\geq\frac{\pi_\beta(A)}N
        \geq\frac{\pi_\beta(S)}N.
    \end{equation}
    The path from $\lambda$ to $\lambda^{(m)}$ must cross from $A$ to $A^c$ at some edge $(\eta,\eta')$. By reversibility, this edge contributes to $Q(S,S^c)$ regardless of which side contains $\lambda^{(m)}$. The rate bound~\cref{eq:young-uniform-rates} and \cref{eq:one-well-path-weights} therefore give 

    \begin{equation}
        Q(S,S^c)\geq Q(\eta,\eta')
        \geq a_{\beta,d}\pi_\beta(\eta)
        \geq\frac{a_{\beta,d}}{NK_{n,\beta,d}}\pi_\beta(S).
    \end{equation}
    Since $S$ was arbitrary, $\Phi\geq a_{\beta,d}/(NK_{n,\beta,d})$.

    By \cref{lem:young-state-space-geometry}, $N\leq(n+1)^{d-1}$ and each state has at most $d(d-1)$ neighbors. Together with $L_{\lambda\to\mu}\leq n$, this bounds the total exit rate by $d(d-1)n$. Cheeger's inequality \cref{eq:conductance-gap-lower-bound} now yields
    \begin{equation}
        \gap(L)\geq\frac{\Phi^2}{2d(d-1)n}
        \geq\frac{a_{\beta,d}^2}
          {2d(d-1)nN^2K_{n,\beta,d}^2}.
    \end{equation}
    Substituting the bounds on $N$ and $K_{n,\beta,d}$ gives a power of $2d^2+16(d-1)+1=O(d^2)$ in the denominator. Absorbing the factors independent of $n$ into a slightly larger power gives the claimed bound for all sufficiently large $n$.
\end{proof}

\Cref{thm:young-chain-fast-mixing} proves \cref{thm:spectral-sector-gap-bounds}\textup{(i)}, while \cref{thm:young-chain-slow-mixing} proves \cref{thm:spectral-sector-gap-bounds}\textup{(ii)}.

\section{Asymptotic Bounds on \texorpdfstring{$\betamin(d)$}{beta min(d)}}\label{app:beta-min-asymptotics}

Recall that, for $d\geq3$,
\begin{equation}
    \betamin(d)
    =
    \min_{r>1}
    \frac{r+d-1}{r-1}\log r.
\end{equation}
Equivalently, after setting $y=r-1$,
\begin{equation}
    \betamin(d)
    =
    \min_{y>0}
    \frac{y+d}{y}\log(1+y).
\end{equation}

\begin{proposition}[Asymptotics of $\betamin(d)$]\label{prop:beta-min-asymptotics}
    For $d\geq3$,
    \begin{equation}
        \log(d\log d)+1
        < \betamin(d)
        \leq \left(1+\frac{1}{\log d}\right)\log(1+d\log d).
    \end{equation}
    Consequently,
    \begin{equation}
        \betamin(d)
        =
        \log d+\log\log d+1+o_d(1)
        \qquad
        \text{as }d\to\infty.
    \end{equation}
\end{proposition}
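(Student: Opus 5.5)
The upper bound comes from evaluating the minimand at one well-chosen point. The lower bound comes from rewriting the value of $\betamin(d)$ at its minimizer through the critical-point equation, and then proving that the minimizer is large enough.

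\textbf{Upper bound.} Write $g_d(y)=\frac{y+d}{y}\log(1+y)$. Substituting the trial point $y=d\log d$ gives
\begin{equation}
    g_d(d\log d)=\left(1+\frac{1}{\log d}\right)\log(1+d\log d),
\end{equation}
which is exactly the stated upper bound, since $\betamin(d)\leq g_d(y)$ for every $y>0$.

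\textbf{Reduction for the lower bound.} Let $s=r_*=1+y_*$ be the unique minimizer from \cref{subsec:proof-spectral-free-energy-landscape}. It is characterized by the vanishing of the numerator in \cref{eq:beta-prime-proof}:
\begin{equation}
    (s-1)(s+d-1)=d\,s\log s .
\end{equation}
Equivalently, $\frac{d\log s}{s-1}=\frac{s+d-1}{s}$. Substituting this into $\betamin(d)=\log s+\frac{d\log s}{s-1}$ gives the identity
\begin{equation}
    \betamin(d)=\log s+1+\frac{d-1}{s}=:1+h(s).
\end{equation}
So it suffices to show $h(s)>\log(d\log d)$.

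\textbf{Lower bound on $s$.} The function $h$ decreases on $(0,d-1]$ and increases on $[d-1,\infty)$. The plan is therefore to find an explicit $s_0\geq d-1$ with two properties:
\begin{enumerate}[label=(\alph*)]
    \item $h(s_0)\geq\log(d\log d)$;
    \item $s_0<r_*$.
\end{enumerate}
Together these give $h(r_*)>h(s_0)\geq\log(d\log d)$. By the sign analysis of $\beta_d'$ in \cref{subsec:proof-spectral-free-energy-landscape}, condition (b) is equivalent to
\begin{equation}
    (s_0-1)^2-d\bigl(s_0\log s_0-s_0+1\bigr)<0 .
\end{equation}
Heuristically $r_*\approx d\log d$, and solving $h(s_0)=\log(d\log d)$ suggests $s_0\approx d(\log d-1)$. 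I would first try $s_0=d(\log d-1)+1$ or $s_0=d\log d-d$ and verify both inequalities by elementary estimates such as $\log(1-x)\geq -x-x^2$.

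\textbf{Expected obstacle.} This verification is the main difficulty. The margins are only $O(1/\log d)$, so the estimates must be done carefully. For small $d$ (say $3\leq d\leq d_0$) it may be necessary to either choose $s_0$ differently or check the inequality numerically.

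\textbf{Asymptotics.} Both bounds equal $\log d+\log\log d+1+o_d(1)$. For the upper bound,
\begin{equation}
    \log(1+d\log d)=\log d+\log\log d+o(1)
    \quad\text{and}\quad
    \frac{\log(1+d\log d)}{\log d}\to 1 .
\end{equation}
This yields the stated expansion of $\betamin(d)$.
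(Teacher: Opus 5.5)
Your route is valid and differs from the paper's. However, you only announce the step you flag as the main difficulty; you never carry it out. It does go through, and more easily than you expect.

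Put $a=\log d$ and $s_0=d(a-1)$, so that $\log s_0=a+\log(a-1)$ and $d(a-1)=s_0$. A direct expansion gives
\[
(s_0-1)^2-d\bigl(s_0\log s_0-s_0+1\bigr)=-d\,s_0\log(a-1)-2s_0-(d-1).
\]
This is negative as soon as $a\geq 2$, so (b) holds with a wide margin. Indeed, $r_*=d(\log d+\log\log d-1+o_d(1))$.

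For (a), with $q=1/(a-1)$ one has exactly
\[
h(s_0)-\log(d\log d)=q-\log(1+q)-\frac{q}{d}\geq q\Bigl(\frac{q}{2}-\frac{q^2}{3}-\frac{1}{d}\Bigr),
\]
using $\log(1+q)\leq q-\frac{q^2}{2}+\frac{q^3}{3}$ for $q\geq 0$. For $d\geq 13$ we have $q\leq 0.64$ and $s_0\geq d-1$. The bracket is then positive, since $\frac{q}{2}-\frac{q^2}{3}\geq 0.28\,q>\frac{1}{d}$. The margin in (a) is of order $(\log d)^{-2}$, but this single Taylor bound handles it.

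For $3\leq d\leq 12$ you need neither a trial point nor numerics. The bound $h\geq h(d-1)$ gives $\betamin(d)\geq 2+\log(d-1)$. This exceeds $1+\log(d\log d)$ because $e(1-1/d)-\log d$ is decreasing for $d\geq 3$ and still positive at $d=12$.

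Your upper bound and your asymptotic step coincide with the paper's; the lower bounds are organized differently. The paper never locates the minimizer. It proves the pointwise inequality $\bigl(1+\frac{d}{e^x-1}\bigr)x>a+\log a+1$ for every $x>0$. It does this via the substitution $u=A-x$ and a calculus analysis of $p(u)=u(1+eae^{-u}-e^{-a})$ on $[0,A]$, split into the cases $a\leq e$ and $a\geq e$.

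You instead use the stationarity condition to collapse $\betamin(d)$ to $1+\log r_*+(d-1)/r_*$, which explains the ``$+1$''. You then reduce everything to a single sign check of $\beta_d'$ at a trial point. This is shorter, and it also localizes $r_*$; better trial points would yield further terms of the expansion.

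The price is that you rely on the unimodality of $\beta_d$ from \cref{subsec:proof-spectral-free-energy-landscape}. That rests on the monotonicity of $(r-1)^2/(r\log r-r+1)$, which is only asserted there. The paper's argument for this proposition does not need it.
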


\begin{proof}
    We first prove the upper bound. Evaluating the objective at $y=d\log d$ gives
    \begin{equation}
        \betamin(d)
        \leq \frac{d\log d+d}{d\log d}\log(1+d\log d)
        = \left(1+\frac{1}{\log d}\right)\log(1+d\log d).
    \end{equation}

    We now prove the lower bound. Set $x:=\log(1+y)$, and $y=e^x-1$. Then
    \begin{equation}
        \betamin(d)
        =
        \min_{x>0}
        \left(1+\frac{d}{e^x-1}\right)x.
    \end{equation}
    Define
    \begin{equation}
        a:=\log d,
        \qquad
        A:=\log(d\log d)+1=a+\log a+1.
    \end{equation}
    We show that, for every $x>0$,
    \begin{equation}
        \left(1+\frac{d}{e^x-1}\right)x
        \geq
        A.
    \end{equation}
    First, if $x\geq A$, then $\left(1+\frac{d}{e^x-1}\right)x > x \geq A$, so it remains to consider $0<x<A$. Set $u := A-x$. Then $0<u<A$. The desired inequality is
    \begin{equation}
        A
        \leq
        x+\frac{dx}{e^x-1}.
    \end{equation}
    Since $d=e^a$ and $A=a+\log a+1$, this is equivalent to
    \begin{align}
        A
        \leq
        x+\frac{e^a x}{e^x-1}
        \Longleftrightarrow
        (A-x)(e^x-1)
        \leq
        e^a x
        \Longleftrightarrow
        u(e^{A-u}-1)
        \leq
        e^a(A-u).
    \end{align}
    Using $e^A = e^{a+\log a+1} = eae^a$, the last inequality becomes
    \begin{align}
        u(eae^{a-u}-1)
        \leq
        e^a(A-u)
        \Longleftrightarrow
        eaue^{-u}-ue^{-a}
        \leq
        A-u
        \Longleftrightarrow
        u(1+eae^{-u}-e^{-a})
        \leq
        A.
    \end{align}
    Thus it suffices to prove $p(u)\leq A$ for $0\leq u\leq A$, where $p(u) := u(1+eae^{-u}-e^{-a})$. We have $p(0)=0$ and
    \begin{align}
        p(A)
        =
        A(1+eae^{-A}-e^{-a})
        =
        A(1+e^{-a}-e^{-a})
        =
        A.
    \end{align}
    It remains to control possible interior maxima. Differentiating gives,
    \begin{equation}
        p'(u)
        =
        1-e^{-a}+eae^{-u}(1-u),\qquad \text{and}\qquad
        p''(u)
        =
        eae^{-u}(u-2).
    \end{equation}
    Hence $p'$ decreases on $(0,2)$ and increases on $(2,\infty)$. Since $p'(u)>0$ for $0\leq u\leq1$, any interior local maximum must occur at some $u_0\in(1,2)$. If no such local maximum exists, then the maximum of $p$ on $[0,A]$ is attained at an endpoint, and the claim follows. Suppose, then, that such a local maximum exists. Write
    \begin{equation}
        u_0=1+w,
        \qquad
        0<w<1.
    \end{equation}
    Since $p'(u_0)=0$,
    \begin{align}
        0
        =
        1-e^{-a}+eae^{-(1+w)}(1-(1+w))
        =
        1-e^{-a}-awe^{-w}.
    \end{align}
    Thus $1-e^{-a} = awe^{-w}$. At this point,
    \begin{align}
        p(u_0)
        =
        (1+w)(1-e^{-a}+eae^{-(1+w)})
        =
        (1+w)(awe^{-w}+ae^{-w})
        =
        ae^{-w}(1+w)^2.
    \end{align}
    We now bound this expression. First suppose $1<a\leq e$. The function $w\mapsto e^{-w}(1+w)^2$ is increasing on $0<w<1$. Therefore $p(u_0) \leq \frac{4a}{e}$ which is $\leq a+\log a + 1 = A$ for $1\leq a\leq e$. Indeed, define
    \begin{equation}
        \psi(a)
        :=
        a+\log a+1-\frac{4a}{e}.
    \end{equation}
    Then $\psi$ is concave on $[1,e]$, so its minimum on this interval is attained at an endpoint. Since
    \begin{equation}
        \psi(1)=2-\frac4e>0,
        \qquad
        \psi(e)=e-2>0,
    \end{equation}
    we have $\psi(a)>0$ on $[1,e]$. Hence $p(u_0)\leq A$. Now suppose $a\geq e$. Since $u_0\in(1,2)$ and $u_0e^{1-u_0}\leq1$, we get
    \begin{align}
        p(u_0)
        =
        u_0(1-e^{-a})+a u_0e^{1-u_0}
        \leq
        u_0+a
        <
        a+2.
    \end{align}
    Since $a\geq e$, we have $\log a\geq1$, and therefore $a+2 \leq a+\log a+1 = A$. Thus $p(u_0)\leq A$ also in this case.

    We have shown that every possible interior local maximum of $p$ is at most $A$, and the endpoints are also at most $A$. Therefore
    \begin{equation}
        p(u)\leq A
        \qquad
        \text{for all }0\leq u\leq A.
    \end{equation}
    The inequalities above are strict for $0<u<A$: at a possible interior maximum this follows from the strict endpoint estimates for $\psi$, while the only endpoint at which $p(u)=A$ is $u=A$, corresponding to the excluded value $x=0$. Hence,
    \begin{equation}
        \betamin(d) = \min_{x > 0}\left(1+\frac{d}{e^x-1}\right)x
        >
        A.
    \end{equation}
    Finally,
    \begin{equation}
        \log\left(1+\frac{1}{d\log d}\right)=o_d(1),
        \qquad
        \frac{\log\log d}{\log d}=o_d(1),
    \end{equation}
    so the upper bound is
    \begin{equation}
        \betamin(d)
        \leq
        \log(d\log d)+1+o_d(1).
    \end{equation}
    Since $\log(d\log d)+1 = \log d+\log\log d+1$, we conclude
    \begin{equation}
        \betamin(d)
        =
        \log d+\log\log d+1+o_d(1).
    \end{equation}
\end{proof}

\section{The Nontrivial \texorpdfstring{$SU(d)$}{SU(d)} Sector}\label{app:sud-sector}

For a single Hermitian jump operator $A$, let $\calL_A$ denote the Davies generator with the same Hamiltonian and inverse temperature, generated by the single jump $A$. For $\alpha=1,\ldots,d^2-1$, define the total Gell--Mann generator
\begin{equation}
    \Lambda_\alpha^{\tot}
    :=
    \sum_{u=1}^n (\lambda^\alpha)^{(u)}.
\end{equation}
We define the $SU(d)$ group-mixer generator by
\begin{equation}
    \calL_{SU(d)}(O)
    :=
    \sum_{\alpha=1}^{d^2-1}
    \calL_{\Lambda_\alpha^{\tot}/\sqrt{2d}}(O).
\end{equation}
Since each $\Lambda_\alpha^{\tot}$ commutes with the interchange Hamiltonian, these jumps have only Bohr frequency zero. With the Metropolis weight $\gamma(0)=1$, \cref{eq:zero-frequency-davies} gives
\begin{equation}\label{eq:su-group-mixer-double-commutator}
    \calL_{SU(d)}(O)
    =
    -\frac{1}{4d} \sum_{\alpha=1}^{d^2-1} [\Lambda_\alpha^{\tot}, [\Lambda_\alpha^{\tot},O]].
\end{equation}

\begin{lemma}\label{lem:group-mixer-gap}
    The kernel of $\calL_{SU(d)}$ is
    \begin{equation}
        \ker(\calL_{SU(d)})
        =
        \mathrm{comm}(SU(d)).
    \end{equation}
    Moreover,
    \begin{equation}
        -\calL_{SU(d)}
        \big\vert_{\mathrm{comm}(SU(d))^{\perp_\rho}}
        \geq
        \idty
        \big\vert_{\mathrm{comm}(SU(d))^{\perp_\rho}}.
    \end{equation}
    In fact, $\gap(\calL_{SU(d)}) = 1$.
\end{lemma}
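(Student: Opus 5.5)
The approach is to identify $-\calL_{SU(d)}$ with a quadratic Casimir acting on the observable representation. By \cref{eq:su-group-mixer-double-commutator}, $-\calL_{SU(d)}(O)=\frac{1}{4d}\sum_\alpha \mathrm{ad}_{\Lambda^{\tot}_\alpha}^2(O)$. The map $X\mapsto \mathrm{ad}_{\sum_u X^{(u)}}$ is the differential of the conjugation representation $\calR(U)(O)=U^{\otimes n}OU^{\dagger\otimes n}$. Hence $\sum_\alpha \mathrm{ad}_{\Lambda_\alpha^{\tot}}^2$ is the Casimir element $\sum_\alpha \lambda^\alpha\lambda^\alpha$ evaluated in $\calR$.

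The first step is self-adjointness and invariant subspaces. Each $\Lambda^{\tot}_\alpha$ commutes with $H$, hence with $\rho$. So $\mathrm{ad}_{\Lambda^{\tot}_\alpha}$ is KMS anti-self-adjoint: $\langle X,[\Lambda,Y]\rangle_\rho=\Tr(\rho^{1/2}X^\dagger\rho^{1/2}[\Lambda,Y])=-\langle[\Lambda,X],Y\rangle_\rho$, using that $\Lambda$ is Hermitian and commutes with $\rho^{1/2}$. Therefore $-\calL_{SU(d)}=\frac1{4d}\sum_\alpha \mathrm{ad}^\dagger\mathrm{ad}\ge0$. Its kernel is $\{O:[\Lambda^{\tot}_\alpha,O]=0\ \forall\alpha\}$, which by connectedness of $SU(d)$ (exponentiating) equals $\mathrm{comm}(SU(d))$.

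Next, decompose $\calB(\calH)$ into isotypic components of $\calR$. Since $\calR$ is KMS-unitary (as noted in the preliminaries), these components are KMS-orthogonal. Since $\calR(zI)=\mathrm{id}$ for central $z$, every irrep occurring has $n$-charge $0$; equivalently, $\calR\cong \calH\otimes\overline{\calH}$ only contains irreps $V^\kappa$ with $|\kappa|\equiv 0\pmod d$ (adjoint-type labels). The Casimir $\sum_\alpha \lambda^\alpha\lambda^\alpha$ acts on an irrep with highest weight $\kappa$ by the scalar $C_2(\kappa)$. In the physicists' normalization $\Tr(\lambda^\alpha\lambda^\beta)=2\delta$, this scalar equals $4$ times the standard Casimir $\frac12\sum T^aT^a$-value... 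Concretely, I would fix the normalization by the adjoint: with $T^a=\lambda^a/2$, $\sum T^aT^a$ equals $d$ on the adjoint. So $\sum_\alpha\lambda^\alpha\lambda^\alpha=4d$ there, and $-\calL_{SU(d)}$ acts as $1$ on adjoint components.

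The main step, and the main obstacle, is to show that every nontrivial irrep appearing in $\calR$ has Casimir at least that of the adjoint, i.e. $\sum T^aT^a\ge d$. I would use the formula $C_2(\kappa)=\langle\kappa,\kappa+2\rho_W\rangle$. For a nonzero dominant weight in the root lattice (which is exactly the charge-zero condition), write $\kappa=\sum m_i\alpha_i$ with $m_i\ge0$ in the positive root cone. Then $\langle\kappa,2\rho_W\rangle=2\,\mathrm{ht}(\kappa)$, and $\langle\kappa,\kappa\rangle\ge2$ since the root lattice is even. A nonzero dominant root-lattice weight has height at least that of the highest root $\theta$, namely $d-1$: one argues that any such $\kappa$ dominates $\theta$, because $\theta$ is the unique minimal nonzero dominant element of the root lattice. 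This gives $C_2(\kappa)\ge 2+2(d-1)=2d$ in the normalization where $C_2(\theta)=2d$, matching the adjoint. Hence $-\calL_{SU(d)}\ge1$ on $\mathrm{comm}(SU(d))^{\perp_\rho}$.

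Finally, equality $\gap=1$ follows by exhibiting an adjoint vector. The operator $O=\Lambda^{\tot}_1$ is KMS-orthogonal to $\mathrm{comm}(SU(d))$ (it lies in the adjoint isotypic component, which is orthogonal to the trivial one) and satisfies $-\calL_{SU(d)}(O)=O$ by the Casimir computation. It is also nonzero for $n\ge1$.
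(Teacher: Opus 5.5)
Your proposal is correct and follows the paper's proof essentially step for step: the kernel via the commutator Dirichlet form, $-\calL_{SU(d)}$ as $\frac{1}{4d}$ times the quadratic Casimir of the observable representation, the central character forcing $|\nu|\equiv 0 \pmod d$, the bound $C_2\ge 4d$ on every nontrivial summand, and equality on $\Lambda_b^{\tot}$. The paper writes the Casimir as $C_2=2\bigl[\sum_i a_i^2+2\sum_{k<d}S_k\bigr]$ with partial sums $S_k=a_1+\cdots+a_k$, which is exactly $2\langle\kappa,\kappa+2\rho_W\rangle$ in root coordinates. Its elementary argument that $S_k\ge 1$ for all $k<d$ is precisely a proof of the fact you only cite, that a nonzero dominant root-lattice weight dominates $\theta$, and $\sum_i a_i^2\ge 2$ is your even-lattice bound.

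One small slip: for Hermitian $\Lambda$ commuting with $\rho$, $\mathrm{ad}_\Lambda$ is KMS self-adjoint, i.e. $\langle X,[\Lambda,Y]\rangle_\rho=+\langle[\Lambda,X],Y\rangle_\rho$. If it were anti-self-adjoint, $\mathrm{ad}^2$ would be $\le 0$. Self-adjointness is what actually yields $-\calL_{SU(d)}=\frac1{4d}\sum_\alpha\mathrm{ad}^\dagger\mathrm{ad}\ge 0$.
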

\begin{proof}[Proof of \cref{lem:group-mixer-gap}]
     By \cref{eq:dirichlet-form-commutator}, its Dirichlet form is
    \begin{equation}
        \langle O,-\calL_{SU(d)}(O)\rangle_\rho
        =
        \frac{1}{4d}
        \sum_{\alpha=1}^{d^2-1}
        \Vert[\Lambda_\alpha^{\tot},O]\Vert_\rho^2.
    \end{equation}
    Therefore $\calL_{SU(d)}(O)=0$ if and only if $[\Lambda_\alpha^{\tot},O] = 0$ for every $\alpha=1,\ldots,d^2-1$. Equivalently, $O$ commutes with the global $SU(d)$ action. Hence, $\ker(\calL_{SU(d)}) = \mathrm{comm}(SU(d))$.
    
     It remains to compute the first non-zero eigenvalue. For a unitary representation $R$ of $SU(d)$, let $G_\alpha$ be its Hermitian generators, defined by $R(e^{it\lambda^\alpha})=e^{itG_\alpha}$. The quadratic Casimir $\sum_\alpha G_\alpha^2$ commutes with the representation and hence acts as a scalar on each irreducible summand, by Schur's lemma. For the $SU(d)$ observable representation of \cref{def:observable-representation}, the Hermitian generators are $G_\alpha(O)=[\Lambda_\alpha^{\tot},O]$, so $-\calL_{SU(d)}=\frac{1}{4d}\sum_\alpha G_\alpha^2$. We decompose the observable representation into irreducible $SU(d)$-modules. Schur--Weyl duality gives
    \begin{equation}
        \calB(\calH)
        \cong
        \bigoplus_{\lambda,\mu\in\Omega_{n,d}}
        \left(V^\lambda\otimes(V^\mu)^*\right)
        \otimes
        \left(W^\lambda\otimes(W^\mu)^*\right).
    \end{equation}
    Under the $SU(d)$ action, the Specht factors are multiplicity spaces. Note that the tensor product $V^\lambda\otimes(V^\mu)^*$ need not be irreducible. Decomposing these tensor products further gives the irreducible summands $V^\nu$ of the observable representation. Choose a partition label $\nu$ with at most $d$ rows for each such summand and set $m:=|\nu|$. On $V^\nu$, the quadratic Casimir acts as the scalar
    \begin{equation}\label{eq:su-casimir-partition}
        C_2(\nu)
        =2\left[\sum_{i=1}^d\nu_i(\nu_i+d+1-2i)-\frac{m^2}{d}\right],
        \qquad m=\sum_{i=1}^d\nu_i.
    \end{equation}
    This is the Casimir eigenvalue formula written in terms of the partition labeling the representation; see~\cite[Proposition~10.6]{Hall2015}. The factor of $2$ corresponds to our normalization $\Tr(\lambda^\alpha\lambda^\beta)=2\delta_{\alpha\beta}$. Thus $-\calL_{SU(d)}$ acts on this summand as $C_2(\nu)/(4d)$.

    We now identify a restriction on the partitions that can occur in the observable representation. If $z^d=1$, then $zI\in SU(d)$ acts on the physical Hilbert space $\calH=(\mathbb C^d)^{\otimes n}$ as $(zI)^{\otimes n} = z^nI$. The phase cancels under conjugation, so $zI$ acts as the identity on $\calB(\calH)$ and therefore on every irreducible summand $V^\nu$. On the other hand, Schur--Weyl duality realizes the representation $V^\nu$ inside $(\mathbb C^d)^{\otimes m}$, where $zI$ acts as $z^mI$. Consequently, $z^m=1$ for every $z$ satisfying $z^d=1$. Thus $m=qd$ for an integer $q$. Consider
    \begin{equation}
    a_i:=\nu_i-q,
    \qquad
    a_1\geq\cdots\geq a_d,
    \qquad
    a_i\in\mathbb Z,
    \qquad
    \sum_{i=1}^da_i=0.
    \end{equation}
    
    If $V^\nu$ is non-trivial, its rows are not all equal, so $a\neq0$. The partition formula~\cref{eq:su-casimir-partition} gives
    \begin{equation}
        C_2(\nu)
        =
        2\left[
            \sum_{i=1}^da_i^2
            -
            2\sum_{i=1}^d ia_i
        \right].
    \end{equation}
    Set
    \begin{equation}
        S_k:=\sum_{i=1}^ka_i,
        \qquad
        k=1,\ldots,d.
    \end{equation}
    Since $a$ is non-increasing and has total sum zero, each $S_k\geq0$. If $S_k=0$ for some $k<d$, then the first $k$ entries and the last $d-k$ entries both have average zero. Since $a$ is non-increasing, this forces all entries of $a$ to be zero, contradicting $a\neq0$. Since $a$ is a non-zero integer vector, in fact $S_k\geq1$ for every $k=1,\ldots,d-1$. Summation by parts gives
    \begin{equation}
        -2\sum_{i=1}^d ia_i
        =
        2\sum_{k=1}^{d-1}S_k
        \geq
        2(d-1).
    \end{equation}
    Also, since $a$ is a non-zero integer vector with total sum zero,
    \begin{equation}
        \sum_{i=1}^da_i^2\geq2.
    \end{equation}
    Therefore
    \begin{equation}
        C_2(\nu)
        \geq
        2\left(2+2(d-1)\right)
        =
        4d.
    \end{equation}
    Thus every non-trivial irreducible summand has eigenvalue at least $1$ under $-\calL_{SU(d)}$. Moreover, equality is attained. The adjoint representation occurs inside the observable algebra, for example in the span of the total generators $\mathrm{span}\{\Lambda_1^{\tot},\ldots,\Lambda_{d^2-1}^{\tot}\}$. The Gell--Mann completeness relation gives $\sum_\alpha[\lambda^\alpha,[\lambda^\alpha,X]]=4dX-4\Tr(X)I$. Since operators on different sites commute, this implies
    \begin{equation}
        \sum_{\alpha=1}^{d^2-1}
        [\Lambda_\alpha^{\tot},[\Lambda_\alpha^{\tot},\Lambda_b^{\tot}]]
        =4d\,\Lambda_b^{\tot},
        \qquad b=1,\ldots,d^2-1.
    \end{equation}
    Since each nonzero total generator $\Lambda_b^{\tot}$ is an eigenvector of $-\calL_{SU(d)}$ with eigenvalue $1$, we conclude that $\gap(\calL_{SU(d)})=1$. Thus
    \begin{equation}
        -\calL_{SU(d)}
        \big\vert_{\mathrm{comm}(SU(d))^{\perp_\rho}}
        \geq
        I
        \big\vert_{\mathrm{comm}(SU(d))^{\perp_\rho}}.
    \end{equation}
\end{proof}

\begin{lemma}[Unitary freedom of Davies jump operators]\label{lem:unitary-freedom-davies}
    Let $\{A_a\}_{a=1}^m$ and $\{B_b\}_{b=1}^m$ be two families of Hermitian jump operators related by a unitary change of basis in the jump index:
    \begin{equation}
        B_b
        =
        \sum_{a=1}^m U_{ba}A_a,
        \qquad
        U\in U(m).
    \end{equation}
    Then the corresponding Davies generators are equal, $\calL_{\{A_a\}_{a=1}^m} = \calL_{\{B_b\}_{b=1}^m}.$
\end{lemma}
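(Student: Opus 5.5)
The plan is to show that the Davies generator depends on the jump family only through the bilinear combination $\sum_a A_a(\omega)^\dagger\otimes A_a(\omega')$, and that this combination is unchanged by a unitary change of jump index. First I record that the Bohr-frequency decomposition is linear in the operator. Indeed $A\mapsto A(\omega)=\sum_{E_+-E_-=\omega}\Pi_{E_+}A\Pi_{E_-}$ is a linear map, so
\begin{equation}
    B_b(\omega)=\sum_{a=1}^m U_{ba}A_a(\omega)
\end{equation}
for every $\omega\in\Bohr(H)$. This linearity depends only on $H$ and not on the jumps.

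Next I substitute this expression into the two terms of \cref{def:davies-generator} at fixed $\omega$. For the jump term,
\begin{equation}
    \sum_b B_b(\omega)^\dagger X B_b(\omega)=\sum_{a,a'}\Bigl(\sum_b \overline{U_{ba}}U_{ba'}\Bigr)A_a(\omega)^\dagger XA_{a'}(\omega).
\end{equation}
The inner sum equals $(U^\dagger U)_{aa'}=\delta_{aa'}$, so the right-hand side is $\sum_a A_a(\omega)^\dagger XA_a(\omega)$. The same computation gives $\sum_b B_b(\omega)^\dagger B_b(\omega)=\sum_a A_a(\omega)^\dagger A_a(\omega)$, which handles the anticommutator term. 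The weight $\gamma(\omega)$ is common to both families, so summing over $\omega$ yields $\calL_{\{B_b\}}=\calL_{\{A_a\}}$.

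One subtlety is that the statement calls the $B_b$ Hermitian, but a general complex unitary $U$ need not preserve Hermiticity. The argument above never uses Hermiticity; it only needs the adjoint to be taken properly through $A(\omega)^\dagger$, so the identity holds for arbitrary families. Hermiticity matters only later, for the commutator Dirichlet form \cref{eq:dirichlet-form-commutator}. In the intended application one can take $U$ real orthogonal, which does keep the jumps Hermitian.

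No step here is a real obstacle; the only point needing care is this bookkeeping of complex conjugates in the unitarity contraction.
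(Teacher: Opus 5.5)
Your proposal is correct and follows essentially the same route as the paper's proof. Both arguments use linearity of the Bohr decomposition and the contraction $\sum_b \overline{U_{ba}}U_{bc}=\delta_{ac}$ in the jump term, and get the anticommutator term from the case $X=I$. Your side remark is also accurate: the identity never uses Hermiticity, and the paper's application keeps the jumps Hermitian because it uses a real orthogonal change of basis.
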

\begin{proof}[Proof of \cref{lem:unitary-freedom-davies}]
    For every Bohr frequency $\omega$, the Bohr components satisfy
    \begin{equation}
        B_b(\omega)
        =
        \sum_{a=1}^m U_{ba}A_a(\omega).
    \end{equation}
    Therefore, by unitarity of $U$,
    \begin{align}
        \sum_{b=1}^m B_b(\omega)^\dagger X B_b(\omega)
        &=
        \sum_{a,c=1}^m
        \left(\sum_{b=1}^m \overline{U_{ba}}U_{bc}\right)
        A_a(\omega)^\dagger X A_c(\omega)\\
        &=
        \sum_{a=1}^m A_a(\omega)^\dagger X A_a(\omega).
    \end{align}
    and so the completely positive parts of their associated Davies generators agree. Taking $X=I$ one sees that their dissipative terms likewise agree.
\end{proof}

\begin{proposition}[Single-site jumps simulate the $SU(d)$ group mixer]\label{prop:single-site-simulates-su-mixer}
    There exists a Davies generator $\Lrest$ such that
    \begin{equation}
        \Lloc
        =
        \frac1n\calL_{SU(d)} + \Lrest .
    \end{equation}
    Consequently,
    \begin{equation}
        -\Lloc
        \geq
        -\frac1n\calL_{SU(d)}
    \end{equation}
    as quadratic forms in the KMS inner product.
\end{proposition}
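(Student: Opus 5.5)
The plan is to realize the single-site jump family, for each fixed $\alpha$, through a unitary change of basis in the site index, and then apply \cref{lem:unitary-freedom-davies}. Fix $\alpha\in\{1,\ldots,d^2-1\}$ and consider the $n$ operators $J_1^\alpha,\ldots,J_n^\alpha$. Choose a unitary $U\in U(n)$ whose first row is the normalized all-ones vector $(1/\sqrt n,\ldots,1/\sqrt n)$; for instance, take the discrete Fourier matrix. Since we need Hermitian jumps to stay inside the framework of \cref{def:davies-generator}, we should prefer a real orthogonal matrix, such as a Helmert matrix, whose first row is $n^{-1/2}\mathbf 1$. Define
\begin{equation}
    B_b^\alpha:=\sum_{u=1}^n U_{bu}J_u^\alpha .
\end{equation}
These operators are Hermitian, and
\begin{equation}
    B_1^\alpha=\frac{1}{\sqrt n}\sum_u J_u^\alpha=\frac{1}{\sqrt{n}}\frac{\Lambda_\alpha^{\tot}}{\sqrt{2d}}.
\end{equation}
The remaining operators $B_b^\alpha$ with $b\geq2$ are the orthogonal combinations.

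The steps, in order, are as follows. First, \cref{lem:unitary-freedom-davies}, applied separately for each $\alpha$ and then summed over $\alpha$, shows that $\Lloc$ equals the Davies generator of the family $\{B_b^\alpha\}_{b,\alpha}$. Second, note that the Davies generator of \cref{def:davies-generator} is additive over disjoint unions of jump families. It is also quadratic in each jump: $\calL_{cA}=|c|^2\calL_A$, because the Bohr components scale linearly, $(cA)(\omega)=cA(\omega)$. Hence
\begin{equation}
    \Lloc=\sum_\alpha\calL_{B_1^\alpha}+\sum_\alpha\sum_{b\geq2}\calL_{B_b^\alpha}
    =\frac1n\sum_\alpha\calL_{\Lambda_\alpha^{\tot}/\sqrt{2d}}+\Lrest
    =\frac1n\calL_{SU(d)}+\Lrest,
\end{equation}
where $\Lrest$ is the Davies generator with Hermitian jumps $\{B_b^\alpha:b\geq2\}$. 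Third, $\Lrest$ is a Davies generator with Hermitian jumps and Metropolis weights. By \cref{eq:dirichlet-form-commutator} its Dirichlet form is a nonnegative sum of terms $h_\omega\Vert[B(\omega),X]\Vert_\rho^2$, so $-\Lrest\geq0$ in the KMS inner product. This gives $-\Lloc\geq-\frac1n\calL_{SU(d)}$.

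I expect no serious obstacle here; the points needing care are bookkeeping. One is the normalization check that $J_u^\alpha=\lambda^\alpha_u/\sqrt{2d}$ produces exactly the $\frac1n$ prefactor together with the $\frac{1}{4d}$ in \cref{eq:su-group-mixer-double-commutator}. The other is insisting on a real orthogonal $U$, so that the $B_b^\alpha$ remain Hermitian and \cref{eq:dirichlet-form-commutator} applies verbatim to $\Lrest$. If a complex unitary were used instead, I would only need to observe that the Davies form $\sum_\omega\gamma(\omega)[\,\cdots]$ is still KMS-negative semidefinite for non-Hermitian jumps, as long as $A^\dagger$ is included in the family. The real Helmert choice avoids this issue entirely.
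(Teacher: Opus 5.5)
Your proposal is correct and follows the paper's proof essentially step for step. The shared steps are: a real orthogonal change of basis in the site index whose first row is $n^{-1/2}(1,\ldots,1)$; \cref{lem:unitary-freedom-davies} applied for each $\alpha$; splitting off the first jump via $\calL_{cA}=|c|^2\calL_A$ to obtain $\frac1n\calL_{SU(d)}$; and noting that the remaining Hermitian-jump Davies generator $\Lrest$ is KMS negative semidefinite by \cref{eq:dirichlet-form-commutator}.
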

\begin{proof}[Proof of \cref{prop:single-site-simulates-su-mixer}]
    Fix $\alpha\in\{1,\ldots,d^2-1\}$. Choose a real orthogonal matrix $R\in O(n)$ whose first row is the normalized all-ones vector:
    \begin{equation}
        R_{1u}=\frac1{\sqrt n},
        \qquad
        u=1,\ldots,n.
    \end{equation}
    Apply this change of basis to the $n$ normalized single-site jumps
    \begin{equation}
        J_1^\alpha,\ldots,J_n^\alpha,
        \qquad
        J_u^\alpha
        =
        \frac{1}{\sqrt{2d}}(\lambda^\alpha)^{(u)}.
    \end{equation}
    Define
    \begin{equation}
        B_{\alpha,k}
        :=
        \sum_{u=1}^n R_{ku}J_u^\alpha,
        \qquad
        k=1,\ldots,n.
    \end{equation}
    Since $R$ is real and the $J_u^\alpha$ are Hermitian, each $B_{\alpha,k}$ is Hermitian. Moreover,
    \begin{equation}
        B_{\alpha,1}
        =
        \frac1{\sqrt n}
        \sum_{u=1}^n J_u^\alpha
        =
        \frac{1}{\sqrt{2dn}}
        \Lambda_\alpha^{\tot}.
    \end{equation}
    By \cref{lem:unitary-freedom-davies},
    \begin{equation}
        \sum_{u=1}^n \calL_{J_u^\alpha}
        =
        \sum_{k=1}^n \calL_{B_{\alpha,k}}.
    \end{equation}
    Separating the first term and using $\calL_{cA}=\vert c\vert^2\calL_A$ gives
    \begin{equation}
        \sum_{u=1}^n \calL_{J_u^\alpha}
        =
        \frac1n
        \calL_{\Lambda_\alpha^{\tot}/\sqrt{2d}}
        +
        \sum_{k=2}^n\calL_{B_{\alpha,k}}.
    \end{equation}
    Summing over $\alpha=1,\ldots,d^2-1$ yields
    \begin{equation}
        \Lloc
        =
        \frac1n\calL_{SU(d)}
        +
        \Lrest,
    \end{equation}
    where
    \begin{equation}
        \Lrest
        :=
        \sum_{\alpha=1}^{d^2-1}
        \sum_{k=2}^n
        \calL_{B_{\alpha,k}}.
    \end{equation}
    Since $\Lrest$ is a Davies generator generated by Hermitian jumps, it is negative semidefinite in the KMS inner product. Therefore
    \begin{equation}
        -\Lloc
        =
        -\frac1n\calL_{SU(d)}
        -
        \Lrest
        \geq
        -\frac1n\calL_{SU(d)}
    \end{equation}
    as quadratic forms.
\end{proof}

\section{The Permutation-Nonsymmetric Sector \texorpdfstring{$\mathrm{comm}(S_n)^{\perp_\rho}$}{}}\label{app:Sn-sector}

In this appendix we control the permutation-non-symmetric sector $\mathrm{comm}(S_n)^{\perp_\rho}$ by comparison with an $S_n$ group mixer. Under the observable representation (\cref{def:observable-representation}), the auxiliary generator is a central element of the group algebra $\C[S_n]$. Thus, we may compute its spectral gap directly from the content formula for irreducible representations of $S_n$. Here, for $\sigma\in S_n$, this representation is
\begin{equation}
    \mathcal R(\sigma)(O)
    =
    P(\sigma)O P(\sigma)^\dagger.
\end{equation}
Because $H$, and therefore $\rho$, commutes with every permutation, $\mathcal R$ is a unitary representation of $S_n$ for the KMS inner product. Define
\begin{align}\label{eq:S_n-mixer}
    \mathcal L_{S_n}(O)
    &:=
    \frac{2}{n}
    \sum_{1\leq u<v\leq n}
    \left(T_{uv}OT_{uv}-O\right)\\
    &=
    \frac2n \sum_{1 \leq u< v\leq n} (\calR((uv))(O) - O)
\end{align}
This is the Davies generator with zero-frequency jumps $\sqrt{2/n}\,T_{uv}$, so \cref{eq:zero-frequency-davies} applies. The normalization in \cref{eq:S_n-mixer} is chosen so that the gap is independent of $n$. We proceed in three steps. First, we compute the exact gap of the $S_n$ group mixer. Second, we control products of local jumps using their one-site bandwidth and modular norms. Finally, we compare the Dirichlet form of the $S_n$ group mixer to the local Davies Dirichlet form.

\begin{remark}
    The analogous group-mixer bound in~\cite{BassoBergamaschiLinRagoneStubbs2026} uses transpositions associated with an expander graph. The representation-theoretic argument below gives the exact gap of $2$ directly, without an expander construction. This argument provides a self-contained alternative for obtaining the system-size-independent bound for all $d \geq 2$.
\end{remark}

\begin{lemma}[Gap of the $S_n$ mixer]\label{lem:complete-transposition-gap}
    Assume $d\geq2$ and $n\geq2$. Then $\ker(\mathcal L_{S_n}) = \mathrm{comm}(S_n)$, and the spectral gap of $-\mathcal L_{S_n}$ is exactly $2$. Equivalently,
    \begin{equation}\label{eq:complete-transposition-poincare}
        \langle O,-\mathcal L_{S_n}(O)\rangle_\rho
        \geq
        2\Vert O\Vert_\rho^2,
        \qquad
        O\in\mathrm{comm}(S_n)^{\perp_\rho}.
    \end{equation}
\end{lemma}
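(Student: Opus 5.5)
The plan is to read $-\mathcal L_{S_n}$ as a central element of $\C[S_n]$ acting through the observable representation $\calR$, and then diagonalize it with the content formula. Let $\mathcal C:=\sum_{u<v}(uv)$ denote the class sum of transpositions. Then
\begin{equation}
    -\mathcal L_{S_n}=\frac2n\left(\binom n2\,\mathrm{id}-\calR(\mathcal C)\right).
\end{equation}
Since $\calR$ is KMS-unitary and each transposition is an involution, every $\calR((uv))$ is KMS self-adjoint. Hence $\mathcal L_{S_n}$ is KMS self-adjoint, and we may decompose $\calB(\calH)$ KMS-orthogonally into $S_n$-isotypic components. Because $\mathcal C$ is central, Schur's lemma makes it act on each copy of $W^\kappa$ by the scalar $\sum_{b\in\kappa}c(b)$, as in the proof of \cref{prop:interchange-schur-weyl-form}. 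Here $\kappa\vdash n$ may now have up to $d^2$ rows, since $\calB(\calH)\cong\calH\otimes\calH^*$.

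Dividing by $\dim W^\kappa$, the eigenvalue on the $\kappa$-isotypic component becomes
\begin{equation}
    \frac2n\left(\binom n2-\sum_{b\in\kappa}c(b)\right)=(n-1)\bigl(1-\chi_\kappa((12))\bigr).
\end{equation}
This expression vanishes exactly for $\kappa=(n)$. The trivial isotypic component of $\calR$ consists of the fixed points of the conjugation action, which is $\mathrm{comm}(S_n)$. This gives $\ker(\mathcal L_{S_n})=\mathrm{comm}(S_n)$, and $\mathrm{comm}(S_n)^{\perp_\rho}$ is the sum of the nontrivial isotypic components.

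For the gap, the key combinatorial step is to show that among $\kappa\neq(n)$ the content sum is maximized by $\kappa=(n-1,1)$. The value there is $\binom{n-1}2-1$, which gives eigenvalue $\frac2n\bigl(\binom n2-\binom{n-1}2+1\bigr)=\frac2n\cdot n=2$. To prove maximality I would use the standard dominance monotonicity: moving a box from a lower row to a higher row, which is a dominance-increasing move, strictly increases the total content. Every $\kappa\neq(n)$ is dominated by $(n-1,1)$ and can be reached from it by a chain of reverse moves, so its content sum is at most that of $(n-1,1)$. I expect this comparison to be the only real obstacle; it is elementary but needs a short argument.

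To show the value $2$ is actually attained, I need $W^{(n-1,1)}$ to occur in $\calR$. A concrete eigenvector is $O=(\lambda^1)^{(1)}-(\lambda^1)^{(2)}$, which requires $d\geq 2$ and $n\geq2$. It lies in the span of single-site operators $\{(\lambda^1)^{(u)}\}_u$, a copy of the permutation module $W^{(n)}\oplus W^{(n-1,1)}$. Its sum of coefficients is zero, so it sits in the standard summand $W^{(n-1,1)}$, and a direct computation confirms $-\mathcal L_{S_n}(O)=2O$. Combining the kernel identification, the lower bound $2$ on the complement, and this eigenvector gives \cref{eq:complete-transposition-poincare} and shows the gap is exactly $2$.
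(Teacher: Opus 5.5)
Your argument is correct and shares the paper's skeleton: $-\mathcal L_{S_n}=\frac2n\bigl(\binom n2-\calR(C_n)\bigr)$, the content formula for the central class sum of transpositions, and the explicit copy of $W^{(n-1,1)}$ spanned by zero-sum combinations of a traceless single-site operator. Where you differ is the combinatorial heart of the gap bound.

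The paper bounds the content deficit directly. With $m=n-\mu_1$, superadditivity $\sum_{r\geq2}\binom{\mu_r}{2}\leq\binom m2$ and $\sum_{r\geq2}(r-1)\mu_r\geq m$ give $\binom n2-c_\mu\geq m(n-m+1)=n+(m-1)(n-m)$. You instead argue that raising a box strictly increases the content sum and that $(n-1,1)$ dominates every $\kappa\neq(n)$.

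Your version is more conceptual and identifies $(n-1,1)$ as the maximizer at once. As written, though, it relies on the standard but unproved fact that dominance order is generated by single-box moves through partitions. You can bypass that with the explicit chain used in \cref{lem:young-state-space-geometry}. Repeatedly move the last box of the lowest nonempty row $j\geq2$ into row $1$. Each move is legal and raises the content sum by $\kappa_1-\kappa_j+j\geq2$, and after $m-1$ moves you reach $(n-1,1)$.

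The paper's inequality, in exchange, is self-contained and quantitative. It lower-bounds the eigenvalue on every nontrivial isotypic component by $\frac2n m(n-m+1)$, so it controls the whole spectrum in terms of $\mu_1$ rather than only the gap.

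Your kernel identification is an equivalent alternative to the paper's commutator-and-generation argument: the eigenvalue vanishes exactly on the trivial isotypic component, which is the fixed-point space $\mathrm{comm}(S_n)$. Just make sure $\chi_\kappa$ in your formula denotes the normalized character $\chi^\kappa/\dim W^\kappa$.
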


\begin{proof}
    Since $T_{uv}$ is a self-adjoint involution commuting with $\rho$,
    \begin{equation}\label{eq:complete-mixer-dirichlet}
        \langle O,-\mathcal L_{S_n}(O)\rangle_\rho
        =
        \frac1n\sum_{u<v}\Vert[T_{uv},O]\Vert_\rho^2.
    \end{equation}
    Thus, $O$ lies in the kernel precisely when it commutes with every transposition. Since the set of transpositions generates $S_n$, this proves the kernel statement.
    
    It remains to compute the smallest non-zero eigenvalue. Let
    \begin{equation}
        C_n:=\sum_{u<v}(uv)\in\mathbb C[S_n].
    \end{equation}
    Since transpositions form a conjugacy class, $C_n$ is a central element of the group algebra $\C[S_n]$. By Schur's Lemma, on an irreducible summand $W^\mu$ of the conjugation representation $\mathcal R$, $\mathcal R(C_n)$ acts as a scalar. We can compute the scalar in the following way. Write $C_n$ in terms of the so-called Jucys--Murphy elements
    \begin{equation}
        J_k = \sum_{i=1}^{k-1} (ik) \Rightarrow C_n = \sum_{k=2}^n J_k.
    \end{equation}
    In the Young seminormal basis indexed by standard Young tableaux $T$ of shape $\mu$, the Jucys--Murphy element $J_k$ acts diagonally with eigenvalue $c_T(k)$, the content of the box containing $k$~\cite[Chapter 3.3.3]{Ceccherini-Silberstein_Scarabotti_Tolli_2010}. If $k$ is in row $a$ and column $b$, then the content is $c_T(k) = b-a$. So,
    \begin{equation}\label{eq:content-scalar-Sn-mixer}
        c_\mu
        =
        \sum_{(a,b)\in\mu}(b-a)
        =
        \sum_{1 \leq a \leq \ell(\mu)}
        \left[\binom{\mu_a}{2}-(a-1)\mu_a\right].
    \end{equation}
    The sum over all Jucys--Murphy elements of these contents depends only on the shape $\mu$, which gives \cref{eq:content-scalar-Sn-mixer}.
    
    It follows from \cref{eq:S_n-mixer} that $-\mathcal L_{S_n}$ acts on this irreducible summand as
    \begin{equation}\label{eq:Sn-mixer-irrep-eigenvalue}
        \frac{2}{n}\left(\binom{n}{2}-c_\mu\right)I.
    \end{equation}
    We now minimize over irreps for which this quantity is non-zero. The trivial representation corresponds to $\mu=(n)$, for which $c_{(n)} = \binom{n}{2}$ and so $\frac2n\left(\binom{n}{2}-c_\mu\right) = 0$. This is expected since the trivial isotypic component of the conjugation representation is precisely $\mathrm{comm}(S_n)$. Now let $\mu\neq(n)$ and write
    \begin{equation}
        m:=n-\mu_1,
        \qquad 1\leq m\leq n-1.
    \end{equation}
    Using $\sum_{r\geq2}\mu_r=m$, we obtain
    \begin{align}
        \binom{n}{2}-c_\mu
        &=
        \binom{n}{2}-\binom{n-m}{2}
        -\sum_{r\geq2}\binom{\mu_r}{2}
        +\sum_{r\geq2}(r-1)\mu_r \\
        &\geq
        \binom{n}{2}-\binom{n-m}{2}-\binom m2+m \\
        &=
        m(n-m+1) \\
        &=
        n+(m-1)(n-m) \\
        &\geq n.
    \end{align}
    In the first inequality we used $\sum_{r\geq2}\binom{\mu_r}{2}\leq\binom m2$ and $\sum_{r\geq2}(r-1)\mu_r\geq m$. Therefore every non-trivial irreducible summand has eigenvalue at least $2$ by \cref{eq:Sn-mixer-irrep-eigenvalue}.
    
    Finally, this lower bound is attained in the observable representation. Notice that in the calculation above, we have equality throughout by taking $\mu = (n-1,1)$, i.e. $m=1$. Explicitly, choose a non-zero traceless one-site operator $X$, and let $X^{(i)}$ be its copy on site $i$. The space $\left\{\sum_{i=1}^n a_iX^{(i)} : \sum_{i=1}^n a_i=0\right\}$ is an $S_n$-subrepresentation isomorphic to $W^{(n-1,1)}$. Indeed, the operators $X^{(i)}$ are linearly independent, for example by their Hilbert--Schmidt orthogonality when $\Tr(X)=0$, and permutations act by permuting the site label. For $\mu=(n-1,1)$, one has $\binom{n}{2}-c_\mu=n$, so \cref{eq:Sn-mixer-irrep-eigenvalue} equals $2$. 
    This proves that the eigenvalue lower bound of 2 is sharp.
\end{proof}

For an operator $A$, define its modular norm by
\begin{equation}
    \Gamma_\beta(A)
    :=
    \max\left\{
        \Vert\rho^{-1/4}A\rho^{1/4}\Vert,
        \Vert\rho^{1/4}A\rho^{-1/4}\Vert
    \right\}.
\end{equation}

\begin{lemma}[Product commutators]\label{lem:product-commutator}
    If $\Gamma_\beta(A_1),\Gamma_\beta(A_2)\leq\Gamma$, then
    \begin{equation}\label{eq:product-commutator}
        \Vert[A_1A_2,O]\Vert_\rho^2
        \leq
        2\Gamma^2
        \left(
            \Vert[A_1,O]\Vert_\rho^2
            +
            \Vert[A_2,O]\Vert_\rho^2
        \right).
    \end{equation}
    \end{lemma}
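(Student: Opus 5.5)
We use the Leibniz rule $[A_1A_2,O]=A_1[A_2,O]+[A_1,O]A_2$. Combined with $\Vert X+Y\Vert_\rho^2\leq 2\Vert X\Vert_\rho^2+2\Vert Y\Vert_\rho^2$, this reduces the claim to two one-sided multiplication estimates in the KMS norm:
\begin{equation}
    \Vert A Y\Vert_\rho\leq \Gamma_\beta(A)\Vert Y\Vert_\rho,
    \qquad
    \Vert Y A\Vert_\rho\leq \Gamma_\beta(A)\Vert Y\Vert_\rho .
\end{equation}
Applying the first with $A=A_1$, $Y=[A_2,O]$ and the second with $A=A_2$, $Y=[A_1,O]$ gives \cref{eq:product-commutator} directly, with the constant $2\Gamma^2$.

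To prove the multiplication estimates, rewrite the KMS norm as a Hilbert--Schmidt norm, $\Vert Y\Vert_\rho^2=\Vert\rho^{1/4}Y\rho^{1/4}\Vert_{HS}^2$. For left multiplication, insert $\rho^{-1/4}\rho^{1/4}$:
\begin{equation}
    \rho^{1/4}AY\rho^{1/4}=\bigl(\rho^{1/4}A\rho^{-1/4}\bigr)\bigl(\rho^{1/4}Y\rho^{1/4}\bigr).
\end{equation}
The operator-norm/Hilbert--Schmidt inequality $\Vert BC\Vert_{HS}\leq\Vert B\Vert\,\Vert C\Vert_{HS}$ then bounds this by $\Vert\rho^{1/4}A\rho^{-1/4}\Vert\,\Vert Y\Vert_\rho$. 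Right multiplication is handled symmetrically via
\begin{equation}
    \rho^{1/4}YA\rho^{1/4}=\bigl(\rho^{1/4}Y\rho^{1/4}\bigr)\bigl(\rho^{-1/4}A\rho^{1/4}\bigr),
\end{equation}
which uses the other conjugation $\Vert\rho^{-1/4}A\rho^{1/4}\Vert$. The definition of $\Gamma_\beta$ takes the maximum of these two quantities, so both estimates hold with $\Gamma_\beta(A)$.

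The only step needing care is bookkeeping: we must check that each side's multiplication uses the correct conjugation direction, and that $\rho$ is full rank so $\rho^{-1/4}$ exists (it is, since $\rho$ is a Gibbs state). We do not need self-adjointness of $A_i$ or commutation with $H$. No step is a genuine obstacle; the argument is a direct computation.
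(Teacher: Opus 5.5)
Your proposal is correct and follows the paper's proof: the Leibniz expansion $[A_1A_2,O]=A_1[A_2,O]+[A_1,O]A_2$, the two one-sided KMS multiplication bounds with the conjugations in the right directions, the triangle inequality, and $(a+b)^2\leq 2(a^2+b^2)$. The paper states the multiplication bounds without proof, and your Hilbert--Schmidt rewriting $\Vert Y\Vert_\rho=\Vert\rho^{1/4}Y\rho^{1/4}\Vert_{HS}$ correctly fills in that step.
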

    
    \begin{proof}
    Use
    \begin{equation}
        [A_1A_2,O]
        =
        A_1[A_2,O]+[A_1,O]A_2
    \end{equation}
    and the inequalities
    \begin{equation}
        \Vert AX\Vert_\rho
        \leq
        \Vert\rho^{1/4}A\rho^{-1/4}\Vert\,\Vert X\Vert_\rho,
        \qquad
        \Vert XA\Vert_\rho
        \leq
        \Vert\rho^{-1/4}A\rho^{1/4}\Vert\,\Vert X\Vert_\rho.
    \end{equation}
    The triangle inequality and $(a+b)^2\leq2(a^2+b^2)$ give     \cref{eq:product-commutator}.
\end{proof}

\begin{lemma}[One-site bandwidth and modular norm]\label{lem:one-site-bandwidth}
    There are constants $\Omega_d,C_d<\infty$, independent of $n$, such that every normalized local jump $J_i^\alpha$ satisfies
    \begin{equation}
        J_i^\alpha(\omega)=0
        \quad\text{for }\vert\omega\vert>\Omega_d,
        \qquad
        \Gamma_\beta(J_i^\alpha)
        \leq
        C_de^{\beta\Omega_d/4}.
    \end{equation}
    One may take $\Omega_d=d+1$.
\end{lemma}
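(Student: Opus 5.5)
We prove the bandwidth claim by showing that a single-site operator changes the Schur--Weyl label by at most one box move. Then we read the resulting energy change off \cref{lem:energy-eigenvalues}. Since $J_i^\alpha$ acts only on site $i$, it commutes with the $S_{n-1}$ action on $\calH_{\ihat}$ and with the collective $SU(d)$ action on $(\C^d)^{\otimes(n-1)}$. Hence $J_i^\alpha$ preserves each $S_{n-1}$-isotypic subspace $\mathrm{Ran}(\widetilde\Pi_\nu)=\bigoplus_{\alpha\in\nu^+}\calH_{\alpha,\nu}$ from the decomposition \cref{eq:cdn-1-decomp} in the proof of \cref{partialtraceprojector}. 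Therefore $\Pi_\mu J_i^\alpha \Pi_\lambda\neq 0$ forces $\mathrm{Ran}(\Pi_\lambda)$ and $\mathrm{Ran}(\Pi_\mu)$ to meet a common $\nu$-sector. By \cref{eq:intersection-of-schur-weyl-sectors}, this means $\nu\in\lambda^-\cap\mu^-$, that is, $\lambda=\mu$ or $\lambda\sim\mu$.

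The possible Bohr frequencies of $J_i^\alpha$ are thus of the form $E_\mu-E_\lambda$ with $\mu=\lambda-e_r+e_s$. The computation already done in the proof of \cref{lem:young-polynomial-rates} gives
\[
\vert E_\mu-E_\lambda\vert=\frac{\vert\lambda_r-\lambda_s+s-r-1\vert}{n}\leq\frac{n+d-1}{n}\leq 1+\frac{d}{n}\leq d+1 .
\]
There is one subtlety: the Bohr decomposition is taken over energy projectors $\Pi_E$, which may be sums of several $\Pi_\lambda$ when $E_\lambda$ is degenerate. This is harmless, because $\Pi_{E_+}J\Pi_{E_-}=\sum\Pi_\mu J\Pi_\lambda$ over $E_\mu=E_+$ and $E_\lambda=E_-$. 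If $\vert E_+-E_-\vert>d+1$, every term vanishes by the selection rule. So $J_i^\alpha(\omega)=0$ for $\vert\omega\vert>\Omega_d:=d+1$.

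For the modular norm, write $J=\sum_{\vert\omega\vert\leq\Omega_d}J(\omega)$ and use $\rho^{-1/4}J(\omega)\rho^{1/4}=e^{\beta\omega/4}J(\omega)$, together with the analogous identity with the signs reversed. This gives
\[
\Gamma_\beta(J)\leq e^{\beta\Omega_d/4}\sum_{\omega}\Vert J(\omega)\Vert .
\]
It remains to bound $\sum_\omega\Vert J(\omega)\Vert$ independently of $n$. A naive count of Bohr frequencies grows with $n$, so we avoid it. Decompose instead by the label move: $J=\sum_{(r,s)}J_{r\to s}$, where $J_{r\to s}:=\sum_\lambda\Pi_{\lambda-e_r+e_s}J\Pi_\lambda$ (including $r=s$). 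There are at most $d^2$ move types.

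For each fixed $(r,s)$, the map $\lambda\mapsto\lambda-e_r+e_s$ is injective. Hence $J_{r\to s}$ is a block-partial-isometry-type sum with disjoint source and range blocks, and $\Vert J_{r\to s}\Vert=\max_\lambda\Vert\Pi_{\lambda'}J\Pi_\lambda\Vert\leq\Vert J\Vert$.

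Now conjugate by the modular factor directly. On the block $\Pi_{\lambda'}J\Pi_\lambda$, conjugation by $\rho^{\mp1/4}$ multiplies by $e^{\pm\beta(E_{\lambda'}-E_\lambda)/4}$, and these exponents are bounded in absolute value by $\beta\Omega_d/4$. Hence
\[
\Vert\rho^{-1/4}J\rho^{1/4}\Vert\leq\sum_{(r,s)}e^{\beta\Omega_d/4}\Vert J\Vert\leq d^2\Vert J\Vert e^{\beta\Omega_d/4} .
\]
Since $\Vert J_i^\alpha\Vert=\Vert\lambda^\alpha\Vert/\sqrt{2d}\leq 1$, we may take $C_d=d^2$.

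The main obstacle is precisely this $n$-independence of the prefactor. The fix is to organize the sum by move type $(r,s)$ rather than by Bohr frequency, using that each move type acts block-injectively on diagrams.
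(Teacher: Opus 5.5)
Your proof is correct. Its second half matches the paper's argument: you decompose $J_i^\alpha$ by move type $r\to s$, use injectivity of $\lambda\mapsto\lambda-e_r+e_s$ to get orthogonal source and target blocks, and note that conjugation by $\rho^{\mp1/4}$ rescales each block by at most $e^{\beta\Omega_d/4}$. You also avoid a sum over Bohr frequencies for the same reason the paper does, since there are $O(n)$ of them.

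Where you differ is the selection rule $\Pi_\mu J_i^\alpha\Pi_\lambda=0$ unless $\lambda\sim\mu$. The paper obtains it from the twirl formula of \cref{prop:schur-weyl-expansion-twirl} by positivity. It writes $\Pi_\mu\calT(\Pi_\lambda)\Pi_\mu=\sum_{i,\alpha}(\Pi_\lambda J_i^\alpha\Pi_\mu)^\dagger(\Pi_\lambda J_i^\alpha\Pi_\mu)$, which vanishes unless $\lambda\sim\mu$, so every positive summand vanishes. You instead use that $J_i^\alpha$ commutes with $\Pi_\nu^{\ihat}\otimes I_i$, so that $\Pi_\mu J_i^\alpha\Pi_\lambda=\sum_\nu(\Pi_\mu\widetilde\Pi_\nu)J_i^\alpha(\widetilde\Pi_\nu\Pi_\lambda)$. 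You then apply the range-intersection structure from the proof of \cref{partialtraceprojector}; it is worth writing this identity explicitly rather than saying the ranges ``meet a common $\nu$-sector.'' Your route is more direct and applies to any single-site operator, without reference to the full jump family. The paper's route instead recycles the twirl computation it already needed for the master equation. Both rest on the same branching/Pieri input.

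One small bookkeeping slip: if every pair with $r=s$ is included in $J=\sum_{(r,s)}J_{r\to s}$, the diagonal part $\sum_\lambda\Pi_\lambda J\Pi_\lambda$ is counted $d$ times. Include it once, which gives $1+d(d-1)\le d^2$ components, so your constant $C_d=d^2$ is unaffected.
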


\begin{proof}
    Consider the operator $\Pi_\mu \calT(\Pi_\lambda) \Pi_\mu$. It can be expressed as
    \begin{equation}
        \Pi_\mu \calT(\Pi_\lambda) \Pi_\mu = \sum_{i,\alpha} (\Pi_\lambda J_i^\alpha \Pi_\mu)^\dagger (\Pi_\lambda J_i^\alpha \Pi_\mu)
    \end{equation}
    By \cref{prop:schur-weyl-expansion-twirl}, this quantity vanishes unless $\lambda \sim \mu$. Hence, $\Pi_\lambda J_i^\alpha \Pi_\mu = 0$ unless $\lambda \sim \mu$. Now suppose that $J_i^\alpha(\omega) \neq 0$. Then there exists a pair of diagrams $\lambda, \mu$ with $\lambda \sim \mu$ such that $E_\lambda - E_\mu = \omega$. If $\omega = 0$, the desired bound is immediate. So, suppose that $\omega \neq 0$ and write $\mu = \lambda - e_r + e_s$, to denote that $\mu$ is attained from $\lambda$ by removing a box from row $r$ and adding it to row $s$. Then
    \begin{align}
        \vert \omega \vert = \vert E_\mu-E_\lambda \vert
        &=
        \left\vert \frac{\lambda_r-\lambda_s+s-r-1}{n} \right\vert \\
        &\leq \frac{\vert \lambda_r - \lambda_s \vert + \vert s-r-1 \vert}{n}\\
        &\leq 1 + \frac{d}{n}
    \end{align}
    which is at most $1+d$. This proves the bandwidth claim.
    
    Now we prove the modular norm bound. Fix a pair of distinct rows $1\leq r \neq s \leq d$ and define the injective map $\lambda \to \mu(\lambda) = \lambda - e_r + e_s$ whenever $\mu(\lambda)$ is a valid Young diagram. Then, decompose a local jump as
    \begin{equation}
        J_i^\alpha
        =
        A^0+\sum_{r\neq s}A^{r\to s},
        \qquad
        A^0
        :=
        \sum_\lambda \Pi_\lambda J_i^\alpha \Pi_\lambda,
        \qquad
        A^{r\to s}
        :=
        \sum_{\lambda:\mu(\lambda)\ \mathrm{valid}}
        \Pi_{\mu(\lambda)}J_i^\alpha\Pi_\lambda.
    \end{equation}
    For fixed $r\neq s$, the source spaces are mutually orthogonal, and injectivity of $\lambda\mapsto\mu(\lambda)$ makes the target spaces mutually orthogonal as well. Thus, write $x = \sum_\lambda x_\lambda$ where $x_\lambda \in \mathrm{Ran}(\Pi_\lambda)$. Then,
    \begin{equation}
        \Vert A^{r\to s} x \Vert^2 = \sum_\lambda \Vert\Pi_{\mu(\lambda)}J_i^\alpha x_\lambda \Vert^2 \leq \Vert J_i^\alpha \Vert^2 \sum_\lambda \Vert x_\lambda \Vert^2
    \end{equation}
    and hence $\Vert A^{r \to s} \Vert \leq \Vert J_i^\alpha \Vert$. Similarly, the diagonal term $A^0$ satisfies $\Vert A^0 \Vert \leq \Vert J_i^\alpha \Vert$. By the bandwidth claim,
    \begin{equation}
        e^{-\beta H/4}\Pi_\mu J_i^\alpha \Pi_\lambda e^{\beta H/4} = e^{\beta(E_\lambda - E_\mu)/4} \Pi_\mu J_i^\alpha \Pi_\lambda,
    \end{equation}
    so conjugation by $e^{\pm\beta H/4}$ multiplies each block by a scalar of magnitude at most $e^{\beta\Omega_d/4}$. Applying the triangle inequality and summing the at most $1+d(d-1)$ components proves the modular-norm bound.
\end{proof}

\begin{proposition}[Comparison of Dirichlet forms]\label{prop:Sn-local-comparison}
    There are constants $C_1,C_2>0$, depending only on $d$ and the chosen jump normalization, such that
    \begin{equation}\label{eq:Sn-local-comparison}
        \langle O,-\mathcal L_{S_n}(O)\rangle_\rho
        \leq
        C_1e^{C_2\beta}
        \langle O,-\Lloc(O)\rangle_\rho
    \end{equation}
    for every observable $O$, uniformly in $n$.
\end{proposition}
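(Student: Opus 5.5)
The plan is to write each swap $T_{uv}$ as a fixed-length product of local jump operators, and then convert the commutator of that product with $O$ into commutators with the individual jumps. Two earlier results carry most of the load: \cref{lem:product-commutator} handles products, and \cref{lem:one-site-bandwidth} controls modular norms and frequency spreads uniformly in $n$.

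\textbf{Step 1: a two-site decomposition of the swap.} Completeness of the Hilbert--Schmidt orthogonal basis $\{J^\alpha\}_{\alpha=0}^{d^2-1}$ gives
\[T_{uv}=d\sum_{\alpha=0}^{d^2-1} J_u^\alpha J_v^\alpha,\]
after adjusting constants to the normalization $\Tr(J^\alpha J^\beta)=\delta_{\alpha\beta}/d$. Here $J^0=I/d$ commutes with everything, so the $\alpha=0$ term contributes nothing to $[T_{uv},O]$. Hence
\[\Vert[T_{uv},O]\Vert_\rho^2\leq d^2(d^2-1)\sum_{\alpha\geq1}\Vert[J_u^\alpha J_v^\alpha,O]\Vert_\rho^2 .\]
Applying \cref{lem:product-commutator} with $\Gamma=C_de^{\beta\Omega_d/4}$ then bounds each term by
\[2\Gamma^2\left(\Vert[J_u^\alpha,O]\Vert_\rho^2+\Vert[J_v^\alpha,O]\Vert_\rho^2\right).\]

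\textbf{Step 2: summing over pairs.} Sum the bound of Step 1 over $u<v$ in the Dirichlet form \cref{eq:complete-mixer-dirichlet}. Each site $u$ appears in $n-1$ pairs, so with the prefactor $1/n$ we obtain
\[\langle O,-\calL_{S_n}(O)\rangle_\rho\leq C_d' e^{\beta\Omega_d/2}\sum_{i,\alpha}\Vert[J_i^\alpha,O]\Vert_\rho^2 .\]
The right-hand side is uniform in $n$, which is exactly why the $2/n$ normalization was chosen.

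\textbf{Step 3: from full commutators to the Davies Dirichlet form.} This is the step I expect to be the main obstacle. The local Dirichlet form \cref{eq:dirichlet-form-commutator} involves only the Bohr components $[J_i^\alpha(\omega),O]$, weighted by $h_\omega=e^{-\beta|\omega|/2}$, not the full commutator. Write $[J,O]=\sum_\omega[J(\omega),O]$. By the bandwidth part of \cref{lem:one-site-bandwidth}, only $|\omega|\leq\Omega_d$ contributes.

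The difficulty is that the number of distinct Bohr frequencies of $H$ in $[-\Omega_d,\Omega_d]$ grows with $n$, since they lie on the grid $\frac1n\Z$. So Cauchy--Schwarz over $\omega$ would cost a factor of order $n$. I would avoid this by grouping instead by the finitely many block-move types used in the proof of \cref{lem:one-site-bandwidth}. For fixed $r\to s$, the component $A^{r\to s}$ has energy differences $(\lambda_r-\lambda_s+s-r-1)/n$, which vary with $\lambda$. So the grouping has to be done in KMS norm rather than by frequency.

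Concretely, the Bohr sectors are KMS-orthogonal. Hence
\[\Vert[J,O]\Vert_\rho^2\]
should be compared with $\sum_\omega\Vert[J(\omega),O]\Vert_\rho^2$ through the decomposition $O=\sum_\nu O(\nu)$, since $[J(\omega),O(\nu)]$ lies in sector $\omega+\nu$. If the direct comparison fails, the fallback is to replace $\Vert[J,O]\Vert_\rho^2$ in Step 2 by the frequency-resolved quantity from the start. That works because $\sum_\omega$ over the components of $[J_u J_v,O]$ can be reorganized with the same product rule applied to $J_u(\omega_1)J_v(\omega_2)$. The swap $T_{uv}$ commutes with $H$, so only $\omega_1+\omega_2=0$ survives, and the product lemma is applied at fixed $(\omega_1,-\omega_1)$ with weights $h_{\omega_1}\geq e^{-\beta\Omega_d/2}$.

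\textbf{Step 4: collecting constants.} Gathering the factors yields $C_1e^{C_2\beta}$ with $C_2$ of order $\Omega_d=d+1$, uniformly in $n$. This is \cref{eq:Sn-local-comparison}.
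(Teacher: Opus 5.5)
Your Steps 1, 2 and 4 are sound and match the paper's bookkeeping. Step 3, however, is where the substance of the proof lies, and you leave it open.

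After Step 2 you hold $\sum_{i,\alpha}\Vert[J_i^\alpha,O]\Vert_\rho^2$. Decomposing $O=\sum_\nu O(\nu)$ at that stage does not control it. Sector orthogonality gives the identity
\[
\sum_\omega\Vert[J(\omega),O]\Vert_\rho^2=\sum_{\omega,\nu}\Vert[J(\omega),O(\nu)]\Vert_\rho^2=\sum_\nu\Vert[J,O(\nu)]\Vert_\rho^2,
\]
whereas
\[
\Vert[J,O]\Vert_\rho^2=\sum_\kappa\Bigl\Vert\sum_\omega[J(\omega),O(\kappa-\omega)]\Bigr\Vert_\rho^2 .
\]
Passing between these by Cauchy--Schwarz in $\omega$ costs the number of Bohr frequencies of $J$, which grows with $n$.

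Your fallback has the same defect. Every product $J_u^\alpha(\omega_1)J_v^\alpha(-\omega_1)$ has total frequency zero. So the commutators $[J_u^\alpha(\omega_1)J_v^\alpha(-\omega_1),O]$ for different $\omega_1$ occupy the same Bohr sectors and are not orthogonal. Applying \cref{lem:product-commutator} at fixed $\omega_1$ is fine. But recombining the $\omega_1$-terms needs a triangle inequality over a number of frequencies that grows with $n$, which is exactly the loss you were trying to avoid.

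The missing idea is to use $[H,T_{uv}]=0$ on $O$ before expanding the swap, not after. Since $[T_{uv},O(\nu)]$ lies in sector $\nu$, you have exactly $\Vert[T_{uv},O]\Vert_\rho^2=\sum_\nu\Vert[T_{uv},O(\nu)]\Vert_\rho^2$. Run your Step 1 on each $O(\nu)$ separately. This yields $\sum_\nu\Vert[J_u^\alpha,O(\nu)]\Vert_\rho^2$, which by the displayed identity equals $\sum_\omega\Vert[J_u^\alpha(\omega),O]\Vert_\rho^2$ with no loss. The bandwidth bound then gives $1\leq e^{\beta\Omega_d/2}h_\omega$ on the support, and the rest is your Step 4. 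This is what the paper does.

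Your direct comparison could also be rescued blockwise. The KMS norm is orthogonal across blocks $\Pi_\lambda(\cdot)\Pi_\mu$, and $\Pi_\lambda J_i^\alpha\Pi_\kappa=0$ unless $\lambda\sim\kappa$. Hence each block of $[J_i^\alpha,O]$ receives contributions from at most $2(d(d-1)+1)$ frequencies. Cauchy--Schwarz per block then gives $\Vert[J_i^\alpha,O]\Vert_\rho^2\leq 2(d^2-d+1)\sum_\omega\Vert[J_i^\alpha(\omega),O]\Vert_\rho^2$. Neither argument appears in your proposal, so Step 3 remains unproved as written.
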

\begin{proof}
    The swap expansion in the normalized jump basis is
    \begin{equation}
        T_{uv}
        =
        \frac1dI +
        d\sum_{\alpha=1}^{d^2-1}J_u^\alpha J_v^\alpha.
    \end{equation}
    Write $O=\sum_\nu O(\nu)$ for its Bohr decomposition. Since $[H,T_{uv}]=0$, the operators $[T_{uv},O(\nu)]$ belonging to distinct Bohr sectors are KMS-orthogonal. Thus, by Cauchy--Schwarz, \cref{lem:product-commutator}, and \cref{lem:one-site-bandwidth},
    \begin{align}
        \Vert[T_{uv},O]\Vert_\rho^2
        &=
        \sum_\nu\Vert[T_{uv},O(\nu)]\Vert_\rho^2 \\
        &\leq
        C_de^{\beta\Omega_d/2}
        \sum_{\alpha=1}^{d^2-1}\sum_\nu
        \left(
            \Vert[J_u^\alpha,O(\nu)]\Vert_\rho^2
            +
            \Vert[J_v^\alpha,O(\nu)]\Vert_\rho^2
        \right).
    \end{align}
    Bohr-sector orthogonality also gives
    \begin{equation}\label{eq:double-Bohr-orthogonality}
        \sum_\nu\Vert[J_u^\alpha,O(\nu)]\Vert_\rho^2
        =
        \sum_\omega\Vert[J_u^\alpha(\omega),O]\Vert_\rho^2.
    \end{equation}
    Indeed, both sides equal the double sum over $(\omega,\nu)$ of $\Vert[J_u^\alpha(\omega), O(\nu)]\Vert_\rho^2$, because the summands have distinct total Bohr frequencies whenever the index being summed out varies. In particular, $[H,[J_u^\alpha(\omega), O(\nu)]] = (\omega + \nu)[J_u^\alpha(\omega), O(\nu)]$.
    
    By \cref{lem:one-site-bandwidth}, only $\vert\omega\vert\leq\Omega_d$ occur. Consequently,
    \begin{equation}
        1
        \leq
        e^{\beta\Omega_d/2}h_\omega,
        \qquad
        h_\omega=e^{-\beta\vert\omega\vert/2},
    \end{equation}
    on the support of the sum. Combining this observation with \cref{eq:complete-mixer-dirichlet} and \cref{eq:double-Bohr-orthogonality} yields
    \begin{align}
        \langle O,-\mathcal L_{S_n}(O)\rangle_\rho
        &\leq
        \frac{C_de^{\beta\Omega_d}}{n}
        \sum_{u<v}\sum_{\alpha,\omega}h_\omega
        \left(
            \Vert[J_u^\alpha(\omega),O]\Vert_\rho^2
            +
            \Vert[J_v^\alpha(\omega),O]\Vert_\rho^2
        \right) \\
        &=
        C_de^{\beta\Omega_d}\frac{n-1}{n}
        \sum_{u=1}^n\sum_{\alpha,\omega}h_\omega
        \Vert[J_u^\alpha(\omega),O]\Vert_\rho^2 \\
        &\leq
        2C_de^{\beta\Omega_d}\mathcal E_{\loc}(O),
    \end{align}
    where the last step uses \cref{eq:dirichlet-form-commutator}. Absorbing the factor $2C_d$ and $\Omega_d$ into $C_1,C_2$ proves \cref{eq:Sn-local-comparison}.
\end{proof}

\begin{proof}[Proof of \cref{thm:symmetry-controlled-sector-bounds}\textup{(ii)}]
    For $O\in\mathrm{comm}(S_n)^{\perp_\rho}$, \cref{lem:complete-transposition-gap} and \cref{prop:Sn-local-comparison} imply
    \begin{align}
        \mathcal E_{\loc}(O)
        &\geq
        \frac{1}{C_1e^{C_2\beta}}
        \langle O,-\mathcal L_{S_n}(O)\rangle_\rho \\
        &\geq
        \frac{2}{C_1e^{C_2\beta}}\Vert O\Vert_\rho^2.
    \end{align}
    Absorbing the inessential factor $2$ into $C_1$ gives the desired result.
\end{proof}

\clearpage
\bibliographystyle{plainnat}
\bibliography{refs}

@article{Bjornberg2016,
  author  = {Bj{\"o}rnberg, Jakob E.},
  title   = {The Free Energy in a Class of Quantum Spin Systems and
             Interchange Processes},
  journal = {Journal of Mathematical Physics},
  volume  = {57},
  number  = {7},
  pages   = {073303},
  year    = {2016},
  doi     = {10.1063/1.4959238}
}

@article{bergamaschi2026,
author = {Thiago Bergamaschi and Reza Gheissari and Yunchao Liu},
title = {Rapid mixing for Gibbs states within a logical sector: a dynamical view of self-correcting quantum memories},
year = {2026},
journal = {Proceedings of the 2026 Annual ACM-SIAM Symposium on Discrete Algorithms (SODA)},
chapter = {},
pages = {3407-3422},
doi = {10.1137/1.9781611978971.123},
URL = {https://epubs.siam.org/doi/abs/10.1137/1.9781611978971.123},
eprint = {https://epubs.siam.org/doi/pdf/10.1137/1.9781611978971.123},
}

@article{Sinclair1992,
  author  = {Sinclair, Alistair},
  title   = {Improved Bounds for Mixing Rates of {Markov} Chains and
             Multicommodity Flow},
  journal = {Combinatorics, Probability and Computing},
  volume  = {1},
  number  = {4},
  pages   = {351--370},
  year    = {1992},
  doi     = {10.1017/S0963548300000390}
}

@article{CuffEtAl2012,
  author  = {Cuff, Paul and Ding, Jian and Louidor, Oren and Lubetzky, Eyal
             and Peres, Yuval and Sly, Allan},
  title   = {{Glauber} Dynamics for the Mean-Field {Potts} Model},
  journal = {Journal of Statistical Physics},
  volume  = {149},
  number  = {3},
  pages   = {432--477},
  year    = {2012},
  doi     = {10.1007/s10955-012-0599-2}
}

@article{KastoryanoTemme2013,
  author  = {Kastoryano, Michael J. and Temme, Kristan},
  title   = {Quantum Logarithmic {Sobolev} Inequalities and Rapid Mixing},
  journal = {Journal of Mathematical Physics},
  volume  = {54},
  number  = {5},
  pages   = {052202},
  year    = {2013},
  doi     = {10.1063/1.4804995}
}

@article{EllisWang1990,
  author  = {Ellis, Richard S. and Wang, Kongming},
  title   = {Limit Theorems for the Empirical Vector of the
             {Curie--Weiss--Potts} Model},
  journal = {Stochastic Processes and their Applications},
  volume  = {35},
  number  = {1},
  pages   = {59--79},
  year    = {1990},
  doi     = {10.1016/0304-4149(90)90122-9}
}

@book{Sagan2001,
  author    = {Sagan, Bruce E.},
  title     = {The Symmetric Group: Representations, Combinatorial Algorithms,
               and Symmetric Functions},
  edition   = {2},
  series    = {Graduate Texts in Mathematics},
  volume    = {203},
  publisher = {Springer},
  address   = {New York},
  year      = {2001},
  doi       = {10.1007/978-1-4757-6804-6}
}

@book{FultonHarris1991,
  author    = {Fulton, William and Harris, Joe},
  title     = {Representation Theory: A First Course},
  series    = {Graduate Texts in Mathematics},
  volume    = {129},
  publisher = {Springer},
  address   = {New York},
  year      = {1991},
  doi       = {10.1007/978-1-4612-0979-9}
}

@misc{BassoBergamaschiLinRagoneStubbs2026,
  author        = {Basso, Joao and Bergamaschi, Thiago and Lin, Lin and
                   Ragone, Michael and Stubbs, Kevin D.},
  title         = {Spectral Gap of the {Davies} Generator for the Mean-Field
                   {Heisenberg} Model},
  year          = {2026},
  eprint        = {2607.21798},
  archivePrefix = {arXiv},
  primaryClass  = {quant-ph},
  note          = {arXiv:2607.21798 [quant-ph]},
  doi           = {10.48550/arXiv.2607.21798},
  url           = {https://arxiv.org/abs/2607.21798}
}

@article{carlen2017gradient,
  author  = {Carlen, Eric A. and Maas, Jan},
  title   = {Gradient Flow and Entropy Inequalities for Quantum {Markov}
             Semigroups with Detailed Balance},
  journal = {Journal of Functional Analysis},
  volume  = {273},
  number  = {5},
  pages   = {1810--1869},
  year    = {2017},
  doi     = {10.1016/j.jfa.2017.05.003}
}

@incollection{hall2013lie,
  author    = {Hall, Brian C.},
  title     = {{Lie} Groups, {Lie} Algebras, and Representations},
  booktitle = {Quantum Theory for Mathematicians},
  series    = {Graduate Texts in Mathematics},
  volume    = {267},
  pages     = {333--366},
  publisher = {Springer},
  address   = {New York},
  year      = {2013},
  doi       = {10.1007/978-1-4614-7116-5_16}
}

@misc{basso2025quantumclassicalcomparisondavies,
  author        = {Basso, Joao and Ganguly, Shirshendu and Sinclair, Alistair
                   and Srivastava, Nikhil and Stier, Zachary and
                   Vuong, Thuy-Duong},
  title         = {On Quantum to Classical Comparison for {Davies} Generators},
  year          = {2025},
  eprint        = {2510.07267},
  archivePrefix = {arXiv},
  primaryClass  = {quant-ph},
  note          = {arXiv:2510.07267 [quant-ph]},
  doi           = {10.48550/arXiv.2510.07267},
  url           = {https://arxiv.org/abs/2510.07267}
}

@book{LevinPeresWilmer2006,
  author    = {Levin, David A. and Peres, Yuval and Wilmer, Elizabeth L.},
  title     = {Markov Chains and Mixing Times},
  publisher = {American Mathematical Society},
  address   = {Providence, RI},
  year      = {2009},
  url       = {https://bookstore.ams.org/mbk-58}
}

@book{Ceccherini-Silberstein_Scarabotti_Tolli_2010, place={Cambridge}, series={Cambridge Studies in Advanced Mathematics}, title={Representation Theory of the Symmetric Groups: The Okounkov-Vershik Approach, Character Formulas, and Partition Algebras}, publisher={Cambridge University Press}, author={Ceccherini-Silberstein, Tullio and Scarabotti, Fabio and Tolli, Filippo}, year={2010}, collection={Cambridge Studies in Advanced Mathematics}}

@book{Hall2015,
  author    = {Hall, Brian C.},
  title     = {{Lie} Groups, {Lie} Algebras, and Representations:
               An Elementary Introduction},
  edition   = {2},
  series    = {Graduate Texts in Mathematics},
  volume    = {222},
  publisher = {Springer},
  address   = {Cham},
  year      = {2015},
  doi       = {10.1007/978-3-319-13467-3},
  isbn      = {978-3-319-13466-6}
}

@article{DingLubetzkyPeres2009,
  author  = {Ding, Jian and Lubetzky, Eyal and Peres, Yuval},
  title   = {The Mixing Time Evolution of {Glauber} Dynamics for the
             Mean-Field {Ising} Model},
  journal = {Communications in Mathematical Physics},
  volume  = {289},
  number  = {2},
  pages   = {725--764},
  year    = {2009},
  doi     = {10.1007/s00220-009-0781-9}
}

@misc{GamarnikKianiZlokapa2024,
  author        = {Gamarnik, David and Kiani, Bobak T. and Zlokapa, Alexander},
  title         = {Slow Mixing of Quantum {Gibbs} Samplers},
  year          = {2024},
  eprint        = {2411.04300},
  archivePrefix = {arXiv},
  primaryClass  = {quant-ph},
  doi           = {10.48550/arXiv.2411.04300},
  url           = {https://arxiv.org/abs/2411.04300}
}

@misc{alicki2008thermalstabilitytopologicalqubit,
      title={On thermal stability of topological qubit in Kitaev's 4D model}, 
      author={R. Alicki and M. Horodecki and P. Horodecki and R. Horodecki},
      year={2008},
      eprint={0811.0033},
      archivePrefix={arXiv},
      primaryClass={quant-ph},
      url={https://arxiv.org/abs/0811.0033}, 
}

@book{BovierDenHollander2015,
  author    = {Bovier, Anton and den Hollander, Frank},
  title     = {Metastability: A Potential-Theoretic Approach},
  series    = {Grundlehren der mathematischen Wissenschaften},
  volume    = {351},
  publisher = {Springer},
  address   = {Cham},
  year      = {2015},
  isbn      = {978-3-319-24777-9},
  doi       = {10.1007/978-3-319-24777-9},
  url       = {https://doi.org/10.1007/978-3-319-24777-9}
}

@misc{rakovszky2024bottlenecksquantumchannelsfinite,
      title={Bottlenecks in quantum channels and finite temperature phases of matter}, 
      author={Tibor Rakovszky and Benedikt Placke and Nikolas P. Breuckmann and Vedika Khemani},
      year={2024},
      eprint={2412.09598},
      archivePrefix={arXiv},
      primaryClass={quant-ph},
      url={https://arxiv.org/abs/2412.09598}, 
}

@misc{chen2025efficientexactnoncommutativequantum,
      title={An efficient and exact noncommutative quantum Gibbs sampler}, 
      author={Chi-Fang Chen and Michael J. Kastoryano and András Gilyén},
      year={2025},
      eprint={2311.09207},
      archivePrefix={arXiv},
      primaryClass={quant-ph},
      url={https://arxiv.org/abs/2311.09207}, 
}

@article{Ding_2025,
   title={Efficient Quantum Gibbs Samplers with Kubo–Martin–Schwinger Detailed Balance Condition},
   volume={406},
   ISSN={1432-0916},
   url={http://dx.doi.org/10.1007/s00220-025-05235-3},
   DOI={10.1007/s00220-025-05235-3},
   number={3},
   journal={Communications in Mathematical Physics},
   publisher={Springer Science and Business Media LLC},
   author={Ding, Zhiyan and Li, Bowen and Lin, Lin},
   year={2025},
   month=Feb }

\end{document}